\newcommand{\RC}[1]{\textcolor{red}{#1}}  
  \newcommand{\ga}{\alpha}
  \newcommand{\bga}{\bm{\alpha}}
  \newcommand{\gb}{\beta}
  \newcommand{\Ch}{\hat{C}}					
  \newcommand{\bCt}{\bm{\tilde{C}}}  
  \newcommand{\bbC}{\mathbb{C}}
  \newcommand{\cC}{\mathcal{C}}	
  \newcommand{\bC}{\bm{C}}
  \newcommand{\gC}{\Gamma}	
  \newcommand{\Dt}{\tilde{D}}				
  \newcommand{\gd}{\delta}					
  \newcommand{\gD}{\Delta}
  \newcommand{\bgD}{\bm{\Delta}}
  \newcommand{\gDh}{\hat{\Delta}}
  \newcommand{\eps}{\epsilon}				
   \newcommand{\beps}{\bm{\epsilon}}
  \renewcommand{\bf}{\bm{f}}				
  \newcommand{\fh}{\hat{f}}
  \newcommand{\bphi}{\bm{\phi}} 		
   \newcommand{\bgg}{\bm{g}}				
  \newcommand{\bGG}{\bm{G}}  
  \newcommand{\ggh}{\hat{g}}
  \newcommand{\GGh}{\hat{G}}
  \newcommand{\hh}{\hat{h}}	
  \newcommand{\bh}{\bm{h}}		
  \newcommand{\scrI}{\mathscr{I}}
  \newcommand{\bK}{\bm{K}}	
  \newcommand{\lh}{\hat{l}}					
  \newcommand{\bl}{\bm{l}}	
  \newcommand{\cL}{\mathcal{L}}	
  \newcommand{\gl}{\lambda}
  \newcommand{\gL}{\Lambda}
  \newcommand{\ellh}{\hat{\ell}}
   \newcommand{\bell}{\bm{\ell}}
  \newcommand{\Mb}{\overline{M}}			
  \newcommand{\mb}{\bar{m}}
  \newcommand{\Mh}{\hat{M}}
  \newcommand{\bmu}{\bm{\mu}}
  \newcommand{\cM}{\mathcal{M}}
  \newcommand{\nh}{\hat{n}}	
  \newcommand{\Nh}{\hat{N}}
  \newcommand{\Nt}{\tilde{N}}	   
  \newcommand{\bn}{\bm{n}}					
  \newcommand{\bN}{\bm{N}}
  \newcommand{\cO}{\mathcal{O}}				
  \newcommand{\bgo}{\bm{\omega}}
  \newcommand{\go}{\omega}
  \newcommand{\gO}{\Omega}
  \newcommand{\gOh}{\hat{\Omega}}
  \newcommand{\Ph}{\hat{P}}					
  \newcommand{\cP}{\mathcal{P}}
  \newcommand{\bpsi}{\bm{\psi}}
  \newcommand{\Rh}{\hat{R}}
  \newcommand{\bbR}{\mathbb{R}}
  \newcommand{\gr}{\rho}
  \newcommand{\bgs}{\bm{\sigma}}			
  \newcommand{\bgsh}{\bm{\hat{\sigma}}}
  \newcommand{\gs}{\sigma}
  \newcommand{\gS}{\Sigma}
  \newcommand{\bT}{\bm{T}}	 
  \newcommand{\cT}{\mathcal{T}}				
  \newcommand{\gt}{\tau}
  \newcommand{\bgt}{\bm{\tau}}
  \newcommand{\gth}{\theta}					
  \newcommand{\bgth}{\bm{\theta}}
  \newcommand{\uh}{\hat{u}}					
  \newcommand{\gU}{\Upsilon}
  \newcommand{\bgU}{\bm{\Upsilon}}				
  \newcommand{\bV}{\bm{V}}
  \newcommand{\bVh}{\hat{\bm{V}}}
  \newcommand{\Vh}{\hat{V}}
  \newcommand{\wb}{\bar{w}}					
  \newcommand{\Wh}{\hat{W}}
  \newcommand{\xh}{\hat{x}}  				
  \newcommand{\bX}{\bm{X}}					
  \newcommand{\bxi}{\bm{\xi}}  
  \newcommand{\Yh}{\hat{Y}}					
  \newcommand{\yh}{\hat{y}}
  \newcommand{\bY}{\bm{Y}}		
  \newcommand{\bYh}{\hat{\bm{Y}}}		
  \newcommand{\gz}{\zeta}
  \newcommand{\gzb}{\bar{\zeta}}
\newcommand{\from}{\colon}
\newcommand{\xto}[2][]{\xrightarrow[#1]{#2}}
\DeclareMathOperator{\Iso}{Iso}
\DeclareMathOperator{\SO}{SO}
\newcommand{\Quotient}[2]{\faktor{#1}{#2}}
\newcommand{\covD}{\nabla}
\newcommand{\covDh}{\hat{\nabla}}
\newcommand{\parD}{\partial}
\newcommand{\parDh}{\hat{\partial}}
\newcommand{\LieD}{\mathcal{L}}
\newcommand{\intD}{\lrcorner}
\newcommand{\So}[1]{\Gamma\left[ #1 \right]}
\newcommand{\Co}[1]{\mathcal{C}^{\infty}\left( #1 \right)}
\DeclareMathOperator{\End}{End}
\renewcommand{\S}{\bm{\text{S}}}				
\newcommand{\Mtx}[1]{\begin{pmatrix} #1	\end{pmatrix}}
\newtheorem{Proposition}{Proposition}[section]
\newtheorem{Theorem}{Theorem}[section]
\theoremstyle{definition}
\newtheorem{Definition}{Definition}[section]
\numberwithin{equation}{section}
\newcommand{\M}{X}
\newcommand{\MNull}{\scrI}
\newcommand{\MRiem}{M}
\newcommand{\MConf}{\widetilde{M}}
\newcommand{\gConf}{\bm{g}}
\newcommand{\hConf}{\bm{h}}
\newcommand{\CSoL}[1]{\Gamma_{#1}\left[ L \right]}
 \date{\begin{PrintVersion}
 \today 
 		\begin{center}
 			\RC{This is the version of the file posted on arxiv on the 27/08/20 (accepted for publication in J.Math.Phys)}
 		\end{center}
 \end{PrintVersion}} 
\begin{document} 
	\pagestyle{plain}
	\title{Asymptotic Shear and the Intrinsic Conformal Geometry of Null-Infinity}
	\author{Yannick Herfray \footnote{Yannick.Herfray@ulb.ac.be}\\ 
		{\small \it Département de Mathématiques,
			Université Libre de Bruxelles, } \\ {\small \it CP 218, Boulevard du Triomphe, B-1050 Bruxelles, Belgique. }}
	\maketitle
\begin{abstract}
In this article we propose a new geometrization of the radiative phase space of asymptotically flat space-times: we show that the geometry induced on null-infinity by the presence of gravitational waves can be understood to be a generalisation of the tractor calculus of conformal manifolds adapted to the case of degenerate conformal metrics. It follows that the whole formalism is, by construction, manifestly conformally invariant. We first show that a choice of asymptotic shear amounts to a choice of linear differential operator of order two on the bundle of scales of null-infinity. We refer to these operators as Poincaré operators. We then show that Poincaré operators are in one-to-one correspondence with a particular class of tractor connections which we call ``null-normal'' (they generalise the normal tractor connection of conformal geometry). The tractor curvature encodes the presence of gravitational waves and the non-uniqueness of flat null-normal tractor connections correspond to the ``degeneracy of gravity vacua'' that has been extensively discussed in the literature. This work thus brings back the investigation of the radiative phase space of gravity to the study of (Cartan) connections and associated bundles. This should allow, in particular, to proliferate invariants of the phase space.
\end{abstract}
\maketitle

\section{Introduction: motivations and main results}

\subsection{The ``radiative phase space'' of asymptotically flat space-times}
 
 The geometry of asymptotically flat space-times has been traditionally studied from two complementary point of views: On the one hand, the Bondi-Sachs formalism, which was originally designed in \cite{bondi_gravitational_1962,sachs_gravitational_1962} to clarify the physics of gravitational waves, makes use of an especially well chosen set of coordinates in a neighbourhood of null-infinity (see \cite{barnich_aspects_2010,madler_bondi-sachs_2016,ruzziconi_asymptotic_2020} for reviews and modern presentations). In this context, properties of asymptotically flat space-times are studied by performing an asymptotic expansion in the ``radial coordinate'' $ r^{-1}= \gO$.  This is especially convenient since one can work in a very explicit way and it is mainly in this form that the recent developments \cite{strominger_bms_2014,he_bms_2015,strominger_gravitational_2016,strominger_lectures_2018} on the infrared structure of the S-matrix of perturbative quantum gravity and related memory-effects have been studied. On the other hand, the classical results obtained in the Bondi-Sachs coordinates were given a coordinate-free description in the work of Penrose \cite{newman_approach_1962,penrose_zero_1965,penrose_conformal_2011} (see also \cite{tamburino_gravitational_1966,tafel_comparison_2000} for comparisons between the two formalisms). This invariant point of view is essential to be able to precisely phrase questions of global nature (see \cite{frauendiener_conformal_2004,friedrich_geometric_2015,frauendiener_conformal_2018} for modern reviews). From this perspective, asymptomatically flat space-times are understood to be particular case of conformally compact Lorentzian manifold. Practically, the ``physical'' space-time is taken to be conformally isometric to the interior of a compact manifold $\Mb$ ( the ``unphysical'' space-time) with boundary $\parD \Mb = \MNull$ (``null-infinity''). The ``radial coordinate'' $\gO$ then plays the role of a boundary defining function for $\MNull$ (Generically the boundary $\parD \Mb$ will be disconnected, we are here really interested by the behaviour of the metric near one of the connected components $\MNull$).
  
 An especially interesting (and classical) question in this context is to understand the ``asymptotic'' or ``radiative'' phase space of gravity i.e the subset of the phase-space corresponding to gravitational waves. In the Bondi-Sachs coordinates this amounts to the choice of fields parametrizing the very first terms in the expansion:
 \begin{equation}\label{Introduction: Bondi-Sachs metric}
 g_{\mu\nu} = \frac{1}{\gO^2}\left( 2 d\gO du + h_{AB} + \gO C_{AB} + \cO\left(\gO^2\right)\right).   
 \end{equation}
 Here $h_{AB}$ is a two-by-two non-degenerate symmetric tensor and $C_{AB}$ is a two-by-two symmetric trace-free (with respect to $h_{AB}$) symmetric tensor. In this sense these two fields parametrize the radiative phase space of gravity and it was indeed the realisation that $C_{AB}$  could effectively be interpreted as the flux of energy at null infinity through Bondi's mass loss formula which was one of the most (rightfully) celebrated result of BMS's work. 
 From the conformal compactification point of view, and since general relativity intrinsically is a geometrical theory, one naively expect the ``radiative''  phase space to correspond to a, reasonably standard, geometric moduli space.  Considering that $h_{AB}$ and $C_{AB}$ appear at very low order in the Bondi-Sachs expansion, it is tempting to think that they should amount to a particular geometrical structure on $\MNull$ and indeed one can show \cite{geroch_asymptotic_1977,ashtekar_radiative_1981, ashtekar_a._symplectic_1981,ashtekar_symplectic_1982,ashtekar_asymptotic_1987,ashtekar_geometry_2015,ashtekar_null_2018} that the leading order terms (i.e the first two) in this expansion equip $\MNull$ with a ``universal null-infinity structure''
 \begin{Definition}{Universal null-infinity structure}\mbox{}\label{Introduction: Def, universal null-infinity structure}
 	
 We will say that a $n$-dimensional manifold $\MNull$ is equipped with a universal null-infinity structure if
 \begin{itemize}
 	\item  it is the total space of a trivial bundle $\MNull \xto{\pi} \gS$ over an $(n-1)$-dimensional orientable manifold $\gS$,
 \end{itemize}
and it is equipped with
 \begin{itemize}
 	\item a conformal class $[h_{AB}]$ of metric on $\gS$ i.e $h_{AB} \sim \go^2 h_{AB}$ with $\go \in \Co{\gS}$
 	\item an equivalence class of (nowhere vanishing) vertical vector fields $(n^a , h_{AB}) \sim (\go^{-1} n^a , \go^2 h_{AB})$ with $d\pi^A{}_b\; n^b~=~0$.
 \end{itemize}
Our convention is that lower case Latin indices are abstract indices for tensors on $\MNull$ while upper case Latin indices are abstract indices for tensors on $\gS$. We call null-infinity manifold a manifold $\MNull$ equipped with a universal null-infinity structure.
 \end{Definition}	
\noindent This can be directly obtained by restricting the suitable tensors to $\MNull$:
 \begin{align}
h_{AB} &= \gO^2 g_{\mu\nu} \big|_{\gO =0}, &  n^a &= \gO^{-2} \; d\gO_{\nu} g^{\mu\nu}\big|_{\gO =0}.
\end{align} There is however nothing unique about the boundary defining function $\gO$, one could just have well have considered instead $\go \gO$ and this fact leads to the appearance of equivalence classes in the definition.

So far so good, in particular the symmetry group of this universal structure is known \cite{ashtekar_geometry_2015, duval_conformal_2014, duval_conformal_2014-1} to be the celebrated BMS group 
\begin{equation}
BMS\left( \MNull \to \gS , [h_{AB}] , [n^a]\right) = \Co{\gS} \rtimes  Conf\left( \gS , [h_{AB}] \right)   
\end{equation}
where $Conf\left( \gS , [h_{AB}] \right)$ is the space of conformal isometries of $\gS$ and $\Co{\gS}$ parametrize ``super-translations'' along the fibres. 

Things get more intricate when considering the sub-leading order in the Bondi-Sachs expansion i.e the ``asymptotic shear'' $C_{AB}$, which is known to encode the radiative information of asymptotically flat 4D space-times. One might consider the idea that $C_{AB}$ induces a tensor on $\MNull$, however there is once again nothing unique about the coordinates used to represent the metric \eqref{Introduction: Bondi-Sachs metric} which means that $C_{AB}$ is subject to what might appear as strange transformation rules: 
If one applies the change of coordinates
\begin{align}
	\gOh &=\go \gO + \cO\left( \gO^2 \right) & \uh &= \go\left( u-\xi\right)  + \cO\left( \gO \right) & \xh^A &= x^A + \cO\left( \gO\right)   
\end{align}
(with $\go$, $\xi\in \Co{\gS}$ and the sub-leading terms in $\gO$ chosen to be such that this change of coordinates preserve the form of the metric \eqref{Introduction: Bondi-Sachs metric}) we obtain
\begin{align}\label{Introduction: C transformation rules}
	\hh_{AB} &= \go^2 h_{AB}\nonumber \\ 
	\nh^a &= \go^{-1} n^a \\
	\Ch_{AB} & = \go\Big(\;C_{AB} -2\; \covD_A \covD_B \big|_0 \xi -2\; \uh \; \covD_A \covD_B \big|_0 \go^{-1} \Big) \nonumber
\end{align}
where $\big|_0$ stands for ``trace-free part of''. The complicated transformation rules for the asymptotic shear should leave no doubt that $C_{AB}$ induces on $\MNull$ nothing like a tensor. 

In the beautiful pieces of work \cite{geroch_asymptotic_1977,ashtekar_radiative_1981,ashtekar_a._symplectic_1981,ashtekar_symplectic_1982,ashtekar_asymptotic_1987} (see \cite{ashtekar_geometry_2015,ashtekar_null_2018} for modern reviews) huge steps were realised in the geometrization of this radiative phase space. It was understood that the asymptotic shear could be interpreted as the coordinates of a ``covariant derivative'' $D$ on $\MNull$ compatible with the ``universal'' structure, $D h_{AB}=0$, $Dn^a =0$. Due to the fact that the metric on null-infinity is degenerate these conditions do not determined the ``connection'' uniquely, rather we obtain an affine space modelled on trace-free symmetric tensor on $\gS$. The beauty of the construction resides in the fact that presence of gravitational waves then correspond to the non-vanishing of the ``curvature'' of this ``connection''. In particular, the non-uniqueness of Minkowski vacuum was very elegantly interpreted as the existence of a whole family of ``flat connections'' (i.e ``connections'' with vanishing ``curvature'') with the BMS group acting transitively on this space of vacuum configurations (each of the stabilisers then corresponding to a different copy of the Poincaré group).

Despite the undeniable progress that this work represented we however believe that there is still room for improvement because of the following unappealing features : On the one hand we had to put the words ``connection'' and ``curvature'' between inverted comas for, due to conformal invariance, one is really here working with an equivalence class of connections and indeed from the conformal compactification point of view this is quite clear that there is no invariant notion of affine connection defined on $\MNull$. It follows that there is nothing straightforward about the related notion of curvature, in fact the construction appear as rather unnatural from the point of view of $\MNull$ itself. More generally, it is a really non-trivial guess game to be able to construct any kind of invariants from these equivalence class of connections. Since the whole point of a geometrical approach is to be able to work invariantly this state of affair is quite unsatisfactory. Another sign that this approach misses some part of the underlying geometrical structure is that ``flat connections'', i.e those associated with the absence of gravitational waves, are only remotely related with the existence of four-parameter families of ``good-cuts'' \cite{newman_heaven_1976,hansen_r._o._metric_1978,ko_theory_1981,adamo_generalized_2010,adamo_null_2012} which have been understood to be a crucial feature of flat space-times. 

The aim of this article is to propose an intrinsic description of the geometry of null-infinity avoiding the drawbacks that we just listed. As a bonus, our formalism will be versatile enough to allow to work in a generic dimension $n\geq2$ for $\MNull$. Our main result is that any $n$-dimensional manifold equipped with a universal null-infinity structure as in definition \ref{Introduction: Def, universal null-infinity structure} can be equipped with a compatible \emph{null-normal tractor connection}. For $n+1=4$ these connections are not unique but rather form an affine space modelled on symmetric trace-free tensors on $\gS$ (here $n=3$ is the dimension of the null-infinity manifold, see the core of the article for results in other dimensions). As opposed to previous works, these tractor connections are \emph{bona fide} (Cartan) connection on $\MNull$. Their curvature (now defined in the standard way) then naturally encodes the gravitational degrees of freedom. These connections naturally act on an $(n+2)$-dimensional vector bundle canonically constructed from the universal structure, the tractor bundle. In this way we obtain a ``tractor calculus'' on $\MNull$ which allows to easily proliferate invariants. Finally, solutions of the good-cuts equations are in directly related with tractors which are covariantly constant on $\MNull$ and therefore the existence of a $(n+1)$-parameters family of good-cuts is straightforwardly related to the vanishing of the curvature (equivalently the existence of $(n+2)$ linearly independent covariantly constant tractors) and thus to the absence of gravitational radiation .

In the rest of this article, and as opposed to the point of view taken in this introduction, the whole geometry will be discussed intrinsically, i.e solely in terms of the intrinsic (tractor) geometry of null-infinity manifold (as in definition \ref{Introduction: Def, universal null-infinity structure}). In a future work \cite{Herfray_2020} we plan to detail how this geometry can be naturally derived from the tractor geometry of asymptotically flat space-times.

As this article was in preparation, the author has become aware of the upcoming work \cite{Korovin_2020} that should share some common features with this present work and \cite{Herfray_2020}. These results have been obtained independently with motivations coming from holographic dualities.

\subsection{Degenerate conformal geometries and tractors}

As we just explained the main motivation for this work is to provide new tools to study the radiative phase space of gravity. Another important motivation, however, is to lay the basis of the investigation of the geometry of degenerate conformal manifolds through the means of Cartan connections.

Our main inspiration for starting this work indeed came from the conformal (more generally parabolic) geometry literature \cite{bailey_thomass_1994,sharpe_differential_1997,cap_parabolic_2000,cap_parabolic_2009} and especially recent works of R.Gover and collaborators \cite{gover_almost_2005,gover_conformal_2007,gover_almost_2010,curry_introduction_2018,gover_boundary_2014,gover_poincare-einstein_2015,gover_calculus_2016} which started the systematic investigation of the geometry of conformally compact manifolds through the use of tractors: this is especially satisfying as results can be then be stated in manifestly conformal invariant forms. Most of the difficulties appearing in the work \cite{geroch_asymptotic_1977,ashtekar_radiative_1981, ashtekar_a._symplectic_1981,ashtekar_symplectic_1982,ashtekar_asymptotic_1987,ashtekar_geometry_2015,ashtekar_null_2018} can indeed be traced back to the inherent complication of working with an equivalence class of metric: each metric representative defines its own Levi-Civita connection and this makes the construction of conformal invariants a non-trivial task. In the conformal geometry literature, two main sets of tools have been developed to circumvent this inherent difficulty: These are the ambient metric of Fefferman and Graham \cite{fefferman_conformal_1985,fefferman_ambient_2012} and tractor calculus \cite{bailey_thomass_1994} (the two being in fact closely related see \cite{cap_standard_2003}). They both allow to study conformal geometry in an \emph{essentially manifestly conformally invariant} way. In effect, the tractor calculus of conformal geometry acts as an equivalent of Ricci calculus for Riemannian geometry. We here briefly recall the basics of tractor calculus and describe the kind of modifications which are needed to adapt to the situation where the conformal metric is degenerate.

Let $\left( M , [h_{AB}] \right)$ be a $n$-dimensional conformal (Riemannian, i.e of Euclidean signature) manifold. The starting point is to convert the ``equivalence class of metric'' description in a more geometrical picture making use of the ``bundle of scale'' $L \to M$. If $x$ is a point on $M$ then the fibre $L_x$ at $x$  correspond to all possible choice of representative $h_{AB} \in [h_{AB}]$ at $x$. Accordingly, each choice of metric representative correspond to a choice of section of $L$. Choosing a representative $h \in [h_{AB}]$ then amounts to choosing a trivialisation of $L$ i.e amounts to work in a particular coordinate system: Conformal invariants are the geometrical objects that do not depend on (or transform covariantly under) this choice of coordinate. The tractor bundle $\cT \to M$ is then a $(n+2)$-dimensional vector bundle canonically constructed from the 2-jet bundle of $L$ and is naturally equipped with a metric of signature $\left(n+1,1\right)$. 

Just like the starting point of Ricci calculus is the fact that their is a unique torsion-free metric-compatible connection on the tangent bundle, the starting point of tractor calculus is the realisation that for $n>2$ there is a unique \emph{normal} connection on the tractor bundle compatible with the metric. Here ``normal'' refers to a natural set of conditions that one needs to impose on the curvature to obtain unicity (this plays a similar role in spirit to torsion-freeness for the Levi-Civita connection). This ``normal tractor connection'', canonically constructed from $[h_{AB}]$, really is a Cartan connection modelled on the conformal sphere \begin{equation}\label{Introduction: conformal sphere}
	\left(\S^n , [h_{\S^n}] \right) = \Quotient{\SO\left(n+1 , 1\right) }{Iso\left(n\right)  \rtimes\bbR}
\end{equation}
where $Iso\left(n\right)\rtimes\bbR \subset \SO\left(n+1 , 1\right) $ is the stabiliser of a null line in $\bbR^{(n+1,1)}$.

By construction the tractor bundle and the tractor connection do not rely on a choice of representative $h_{AB} \in [h_{AB}]$, rather each choice of such representative amounts to picking a particular ``splitting'' of the tractor bundle. More precisely a section of $L$ (equivalently a choice of representative $h_{AB} \in [h_{AB}]$) gives an isomorphism
\begin{equation}
\cT \to L \oplus \left( TM \otimes L^{-1} \right) \oplus L^{-1}.
\end{equation}
The equivalence class of Levi-Civita connections associated with the equivalence class of metrics $[h_{AB}]$ then correspond to an equivalence class of ``components'' describing the (invariant) tractor connection in each of these splittings. We here purposely use the same type of terminology as in \cite{geroch_asymptotic_1977,ashtekar_radiative_1981, ashtekar_a._symplectic_1981,ashtekar_symplectic_1982,ashtekar_asymptotic_1987,ashtekar_geometry_2015,ashtekar_null_2018} to emphasise the similitude: in this article we wish to suggest that the equivalence class of connections used in these works is best thought as an equivalence class of components for an (invariant) tractor connection.

 The comparison of null-infinity with conformal geometry in dimension $n>2$ however breaks here for there is no such a thing as the normal Cartan connection for a degenerate conformal geometry. On the other hand, a very illuminating parallel can be drawn with two-dimensional conformal geometry.

When $n=2$, a strange phenomenon appear: the unicity of normal tractor connections fails to be true. Rather there is an infinite family of compatible normal connections forming an affine space over quadratic differentials. This is directly related to the fact that the local conformal group for Riemann surfaces are holomorphic transformations rather than $PSL(2,\bbC)$ transforms. The exact geometrical structure needed to regain unicity of normal tractor connections has been extensively described in \cite{calderbank_mobius_2006,burstall_conformal_2010} under the name of Möbius structures:
\begin{Definition}{Möbius structure}\mbox{}\\ \label{Introduction: Def, Mobius structure}
	Let $\left( M, [h_{AB}]\right)$ be a conformal manifold. A compatible Möbius operator
	 \begin{equation}
	\cM \from \So{L}~\to~\So{S^2_0\; T^*M \otimes L}
	\end{equation} is defined to be a linear differential operator of order two such that, in the trivialisation of $L$ given by a choice of representative $h_{AB} \in [h_{AB}]$ it takes the form
\begin{equation}\label{Introduction: Mobius operator}
\ell \mapsto \left( \covD_A \covD_B\big|_0 -\frac{1}{2} N_{AB}  \right)  \ell.
\end{equation}
where $\ell$ are the coordinates of a section $\bell \in \So{L}$, $\covD_A$ is the Levi-Civita connection of $h_{AB}$,  $\covD_A \covD_B \big|_0$ is the trace-free part of the Hessian and $N_{AB}$ is a trace-free symmetric tensor. A Möbius structure $\left( M, [h_{AB}] , \cM\right)$ is a two-dimensional conformal manifold equipped with a compatible Möbius operator. The definition might seem to depend on a choice of representative $h_{AB} \in [h_{AB}]$ but it actually does not: if $\cM$ takes the form \eqref{Introduction: Mobius operator} for $h_{AB} \in [h_{AB}]$ it will have the same form for any other $\hh_{AB} \in [h_{AB}]$ with a non-trivial transformation rule $N_{AB} \mapsto \Nh_{AB}$.
\end{Definition}

Möbius operators form an affine space modelled on trace-free symmetric tensor (i.e quadratic differential) and by the results from \cite{calderbank_mobius_2006,burstall_conformal_2010} are in one-to-one correspondence with choice of normal tractor connections modelled on the conformal two-sphere \eqref{Introduction: conformal sphere}. What is more flat tractor connections (corresponding to holomorphic quadratic differentials) are in one-to-one correspondence with complex projective structure. In other terms a choice of Möbius operator effectively reduces the (infinite dimensional) group of holomorphic transformations to $PSL\left(2,\bbC\right)$:
\begin{equation}
\text{holomorphic maps} \quad \xto{\text{Möbius operator}} \quad  PSL\left(2,\bbC\right).
\end{equation}

 This vividly resemble the situation at null infinity where the BMS group is reduced to the Poincaré group by a choice of asymptotic shear,
 \begin{equation}
 \text{BMS group} \qquad \xto{\text{Poincaré operator}} \quad  \Iso(n,1).
 \end{equation}
  We will prove there is indeed a precise sense in which both phenomenon are related: just like two-dimensional conformal geometry is not rigid enough to allow for a description in terms of Cartan geometry (while Möbius structures are) degenerate conformal geometry needs to be complemented by an extra geometric structure (which we call a Poincaré structure) to become rigid. More precisely, while Möbius structures are equivalent to choices of normal Cartan connection modelled on the conformal sphere \eqref{Introduction: conformal sphere}, we will show that Poincaré structures are equivalent to choices of ``null-normal'' Cartan connection modelled on a realisation of null-infinity as an homogeneous space,
  \begin{equation}\label{Introduction: homogenous model for null-infinity}
  \MNull_{(n,1)} = \Quotient{ \Iso\left(n,1\right) }{Carr\left(n\right) \rtimes\bbR}
  \end{equation}
  where $Carr\left(n\right)\subset \SO\left(n+1,2\right)$ is the stabiliser of two (non-parallel) null vectors in $R^{(n+1,2)}$ (alternatively $Carr\left(n\right)$ is the Carroll group from \cite{levy-leblond_nouvelle_1965,duval_carroll_2014}).

Before we come to a more detailed description of these results, let us already point the following  shortcoming of our work: From the point of view that we just discussed, it is clear that the definition \ref{Introduction: Def, universal null-infinity structure} of ``universal null-infinity structures'' assumes to much. Apart from the topological requirements (which can be easily lifted) the main assumption here is that one assumes the metric to be invariant along the integral line of the vector fields. A more satisfying starting point would be to only assume a degenerate conformal metric $[h]$ together with a vertical vector field $[n]$ without any extra requirement (this structure has been discussed as a ``conformal Carroll structure'' in \cite{duval_carroll_2014,duval_conformal_2014,duval_conformal_2014-1}). One would then expect the invariance of $[h]$ along $[n]$ to only appear as part of the integrability condition for the Cartan connection. In this article, we however restrict ourselves to the ``universal null-infinity structure'' of definition \ref{Introduction: Def, universal null-infinity structure} : Since conformal boundaries of asymptotically flat space-times must have an induced structure of this type (this is imposed by Einstein's equations) this is all one really needs for discussing the physics of gravitational waves. On the other hand, and as far as a systematic description of degenerate conformal geometries is concern, this article can be thought as laying down the basis for future work.

\subsection{Summary of the main results}

Let $\left(\MNull \to \gS , \bh_{ab} , \bn^a \right)$ be a $n$-dimensional null-infinity manifold as in definition \ref{Introduction: Def, universal null-infinity structure}. In line with the standard practice in the conformal geometry literature we introduce the bundle of scales $L\to \MNull$ and the conformal class of metric $[h_{ab}]$ is now thought as a section $\bh_{ab}$ of $S^2T^*\MNull \otimes L^2$ while the equivalence class of vector field amounts to a section $\bn^a$ of $T\MNull \otimes L^{-1}$. See the preceding discussion for a heuristic definition or \cite{curry_introduction_2018} for a gentle introduction, for completeness all concepts of conformal geometry which are not standard in the mathematical physics community will also be reviewed in the bulk of this article.

A first, elementary, result is that null-infinity manifolds are naturally equipped with a vertical derivative on $L$ that we will note $\covD_{\bn} \from L \to \bbR$. This allows to define the set of scales with constant vertical derivatives $\CSoL{k}$ as
\begin{equation}
\bgs \in \CSoL{k} \qquad \Leftrightarrow \qquad \covD_{\bn} \bgs =k.
\end{equation}

In usual conformal geometry $\left( \gS , \bh_{AB} \right)$ sections of $L_{\gS}$ serve as trivialisations. In the context of null-infinity manifold $\left(\MNull \to \gS , \bh_{ab} , \bn^a \right)$ we really have two bundles $\MNull \to \gS$ and $L_{\gS} \to \gS$ and the equivalent of the trivialisations of conformal geometry will be given by well-adapted trivialisations:
\begin{Definition}\mbox{}
	
	A well-adapted trivialisation $\left(\bgs ,u \right)$ is a choice of trivialisation of $L_{\gS}\to \gS$ (given by a nowhere vanishing section $\bgs \in \So{L_{\gS}}$) together with a trivialisation of $\MNull\to \gS$ (given by $u\in \Co{\MNull}$ such that $u \times \pi \from \MNull \to \bbR \times \gS$ is a trivialisation) satisfying the compatibility relation: $\bn^a du_a =\bgs^{-1}$.
\end{Definition}
This definition has the advantage to fit with the notation of the gravitational wave literature in the BMS formalism where one typically work with a fixed coordinates system given by $u$ and a fixed representative $h_{AB} = \bgs^{-2} \bh_{AB}$, the compatibility condition then ensures that ``$\parD_u = \bgs \bn^a$''. One can however prove that, in this context, a choice of well adapted trivialisation is equivalent to choosing two independent sections of $L_{\gS}\to \gS$ and $\MNull \to \gS$ (since $\MNull \to \gS$ is not assumed to be a vector bundle this is not an immediate equivalence, rather it makes use of the extra structure given by $\bn^a$.)

With this preliminaries, one can define strong (or ``radiative'') null-infinity structures
\begin{Definition}{Strong null-infinity structure}\mbox{}\label{Introduction: Def,Poincare operator}\\
	Let $\left( \MNull \to \gS, \hConf_{ab}, \bn^a\right)$ be a null-infinity manifold. A compatible Poincaré operator \begin{equation}
	\cP \from~\CSoL{k}~\to~\So{S^2_0\; T^*\gS \otimes L}
	\end{equation} is defined to be a linear differential operator of order two such that, in a well adapted trivialisation $\left ( \bgs , u \right)$, it takes of the form 
	\begin{equation}\label{Introduction: Poincare operator}
	\cP(\bl)_{AB} =  \bgs \left( \covD_A \covD_B \big|_0+ \frac{1}{2}[C_{AB} , \parD_u ]  \right) l
	\end{equation}
	where $l= \bgs^{-1} \bl$, ``$\parD_u$'' stands for $\bgs\LieD_{\bn}$, $\covD_A$ is the ``horizontal derivative'' given by $u$ and the Levi-Civita connection of $h_{AB} = \bgs^{-2}\hConf_{ab}$,  $\covD_A \covD_B \big|_0$ is the trace-free part of the Hessian, $C_{AB}$ is a trace-free symmetric tensor and $[C_{AB} , \parD_u ]l := C_{AB}\parD_ul - (\parD_u C_{AB})l$ .
	
	 A strong\footnote{We refrain to call these ``radiative'' structures in the spirit of \cite{ashtekar_geometry_2015} for they only describe gravitational radiations on 3-dimensional null-infinity manifolds.} null-infinity structure $\left( \MNull \to \gS, \hConf_{ab}, \bn^a ,\cP\right)$ is a universal null-infinity structure together with choice of compatible Poincaré operator.
\end{Definition}
\begin{Proposition}\mbox{}
	\begin{itemize}
\item Definition \ref{Introduction: Def,Poincare operator} does not depend on a choice of well-adapted trivialisation: if $\cP$ takes the form \eqref{Introduction: Poincare operator} in  $\left( \bgs ,u \right)$ then it will have the same form in any other well-adapted trivialisation $\left(\bgsh , \uh \right)$ with a transformation rule $C_{AB} \mapsto \Ch_{AB}$ given by the next point.

\item Well-adapted trivialisation $\left(\bgsh , \uh \right)$ are parametrised by two functions $\go \in \Co{\MNull}$ and $\xi \in \Co{\MNull}$ such that $\left(\bgsh = \go^{-1}\bgs , \uh = \go\left( u- \xi \right) \right)$ and the transformation rules for $C_{AB}$ are the one given by \eqref{Introduction: C transformation rules}. 

\end{itemize}
\end{Proposition}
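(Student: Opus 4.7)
The strategy is to establish the second bullet first (parametrisation of changes of well-adapted trivialisation), then to propagate the change of coordinates through the coordinate expression \eqref{Introduction: Poincare operator} of $\cP$, simultaneously producing the invariance of the definition and the transformation rule for $C_{AB}$.

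For the parametrisation, since two nowhere-vanishing sections of $L_{\gS}\to\gS$ always differ by a nowhere-vanishing function on $\gS$, one may write $\bgsh = \omega^{-1}\bgs$ for some $\omega\in\Co{\gS}$ (pulled back to $\MNull$). The compatibility condition then forces
\[ \bn^a\,d\uh_a \;=\; \bgsh^{-1} \;=\; \omega\,\bgs^{-1} \;=\; \omega\,\bn^a\,du_a, \]
and since $\omega$ is constant along the fibres of $\pi$, this is equivalent to $\bn^a\parD_a(\uh - \omega u) = 0$. Hence $\uh - \omega u$ descends to a function on $\gS$, which we write as $-\omega\xi$ with $\xi\in\Co{\gS}$, giving $\uh = \omega(u-\xi)$.

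For the invariance and the transformation rule, set $\ell = \bgs^{-1}\bl$ and $\hat\ell = \bgsh^{-1}\bl = \omega\ell$. The requirement that $\cP(\bl)_{AB}$ take the form \eqref{Introduction: Poincare operator} in both trivialisations is equivalent, after dividing out $\bgsh$, to an identity between two order-two differential operators acting on $\ell$ which must hold for some trace-free symmetric $\hat C_{AB}$. The fibre part is immediate: $\bgsh\bn^a = \omega^{-1}\bgs\bn^a$ gives $\parD_{\uh} = \omega^{-1}\parD_u$, and $\parD_u\omega=0$ then yields $\parD_{\uh}(\omega\ell)=\parD_u\ell$. The base part, $\hat\covD_A\hat\covD_B|_0(\omega\ell)$, must be expanded in terms of $\covD_A\covD_B|_0\ell$, $\parD_u\ell$, and $\ell$. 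Two sources of correction enter: (a) the Levi-Civita connection of $\hat h_{AB} = \omega^2 h_{AB}$ differs from that of $h_{AB}$ by the standard conformal rescaling parametrised by $\Upsilon_A = \covD_A\log\omega$, and (b) the horizontal lift defined by $\uh$ differs from that defined by $u$ by $\parD_A|_{\uh} = \parD_A|_u + (\uh\,\parD_A\omega^{-1} + \parD_A\xi)\,\parD_u$. Applying both changes to the iterated derivative, taking the trace-free part, and matching against $\tfrac{1}{2}[\hat C_{AB},\parD_{\uh}](\omega\ell)$ determines $\hat C_{AB}$ uniquely: the $\parD_u\ell$-coefficient fixes its algebraic part while the $\ell$-coefficient fixes its derivative part, and together they recover exactly \eqref{Introduction: C transformation rules}.

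The main obstacle is the careful bookkeeping required to combine source (a) --- the ``conformal Hessian anomaly'', i.e.\ the failure of the $2$D trace-free Hessian of a weight-$1$ density to be conformally invariant, which in closed conformal geometry would be cured by a Schouten-type tensor --- with source (b), which is specific to the null-infinity setting and carries explicit $\uh$-dependence. The cross terms between (a) and (b) are precisely what generate the $-2\uh\,\covD_A\covD_B|_0\,\omega^{-1}$ contribution to $\hat C_{AB}$, while (b) acting on its own produces the pure $\xi$-contribution $-2\covD_A\covD_B|_0\xi$, and the overall factor of $\omega$ reflects the density weight of $\bl$. Once these pieces are identified and reorganised, the identity closes on itself and yields both claims of the proposition.
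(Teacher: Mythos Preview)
Your proposal is correct and follows essentially the same route as the paper, with one organisational difference worth noting. The paper exploits from the outset that $\bl\in\CSoL{k}$ forces the coordinate to decompose as $l=ku+\ell$ with $\ell\in\Co{\gS}$; it then rewrites $\cP(\bl)_{AB}$ explicitly as
\[
\bgs\Big(\tfrac{1}{2}k\big(C_{AB}-u\,\parD_uC_{AB}\big)+\big(\covD_A\covD_B\big|_0-\tfrac{1}{2}\parD_uC_{AB}\big)\ell\Big),
\]
and uses the simple transformation $\ellh=\gO(\ell+k\xi)$ for the $\gS$-part together with the standard conformal transformation rules for the Levi-Civita connection. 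This absorbs your source~(b) (the change of horizontal lift) entirely into the algebraic rule for $\ell\mapsto\ellh$, so that only source~(a) needs to be tracked through the Hessian. Your approach keeps the full coordinate and carries~(b) explicitly via $\parD_A|_{\uh}=\parD_A|_u+(\uh\,\parD_A\omega^{-1}+\parD_A\xi)\parD_u$; the two computations are equivalent, but the paper's decomposition shortens the bookkeeping considerably. Your attribution of the $-2\uh\,\covD_A\covD_B|_0\omega^{-1}$ term to ``cross terms between (a) and (b)'' is slightly imprecise (it arises already from iterating the $\uh\,\parD_A\omega^{-1}$ shift in~(b), with~(a) only entering to produce the correct trace-free combination), but this does not affect the validity of the argument.
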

We will in fact prove the following
\begin{Theorem}{Asymptotic shear and Poincaré operator}\mbox{}\label{Introduction: Prop,Asymptotic shear and Poincare operator}\\
	Choices of asymptotic shear for an asymptotically flat space-times of dimension $n+1 \geq 4$ are in one-to-one correspondence with choices of Poincaré operators on the null-infinity manifold at the conformal boundary.
\end{Theorem}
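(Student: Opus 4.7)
The strategy is to combine the Proposition immediately preceding the Theorem (which gives the invariant definition of a Poincaré operator together with the transformation rule for its coordinate data $C_{AB}$) with a precise dictionary between well-adapted trivialisations of $\left(\MNull \to \gS, \hConf_{ab}, \bn^a\right)$ and Bondi-Sachs coordinate systems in a neighbourhood of $\MNull$. Once this dictionary is in place, the Theorem becomes an identification of the two transformation laws for the tensor denoted $C_{AB}$: the one coming from gauge changes of the Bondi-Sachs expansion \eqref{Introduction: C transformation rules} on the one hand, and the one coming from changes of well-adapted trivialisation of $\cP$ on the other.

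First, I would make explicit the correspondence between gauges. Given an asymptotically flat metric of the form \eqref{Introduction: Bondi-Sachs metric} with boundary defining function $\gO$ and coordinates $(u, x^A)$, restriction to $\gO=0$ yields a metric representative $h_{AB} \in [\hConf_{ab}]$ on $\gS$ and a function $u \in \Co{\MNull}$ trivialising $\MNull \to \gS$. Writing $h_{AB} = \bgs^{-2} \hConf_{ab}$ defines a nowhere-vanishing section $\bgs \in \So{L_{\gS}}$, and the form of the metric forces the compatibility condition $\bn^a du_a = \bgs^{-1}$; thus every Bondi-Sachs chart induces a well-adapted trivialisation $(\bgs,u)$. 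Conversely, any well-adapted trivialisation $(\bgs, u)$ and defining function $\gO$ can be completed into a Bondi-Sachs chart to the required order. I would then verify that the residual Bondi-Sachs gauge transformations appearing above \eqref{Introduction: C transformation rules} are exactly parametrised, at the level of $\MNull$, by the pair $(\go, \xi) \in \Co{\MNull}^2$ relating two well-adapted trivialisations $(\bgsh, \uh) = (\go^{-1}\bgs, \go(u-\xi))$.

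Next, I would use this dictionary to assign a Poincaré operator to an asymptotic shear. Choose a Bondi-Sachs chart and the induced well-adapted trivialisation $(\bgs,u)$; extract the trace-free symmetric tensor $C_{AB}$ from the metric expansion; define $\cP$ on $\CSoL{k}$ by formula \eqref{Introduction: Poincare operator}. This expression is an honest linear differential operator of order two, and it takes the required form by construction. To show that the operator so defined is independent of the initial Bondi-Sachs chart, I would invoke the Proposition: since the Bondi-Sachs change of coordinates induces the change of trivialisation $(\bgs,u)\mapsto (\bgsh,\uh)$ described above, and since the asymptotic shear transforms as in \eqref{Introduction: C transformation rules}, the Proposition guarantees that the operator $\cP$ obtained in the new trivialisation is the same. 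The map ``asymptotic shear $\mapsto$ Poincaré operator'' is therefore well defined. The inverse map is obtained by reading off $C_{AB}$ from any well-adapted trivialisation using \eqref{Introduction: Poincare operator}, and interpreting it as the subleading coefficient in the Bondi-Sachs expansion of some asymptotically flat metric; the compatibility of the two transformation laws ensures that this assignment is also independent of choices. Checking that the two maps are mutually inverse is then a tautology at the level of coordinate data.

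The main obstacle, and the content that truly requires the dimensional hypothesis $n+1 \geq 4$, is twofold. First, one must verify the invariance claim of the Proposition used above, i.e.\ show by direct computation that under the change $(\bgs,u)\mapsto(\go^{-1}\bgs,\go(u-\xi))$ the Hessian term $\covD_A\covD_B\big|_0\, l$ and the commutator $[C_{AB},\parD_u]\, l$ recombine, using the conformal transformation law of the Levi-Civita connection on $\gS$, the action of $\LieD_{\bn}$ on $L$, and the defining condition $\covD_{\bn}\bl = k\bl$, to yield an expression of the same shape with $C_{AB}$ transformed precisely by \eqref{Introduction: C transformation rules}; here the trace-free projection and the fact that the bundle of scales has a specific weight are what enforce $n+1\geq 4$ and what make the inhomogeneous piece of the transformation absorb the variation of the Levi-Civita term. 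Second, one must confirm that any trace-free symmetric $C_{AB}$ can actually arise as the subleading term in a Bondi-Sachs expansion of some asymptotically flat solution of Einstein's equations at the appropriate order — this is the standard fact that $C_{AB}$ constitutes free characteristic data at $\MNull$, and I would simply quote it. The remaining verifications are routine.
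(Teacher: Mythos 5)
Your proposal matches the paper's own argument: the paper proves this theorem precisely by first establishing that the definition of a Poincaré operator is independent of the well-adapted trivialisation, with the coefficient $C_{AB}$ transforming by the stated rule under $(\bgs,u)\mapsto(\go^{-1}\bgs,\go(u-\xi))$, and then observing that this transformation law is identical to the known transformation law of the asymptotic shear in the Bondi--Sachs literature. You spell out the dictionary between Bondi--Sachs charts and well-adapted trivialisations, and the free-characteristic-data fact, in more detail than the paper does (it simply cites the literature for the shear's transformation rule and defers fuller justification to the later equivalence with the Ashtekar/Geroch equivalence classes of connections), but the route is the same.
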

Note that, when restricted to sections in $\CSoL{0}$  (i.e such that $\parD_u l =0$) the Poincaré operator \eqref{Introduction: Poincare operator} gives a differential operator formally similar to Möbius operators \eqref{Introduction: Mobius operator} (with $N_{AB} = \parD_u C_{AB}$). One can show that when $\parD_u\parD_u C_{AB}=0$ this indeed defines a Möbius operator on $\left(\gS , \bh_{AB}\right)$.

Apart from the nice relations to physical quantities, the introduction of Poincaré operators will be justified by their equivalence with a certain class of Cartan connections. These are Cartan connection modelled on the realisation of flat null-infinity $\MNull_{(n,1)}$ as the homogenous space \eqref{Introduction: homogenous model for null-infinity}. In the bulk of this paper we will review in details this homogenous space construction. For this introduction, this will however be enough to say that flat null-infinity $\MNull_{(n,1)}$ can be embedded in a null hyper-surface of $\bbR^{(n+1,2)}$. The tractor bundle of a null-infinity manifold is then an infinitesimal version of this embedding:
\begin{Proposition}{Tractor bundle of a null-infinity manifold}\mbox{}\label{Introduction: Prop,Tractor bundle of a null-infinity manifold}\\	
	Let $\left(\MNull \to \gS , \bh_{ab}, \bn^a \right)$ be a null-infinity manifold as in definition \ref{Introduction: Def, universal null-infinity structure}. It is canonically equipped with a $(n+2)$-dimensional vector bundle $\cT \to \MNull$, the ``tractor bundle''. This bundle comes with the following structure
	\begin{itemize}
		\item a degenerate metric $g_{IJ}$ with one-dimensional kernel and signature $\left(n,1\right)$, 
		\item a preferred null section $\bX^I \in \So{\cT \otimes L}$ i.e $g_{IJ} \bX^I \bX^J  =0$,
		\item a preferred section $I^I \in\So{\cT}$ such that $g_{IJ} I^J =0$ and $X_I I^I \neq 0$.
	\end{itemize}
What is more the ``reduced tractor bundle'' defined as the quotient $\cT / I$ is canonically isomorphic to the pull-back of $\cT_{\gS}$, the standard tractor bundle of $\left(\gS ,\bh_{AB} \right)$,
\begin{equation}
\cT / I = \pi^* \left( \cT_{\gS} \right).
\end{equation}	
\end{Proposition}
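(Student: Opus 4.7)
My plan is to construct $\cT$ as a canonical rank-$1$ extension of the pulled-back standard conformal tractor bundle $\pi^*\cT_\gS$ by a trivial line bundle in the $I$-direction, so that the short exact sequence $0 \to \underline{\bbR}\cdot I \to \cT \to \pi^*\cT_\gS \to 0$ automatically gives the quotient $\cT/\langle I \rangle = \pi^*\cT_\gS$ required by the statement. First, I would apply the standard Bailey--Eastwood--Gover tractor construction to the (non-degenerate) conformal manifold $(\gS, [\bh_{AB}])$ of dimension $n-1$, obtaining $\cT_\gS \to \gS$ of rank $n+1$ with its tractor metric of signature $(n,1)$ and canonical null tractor $\bX_\gS \in \So{\cT_\gS \otimes L_\gS}$; under the identification $L \simeq \pi^* L_\gS$ (implicit in the definition of a universal null-infinity structure, where the conformal rescaling is effected by functions pulled back from $\gS$), the bundle $\pi^*\cT_\gS$ inherits all of this structure.

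Next, I would build $\cT$ by the claimed extension. In a well-adapted trivialization $(\bgs, u)$ the extension will split canonically as $\pi^*\cT_\gS \oplus \underline{\bbR}$ with $I$ generating the $\underline{\bbR}$-summand; the transition rule between two well-adapted trivializations $(\bgs,u)\mapsto(\bgsh,\uh)$ then determines the cocycle defining $\cT$ globally on $\MNull$. The degenerate metric $g_{IJ}$ is defined as the pull-back of the tractor metric on $\pi^*\cT_\gS$, extended by zero on the $I$-direction, with kernel $\langle I \rangle$ and signature $(n,1)$ on the quotient; the canonical null section $\bX \in \So{\cT \otimes L}$ is $\pi^*\bX_\gS$ included via the splitting; and the pairing $\bX_I I^I$ is non-vanishing because it is the natural duality pairing between the $I$-direction and the tractor-tail component of $\pi^*\cT_\gS$ (not obtained by lowering an index via the degenerate metric). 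The isomorphism $\cT/\langle I\rangle \simeq \pi^*\cT_\gS$ is then tautological.

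\textbf{Main obstacle.} The delicate point is to make the construction manifestly canonical. In any single well-adapted trivialization the extension appears as a trivial direct sum, so the real substance lies in checking that the transition functions — involving both the conformal factor $\go$ and the super-translation $\xi$ — respect the extension structure. The cleanest way to circumvent this trivialization-dependent bookkeeping would be to follow the \v Cap--Gover strategy and realize $\cT$ as an appropriate quotient of the $2$-jet bundle $J^2 L$, further sensitive to the vertical derivative $\covD_\bn$ and to the spaces $\CSoL{k}$ of scales with constant vertical derivative; in that picture $I$ would arise as the image of the $2$-jet of any scale in $\CSoL{1}$ and $\bX$ from the tautological $0$-jet. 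An equivalent and arguably more conceptual route — which would dovetail with the next proposition identifying Poincaré operators with null-normal tractor connections — is to realize $\cT$ as the associated vector bundle of a Cartan geometry modelled on $\MNull_{(n,1)} = \Iso(n,1)/(\Carr(n)\rtimes\bbR)$ for an appropriate $(n+2)$-dimensional subquotient of the defining representation of $\SO(n+1,2)$ on $\bbR^{(n+1,2)}$ restricted to $\Carr(n)\rtimes\bbR$; the degenerate metric and the canonical sections $\bX$, $I$ would then all be read off directly from $\Carr(n)\rtimes\bbR$-invariant data in $\bbR^{(n+1,2)}$.
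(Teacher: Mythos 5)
Your ``cleanest way'' is in fact the paper's actual construction: the paper defines $\widetilde{J^2L}\subset J^2L$ by the two vertical conditions $\bgs\,\bn^a\parD_a l = cst$ and $\bn^a\parD_a\parD_b l=0$, sets $\cT^*=\widetilde{J^2L}/\pi^*\big(S^2_0T^*\gS\otimes L\big)$, and then proves every item of the proposition by writing the explicit Thomas splitting in a well-adapted trivialisation and computing the transformation rules \eqref{Weak structure of null-infinity: Standard transformation rules} between two such splittings; the invariance of $g_{IJ}$, $I^I$, $\bX^I$ and the identification $\cT/I=\pi^*\cT_\gS$ (whose transition functions are recognised as the standard conformal tractor cocycle on $\gS$) are all read off from that matrix. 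So at the level of ideas you have located the right construction, but your proposal defers to it without carrying it out.

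The genuine gap is in your primary route, where the missing step is precisely the content of the proposition. The extension $0\to\underline{\bbR}\,I\to\cT\to\pi^*\cT_\gS\to 0$ is \emph{not} the trivial one: the $Y^u$ component mixes non-trivially with the $\pi^*\cT_\gS$ components under change of well-adapted trivialisation (the last row of \eqref{Weak structure of null-infinity: Standard transformation rules} involves $\bgD\xi$, $\gU$ and $u-\xi$), so which extension you take, the cocycle condition, its independence of the choice of $(\bgs,u)$, and the existence of the canonical lifts of $\bX_\gS$ and of the metric are exactly what must be exhibited; ``the transition rule determines the cocycle'' restates the problem rather than solving it. Two smaller inaccuracies in your sketch: $I^I$ is not the image of the $2$-jet of a scale in $\CSoL{1}$ --- $2$-jets of scales land in $\cT^*$ (that is the Thomas operator), whereas $I^I\in\cT$ is the functional extracting the vertical-derivative component of a jet; and since $g_{IJ}I^J=0$ the nonzero pairing $X_II^I$ cannot come from lowering $\bX^I$ with $g_{IJ}$ --- you flag this correctly, but the canonical covector needed is the $0$-jet evaluation $Y_I\mapsto \bX^IY_I=\bl$, which again only exists once the jet model is in place.
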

\noindent Well-adapted trivialisations $\left(\bgs, u\right)$ splits the tractor bundle of a null-infinity manifold in the same way that trivialisations of $L$ splits the standard tractor bundle of conformal geometry:
\begin{Proposition}
	A well-adapted trivialisation $\left(\bgs , u \right)$ gives an isomorphism
\begin{equation}\label{Introduction: splitting}
 \cT \to L \oplus \left( \pi^*\left(T\gS\right)\otimes L^{-1} \right) \oplus  L^{-1} \oplus \bbR,
\end{equation}
 with the last term in this direct sum corresponding to the degenerate direction of the tractor metric.
\end{Proposition}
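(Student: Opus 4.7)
The plan is to split the tractor bundle in two stages, exploiting the two components of a well-adapted trivialisation: $u$ handles the "vertical" $\bbR$-direction generated by $I^I$, and $\bgs$ handles the remaining "horizontal" piece $\cT/I$ via standard conformal tractor calculus on the base $\gS$.

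First, I would split off the trivial line sub-bundle $\bbR\cdot I \hookrightarrow \cT$ spanned by the canonical null section $I^I$ provided by Proposition \ref{Introduction: Prop,Tractor bundle of a null-infinity manifold}. Since $I^I$ is null and satisfies $X_I I^I \neq 0$, it generates the one-dimensional kernel of $g_{IJ}$, so this sub-bundle is exactly the degenerate direction of the tractor metric. The function $u$ then provides a canonical complement: intuitively, $u$ trivialises the fibre of $\MNull \to \gS$ and its differential $du$ singles out a distinguished lift of $\cT/I$ back into $\cT$. This yields a splitting $\cT \cong (\cT/I) \oplus \bbR\cdot I$.

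Second, I would invoke the canonical isomorphism $\cT/I \cong \pi^*\cT_{\gS}$ of Proposition \ref{Introduction: Prop,Tractor bundle of a null-infinity manifold} and split $\cT_{\gS}$ using the classical scale-splitting of conformal tractor calculus: a trivialisation $\bgs \in \So{L_{\gS}}$ induces
\begin{equation}
\cT_{\gS} \xrightarrow{\sim} L_{\gS} \oplus (T\gS \otimes L_{\gS}^{-1}) \oplus L_{\gS}^{-1}.
\end{equation}
Pulling back along $\pi$ and identifying $\pi^*L_{\gS}$ with the scale bundle $L$ on $\MNull$ (the compatibility relation $\bn^a du_a = \bgs^{-1}$ being precisely what matches the trivialisation $\bgs$ on $\gS$ with the induced scale on $\MNull$), one arrives at
\begin{equation}
\pi^*\cT_{\gS} \cong L \oplus (\pi^*T\gS \otimes L^{-1}) \oplus L^{-1}.
\end{equation}
Combined with the first step, this produces the claimed isomorphism
\begin{equation}
\cT \xrightarrow{\sim} L \oplus (\pi^*T\gS \otimes L^{-1}) \oplus L^{-1} \oplus \bbR,
\end{equation}
with the last summand, generated by $I^I$, being the degenerate direction of $g_{IJ}$.

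The main obstacle will be verifying that the complement of $\bbR\cdot I$ singled out by $u$ is genuinely canonical, and, more generally, that the entire splitting depends only on the data $(\bgs, u)$ and not on any auxiliary choices made in constructing $\cT$. This requires unpacking the construction of $\cT$ as an associated bundle of the $2$-jet bundle of $L$ (to be carried out in the bulk of the article); once that is in hand, each of the two steps above is essentially algebraic, but the bookkeeping needed to confirm that the structural data $g_{IJ}$, $\bX^I$ and $I^I$ have the expected components in the resulting direct sum is where the real work lies.
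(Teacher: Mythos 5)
Your two-stage factorization is sound and, once unpacked, amounts to essentially the same construction the paper carries out: the paper defines the splitting directly as the ``Thomas splitting'' on the jet-bundle model of $\cT^*$, sending the $2$-jet of $\bl$ to $\left(\covD_{\bn}\bl,\ \bl,\ \covD_A\bl,\ -\bgs^{-2}\tfrac{1}{n-1}(\gD+P)\bl\right)$, and verifies the transformation rules by observing that on the sub-bundle $\covD_{\bn}\bl=0$ one is reduced to the classical conformal computation on $\gS$ --- which is precisely your second stage. Two points in your outline need repair, however. First, the complement of $\bbR I$ is \emph{not} singled out by $u$ (or $du$) alone: dually, the lift of $\bbR$ into $\cT^*$ is the locus where $\bl$, $\covD_A\bl$ and $(\gD+P)\bl$ all vanish, and this uses the metric $h_{AB}=\bgs^{-2}\bh_{AB}$, its Levi-Civita connection and the trace of its Schouten tensor, i.e.\ both entries of the well-adapted trivialisation. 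One reads this off the bottom row of the paper's transformation matrix: already a pure change of scale ($\xi=0$, $\gO\neq 1$) moves the subspace $\{Y^u=0\}$. Since the hypothesis hands you both $\bgs$ and $u$ this does not invalidate the conclusion, but the attribution ``$u$ handles the $\bbR$-direction, $\bgs$ handles the rest'' would mislead you when checking well-definedness. Second, your appeal to ``the classical scale-splitting of $\cT_{\gS}$'' silently requires $\dim\gS\geq 2$, because the distinguished bottom slot of that splitting involves the Schouten trace $P$ of the base; for $n=2$ the paper must additionally assume a generalised Laplace structure (replacing $P$ by $-\tfrac{1}{2}M$), a caveat absent from your argument (and, to be fair, suppressed in the introduction's statement). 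With those two corrections, the remaining work is exactly the bookkeeping you anticipate, and it coincides with the explicit computation in the paper's proof.
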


We are now in position to state our main result.
\begin{Theorem}{Poincaré operators and Null-normal connections}\mbox{}\label{Introduction: Thrm, null-normal connection}\mbox{}\\
	Let $\left(\MNull \to \gS , \bh_{ab} , \bn^a \right)$ be a null-infinity manifold of dimension $n$.	
	\begin{itemize}
		\item If $n =3$, choices of null-normal tractor connection are in one-to-one correspondence with choices of Poincaré operator.
		\item If $n\geq4$, choices of null-normal tractor connection are in one-to-one correspondence with choices of Poincaré operator inducing the canonical Möbius structure on $\left(\gS , \bh_{AB}\right)$.		
	\end{itemize}
Here ``null-normal'' is a natural condition on the curvature that generalise the normality condition of conformal geometry and will be discussed in details in the bulk of the paper. By ``canonical Möbius structure'' we mean the one given by the Schouten tensor, in particular in dimension $n\geq4$ null-normal connections restricted to the reduced tractor bundle must be the pull-back of the normal Cartan connection on $\left(\gS , \bh_{AB}\right)$.
\end{Theorem}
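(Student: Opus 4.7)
The plan is to make the correspondence explicit by working in a well-adapted trivialisation $\left(\bgs,u\right)$ and using the splitting \eqref{Introduction: splitting} to describe both sides in block-matrix form. I would write a section of $\cT$ as a quadruple $\left(\sigma,\mu^A,\rho,\tau\right)$ with $\tau$ corresponding to the degenerate $\bbR$-factor and the other three slots transforming in the familiar way of the standard tractor bundle of $\gS$ (lifted along $\pi$). With respect to this splitting I would parametrise the most general linear connection on $\cT$ that is compatible with $g_{IJ}$, annihilates $I^I$, and reproduces the canonical solder map on $\bX^I$. After using the horizontal derivative $\covD_A$ associated to $\left(\bgs,u\right)$ and the vertical operator $\parD_u = \bgs\LieD_{\bn}$, the remaining freedom in the connection coefficients is encoded by a horizontal tensor $\Pi_{AB}$ (a Schouten-like field) together with a vertical tensor $K_{AB}$, both trace-free symmetric.

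For the forward direction, given a Poincaré operator in the local form \eqref{Introduction: Poincare operator}, I would build a candidate null-normal connection by setting $K_{AB}=C_{AB}$ and fixing $\Pi_{AB}$ to be the Schouten tensor of $h_{AB}$. This choice makes the induced connection on $\cT/I \simeq \pi^*\cT_{\gS}$ the pull-back of the conformal normal tractor connection of $\gS$ when $\dim\gS\geq 3$, and the Möbius connection associated to the restriction of $\cP$ to $\CSoL{0}$ when $\dim\gS = 2$. The commutator factor $[C_{AB},\parD_u]$ appearing in \eqref{Introduction: Poincare operator} is exactly what is needed for the top-slot of the tractor derivative of a scale to reproduce $\cP$. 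Invariance of the construction under change of well-adapted trivialisation then reduces to checking that the inhomogeneous transformation of $\Pi_{AB}$ combined with the transformation \eqref{Introduction: C transformation rules} of $C_{AB}$ reproduces the expected transformation of the connection components, which is a direct calculation once one knows the transformation laws on each side.

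For the converse and the dimension dichotomy, I would formulate ``null-normal'' as the vanishing of the natural trace of the tractor curvature in the $I^I$-direction, this being the analog of the Cotton-type obstruction that defines the normal tractor connection in standard conformal geometry, adapted to the presence of a null direction. Computing this curvature from the block description above, the condition splits into two pieces. A horizontal trace equation which, when $\dim\gS\geq 3$, forces $\Pi_{AB}$ to be the Schouten tensor of $h_{AB}$ and so imposes the canonical Möbius matching; the same equation is vacuous in the surface case, reflecting the fact that in dimension two Möbius structures are extra data, and $\Pi_{AB}$ is then fixed instead by the Möbius operator obtained from $\cP$ itself. A second, mixed horizontal-vertical equation pins down the precise coupling between $K_{AB}$ and $\parD_u$ with no residual obstruction, so that $K_{AB}=C_{AB}$ is the only remaining free datum. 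Running the correspondence backwards then recovers $\cP$ from $K_{AB}$ and closes the bijection.

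The hard part will be isolating the correct null-normality conditions. Unlike the Riemannian case, the degeneracy of the tractor metric means that several natural ``traces'' of the curvature are available, and only one combination gives the right amount of rigidity: enough to kill all Christoffel-type ambiguities yet loose enough to leave the shear $C_{AB}$ free. I would identify this combination by first computing the curvature of the general connection in a fixed trivialisation, then selecting the minimal set of linear relations on its components which are tensorial (i.e.\ invariant under the change of $\left(\bgs,u\right)$ computed above) and which reduce, upon contraction with $I^I$, to the standard conformal normality condition on the reduced bundle $\pi^*\cT_{\gS}$. The bulk of the work is then a combinatorial analysis of the curvature block-matrix, together with a dimension count confirming that the $n=3$ versus $n\geq 4$ split matches the dimension-two versus dimension-three dichotomy of standard conformal Cartan geometry.
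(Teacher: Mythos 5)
Your overall strategy is the one the paper follows: fix a well-adapted trivialisation, parametrise the general metric- and $I^I$-preserving connection with the correct soldering on $\bX^I$, identify the residual freedom with a Schouten-like slot and a shear-like slot, and recover the Poincaré operator as a component of the tractor derivative of the Thomas lift of a scale in $\CSoL{k}$. The dimension dichotomy you describe (a trace equation forcing the Schouten tensor for $\dim\gS\geq 3$ and being vacuous for surfaces) is exactly how the factor $(n-3)$ arises in the paper's second normality equation.

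The genuine gap is that the null-normality conditions are never actually stated, and the one characterisation you do offer --- ``the vanishing of the natural trace of the tractor curvature in the $I^I$-direction'' --- is not the right condition and would not close the argument. In the paper the definition consists of four separate requirements: torsion-freeness $F^I{}_J\bX^J=0$ (which is what makes your two slots symmetric and forces the induced connection on $T\gS$ to be Levi-Civita), compatibility with the Thomas operator (which fixes the traces $\bxi^C{}_C=-\bgs^{-2}P$ and $\bC^C{}_C=0$; you impose trace-freeness of $K_{AB}$ by fiat instead of deriving it), the first null-normality equation $\bn^c F^a{}_{bcd}=0$, and the second null-normality equation $\bh^{cb}F^a{}_{bcd}=0$. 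The first of these is a contraction of the curvature two-form with the vertical vector field in a \emph{form} index, not a trace over the endomorphism indices in the $I^I$-direction; it is precisely this condition that forces $\xi_{AB}\big|_0=\tfrac{1}{2}\parD_u C_{AB}$ and hence produces the commutator $[C_{AB},\parD_u]$ when one computes $D_A T_B(\bl)$. Without it the connection carries no information about $\parD_u C_{AB}$ and the claimed bijection with Poincaré operators fails. Your closing remark that you would ``select the minimal set of linear relations which are tensorial'' is a search plan rather than a proof; since the entire content of the theorem is the existence and precise form of these relations, this is the step that must still be supplied.
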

\noindent These tractor connections can also be directly understood to be Cartan connections for the homogenous space \eqref{Introduction: homogenous model for null-infinity}. In particular flatness gives a local identification with the model.

Altogether, theorem \ref{Introduction: Prop,Asymptotic shear and Poincare operator} and theorem \ref{Introduction: Thrm, null-normal connection} means that, in the physically relevant dimension $n+1=4$, a choice of asymptotic shear (characterising gravitational radiations) precisely correspond to a choice of null-normal tractor connection. 

The reader might fear that this result is too abstract for practical purpose. However, and even thought we believe that one of the main advantage of the formalism presented here is its conceptual clarity, any choice of well-adapted trivialisation will actually enable to work very concretely. As an example, the following proposition can be taken to be a practical definition for null-normal tractor connections (this is here given for $n+1=4$, see the core of the paper for results in other dimensions).
\begin{Proposition}\label{Introduction: Proposition, Null-Normal Cartan connection}
	Let $\left(\MNull \to \gS , \bh_{ab} , \bn^a \right)$ be a $3$-dimensional null-infinity manifold . Then, in the splitting \eqref{Introduction: splitting} given by a well-adapted trivialisation $\left(\bgs, u\right) $ we have
	\begin{equation}
		D_bY^I = \Mtx{ \covD_b & -\gth_{bC}  & 0 & 0  \\
			- \xi_b{}^A &	\covD_b & \gth_b{}^A & 0  \\
			0 &  \xi_{bC} & \covD_b & 0\\
			-\psi_b & -\frac{1}{2} C_{bC} & du_b	& \covD_b	
		} \Mtx{ Y^{+} \\ Y^C \\  Y^{-} \\ Y^u  }
	\end{equation}
	where  $\gth^B_b \from T \MNull \to T\MNull/\bn$ is the canonical projection, $\covD$ is the tensor product of the Levi-Civita connection of $h_{AB} = \bgs^{-2}\bh_{AB}$ with the connection on $L$ given by the scale $\bgs$ and	
		\begin{align}
			C_{bA}&=C_{AB}\;\gth^B_b,& \xi_{bA} &= \left(\frac{1}{2}\parD_u C_{AB} - \frac{R}{4} h_{AB}  \right)\gth^B_b,&  \psi_b &=\frac{1}{4}R \; du_b -\frac{1}{2}\covD^C C_{BC} \;\gth^B_b.
		\end{align}
Here $R$ is the scalar curvature of $h_{AB}$ and ``the asymptotic shear''  $C_{AB}$ is a trace-free symmetric tensor. 
\end{Proposition}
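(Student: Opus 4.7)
The plan leans on Theorem \ref{Introduction: Thrm, null-normal connection}: for $n=3$, a Poincaré operator with shear $C_{AB}$ determines a unique null-normal tractor connection $D$, so proving the proposition reduces to computing $D$ explicitly in the splitting \eqref{Introduction: splitting}. I would begin by writing the most general linear connection on $\cT$ as a matrix of one-form valued endomorphisms in that splitting, then cut it down using the algebraic structures on $\cT$ from Proposition \ref{Introduction: Prop,Tractor bundle of a null-infinity manifold}. Compatibility with the degenerate tractor metric forces the block structure (the $+$ and $-$ slots pair dually, the $A$ slot is self-paired via $h_{AB}$, and the $u$ slot is isolated as the kernel direction), so most matrix entries are determined by others via transposition or simply vanish. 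Requiring that $D$ map the filtration of $\cT$ induced by $\bX^I$ and $I^I$ into itself in the standard graded way then produces the $-\gth_{bC}$ and $\gth_b{}^A$ soldering entries. Finally, requiring that the induced connection on $\cT/I = \pi^{*}\cT_{\gS}$ coincide with the pull-back of the normal Cartan connection on $(\gS,\bh_{AB})$ equipped with its canonical Möbius structure pins down the upper-left $3 \times 3$ block up to a single symmetric tensor $\xi_{bA}$ and a single co-vector $\psi_b$ on $\MNull$.

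Next I would read off how the Poincaré operator enters the $u$-row. Because $\cT$ is built from the $2$-jet of $L$, applying $D$ to the natural lift of a scale $\bl \in \CSoL{k}$ into $\cT$ and projecting onto the $A$-slot must reproduce $\cP(\bl)_{AB}$ up to the normalisation in \eqref{Introduction: Poincare operator}. This identifies the $-\tfrac{1}{2} C_{bC}$ entry with the shear via $C_{bA} = C_{AB}\gth^{B}_{b}$, and fixes the $du_b$ entry using the compatibility condition $\bn^{a}du_{a} = \bgs^{-1}$ of a well-adapted trivialisation. At this stage only $\xi_{bA}$ and $\psi_b$ remain undetermined.

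The last step is to impose the \emph{null-normal} condition on the curvature of the ansatz; this generalises the usual normality requirement of conformal tractor calculus and, concretely, amounts to the vanishing of a prescribed set of curvature traces together with the matching of others against Cotton-type tensors of $h_{AB}$. Expanding the curvature of the ansatz produces a bounded expression in $\xi_{bA}$, $\psi_b$, $C_{bA}$ and the Riemann tensor of $h_{AB}$, and the null-normal constraints then reduce to linear equations whose unique solution is exactly $\xi_{bA} = \bigl(\tfrac{1}{2}\parD_u C_{AB} - \tfrac{R}{4} h_{AB}\bigr)\gth^{B}_{b}$ and $\psi_b = \tfrac{1}{4} R\,du_b - \tfrac{1}{2}\covD^{C}C_{BC}\,\gth^{B}_{b}$. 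The main obstacle is bookkeeping: the null-normal constraints mix the horizontal derivative $\covD$ on $(\gS,h_{AB})$ with the vertical derivative $\parD_u$ through $C_{AB}$, and one must distinguish abstract indices on $\gS$ from those on $\MNull$ and keep careful track of the conformal weights attached to each slot; once this is under control, matching coefficients in the trace and divergence parts of the curvature is mechanical.
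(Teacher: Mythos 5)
Your overall strategy (general ansatz in the splitting, then algebraic constraints, then normality conditions on the curvature) is the same as the paper's, but there is a genuine error in the middle step that matters precisely in the dimension of this proposition. You propose to pin down the upper-left block by requiring that the induced connection on $\cT/I=\pi^*\cT_{\gS}$ coincide with the pull-back of the normal Cartan connection on $\left(\gS,\bh_{AB}\right)$ equipped with its canonical Möbius structure. For $n=3$ the base $\gS$ is a surface and, as the paper stresses at length, there is no \emph{local} canonical normal tractor connection and no local canonical Möbius structure in two dimensions; more importantly, even granting some preferred choice, this requirement would fix $\xi_{bA}$ to a tensor built from $h_{AB}$ alone, independent of the shear, contradicting the answer $\xi_{bA}=\left(\tfrac12\parD_u C_{AB}-\tfrac{R}{4}h_{AB}\right)\gth^B_b$ you are trying to derive. (That requirement is correct only for $n\geq4$, where it gives $\xi_{bA}=-P_{AB}\,\gth^B_b$ and forces $-\tfrac12\parD_u C_{AB}=P_{AB}\big|_0$; cf.\ Theorem \ref{Introduction: Thrm, null-normal connection}.) The mechanism that actually produces the $\tfrac12\parD_u C_{AB}$ term is the \emph{first} null-normality equation $\bn^c F^a{}_{bcd}=0$ --- a contraction of the curvature with the vertical vector field, not a metric trace --- which ties the trace-free part of $\xi_{AB}$ to the $u$-derivative of the shear and also gives the $\tfrac{R}{4}\,du_b$ term in $\psi_b$, while compatibility with the Thomas operator fixes the trace $\xi^C{}_C=-P=-\tfrac{R}{2}$ (hence the coefficient $-\tfrac{R}{4}$ of $h_{AB}$ when $n=3$). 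Your closing description of the normality conditions as the vanishing of curvature traces together with Cotton-type matching captures only the second normality equation (the $\bh^{bc}$-contraction, which yields $\psi_A=-\tfrac12\covD^C C_{AC}$) and misses this vertical one; also note your claim is internally inconsistent, since a block that is forced to equal a specific pull-back connection cannot simultaneously be free ``up to a single symmetric tensor $\xi_{bA}$.''

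A second, smaller omission: torsion-freeness ($F^I{}_J\bX^J=0$) is nowhere imposed. In the paper it is what forces the connection appearing in the $A$-block to be the Levi-Civita connection of $h_{AB}$, kills the $du_b$-components of $\xi_{bA}$ and $C_{bA}$, and makes $\xi_{AB}$ and $C_{AB}$ symmetric; none of this follows from metric compatibility or filtration preservation alone. With torsion-freeness, Thomas compatibility, and \emph{both} null-normality equations in place, the computation does close exactly as you describe in your final paragraph.
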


 In particular, by covariant differentiations and contractions of the curvature tensor of this connection one can in principle construct non-trivial invariants in an very explicit way.

\paragraph{The conformal boundary of 3D asymptotically flat space-times}\mbox{}

In the bulk of this article we will also treat the case $n=2$, corresponding to the conformal boundary of a three-dimensional asymptotically flat space-times. The geometry of the null-boundary then also relates very naturally to the physics: a choice of null-normal Cartan connection amounts to a choice of 3D ``Mass aspect'' and ``Angular momentum aspect''. In particular choices of 3D ``Mass aspect'' correspond to a choice of (generalised) Laplace structure as defined in \cite{calderbank_mobius_2006,burstall_conformal_2010}. Even thought things are really close in spirit to the higher dimensional cases, the details are however significantly different and it would take us too far to describe them in this introduction.

\subsection{Organisation of the article}

In order to make this article self-contained we first review the elements of conformal geometry that will be needed in the rest or the article. We then take some time describing the flat model i.e the realisation of null-infinity as an homogeneous space for the Poincaré group. This is essential since tractors will be modelled on this homogeneous space. We then review the ``universal'' or ``weak'' structure of null-infinity in a form that will be suited to describe our results. The ``strong'' (or ``radiative'') structure of null-infinity is related to a choice of Poincaré operators and the geometry of these operator is described, along the way we review essential results on Möbius structures. We then come to the tractor bundle of a null-infinity manifold: we first define the bundle and describe its properties before discussing tractor connections and their equivalence with Poincaré operators. Finally we discuss gravity vacua i.e flat null-normal tractor connections on $\bbR \times \S^{n-1}$ and transition between two such vacua.

\setcounter{tocdepth}{2}
\tableofcontents

\section{Elements of conformal geometry}

We here review standard elements of conformal geometry, see \cite{curry_introduction_2018} for a nice introduction. This will serve mainly to fix the conventions that we will use in the following sections.

\subsection{Conformal manifolds and the bundle of scales}

\subsubsection{Bundle of scales and abstract index conventions}
Let $\M$ be a $d$-dimensional manifold. The bundle of $1$-densities $|\bigwedge|\M$ is the real line bundle associated to the frame bundle of $\M$ with respect to the representation $\mathrm{M} \mapsto |det(\mathrm{M})|^{-1}$ of $GL(d)$. This bundle is always trivial. If $\M$ is orientable, $1$-densities coincide with $n$-forms but on non-orientable manifold $1$-densities (not $d$-forms) are the right type of objects needed for integration. 

The \emph{bundle of scales} is defined by taking the dual of the positive d-root $L = \left(|\bigwedge|X\right)^{-\frac{1}{d}}$. We will take $L^+ \subset L$ to be the bundle of positive scales. In what follows however we will only consider orientable manifolds and make the identification \begin{equation}\label{Conformal geometry: tautological isomorphism for scales}
L = \left(\gL^{d}\; T^* \M \right)^{-\frac{1}{d}} .
\end{equation}
This is only for convenience and all results extend straightforwardly to non-orientable manifolds.

Everywhere in this article we will use an extended version of Penrose's abstract indices notation \cite{penrose_spinors_1984} where weighted-valued sections are represented by bold letters e.g $\ga_a \in \So{T^*\M}$, $\bf \in \So{L^k}$, $\bV^a \in \So{T \M\otimes L}$.

\subsubsection{Conformal manifolds}
We will say that $(\M, \gConf_{ab} )$ is a \emph{conformal manifold} if it is equipped with a non-degenerate symmetric bilinear form $\gConf_{ab}$ with values in $L^2$, i.e $\gConf_{ab}$ is a section of $S^2 T^*\M \otimes L^2 $. The volume form $\bmu_{(\gConf)}{}_{a_1 ... a_{d}}$ is then a section of $\bigwedge{}^d\; T^*\M\otimes L^{d}$ and therefore gives an isomorphism $L^{-d} \to \bigwedge{}^d \; T^*\M$. We will always suppose $\gConf_{ab}$ to be such that this isomorphism is the tautological one \eqref{Conformal geometry: tautological isomorphism for scales} (this can always be achieved by multiplying $\gConf_{ab}$  by the correct function).

Therefore, if $V^a$ is a tangent vector at $x \in\M$, its ``length'' (evaluated with the conformal metric) is $	\sqrt{|\gConf_{ab} V^a V^b|} \in L_x$. One can think of this in physical terms : generically there is no natural scale for length measurements and one needs to choose an arbitrary unit system (centimetres, or inches) to convert a measurement into a number. The situation is similar here, only once we make a choice of scale $\bgs \in L_x$ can we convert our length into a number $\bgs^{-1}\sqrt{|\gConf_{ab} V^a V^b|} \in \bbR$. Conformal geometry corresponds to the curved version of this everyday experience in that scales have to be chosen independently at every point of the manifold $\M$.

\subsection{Choice of scale}

\subsubsection{Choice of trivialisation}

An essential difficulty in conformal geometry is that there is no canonical covariant derivatives on the tangent bundle. One way to deal with this problem is to work in a given trivialisation. This amounts to picking a global (nowhere zero) section $\bgs \in \So{L}$: if $\bf \in \So{L^k}$ is a section of $L^k$, its coordinate is then given by $\bgs^{-k} \bf \in \Co{\M}$. In particular choosing a scale $\bgs \in \So{L}$ amounts to a choice of representative metric $g_{ab}= \bgs^{-2} \gConf_{ab} \in \So{S^2\; T^* \M}$ in the ``conformal class'' $\gConf_{ab}\in \So{S^2\; T^* \M \otimes L^2}$. 

Had we considered another trivialisation $\bgsh \in \So{L}$ such that $\bgsh = \bgs \gO^{-1}$ we would however have obtained
\begin{align}
	\gConf_{ab} &= \bgsh^{2} \ggh_{ab} = \bgs^{2} g_{ab} &\Rightarrow& &\qquad \ggh_{ab} &= \gO^2 g_{ab}\\	
	\bf &= \bgsh^{k} \fh = \bgs^{k} f &\Rightarrow& & \qquad \fh&= \gO^k f.\nonumber
\end{align}

\subsubsection{Weyl connection associated with a choice of scale}

Once we have made a choice of scale $\bgs \in \So{L}$, we can use the Levi-Civita connection $\covD^{(\gs)}$ of $g_{ab} = \bgs^{-2}\bgg_{ab}$ to differentiate tensors. A choice of scale $\bgs \in \So{L}$ also defines a connection $\covD^{(\gs)}$ on $L^k$ as
\begin{equation}\label{Conformal geometry: connection on L}
	\covD^{(\gs)}_{a} \bl = \bgs^{k} d_a\left( \bgs^{-k}\bl \right).
\end{equation}
In what follows, when the context clearly suggests that we have a preferred scale we will simply write $\covD$ for the above connections: E.g if $U^a$, $\go_a$ and $\bl$ are sections of $TX$ , $T^*X$ and $L^k$ respectively, we note their covariant derivatives $\covD_a U^b$, $\covD_a \go_b$ and $\covD_a \bl$.

However, since we are interested by conformal invariants, we will have to check at each step that our statements do not depend on this choice of trivialisation, i.e that we could have used another section $\bgsh$ and obtain equivalent results. If $\bgsh = \gO^{-1} \bgs$ is any other scale and $\covDh$ the associated connection, we have the transformation rules
\begin{align}\label{Conformal geometry: Connection transformation rules}
\covDh_a \bl &= \covD_a \bl + k \gU_a \bl \nonumber\\
\covDh_a U^b &= \covD_a U^b +\gU_a U^b - U_a \gU^b + U^c \gU_c \;\gd^b_a \\
\covDh_a \go_b &= \covD_a \ga_b -\gU_a \go_b - \go_a \gU_b + \gU^c \go_c \; g_{ab} \nonumber
\end{align}
where $\gU_a = \gO^{-1} \; d_a\gO$.

\subsubsection{Curvature tensors}\label{ss: Curvature tensors}

Whenever one has a preferred scale $\bgs$ and therefore a preferred metric $g_{ab} = \bgs^{-2}\bgg_{ab}$ it will be useful to define the Schouten tensor $P_{ab}$ as
\begin{align}\label{Conformal geometry: Schouten tensor}
P_{ab} &= \frac{1}{d-2}\left( R_{ab} -\frac{R}{2(d-1)} h_{ab} \right)\\
       &= \frac{1}{d-2}  R_{ab}\big|_{0} +\frac{R}{2d(d-1)} g_{ab} \nonumber
\end{align}
where $R_{ab}$ is the Ricci tensor and $R$ the Ricci scalar. Here and everywhere in this article $|_0$ will also indicates the ``trace-free part of'' the tensor. We also define the trace of the Schouten tensor as
\begin{equation}\label{Conformal geometry: Schouten tensor trace}
P = \frac{1}{2(d-1)} R.
\end{equation}
It will also be important for us that, even thought the Schouten tensor is only defined in dimension $d \geq 3$, its trace makes sense in any dimensions $d \geq 2$.
If $\bgsh= \gO^{-1} \bgs$ is any other scale, we have the transformation rules
\begin{equation}\label{Conformal geometry: Schouten tensor transformation rules}
\Ph_{ab} = P_{ab} - \covD_a \gU_b + \gU_a \gU_b -\frac{1}{2} \gU^2 g_{ab}.
\end{equation}

In any dimension $d \geq 3$, the Weyl tensor is obtained as
\begin{equation}
W^a{}_{bcd} = R^a{}_{bcd} - 2 P^a{}_{[c} \; g_{d]b} - 2 g^a{}_{[c} \; P_{d]b}
\end{equation}
and is conformally invariant
\begin{equation}
\Wh^a{}_{bcd} = W^a{}_{bcd}.
\end{equation}

Finally the Cotton tensor is defined in any dimension  $d \geq 3$ as
\begin{equation}\label{Conformal geometry: Cotton tensor, definition}
C_{ab}{}^c = 2 \covD_{[a} P_{b]}{}^c
\end{equation}
it follows from Bianchi identity $\covD_{[a} R_{bc]de} =0$ that we have the identities
\begin{align}\label{Conformal geometry: Cotton tensor, identities}
\left(d-3\right) C_{ab}{}^c &= \covD_{d} W^{dc}{}_{ab}, &
C_{ba}{}^b &= \covD_b P_a{}^b - \covD_a P =0.
\end{align}

\section{The flat model}\label{s: The flat model}

In this section we review in full details the conformal compactification of Minkowski space. We do so in such a way that the homogeneous space structure is manifest: The essential point here is to emphasis that flat null-infinity naturally is an homogenous space for the Poincaré group. Readers which are familiar with the homogenous space structure of null-infinity can therefore safely skip this section.

 This description of flat null-infinity will serve as the model for the geometry discussed in the rest of this article. In particular, in this presentation flat null-infinity is naturally embedded in a null hyper-surface $I^{\perp}$ of $R^{(n+1,2)}$ (where $n$ is the dimension of flat null-infinity as a manifold). The tractor bundle of a null-infinity manifold will be modelled on this null hyper-surface.

\vspace{0,5cm}
\noindent Here $d=n+1$ is the dimension of Minkowski space.

\subsection{The conformal compactification of Minkowski space}

\subsubsection{Minkowski space}

Let us consider $\bbR^{(d,2)}$ as a vector space equipped with the flat metric $q$ of signature $\left(d,2\right)$. We pick a basis on this vector space such that any point $V \in \bbR^{(d,2)}$ is written as
\begin{equation}\label{The flat model: R^{d+2} coordinates}
	V = \begin{pmatrix}
		U \\ V^{i} \\ W
	\end{pmatrix}
\in \bbR^{d+2}
\end{equation}
with $Y^i \in \bbR^{d}$ and the inner-product $q$ is given by 
\begin{equation}\label{The flat model: metric on R^d+2}
	q = 2\; dU d W+ \eta_{ij}\; dV^i dV^j
\end{equation}
where $\eta_{ij}$ is the flat metric of signature $\left(d-1,1\right)$. 

Let $L^+$ be the null cone in $\bbR^{(d,2)}$ passing through the origin
\begin{align}
	L^+&=\left\{V \in \bbR^{(d,2)}|\; V^2=0 \right\}.
\end{align}
Let the ``infinity tractor'' $I \in \bbR^{(d,2)}$ be a choice of null vector, $I^2=0$ and let us adapt our coordinates such that $I$ is given by
\begin{equation}
	I = \begin{pmatrix}
		0 \\
		0\\
		1
	\end{pmatrix}.
\end{equation}
Minkowski space $\MRiem^{(d-1,1)}$ can then be isometrically embedded in $L^{+}$ as
\begin{equation}
	\MRiem^{(d-1,1)} = \left\{ V \in L^{+} \;|\; V.I =1 \right\}.
\end{equation}
In coordinates, this is given by
\begin{equation}\label{The flat model: Minkowski isometric embeding}
	\begin{array}{ccc}
		\bbR^{(d-1,1)} & \to & 	\MRiem^{(d-1,1)} \\
		x^i & \mapsto & \begin{pmatrix}
			1 \\
			x^i\\
			-\frac{1}{2}|x|^2
		\end{pmatrix}.		
	\end{array}
\end{equation}
The stabiliser of $I$ in $\SO(2,d)$ is isomorphic to the Poincaré group $\Iso(d-1 , 1) = \bbR^{d}~\rtimes~\SO\left(d-1,1\right)$ and can be parametrised as
\begin{equation}\label{The flat model: Iso(d-1,1) action}
	\begin{pmatrix}
		1 & 0 & 0 \\
		-\;r^i & m^i{}_j & 0 \\
		\frac{1}{2} r^k r_k & r_k m^k{}_j & 1
	\end{pmatrix}
\end{equation}
where $r^i$ is in $\bbR^{d}$, $m^i{}_j$ is in $\SO\left(d-1,1\right)$ and lower case Latin indices are raised and lowered with the flat metric $\eta_{ij}$. The action of \eqref{The flat model: Iso(d-1,1) action} on \eqref{The flat model: Minkowski isometric embeding} then gives the usual action of the Poincaré group on Minkowski space.

\subsubsection{Conformal compactification}

Since $\bbR^+$ acts on $L^+ \subset \bbR^{(d,2)}$ by multiplication, the null cone $L^+$ is the total space of a $\bbR^+$-principal bundle
 \begin{equation}
L^+ \to \MConf^{(d-1,1)}
\end{equation} over the projectivised null cone
 \begin{equation}
	\MConf^{(d-1,1)} \coloneqq L^+/\bbR^+ \simeq \S^{d-1} \times \S^1.
\end{equation}
Each section of $L^+ \to \MConf^{(d-1,1)}$ gives a different metric on $\MConf^{(d-1,1)}$, obtained by pull-back of the metric \eqref{The flat model: metric on R^d+2}. All these metrics are conformally related to the round metric on $\S^{d-1} \times \S^1$ and consequently $\MConf^{(d-1,1)}=\S^{n-1} \times \S^1$ comes equipped with the conformal round metric $\gConf^{(n-1,1)}$. Finally, $L^+ \to \MConf^{(d-1,1)}$ identifies with the bundles of positive scales over $\MConf^{(d-1,1)}$.

It follows from this discussion that we have a conformal embedding of Minkowski space into the projectivised null-cone $\MConf^{(d-1,1)}$,
\begin{equation}\label{The flat model: Minkowski embeding}
	\begin{array}{ccc}
		\bbR^{(d-1,1)} & \to & 	\MConf^{(d-1,1)} \\
		x^i & \mapsto & \begin{bmatrix}
			1 \\
			x^i\\
			-\frac{1}{2}|x|^2
		\end{bmatrix}
	\end{array}
\end{equation}
(the squared brackets indicate homogeneous coordinates on the projective space).

\subsubsection{Bondi-Sachs coordinates in a neighbourhood of null-infinity}

We now take `` future null infinity'' (or simply ``null-infinity'' in what follows) $\scrI_{(d-1,1)}$ to be
\begin{align}\label{The flat model: null-infinity definition}
	\scrI_{(d-1,1)} &= \left\{ V \in \MConf^{(d-1,1)} |\; V.I = 0 \; \text{and s.t} \; V^i\; \text{is future directed}\right\}.
\end{align}
It follows from the preceding sections that $\scrI_{(d-1,1)}$ is part of the boundary of (the conformal compactification of) Minkowski space in the null cone $L^+ \subset \bbR^{(d,2)}$. Bondi-Sachs coordinates are convenient to describe a neighbourhood of $\scrI_{(d-1,1)}$ in $L^+$:
\begin{equation}\label{The flat model: BMS coordinates}
	 \begin{bmatrix}
	\bgs\begin{pmatrix}
		\gO \\ 1+ u\gO\\ \gth^a \\ u +\frac{1}{2}u^2\gO
		\end{pmatrix} 
	\end{bmatrix}
	\in \MConf^{(d-1,1)}
\end{equation}
where $\gO\in \bbR^+$, $u\in \bbR$ while $\gth^a \in \bbR^{d-1}$ parametrizes the $d-2$ sphere, $\sum_{a =1}^{d-1} (\gth^a)^2 = 1$. In this chart, null-infinity $\scrI_{(d-1,1)}$ is given by $\gO=0$. The conformal metric $\gConf^{(d-1,1)}$ induced on $\MConf^{(d-1,1)}$ from the ambient metric \eqref{The flat model: metric on R^d+2} then reads,
\begin{align}\label{The flat model: conformal BMS metric}
	\gConf^{(d-1,1)} = \bgs^2 \Big( 2d\gO du -\gO^2 du^2 + d\gth^2  \big).
\end{align}
In particular, the isometric embedding of Minkowski space is given by $\bgs = \gO^{-1}$.

\subsection{Homogenous space structure of null-infinity}\label{ss: Homogenous space structure of null-infinity}

From now-one we take $d=n+1$. This is such that, everywhere in this article $n$ is the dimension of a null-infinity manifold.

\subsubsection{Adapted coordinates on \texorpdfstring{$I^{\perp}$}{I^perp}}

In order to match our tractor notation, it will be useful to think of $\scrI_{(n,1)}$ as a sub-manifold of the null hyper-surface $I^{\perp} = \{V \in\bbR^{(d,2)} \big| V.I=0 \}$: The tractor bundle of null-infinity will indeed be a vector bundle whose fibres are modelled on $I^{\perp}$.

  As a null hyper-surface, $I^{\perp} = \bbR^{n+2}$ is fibred by null-lines and therefore the total space of a line bundle $I^{\perp} \to \bbR^{n+1}$ over $\bbR^{n+1}$.  We will pick adapted coordinates on $I^{\perp}$: If $V^I$ is in $I^{\perp}$, these are such that
\begin{equation}\label{The flat model: Iperp coordinates}
V^I = \Mtx{V^i \\ u} \in I^{\perp}=\bbR^{n+2}
\end{equation}
where $V^i \in \bbR^{n+1}$ parametrize the space of null lines foliating $I^{\perp}$. The induced metric (from \eqref{The flat model: metric on R^d+2}) is then the degenerate metric
\begin{equation}\label{The flat model: Iperp metric}
	q\big|_{I^{\perp}} = \eta_{ij} dV^i dV^j,
\end{equation}
with degenerate direction spanned by the preferred vertical vector field, 
\begin{equation}\label{The flat model: Iperp vector field}
	n = \parD_{u}.
\end{equation}

We now come to null-infinity $\scrI_{(n,1)}$ itself. It is embedded in $I^{\perp}$ as 
\begin{equation}
	\scrI_{(n,1)} =\Quotient{ \left\{ V \in I^{\perp} \big| V^2 = 0  \; \text{and s.t} \; V^i\; \text{is future directed} \right\} }{ \bbR^+}.
\end{equation}
We have the following homogeneous coordinates on  $\scrI_{(n,1)} = \bbR \times \S^{n-1}$,
\begin{equation}\label{The flat model: homogenous coordinates on scrI1}
	 \begin{bmatrix}
	 	 \bgs\begin{pmatrix}
	 	1 \\ \gth^a \\ u
	 		\end{pmatrix}
	 \end{bmatrix}
\end{equation}
where $u \in \bbR$ and $\gth^a \in \bbR^{(n)}$ parametrise the $(n-1)$-sphere, $ \sum^{n}_{a = 1 }(\gth^a)^2 = 1 $. These amounts to taking $\gO =0$ in the Bondi-Sachs coordinates \eqref{The flat model: BMS coordinates}.

\subsubsection{Weak (or universal) structure of \texorpdfstring{$\scrI_{(n,1)}$}{scrI}}

As a sub-manifold of $I^{\perp}$, null-infinity $\scrI_{(n,1)}$ has the structure of a line bundle \begin{equation}
\scrI_{(n,1)} \to \S^{n-1} 
\end{equation}(with adapted coordinates $\left( u , \gth^a \right) $). 

It also inherits a degenerate conformal structure $\gConf_{\scrI} \in \So{S^2 T^*\scrI \otimes L^2_{\scrI}}$ (obtained by the restriction of \eqref{The flat model: Iperp metric} ). In our set of coordinates,
\begin{equation}
	\gConf_{\scrI} = \bgs^2\; d\gth^2
\end{equation}
This metric then descends to the conformally round metric on $\S^{n-1}$ and, all in all, $\scrI$ has the structure of a line bundle $\scrI \to \S^{n-1}$ over the $(n-1)$-dimensional conformal round sphere.

The structure inherited by $\scrI_{(n,1)}$ as a sub-manifold of $\MConf^{(n,1)}$ does not stop here. Recall that $I^{\perp}$ also comes with a preferred vector field \eqref{The flat model: Iperp vector field}, it induces on $\scrI_{(n,1)}$ a weighted vertical vector field $\bn_{\scrI} \in \So{ T\scrI \otimes L^{-1}_{\scrI}}$. In our adapted coordinate system,
\begin{equation}
	\bn_{\scrI} = \bgs^{-1}\; \parD_{u}.
\end{equation} 
Altogether $\left( \scrI_{(n,1)} \to \S^{n-1}, 	\gConf_{\scrI}, 	\bn_{\scrI} \right) $ form the weak (or universal) structure of $\scrI_{(n,1)}$.

\subsubsection{Homogeneous space structure}

The Poincaré group $\Iso\left(n,1\right)$ acts linearly on $I^{\perp}$ as
\begin{equation}\label{The flat model: Iso(d-1,1) action on scrI}
	\begin{pmatrix}
		m^i{}_j & 0 \\
		r_k m^k{}_j & 1
	\end{pmatrix} 
\end{equation}
with $r^i \in \bbR^{n+1}$ and $m^i{}_j \in \SO\left(n,1\right)$. 
Accordingly, this induces an action of the Poincaré group on null-infinity. This action is transitive and therefore $\scrI_{(n,1)}$ naturally is an homogeneous space
\begin{equation}\label{The flat model: scrI as an homogeneous space}
	\scrI_{(n,1)} = \Quotient{\Iso\left(n,1\right)}{Carr\left(n\right)  \rtimes \bbR}
\end{equation} 
Where we denote by $Carr\left(n\right)$ the subgroup of $\SO\left(n+1,2\right)$ stabilizing two (non parallel) orthogonal null vectors, we will soon see that it isomorphic to the Carroll group from \cite{levy-leblond_nouvelle_1965,duval_carroll_2014}

In order to have a concrete realisation of $Carr\left(n\right)  \rtimes \bbR \subset \Iso\left(n,1\right)$ we take null coordinates on $\bbR^{(n,1)}$, i.e we pick a basis of $\bbR^{(n,1)}$ such that
\begin{equation}\label{The flat model: null coordinates on R^d}
	V^i = 
		\begin{pmatrix}
		V^{+}\\
		V^A \\
		V^{-}		
	\end{pmatrix} 
\in \bbR^{n+1}
\end{equation}
where $V^A \in \bbR^{n-1}$ and the metric reads
\begin{equation}
	\eta_{ij} dV^i dV^j = 2dV^{+}dV^{-} + h_{AB} dV^A dV^B
\end{equation}
(with $h_{AB}$ the flat metric on $\bbR^{n-1}$).
Together with our coordinates \eqref{The flat model: Iperp coordinates} this parametrizes $I^{\perp}$ as:
\begin{equation}\label{The flat model: null coordinates on Iperp}
	V^I = 
	\begin{pmatrix}
		V^{+}\\
		V^A \\
		V^{-}\\
		V^{u}	
	\end{pmatrix} 
	\in I^{\perp}=\bbR^{n+2}.
\end{equation}
Accordingly, we have the stereographic coordinates on $\scrI_{(n,1)}$:
\begin{equation}\label{The flat model: stereographic coordinates in scrI}
\begin{bmatrix} 1 \\ y^A \\ -\frac{1}{2} y^2 \\ u \end{bmatrix} \in \scrI_{(n,1)}
\end{equation}
where $u\in \bbR$ is the coordinate along the fibres of $\scrI_{(n)} \to \S^{n-1}$ and $y^A \in \bbR^{(n-1)}$ are stereographic coordinates on $\S^{n-1}$.

Let $X$ be a null vector in $I^{\perp}$ such that $X$ and $I$ are orthogonal (but not proportional). We can always adapt our set of coordinates \eqref{The flat model: null coordinates on Iperp} such that
\begin{equation}
	X^I = 	\begin{pmatrix}
		0\\
		0 \\
		1\\
	0		
	\end{pmatrix} 
\end{equation}
Then $Carr\left(n\right)  \rtimes \bbR$, the subgroup of $\Iso\left(n,1\right)$ stabilizing the \underline{line} generated by $X$, is parametrised as
\begin{equation}\label{The flat model: Carr(n) action on scrI}
	\begin{pmatrix}
		\gl & 0 & 0 &0 \\ 
		- t^A & m^A{}_B & 0 & 0 \\
		\gl^{-1} \; \frac{1}{2}\; t^C t_C & \gl^{-1}\;t_C m^C{}_B & \gl^{-1} & 0 \\
		f & r_{C} m^C{}_B & 0 & 1
	\end{pmatrix}
\end{equation}
where $f$ and $\gl$ are in  $\bbR$, $r^A$ and $t^A$ are elements of $\bbR^{n-1}$ and $m^A{}_B$ is a matrix in $\SO\left(n-1\right)$. In particular $Carr\left(n\right)=\bbR^n\rtimes \Iso\left(n-1\right) $, the subgroup of $\SO\left(n+1,2\right)$ stabilising two orthogonal (non-parallel) null vectors is obtained by taking $\gl =1$. In this form one can also see that (taking $\gl=1$) the action of \eqref{The flat model: Carr(n) action on scrI} on the quotient $I^{\perp}/X$ directly identifies with the representation of the Carroll group used in \cite{duval_carroll_2014}. 

We close this subsection with a few remarks related to the construction that will be presented in the rest of this article:
The tractor bundle over a null-infinity manifold is naturally an associated bundle for $Carr\left(n\right)  \rtimes \bbR$ in the representation given by \eqref{The flat model: Carr(n) action on scrI}: 
Fibres of the tractor bundle are modelled on $I^{\perp}$ and ``well-adapted trivialisations'' will be shown to give coordinates of the form \eqref{The flat model: null coordinates on Iperp}.  Finally, null-normal tractor connections will we be a class of Cartan connections modelled on \eqref{The flat model: scrI as an homogeneous space}.

\subsubsection{Good cuts}

By construction the action of the Poincaré group \eqref{The flat model: Iso(d-1,1) action on scrI} on $\scrI_{(n,1)}$ must preserve the (weak) structure of null-infinity: $\left( \scrI_{(n,1)} \to \S^{n-1}, \gConf_{\scrI}, \bn_{\scrI} \right)$. This is because this structure was obtained on $\scrI_{(n,1)}$ solely from its realisation as a sub-manifold and quotient in $\bbR^{(n+1,2)}$. 

However the subgroup of diffeomorphism preserving $\left( \scrI_{(n,1)} \to \S^{n-1}, 	\gConf_{\scrI}, 	\bn_{\scrI} \right) $ is well-known \cite{ashtekar_geometry_2015, duval_conformal_2014, duval_conformal_2014-1} to be the (infinite-dimensional) BMS-group and therefore the fact that null-infinity is an homogeneous space for the Poincaré group (rather than the BMS-group) points to the fact that is is equipped with more structure.

This extra structure is a set of ``good cuts'' \cite{newman_heaven_1976,hansen_r._o._metric_1978,ko_theory_1981,adamo_generalized_2010,adamo_null_2012} i.e a preferred set $H$ of sections (or cuts) of $\scrI_{(n,1)} \to \S^{n-1}$. It will be convenient to represent these in stereographic coordinates:
\begin{equation}
	y^A \mapsto \begin{bmatrix} 1 \\ y^A \\ -\frac{1}{2} y^2 \\ u\left(y\right) \end{bmatrix}.
\end{equation}
The section $ y^a \mapsto u\left(y\right) =0$ is then an example of good-cut and all others are obtained by the action of translations (realised by taking $m^i{}_j =\gd^i{}_j$ in \eqref{The flat model: Iso(d-1,1) action on scrI}): If $r^i=\left(r^{+} , r^A , r^{-}\right)$ is a vector of $\bbR^{(n,1)}$ expressed in null coordinates \eqref{The flat model: null coordinates on R^d} it sends the cuts  $ y^A \mapsto u =0$ to
\begin{equation}
	y^A \mapsto u = r^{-} + r^A y_A + r^{+} \left(- \frac{1}{2}y^2 \right).
\end{equation}

Good cuts have the following interpretation: The image of $y^A \mapsto u\left(y\right) =0$ is the intersection of null-infinity $\MNull_{(n,1)}$ with the null cone emanating from the origin of Minkowski space $M^{(n,1)}$ (this makes sense because null-conformal geodesics are conformally invariants). As we act with the group of translations we span all the points in Minkowski space-times and good-cuts correspond to all the possible intersections of the associated null-cones.

Alternatively, the good-cuts can be obtained as the space of solutions to the ``good-cuts equations'' on $\S^{(n-1)}$. In stereographic coordinates, these reads
\begin{equation}
	\covD_A \covD_B u\big|_{0}=0.
\end{equation}

Zeros of Poincaré operators will generalise these good cuts for a generic null-infinity manifold.

\section{Universal (or weak) structure of null-infinity}

We here review from \cite{geroch_asymptotic_1977,ashtekar_radiative_1981, ashtekar_a._symplectic_1981,ashtekar_symplectic_1982,ashtekar_asymptotic_1987,ashtekar_geometry_2015,ashtekar_null_2018,duval_conformal_2014,duval_conformal_2014-1} the ``weak''\footnote{We here use the terminology ``weak'' rather than ``universal'': The terminology ``universal structure'' should probably be kept to refer to the particular case of a ``weak structure'' over the conformal sphere. The weak/strong terminology also nicely fits with the one from \cite{duval_conformal_2014,duval_conformal_2014-1,duval_carroll_2014}.  } structure of null-infinity (and its symmetry group). We do so with an insistence on the (degenerate) conformal geometry of this structure. We also derive some elementary results that will be useful later on.

\subsection{Null-infinity manifold}
\subsubsection{Conformal Carroll manifolds}

Let $\MNull$ be a $n$-dimensional manifold and $L \to \MNull$ its bundle of scales. We will first need the notion of conformal Carroll manifolds (taken from \cite{duval_conformal_2014, duval_conformal_2014-1})
\begin{Definition}{Conformal Carroll manifolds}\mbox{}\\
A conformal Carroll structure $\left(\scrI, \hConf_{ab}, \bn^a \right)$ on a $n$-dimensional manifold $\scrI$ consists of
\begin{itemize}
	\item a degenerate conformal metric $\hConf_{ab} \in \So{S^2T^*\scrI \otimes L^2}$ with one dimensional kernel
	\item a weighted vector field $\bn^a \in \So{T\MNull \otimes L^{-1}}$ generating this kernel, $\bn^a \hConf_{ab} =0$.
\end{itemize}
A conformal Carroll manifold is a manifold equipped with a conformal Carroll structure.
\end{Definition}

\subsubsection{Vertical connection on L}\label{sss: vertical connection on L}

Let $\left(\MNull, \hConf_{ab}, \bn^a \right)$ be a conformal Carroll manifold. Let $V =\bbR\bn \subset T\MNull$ be the $1$-dimensional distribution (the ``vertical distribution'') given by the kernel of $\hConf_{ab}$. Let us consider the $(n-1)$-dimensional vector bundle $T\MNull / V$ obtained by taking the quotient, we will use upper-case latin indices $A,B, etc$ as abstract indices on this bundle. By construction, $\hConf_{ab}$ induces a non-degenerate inner-product $\bh_{AB}$ on $T\MNull / V$. Let $\left(T\MNull / V \right)^*$ be the dual of $T\MNull / V$, the ``volume form'' $\bmu_{(\hConf)}{}_{A_{1}...A_{n-1}}$ of $\hConf_{AB}$ then is a section of $ \bigwedge^{n-1}\left(T\MNull / V \right)^*  \otimes L^{n-1}$.

From this few remarks we can define a ``vertical connection on $L$''
 \begin{equation}
	\covD \from \So{V \otimes L} \to \So{L}.
\end{equation}
In order to construct this connection explicitly, let us take $\bl \in \So{L}$ a section of $L$ and note that $ \bl \left( \bmu_{(\hConf)}\right)^{-\frac{1}{n-1}}$ is a section of $ \left(\bigwedge^{n-1}\left(T\MNull / V \right)^* \right)^{-\frac{1}{n-1}} $. If $v \in \So{V}$, one defines the covariant derivative $\covD_v \bl$ through
\begin{equation} 
\cL_v \left ( \bl \left( \bmu_{(\hConf)}\right)^{-\frac{1}{n-1}} \right ) =  \covD_v \bl \; \left( \bmu_{(\hConf)}\right)^{-\frac{1}{n-1}}
\end{equation}
and one can check that it satisfies the property of a connection (see also below for a local coordinate description).

Since $\bn^a \in \So{V\otimes L^{-1}}$, one has $\covD_{\bn}\bl \in \Co{\MNull}$ and one can define sections with constant vertical derivative to be such that $\covD_{\bn}\bl = cst$.
\begin{Definition}
	We will say that a section $\bl \in \So{L}$ has constant vertical derivative $k\in \bbR$ and write $\bl \in \CSoL{k}$ if and only if $\covD_{\bn} \bl = k$,
	\begin{equation}\label{Weak structure of null-infinity: covariantly constant sections}
\bl \in \CSoL{k} \qquad \Leftrightarrow \qquad \covD_{\bn} \bl = k.
	\end{equation}
\end{Definition}

Sections with zero vertical derivative act as distinguished trivialisation of $L$ and for this reasons, will be very important: if $\bgs \in \CSoL{0}$ and $\bl \in \So{L}$ with $l= \bgs^{-1}\bl$, we must have
\begin{equation}
	\covD_{\bn} \bl = \bgs\; \bn^a dl_a,
\end{equation}
and in particular $\bl \in \CSoL{k}$ if and only if 
\begin{equation}
	\bgs\; \bn^a dl_a = k.
\end{equation}

 If $\MNull \to \gS$ is a fibre bundle, scales in $\CSoL{0}$ are very natural for another reason: they correspond to scales of the base.
\begin{Proposition}\label{Weak structure of null-infinity: Prop, covariantly constant sections isomorphism}
Let $\MNull$ be the total space of a fibre bundle $\MNull \to \gS$ with fibres given by the integral lines of $\bn$ there is then a canonical isomorphism
 \begin{equation}\label{Weak structure of null-infinity: covariantly constant sections isomorphism}
\CSoL{0} \simeq \So{L_{\gS}}
\end{equation}
\end{Proposition}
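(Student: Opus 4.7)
The plan is to build the isomorphism directly from the volume form $\bmu_{(\hConf)}$ that enters the definition of the vertical connection $\covD_{\bn}$ in Section \ref{sss: vertical connection on L}.

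First, I would exploit the hypothesis that the fibres of $\pi\from \MNull \to \gS$ are exactly the integral lines of $\bn$ to produce a canonical identification $T\MNull/V \simeq \pi^{*}T\gS$ via the differential $d\pi$ (this is the usual identification of the normal bundle to the fibres with the pullback of the tangent bundle of the base, valid precisely because $V$ is the vertical distribution). Taking duals and top exterior powers gives $\bigwedge^{n-1}\bigl(T\MNull/V\bigr)^{*} \simeq \pi^{*}\bigwedge^{n-1}T^{*}\gS$, and then raising to the $-1/(n-1)$ power yields $\bigl(\bigwedge^{n-1}(T\MNull/V)^{*}\bigr)^{-1/(n-1)} \simeq \pi^{*}L_{\gS}$, where the right-hand side uses the very definition \eqref{Conformal geometry: tautological isomorphism for scales} of the bundle of scales on $\gS$.

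Second, I would construct the map $\Phi\from \CSoL{0} \to \So{L_{\gS}}$ as follows. Given $\bl \in \CSoL{0}$, the product $\bl\, \bmu_{(\hConf)}^{-1/(n-1)}$ is, in view of $\bmu_{(\hConf)} \in \So{\bigwedge^{n-1}(T\MNull/V)^{*}\otimes L^{n-1}}$, a section of $\bigl(\bigwedge^{n-1}(T\MNull/V)^{*}\bigr)^{-1/(n-1)}$, equivalently of $\pi^{*}L_{\gS}$ by the first step. The very defining equation
\begin{equation*}
\cL_v\bigl(\bl\, \bmu_{(\hConf)}^{-1/(n-1)}\bigr) \;=\; \bigl(\covD_v \bl\bigr)\, \bmu_{(\hConf)}^{-1/(n-1)}
\end{equation*}
then shows that $\covD_{\bn}\bl = 0$ is equivalent to this section being Lie-invariant along $\bn$, hence (since $V$ is one-dimensional and tangent to the fibres) fibrewise constant. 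A fibrewise constant section of $\pi^{*}L_{\gS}$ descends uniquely to a section $\Phi(\bl) \in \So{L_{\gS}}$, and the whole assignment is manifestly $\bbR$-linear.

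Third, I would check bijectivity. Injectivity is immediate because $\bmu_{(\hConf)}$ is nowhere vanishing, so $\Phi(\bl) = 0$ forces $\bl = 0$. For surjectivity, starting from $\bgs_{\gS} \in \So{L_{\gS}}$, set $\bl := \pi^{*}\bgs_{\gS} \cdot \bmu_{(\hConf)}^{1/(n-1)} \in \So{L}$; by construction $\bl\, \bmu_{(\hConf)}^{-1/(n-1)} = \pi^{*}\bgs_{\gS}$ is fibrewise constant, so the displayed identity above yields $\covD_{\bn}\bl = 0$ and $\Phi(\bl) = \bgs_{\gS}$. The main subtlety — really the only one — will be being pedantic enough about the canonical identification $T\MNull/V \simeq \pi^{*}T\gS$ and checking that the rational power $(\cdot)^{1/(n-1)}$ of the positive line bundle $\bigwedge^{n-1}(T\MNull/V)^{*}$ is well-defined (using orientability of $\MNull$, as assumed in Section 2), so that the whole construction is genuinely natural and independent of auxiliary choices.
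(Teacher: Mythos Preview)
Your proof is correct and follows essentially the same route as the paper: both use the volume form $\bmu_{(\hConf)}$ to convert between sections of $L$ and sections of (a power of) $\bigwedge^{n-1}(T\MNull/V)^{*}\simeq \pi^{*}\bigwedge^{n-1}T^{*}\gS$, and both identify $\covD_{\bn}\bl=0$ with Lie-invariance along the fibres via the defining relation of the vertical connection. The only cosmetic difference is that the paper sends $\bl\mapsto \bl^{-(n-1)}\bmu_{(\hConf)}$ (landing in top forms on $\gS$) whereas you send $\bl\mapsto \bl\,\bmu_{(\hConf)}^{-1/(n-1)}$ (landing directly in $\pi^{*}L_{\gS}$); your formulation has the mild advantage of being manifestly $\bbR$-linear in $\bl$, and your write-up is more explicit about the identification $T\MNull/V\simeq\pi^{*}T\gS$ and about bijectivity.
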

\begin{proof} 	
By definition, the map $\bl \mapsto \bl^{-(n-1)} \bmu_{(\hConf)}$ takes sections with zero vertical derivatives to sections of $\pi^* \left( \bigwedge^{n-1}T^*\gS \right))$ whose Lie derivative in the vertical direction vanishes and thus with sections which are the pull-back of sections of $\bigwedge^{n-1}T^*\gS$. Since we have the tautological identification $L_{\gS} = \left( \bigwedge^{n-1}T^*\gS \right)^{-\frac{1}{n-1}}$ this gives the required isomorphism.
\end{proof} 

\paragraph{Local coordinates description}\mbox{}

For the convenience of the reader, we here provide a local coordinate description of the above construction. This can be safely skipped if one feels at ease with the abstract description. 

Let us pick up local coordinates on $\MNull$, $\left(u , y^A \right) \in \bbR \times \bbR^{n-1}$, together with a section $\bgs \in \So{L}$. We take these coordinates to be adapted to the conformal Carroll structure i.e
\begin{equation}\label{Weak structure of null-infinity: adapted coordinates conditions}
	\bgs^{-2} \hConf_{ab} = h_{AB}\; dy^A dy^B , \qquad \bgs \bn^a = \parD_u,
\end{equation}
\begin{equation}
\bgs^{-(n-1)} \bmu(\hConf){}_{A_{1} ... A_{n-1}}	= \sqrt{det(h)} \; d^{n-1} y
\end{equation}
where $h_{AB}$ is a symmetric $(n-1)$-dimensional matrix..

Consider $\bl \in \So{L}$ a section of $L$ and $l = \bgs^{-1} \bl \in \cC^{\infty}\left[\MNull\right]$ its representative in the trivialisation $\bgs$. Its vertical derivative reads
\begin{equation}\label{Weak structure of null-infinity: covariant derivative in coordinates}
	\covD_{\bn} \bl = \left( \parD_u  -\frac{1}{n-1} \frac{\parD_u \sqrt{det(h)}}{\sqrt{det(h)}}  \right) l.
\end{equation}
In particular, $\bgs$ is covariantly constant (i.e $\bgs \in \CSoL{0}$), if and only if $\parD_u det(h) =0$. 

\subsubsection{Null infinity manifolds}

Let $\left( \MNull, \hConf_{ab} , \bn^a \right)$ be a conformal Carroll manifolds. Making use of the Bondi-Sachs expansion one could try to take this manifold as some initial data for an asymptotically flat space-times. However, not all conformal Carroll manifolds can be the conformal boundary of an asymptotically flat space-time, rather Einstein equations (to order minus one in the expansion in the boundary defining function $\gO$) impose that the degenerate conformal metric must be independent of the ``u'' coordinates. In this context it is also very natural, see \cite{ashtekar_geometry_2015}, to require that $\MNull$ is foliated by the null lines generated by $\bn^a$ such that $\MNull \to \gS$ is a fibre bundle (with $\gS$ the space of null lines). This considerations justify to introduce the following definition:

\begin{Definition}{Null-infinity manifold}\mbox{}\label{Weak structure of null-infinity: Def, null-infinity manifold}\\
	A \emph{weak null-infinity structure} $\left(\MNull \to \gS , \hConf_{ab} ,\bn^a \right)$ consists of a fibre bundle $\MNull \to \gS$ over an $(n-1)$-dimensional manifold $\gS$ together with a conformal Carroll structure $\left(\MNull, \hConf_{ab} ,\bn^a \right)$ on the total space satisfying the compatibility conditions:
	\begin{itemize}
		\item $\bn^a$ is tangent to the vertical direction of $\MNull \to \scrI$, $d\pi^A_b \;\bn^b=0 $
		\item $\hConf_{ab}$ is the pull-back of a conformal metric $\hConf_{AB}$ on $\gS$, $\hConf_{ab} = \pi^*\left(\hConf_{AB}\right)$.
	\end{itemize}
	A null-infinity manifold is a line bundle equipped with a weak null-infinity structure.
	
	 In the following we will also suppose that $\MNull \to \gS$ is trivial and $\gS$ is orientable. (This is just for convenience as most results are local.) 
	As a convention, we will use lower-case Latin indices of the beginning of the alphabet as abstract indices on $\MNull$ and upper-case Latin indices of the beginning of the alphabet as abstract indices on $\gS$. 
\end{Definition}
Some remarks are in order about the second point in this definition. Strictly speaking $\pi^* \hConf_{AB}$ is a section of $S^2 T^*\MNull \otimes (\pi^* L_{\gS})^2$ however, making use of the isomorphism given by proposition \ref{Weak structure of null-infinity: Prop, covariantly constant sections isomorphism}, we can turn it into a section of $S^2 T^*\MNull \otimes L^2$ with zero vertical derivative. In other words, the second compatibility condition can be rephrased as $\covD_{\bn} \hConf_{ab} = 0$ where $\covD_{\bn}$ stands for the combined action of Lie derivative and the vertical covariant derivative on $L$. Finally, let us note for concreteness that if $\bgs \in \So{L}$ is a scale and $\left(u, y^A\right) \in \bbR \times \bbR^{n-1}$ are local coordinates on (an open subset of) $\MNull \to \gS$ satisfying \eqref{Weak structure of null-infinity: adapted coordinates conditions} then the second condition in definition \ref{Weak structure of null-infinity: Def, null-infinity manifold} reads 
\begin{equation}
\parD_u h_{AB} = \frac{2}{n-1} \frac{\parD_u \sqrt{det(h)}}{\sqrt{det(h)}} h_{AB}.
\end{equation}
Which is the constraint familiar from the Bondi-Sachs formalism see e.g \cite{barnich_aspects_2010}.

\subsection{Well-adapted trivialisations}\label{ss: Well adapted trivialisations}

From now on we will make a systematic use of the isomorphism \eqref{Weak structure of null-infinity: covariantly constant sections isomorphism} and identify $\bgs \in \So{L_{\gS}}$, its pull back $\bgs \in \So{\pi^* L_{\gS}}$ and the section $\bgs \in \CSoL{0}$ given by proposition \ref{Weak structure of null-infinity: Prop, covariantly constant sections isomorphism}.

\subsubsection{Well-adapted trivialisations}\label{sss: Well adapted trivialisations}
\begin{Definition}{Well-adapted trivialisations}\label{Weak structure of null-infinity: Def, well-adapted trivialisation}\mbox{}\\
	We will say that $\left(\bgs , u \right) \in \So{L} \times \Co{\MNull}$ is a \emph{well-adapted trivialisation} for a null-infinity manifold $\left( \MNull \xto{\pi} \gS, \bh_{ab} , \bn^a \right)$ if and only if $\bgs$ is vertically constant, $\bgs \in \CSoL{0} \simeq \So{L_{\gS}}$, and $u \in \cC^{\infty}\left(\MNull\right)$ defines a trivialisation
	\begin{equation}
		\left(u, \pi \right) \from
		\begin{array}{ccc}
			\MNull & \to & \bbR \times \gS
		\end{array}
	\end{equation}
	such that
	\begin{equation}
		\bgs\,\bn^a du_a= 1 .
	\end{equation}
	In particular both $\bgs$ and $u$ must be globally\footnote{We suppose that $\MNull \to \gS$ is trivial but all results extends to non-trivial bundle by taking $u$ to be a local trivialisation.} defined on $\MNull$.
\end{Definition}
Well-adapted trivialisation in the above sense will be very useful for they allow for a direct comparison with the literature written in the BMS formalism: there, a conjoint choice of representative $h_{AB} = \bgs^{-2}\bh_{AB}$ and trivialisation $u\from \MNull \to \bbR$ is always assumed. Accordingly, when a choice of well-adapted trivialisation $\left( \bgs , u\right)$ is clearly understood we will abuse notation and write $\parD_u \coloneqq \bgs \LieD_{\bn}$ for the Lie derivative in the vertical direction. 
For the reason that we just stated, definition \ref{Weak structure of null-infinity: Def, well-adapted trivialisation} will be our working definition, it will however be useful to keep in mind the following equivalent definition:
\begin{Proposition}\label{Weak structure of null-infinity: Prop, equivalence of well-adapted trivialisation1}
	A choice of well-adapted trivialisation $\left( \bgs, u \right)$ is equivalent to a choice of sections $\left( \bl_{0}, \bl_{1}\right)$ of $\CSoL{0} \oplus \CSoL{1}$. The isomorphism is given by
	\begin{equation}
	\left( \bl_{0}, \bl_{1}\right) \mapsto \left( \bgs = \bl_{0} , u = \bgs^{-1} \bl_{1} \right).
	\end{equation}
\end{Proposition}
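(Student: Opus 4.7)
The map $(\bl_0, \bl_1) \mapsto (\bgs = \bl_0, u = \bgs^{-1}\bl_1)$ has an obvious candidate inverse $(\bgs, u) \mapsto (\bl_0 = \bgs, \bl_1 = \bgs u)$; the proof amounts to verifying that both assignments land in the correct target set. I plan to verify these two well-definedness checks and then observe that the algebraic formulas make the two maps manifestly inverse to each other.

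For the backward direction, given a well-adapted trivialisation $(\bgs, u)$, the section $\bl_0 \coloneqq \bgs$ lies in $\CSoL{0}$ by definition, and $\bl_1 \coloneqq \bgs u \in \So{L}$ has representative $u$ in the trivialisation $\bgs$, so by the formula $\covD_{\bn}\bl = \bgs\,\bn^a dl_a$ established in section \ref{sss: vertical connection on L} one obtains $\covD_{\bn}(\bgs u) = \bgs\,\bn^a du_a = 1$ using the well-adapted property; hence $\bl_1 \in \CSoL{1}$.

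For the forward direction, given $(\bl_0, \bl_1) \in \CSoL{0}\oplus\CSoL{1}$ (with $\bl_0$ nowhere vanishing, being a scale), set $\bgs \coloneqq \bl_0$ and $u \coloneqq \bgs^{-1}\bl_1 \in \Co{\MNull}$. Then $\bgs \in \CSoL{0}$ tautologically, and applying the same formula $\covD_{\bn}\bl_1 = \bgs\,\bn^a du_a$ together with $\covD_{\bn}\bl_1 = 1$ gives the compatibility condition $\bgs\,\bn^a du_a = 1$. The remaining, and only slightly delicate, point is to show that $(u,\pi)\from\MNull\to\bbR\times\gS$ is a global diffeomorphism. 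For this I will invoke the standing assumption that $\MNull\to\gS$ is trivial, so that at least one well-adapted trivialisation $(\bgs_0,u_0)$ exists as a reference. Using the isomorphism $\CSoL{0}\simeq\So{L_\gS}$ of Proposition \ref{Weak structure of null-infinity: Prop, covariantly constant sections isomorphism}, I can write $\bl_0 = \omega^{-1}\bgs_0$ for a nowhere-vanishing function $\omega$ on $\gS$; similarly $\bl_1 - \bgs_0 u_0$ is annihilated by $\covD_{\bn}$ and hence of the form $\bgs_0\,\xi$ for some function $\xi$ on $\gS$, so $\bl_1 = \bgs_0(u_0+\xi)$. Substituting gives $u = \omega\,(u_0+\xi)$, which on each fibre is an affine function of $u_0$ with nowhere vanishing slope $\omega(y)$; since $u_0$ restricts to a bijection of each fibre onto $\bbR$, so does $u$, and combined with $\pi$ this gives a diffeomorphism $\MNull\to\bbR\times\gS$.

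The hard part, as just indicated, is the trivialisation property in the forward direction: the pointwise condition $\bgs\,\bn^a du_a=1$ only ensures that $u$ is strictly monotonic along each fibre and is \emph{a priori} insufficient for global surjectivity onto $\bbR$. The key idea I will use is to reduce the question to the reference trivialisation, whereupon the problem trivialises because scales and $\CSoL{1}$-sections differ from their references by pull-backs from $\gS$, producing an affine relation between $u$ and $u_0$ along fibres. Once this is settled, the final assertion that the two maps are mutually inverse is immediate from $(\bgs^{-1})(\bgs u) = u$ and $\bgs\cdot(\bgs^{-1}\bl_1)=\bl_1$.
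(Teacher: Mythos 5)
Your proof is correct and the core of it is the same as the paper's: both directions reduce to the identity $\covD_{\bn}\bl = \bgs\,\bn^a dl_a$ for $\bgs\in\CSoL{0}$, which converts $\covD_{\bn}\bl_1=1$ into the compatibility condition $\bgs\,\bn^a du_a=1$ and conversely, together with the identification $\CSoL{0}\simeq\So{L_\gS}$. What you add, and the paper silently omits, is the check that $u=\bgs^{-1}\bl_1$ really is a \emph{global} trivialisation of $\MNull\to\gS$ and not merely fibrewise strictly monotone; your reduction to a reference pair $(\bgs_0,u_0)$, giving the affine relation $u=\omega(u_0+\xi)$ along fibres, settles this cleanly. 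The one caveat is that the existence of a reference \emph{well-adapted} pair is itself not an immediate consequence of the triviality of $\MNull\to\gS$: starting from an arbitrary trivialisation $u_0'$ and scale $\bgs_0$ one must rescale the fibre coordinate to enforce $\bgs_0\,\bn^a d(u_0)_a=1$, and surjectivity of the rescaled coordinate onto $\bbR$ is a genuine completeness condition on the flow of $\bgs_0\bn^a$. The paper assumes this throughout without comment, so your argument is consistent with its standing hypotheses, but you should state that you are taking the existence of one well-adapted trivialisation as given rather than deducing it from bundle triviality alone. Also note that, as you observe parenthetically, the bijection as literally stated requires $\bl_0$ to be nowhere vanishing.
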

\begin{proof}
	As discussed in the previous section, vertically covariantly constant section $\bl_{0} \in \CSoL{0}$ are canonically identified  $\bl_{0} = \pi^* \bgs$ with section $\pi^*\bgs$ of $\pi^* L_{\gS}$ which are the pull-back of a section of $L_{\gS}$. On the other hand, if $\bl_{1}$ is a section of $L$ and $u = \bgs^{-1}\bl_{1}$ its coordinates then $\bl_{1} \in \CSoL{1}$ if and only if $\covD_{\bn} \bl_{1} = \bgs\; \bn^a  du_a =1$.
\end{proof}

An elementary fact that will play a crucial role for us is the equivalence between elements of $\CSoL{1}$ and sections of $\MNull \to \gS$.
\begin{Proposition}\label{Weak structure of null-infinity: Prop, equivalence of well-adapted trivialisation2}
	Sections of $\MNull \to \gS$ are in one-to-one correspondence with sections of $L$ in $\CSoL{1}$:
	
	 If $\bGG\from \MNull \to L$ is a section of $L$ in $\CSoL{1}$ there is a unique section $s_{\bGG} \from \gS \to \MNull$ such that $\bGG \circ s_{\bGG} =0$. The other way round, a section $\bGG$ of $\CSoL{1}$ is uniquely defined by its zero-set and this zero-set always define a section of $\MNull \to \gS$.	In a well-adapted trivialisation $\left( \bgs , u \right)$ the isomorphism is given by
	\begin{equation}
		\begin{array}{ccc}
			\So{\MNull} & \to & \CSoL{1} \\ \\
			u\circ s_{\bGG}\left|\begin{array}{ccc}
				\gS &\to& \bbR\\
				x &\mapsto& G 
			\end{array}\right. & \mapsto &
			\bGG = \bgs \left( u-G \right).
		\end{array}  
	\end{equation}
\end{Proposition}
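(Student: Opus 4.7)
The strategy is to reduce everything to a one-line ODE along the fibres by choosing a well-adapted trivialisation, and then verify that the bijection thus obtained is actually intrinsic.

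First, fix a well-adapted trivialisation $\left(\bgs , u \right)$ and write an arbitrary section of $L$ as $\bGG = \bgs G$ for a unique $G \in \Co{\MNull}$. Because $\bgs \in \CSoL{0}$, the formula for the vertical connection established in the paragraph preceding proposition \ref{Weak structure of null-infinity: Prop, equivalence of well-adapted trivialisation1} gives $\covD_{\bn} \bGG = \bgs\,\bn^a\, dG_a = \parD_u G$ (using the convention $\parD_u = \bgs\LieD_{\bn}$). Hence the condition $\bGG \in \CSoL{1}$ is equivalent to $\parD_u G =1$.

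Second, since $\MNull \to \gS$ is trivial with fibres $\bbR$ parametrised by $u$, integrating this equation along each fibre yields $G(u,y) = u - G_0(y)$ for a unique function $G_0 \in \Co{\gS}$. Because $\parD_u G = 1 > 0$, the function $G$ vanishes exactly once on each fibre, at $u = G_0(y)$. This defines a smooth section $s_{\bGG} \from \gS \to \MNull$, $y \mapsto (y, G_0(y))$, which manifestly satisfies $\bGG\circ s_{\bGG} = 0$ and whose defining function is $G_0 = u\circ s_{\bGG}$. Conversely, given any section $s$, the function $G_0 := u\circ s$ lies in $\Co{\gS}$ and $\bGG := \bgs(u-G_0)$ lies in $\CSoL{1}$ with zero set equal to the image of $s$. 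The two constructions are clearly inverse to each other in the chosen trivialisation, and the formula displayed in the statement holds by construction.

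The only non-routine point is to check that the correspondence $\bGG \leftrightarrow s_{\bGG}$ does not depend on the choice of well-adapted trivialisation $(\bgs,u)$. This is the sort of verification that, naively done, would require invoking the transformation rules for $(\bgs,u)$; but it can be bypassed by the following intrinsic characterisation. The zero set of $\bGG$ is an intrinsic subset of $\MNull$, and the condition $\covD_{\bn}\bGG = 1$ forces $\bGG$ to vanish transversally to the fibres and exactly once on each fibre; hence the zero set is intrinsically a section of $\MNull \to \gS$. Conversely, two elements of $\CSoL{1}$ with the same zero set differ by a section of $L$ which is both in $\CSoL{0}$ and vanishes on a section of $\MNull\to \gS$, and such a section must be identically zero. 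This establishes the bijection intrinsically, and the formula in the statement is then simply its coordinate expression in $(\bgs,u)$.

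The only mild obstacle is arranging the global existence and uniqueness of the zero, which here is immediate from the triviality assumption on $\MNull \to \gS$; without that assumption one would need to work with local trivialisations along $u$ and glue, but the content of the proof is otherwise unchanged.
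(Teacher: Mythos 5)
Your proof is correct and follows essentially the same route as the paper: both reduce to the observation that $\bGG - \bgs u \in \CSoL{0}$ (you phrase this as integrating the fibre-wise equation $\parD_u G = 1$), so that $\bGG = \bgs\left(u - G_0\right)$ with $G_0 \in \Co{\gS}$, whose zero set is the graph $u = G_0$. Your additional paragraph verifying independence of the well-adapted trivialisation intrinsically is valid and makes explicit a point the paper leaves implicit, but it does not change the substance of the argument.
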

\begin{proof}
Let $\bGG \from \MNull \to L$ be a section of $\CSoL{1}$ and let $\left(\bgs , u \right)$ be a well-adapted trivialisation. Then $\bGG - \bgs u$ must have zero vertical derivative, $\bGG - \bgs u \in \CSoL{0}$ and therefore $\bGG = \bgs \left( u - G \right)$ where $G \in \Co{\gS}$ is a function on $\gS$. Let $s \from \gS \to \MNull $ be a section of $\MNull \to \gS$ and let $S = u\circ s \from \gS \to \bbR$ be its coordinates in the trivialisation given by $u$. We must have
\begin{equation}
	\bGG \circ s = \bgs \left(S-G \right)
\end{equation}
and therefore the section $s_{\bGG}$ obtained by taking $S=G$ parametrises the zero set of $\bGG$, $\bGG \circ s_{\bGG}=0$.	
\end{proof}

Finally making use of both proposition \eqref{Weak structure of null-infinity: Prop, equivalence of well-adapted trivialisation1} and \eqref{Weak structure of null-infinity: Prop, equivalence of well-adapted trivialisation2} we obtain our last equivalent definition of well-adapted trivialisations.
\begin{Proposition}\label{Weak structure of null-infinity: Prop, equivalence of well-adapted trivialisation3}
	A choice of well-adapted trivialisation $\left(\bgs , u \right)$ on a null-infinity manifold is equivalent to choosing simultaneously a section of $L_{\gS} \to \gS$ and a section of $\MNull \to \gS$.
\end{Proposition}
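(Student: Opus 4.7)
The proof proceeds by simply chaining together the two equivalences already established in Propositions \ref{Weak structure of null-infinity: Prop, equivalence of well-adapted trivialisation1} and \ref{Weak structure of null-infinity: Prop, equivalence of well-adapted trivialisation2}, together with the canonical isomorphism $\CSoL{0} \simeq \So{L_{\gS}}$ from Proposition \ref{Weak structure of null-infinity: Prop, covariantly constant sections isomorphism}.

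More precisely, I would argue as follows. By Proposition \ref{Weak structure of null-infinity: Prop, equivalence of well-adapted trivialisation1}, specifying a well-adapted trivialisation $\left(\bgs,u\right)$ is the same data as specifying a pair $\left(\bl_0,\bl_1\right) \in \CSoL{0}\oplus\CSoL{1}$, with $\bgs=\bl_0$ and $\bl_1=\bgs\, u$. Now apply the canonical identification $\CSoL{0}\simeq \So{L_{\gS}}$ to the first factor: this converts $\bl_0$ into an honest section of the bundle of scales on the base $\gS$. Apply Proposition \ref{Weak structure of null-infinity: Prop, equivalence of well-adapted trivialisation2} to the second factor: this converts $\bl_1$ into the unique section $s_{\bl_1}\from \gS\to \MNull$ whose image coincides with the zero-set of $\bl_1$. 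Composing both bijections gives the desired one-to-one correspondence between well-adapted trivialisations and pairs (section of $L_{\gS}$, section of $\MNull\to\gS$).

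The only point that deserves a careful check is that this composite correspondence is well-defined independently of any auxiliary choices. The identification $\CSoL{0}\simeq \So{L_{\gS}}$ is canonical, so no issue arises on the first factor. For the second factor, the statement of Proposition \ref{Weak structure of null-infinity: Prop, equivalence of well-adapted trivialisation2} uses a well-adapted trivialisation $(\bgs,u)$ to write the explicit formula $\bGG = \bgs(u-G)$, but the underlying bijection itself is intrinsic: an element $\bGG\in\CSoL{1}$ is determined by its zero-set, and a section $s\from\gS\to\MNull$ is determined by its image, so the correspondence $\bGG\leftrightarrow s_{\bGG}$ is defined purely in terms of $\bGG$ (its zero-set transversally intersects each fibre of $\MNull\to\gS$ in exactly one point, since $\covD_{\bn}\bGG=1\neq 0$). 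Thus the composite map is canonical, which closes the proof.

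I do not expect any genuine obstacle here; the content of the proposition is essentially bookkeeping, repackaging the two preceding equivalences into a single statement. The one mild subtlety worth writing out explicitly is the transversality argument ensuring that the zero-set of any $\bGG\in\CSoL{1}$ really is the image of a global section of $\MNull\to\gS$; this follows at once from $\covD_{\bn}\bGG=1$, which forces $\bGG$ to be strictly monotone along each fibre, giving existence and uniqueness of the zero on each fibre.
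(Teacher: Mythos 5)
Your proof is correct and follows exactly the paper's own argument: the paper likewise concludes by chaining Proposition \ref{Weak structure of null-infinity: Prop, equivalence of well-adapted trivialisation1} (well-adapted trivialisations $\leftrightarrow$ pairs in $\CSoL{0}\oplus\CSoL{1}$), the canonical isomorphism $\CSoL{0}\simeq\So{L_{\gS}}$ of Proposition \ref{Weak structure of null-infinity: Prop, covariantly constant sections isomorphism}, and Proposition \ref{Weak structure of null-infinity: Prop, equivalence of well-adapted trivialisation2} ($\CSoL{1}\leftrightarrow$ sections of $\MNull\to\gS$). Your added remarks on canonicity and on the transversality of the zero-set are sound but are already guaranteed by the cited propositions, so they only make explicit what the paper leaves implicit.
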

\begin{proof}
	By proposition \eqref{Weak structure of null-infinity: Prop, equivalence of well-adapted trivialisation1} a well-adapted trivialisation is equivalent to two sections $\bl_{0} \in \CSoL{0}$ and $\bl_{1} \in \CSoL{1}$. By proposition \eqref{Weak structure of null-infinity: Prop, covariantly constant sections isomorphism} $\bl_{0}$ is equivalent to a section of $L_{\gS}$ and by proposition \eqref{Weak structure of null-infinity: Prop, equivalence of well-adapted trivialisation1} $\bl_{1}$ is equivalent to a section of $\MNull \to \gS$ which concludes the proof.
\end{proof}
This last definition is especially satisfying from a geometrical point of view.  The main advantage of definition \ref{Weak structure of null-infinity: Def, well-adapted trivialisation} is however the fact that it exactly correspond to the ``gauge choices'' which are typically made in the literature written in terms of BMS coordinates. In this article we are interested in geometrical quantities which do not depend on these choices. 

\subsubsection{Change of well-adapted trivialisation}\label{sss: Change of well-adapted trivialisation}

Let $\left( \bgs , u\right)$ be a well adapted trivialisation and let $\left(\bgsh , \uh\right)$ be any other. Then there must exists two functions $\gO$ and $\xi$ on $\gS$ such that
\begin{align}\label{Weak structure of null-infinity: well-adapted trivialisation transformation rules}
	\bgsh = \gO^{-1}\bgs, \quad \uh = \gO\left(u - \xi\right).
\end{align}
This either directly follows from the definition of well-adapted trivialisations (in particular from $\bgsh \bn^a d\uh_a =1$) or from proposition \ref{Weak structure of null-infinity: Prop, equivalence of well-adapted trivialisation1} and proposition \ref{Weak structure of null-infinity: Prop, equivalence of well-adapted trivialisation2}.

What will be important for us are the transformation rules for sections $\bl$ of $\CSoL{k}$:
 Let $\bl \in \CSoL{k}$ with $k\neq 0$ then $\bl - \bgs ku$ must be in $\CSoL{0}$ and there thus exists a function $\ell \in \Co{\gS}$ on $\gS$ such that
\begin{equation}
	\bl = \bgs \left( ku + \ell \right) 
\end{equation}
The same reasoning holds for $\left( \bgsh , \uh \right)$ i.e there exists $\ellh \in \Co{\gS}$ such that
\begin{equation}
	\bl = \bgsh \left( k \uh + \ellh \right) 
\end{equation}
and we have the transformation rule
\begin{equation}\label{Weak structure of null-infinity: constant sections transformation rules}
		\ell \mapsto \ellh = \gO\left( \ell + k \xi \right).	
\end{equation}

\subsection{The BMS group}\label{sss: Conformal Carroll Symmetries}

The symmetry group of a null-infinity manifold $\left( \MNull \to \gS , \bh_{ab}, \bn^a \right) $ is the subgroup of automorphism $\Phi$ of $\MNull \to \gS$ preserving both $\bh_{ab}$ and $\bn^a$. If $\bgs \in \So{L_{\gS}}$ is choice of trivialisation and $h_{ab} = \bgs^{-2}\bh_{ab}$, $n^a = \bgs \bn$ are the associated representatives, this amounts to
\begin{equation}
\Phi^*h_{ab} = \gO^2 h_{ab}, \quad \Phi^{-1}_*n^a = \gO^{-1}n^a
\end{equation}
with $\gO^{-1} = \bgs^{-1} \Phi^{*}\bgs$ a function on $\MNull$.  We will call this group a BMS group and write $BMS\left(\MNull \to \gS, \hConf, \bn\right)$. 

\begin{Proposition}\label{Weak structure of null-infinity: Prop, BMS symmetry}
	Let $\left( \bgs , u\right)$ be a well-adapted trivialisation. The BMS group $BMS\left(\MNull \to \gS, \hConf, \bn\right)$ is realised by automorphism $\Phi$ of $\MNull$ such that
	\begin{align}\label{Weak structure of null-infinity: BMS action}
	\phi^* h_{AB} = \gO^2\; h_{AB}, \qquad u \circ \Phi = \gO\left(u - \xi\right)
	\end{align}
	with $\gO$ and $\xi$ two functions on $\gS$ and where $\phi = \pi \circ \Phi \circ \pi^{-1} \from \gS \to \gS$ is the diffeomorphism on the base induced by $\Phi$.
\end{Proposition}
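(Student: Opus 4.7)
The plan is to decouple the two equations: the first, $\phi^* h_{AB} = \gO^2 h_{AB}$, should come from preservation of $\bh_{ab}$, while the second, $u \circ \Phi = \gO(u-\xi)$, should come from preservation of $\bn^a$ together with the well-adapted compatibility $\bgs \bn^a du_a = 1$.

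First I would observe that any bundle automorphism $\Phi$ descends to a diffeomorphism $\phi = \pi \circ \Phi \circ \pi^{-1}$ of $\gS$, which is automatic from the fibre-preserving property. Since by definition \ref{Weak structure of null-infinity: Def, null-infinity manifold} the weighted metric $\bh_{ab}$ is the pull-back $\pi^* \bh_{AB}$ of a weighted metric on $\gS$, the condition $\Phi^* \bh_{ab} = \bh_{ab}$ descends to $\phi^* \bh_{AB} = \bh_{AB}$. In the trivialisation given by $\bgs \in \CSoL{0} \simeq \So{L_\gS}$, define $\gO \in \Co{\gS}$ by $\phi^* \bgs = \gO^{-1} \bgs$; then writing $\bh_{AB} = \bgs^2 h_{AB}$ yields $\phi^* h_{AB} = \gO^2 h_{AB}$, the first equation. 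The key point here is that $\gO$ is forced to be a function on $\gS$ (not on $\MNull$) precisely because $\bgs$ lies in $\CSoL{0}$ and this class is preserved by $\Phi$ (thanks to $\bn$-preservation).

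Next I would use the $\bn$-preservation. In well-adapted coordinates $(u, y^A)$, the relation $\bgs\bn^a du_a=1$ combined with verticality of $\bn$ gives $\bn = \bgs^{-1}\partial_u$. The bundle-automorphism property forces $\Phi$ to have the form $(u, y^A) \mapsto (\Phi^u(u,y), \phi^A(y))$, so $\Phi_* \partial_u = (\partial_u \Phi^u)\, \partial_u$. Writing out the identity $\Phi^*\bn = \bn$ as weighted sections, and using $\Phi^*\bgs = \gO^{-1}\bgs$ from the previous step, one finds $\partial_u \Phi^u = \gO$. Because $\gO$ depends only on $y$, integration gives $\Phi^u = \gO\, u + f(y)$, which we rewrite as $\gO(u - \xi)$ by setting $\xi = -f/\gO \in \Co{\gS}$. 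This yields the second equation. The converse is a direct check reversing these steps: given $(\phi, \gO, \xi)$ one defines $\Phi(u,y) := (\gO(u-\xi), \phi(y))$ and verifies that $\Phi^*\bh=\bh$ and $\Phi^*\bn=\bn$.

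I expect the main obstacle to be purely bookkeeping: correctly interpreting pullbacks of weighted sections and cleanly exploiting the isomorphism $\CSoL{0} \simeq \So{L_\gS}$ of proposition \ref{Weak structure of null-infinity: Prop, covariantly constant sections isomorphism}. In particular, the nontrivial content of the proposition is that both $\gO$ and $\xi$ descend to the base $\gS$; once one is careful enough with weights to see that $\Phi$ must preserve vertically-constant sections of $L$, both descents become forced and the coordinate computation above produces the stated form.
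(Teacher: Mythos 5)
Your proof is correct and follows essentially the same route as the paper's: reduce the metric condition to the base via $\bh_{ab}=\pi^*\bh_{AB}$, then integrate the vertical condition (your $\parD_u\Phi^u=\gO$ versus the paper's equivalent $\parD_u\left(u\circ\Phi^{-1}\right)=\gO^{-1}\circ\phi^{-1}$) to obtain $u\circ\Phi=\gO\left(u-\xi\right)$. Your extra remark justifying why $\gO$ descends to $\gS$ (preservation of $\CSoL{0}$) is a welcome clarification of a point the paper states without comment.
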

\begin{proof}
Since $\Phi \from \MNull \to \MNull$ is an automorphism of $\MNull \to \gS$ and $\bh_{ab} = \pi^*\bh_{AB}$, it is clear that $\Phi^*h_{ab} = \gO^2 h_{ab}$ with $\gO^{-1} = \bgs^{-1} \Phi^* \bgs $ (now a function on $\gS$) if and only if the induced map on the base $\phi \from \gS \to \gS$ satisfies $\phi^* h_{AB} = \gO^2\; h_{AB}$. Then $ \phi^{-1}_*\left( \parD_u \right ) = \gO^{-1} \parD_u$ can be rewritten as $\parD_u\left( u\circ \Phi^{-1} \right) = \gO^{-1} \circ \phi^{-1}$ and this implies that there exists $\xi$ a function on $\gS$ such that $u \circ \Phi = \gO\left(  u - \xi\right)$.
\end{proof}
It follows from the above proposition that all diffeomorphisms $\Phi \from \MNull \to \MNull$ such that
\begin{equation}
\pi \circ \Phi \circ \pi^{-1} = Id_{\gS}, \qquad u \circ \Phi = u - \xi
\end{equation}
form a subgroup $\cT$ of the BMS group. These are the so-called ``super-translations''. Clearly, once we fix a well-adapted trivialisation, super-translations are parametrised by functions on $\gS$. There might or might not be any further symmetry depending on whether $\left(\gS , \hConf \right)$ admits conformal diffeomorphisms ($f \from \gS \to \gS$ such that $f^* h =  \gO^{2} h$).  Each of these conformal transformations indeed parametrise, in a fixed well-adapted trivialisation, a symmetry:
\begin{equation}
\pi \circ \Phi \circ \pi^{-1} = f, \qquad u \circ \Phi = \gO u.
\end{equation}
All in all we have,
\begin{equation}
BMS\left(\MNull \to \gS, \hConf, \bn\right)  = \cT \rtimes Conf(\gS, \hConf).
\end{equation}
where $Conf(\gS)$ is the group of conformal diffeomorphisms of $\gS$.
If one further suppose that $\left(\gS , \hConf \right) = \left(\S^{n-1} , \hConf_{\S^{n-1}} \right)$ is the conformal $(n-1)$-sphere then the resulting symmetry is the celebrated BMS group from \cite{bondi_gravitational_1962,sachs_asymptotic_1962}, $BMS_{n+1}= \cT \rtimes SO\left(n,1\right)$. 

It now follows from proposition \ref{Weak structure of null-infinity: Prop, BMS symmetry} that the action of the BMS group \eqref{Weak structure of null-infinity: BMS action} effectively sends a well-adapted trivialisation to another:

\begin{Proposition}\label{Weak structure of null-infinity: Prop, BMS symmetry2}
The BMS group naturally acts on the space of a well-adapted trivialisations via pull-back:
\begin{equation}
\left(\bgs , u \right) 	\quad \mapsto \quad \left( \bgsh = \Phi^*\bgs , \uh =  \Phi^* u \right)
\end{equation}
where $\Phi^*\bgs = \gO^{-1}\bgs$ and $\Phi^* u= \gO\left( u -\xi\right) $.
\end{Proposition}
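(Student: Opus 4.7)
The plan is to show that pull-back by $\Phi\in BMS(\MNull\to\gS,\hConf,\bn)$ sends any well-adapted trivialisation to another, and then to read off the prescribed form of the new trivialisation from Proposition \ref{Weak structure of null-infinity: Prop, BMS symmetry}. The proof should be essentially a direct unpacking of the definitions combined with the characterisation of the BMS group already established.

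First I would verify that $(\Phi^*\bgs,\Phi^*u)$ is well-adapted in the sense of Definition \ref{Weak structure of null-infinity: Def, well-adapted trivialisation}. Since $\bgs\in\CSoL{0}$ and $\Phi$ preserves $\bn^a$, the pull-back $\Phi^*\bgs$ still satisfies $\covD_{\bn}(\Phi^*\bgs)=0$, i.e.\ it lies in $\CSoL{0}$. Likewise, since $\bn^a du_a=\bgs^{-1}$ and $\Phi$ preserves $\bn^a$ up to the factor $\gO^{-1}$ while $\Phi^*\bgs=\gO^{-1}\bgs$, one checks
\begin{equation}
(\Phi^*\bgs)\,\bn^a d(\Phi^*u)_a \;=\; \Phi^*\!\bigl(\bgs\,(\Phi_*\bn)^a du_a\bigr)\;=\;\Phi^*(\gO^{-1}\bgs\cdot\bgs^{-1}\gO)\;=\;1,
\end{equation}
so the compatibility condition holds.

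Next I would identify $\Phi^*\bgs$ and $\Phi^*u$ explicitly. For the scale, the equality $\Phi^*\bh_{ab}=\bh_{ab}$ combined with $h_{ab}=\bgs^{-2}\bh_{ab}$ gives $\Phi^*h_{ab}=(\Phi^*\bgs)^{-2}\bh_{ab}$; comparing with $\Phi^*h_{ab}=\gO^2 h_{ab}$ (from Proposition \ref{Weak structure of null-infinity: Prop, BMS symmetry}) yields $\Phi^*\bgs=\gO^{-1}\bgs$, where $\gO$ is pulled back to $\MNull$ via $\pi$. For the coordinate, Proposition \ref{Weak structure of null-infinity: Prop, BMS symmetry} directly gives $\Phi^*u = u\circ\Phi = \gO(u-\xi)$. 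Hence the transformation is precisely the one described in \eqref{Weak structure of null-infinity: well-adapted trivialisation transformation rules}.

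Finally, to see this defines a genuine action, I would note that pull-back is contravariantly functorial, so $(\Phi_1\circ\Phi_2)^*=\Phi_2^*\circ\Phi_1^*$, while the identity automorphism acts trivially; combined with the preceding step this shows that $BMS(\MNull\to\gS,\hConf,\bn)$ acts on the set of well-adapted trivialisations by the stated formula. There is no real obstacle here: everything is tautological once Proposition \ref{Weak structure of null-infinity: Prop, BMS symmetry} is available. The only minor subtlety worth stating cleanly is the identification of $\gO$, a priori a function on $\gS$, with its pull-back $\pi^*\gO$ on $\MNull$, which is implicit throughout this section.
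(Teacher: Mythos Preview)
Your proposal is correct and follows the same route as the paper: the paper states this proposition as an immediate consequence of Proposition \ref{Weak structure of null-infinity: Prop, BMS symmetry} and gives no separate proof, so your argument simply spells out the details the paper leaves implicit. The only remark is that the paper does not bother to verify the group-action axioms or the well-adaptedness of the pulled-back pair explicitly; your additional checks are fine but more than what is required.
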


\section{Radiative (or strong) structure of null-infinity}\label{s: Radiative (or strong) structure of null-infinity}

In this section we present our first main set of results: We define the ``strong''\footnote{We try to refrain to use the term ``radiative'' structure too loosely since strong null-infinity structures will only correspond to gravitational radiations on \underline{3-dimensional} null-infinity manifolds.} structure of null-infinity as a choice of Poincaré operator and show that the transformation rules for the tensor parametrising this operator are the same as those of the asymptotic shear of the BMS or Newman-Penrose formalism. We then show that the zeros of this operator are related to good-cuts. For completeness we also review from \cite{calderbank_mobius_2006,burstall_conformal_2010} how Möbius structures are related to complex projective structures.

\subsection{Strong null-infinity structures \texorpdfstring{($n\geq3$)}{}}

\subsubsection{Asymptotic shear and Poincaré Operators}

\begin{Definition}{Poincaré operator}\mbox{}\label{Strong structure of null-infinity: Def,Poincare operator}\\
Let $\left( \MNull \to \gS, \hConf_{ab}, \bn^a\right)$ be a null-infinity manifold. A compatible Poincaré operator $\cP \from \CSoL{k} \to \So{S^2_0\; T^*\gS \otimes L}$ is defined to be a linear differential operator of order two such that, in a well adapted trivialisation $\left ( \bgs , u \right)$, it takes the form 
	\begin{equation}\label{Strong structure of null-infinity: Poincare operator}
	\cP(\bl)_{AB} =  \bgs \left( \covD_A \covD_B \big|_0+ \frac{1}{2}[C_{AB} , \parD_u ]  \right) l
	\end{equation}
	where $l= \bgs^{-1} \bl$, ``$\parD_u$'' stands for $\bgs\LieD_{\bn}$, $\covD_A$ is the ``horizontal derivative'' given by $u$ and the Levi-Civita connection of $h_{AB} = \bgs^{-2}\hConf_{ab}$,  $\covD_A \covD_B \big|_0$ is the trace-free part of the Hessian, $C_{AB}$ is a trace-free symmetric tensor and $[C_{AB} , \parD_u ]l := C_{AB}\parD_ul - (\parD_u C_{AB})l$ .
\end{Definition}
Note that in this definition the trivialisation of $\MNull \to \gS$ given by $u$ is absolutely needed to make sense of the ``horizontal derivative'' $\covD_{A}l$. 
\begin{Definition}{Strong null-infinity structure}\label{Strong structure of null-infinity: Def,strong null-infinty structure}\mbox{} \\
	A \emph{strong null-infinity structure} $\left( \MNull, \hConf_{ab}, \bn^a , \cP \right)$ is a choice of weak null-infinity structure $\left( \MNull \to \gS, \hConf_{ab}, \bn^a \right)$ together with a compatible Poincaré operator $\cP$. 
\end{Definition}
An important remark is that the image of a Poincaré operator does not necessarily have constant vertical derivative, rather 
\begin{equation}\label{Strong structure of null-infinity: DP = ddC}
\covD_{\bn} \cP\left(\bl\right)_{AB}	= - \; \bl \;\frac{1}{2} \parD_u \parD_u C_{AB}.
\end{equation}
Here $\covD_{\bn}$  stands for the combined action of Lie derivative and the vertical covariant derivative on $L$. The above result immediately follows from the fact that since $\bgs \in \CSoL{0}$ and $\bl \in \CSoL{k}$ then $\parD_u h_{AB}=0$ and $\parD_u l =k$ is a constant.
 
Definition \ref{Strong structure of null-infinity: Def,Poincare operator} for Poincaré operators naively seems to depend on a choice of well-adapted trivialisation $\left( \bgs, u \right)$. It actually does not:
\begin{Proposition}
Let $\cP \from \CSoL{k} \to \So{S^2_0\; T^*\gS \otimes L}$ be a differential operator such that in a well-adapted trivialisation $\left( \bgs , u\right)$ it takes the form  \eqref{Strong structure of null-infinity: Poincare operator}. If $\left(\bgsh, \uh\right)$ is any other well-adapted trivialisation then there exists a trace-free symmetric tensor $\Ch_{AB}$ such that
\begin{equation}
	\cP(\bl)_{AB} =  \bgsh \left( \covDh_A \covDh_B \big|_0+ \frac{1}{2}[\Ch_{AB} , \parDh_{\uh} ]  \right) \lh.
\end{equation} 
\end{Proposition}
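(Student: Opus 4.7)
The strategy is to compute how each ingredient of the differential operator \eqref{Strong structure of null-infinity: Poincare operator} transforms under the change of well-adapted trivialisation recorded in Section \ref{sss: Change of well-adapted trivialisation}, and to read off the resulting expression for $\Ch_{AB}$ by matching terms. Throughout I set $(\bgsh, \uh) = (\gO^{-1}\bgs, \gO(u-\xi))$ with $\gO, \xi \in \Co{\gS}$ and $\gU_A = \gO^{-1}\partial_A\gO$. The elementary transformations are straightforward: for $\bl \in \CSoL{k}$ one has $\lh = \gO\,l$; the rescaled vertical derivatives satisfy $\parDh_{\uh} = \bgsh\,\LieD_{\bn} = \gO^{-1}\parD_u$; the two Levi-Civita connections on $\gS$ are related by the Weyl formula \eqref{Conformal geometry: Connection transformation rules}; and the two horizontal lifts differ on any function $f$ of $\MNull$ by
\[
\covDh_A f - \covD_A f \;=\; \bigl(\partial_A\xi - (u-\xi)\gU_A\bigr)\parD_u f,
\]
as can be checked by writing $u = \gO^{-1}\uh + \xi$ and taking derivatives at fixed $\uh$.

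The core computation is then the expansion of $\bgsh\,\covDh_A\covDh_B\big|_0\,\lh$. Writing the argument as $\gO\,l$ and applying Leibniz together with the two-pronged modification (the Weyl transformation acting on the $\gS$-index of the intermediate $T^*\gS$-valued one-form, and the horizontal-lift shift acting on its $\MNull$-dependence), the purely horizontal contribution reassembles into $\bgs\,\covD_A\covD_B\big|_0\,l$. What remains organises naturally into (i) a piece proportional to $\parD_u l$ whose coefficient is the trace-free symmetric tensor $-2\,\covD_A\covD_B\big|_0\xi - 2\uh\,\covD_A\covD_B\big|_0\gO^{-1}$, together with (ii) a piece proportional to $l$ obtained as the $\parD_u$-derivative of that same coefficient, which is non-trivial precisely because $\parD_u\uh = \gO$.

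On the other hand, using $\parD_u\gO = 0$ and $\parD_u\lh = \gO\,\parD_u l$, the first-order/vertical piece of the rescaled operator becomes
\[
\bgsh\,\tfrac{1}{2}[\Ch_{AB}, \parDh_{\uh}]\lh \;=\; \bgs\,\tfrac{1}{2}\bigl(\gO^{-1}\Ch_{AB}\,\parD_u l - \gO^{-1}(\parD_u\Ch_{AB})\,l\bigr).
\]
Matching this together with the expanded Hessian to the target $\bgs\bigl(\covD_A\covD_B\big|_0 + \tfrac{1}{2}[C_{AB}, \parD_u]\bigr)l$ uniquely forces
\[
\Ch_{AB} \;=\; \gO\Bigl(C_{AB} \,-\, 2\,\covD_A\covD_B\big|_0\xi \,-\, 2\uh\,\covD_A\covD_B\big|_0\gO^{-1}\Bigr),
\]
which is exactly the transformation law \eqref{Introduction: C transformation rules} expected for the asymptotic shear, showing the construction is consistent. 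The main obstacle lies in the bookkeeping inside the iterated trace-free Hessian: one must carefully combine the Weyl transformation on $\gS$-indices with the horizontal-lift shift on the underlying function, and verify that the pieces proportional to $l$ produced by the $\parD_u$-action on the $\uh$-dependent correction in $\Ch_{AB}$ exactly cancel the residual terms of the Hessian expansion. Once the intermediate identity expressing $\covDh_A\covDh_B\big|_0(\gO\,l)$ in terms of $\covD_A\covD_B\big|_0 l$, $\covD_A\covD_B\big|_0\gO^{-1}$, $\covD_A\covD_B\big|_0\xi$ and the first-order $\parD_u l$ corrections is in hand, the remaining matching is purely algebraic.
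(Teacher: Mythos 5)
Your overall strategy is sound and lands on the correct transformation rule, but as written there is one genuine gap in the central step. When you expand $\bgsh\,\covDh_A\covDh_B\big|_0\,\lh$ using the horizontal-lift shift $\covDh_A f = \covD_A f + s_A\,\parD_u f$ with $s_A = \parD_A\xi - (u-\xi)\gU_A$, the iterated derivative does \emph{not} organise into only an $l$-piece and a $\parD_u l$-piece: the second application of the shift also produces a term $s_A s_B\,\parD_u\parD_u l$ and mixed terms proportional to $\covD_A\parD_u l$ (from the horizontal derivative hitting the $\parD_u(\gO l)$ factor already present in $\covDh_B\lh$). These cannot be absorbed into any choice of $\Ch_{AB}$, so the matching against $\bgs\left(\covD_A\covD_B\big|_0+\tfrac{1}{2}[C_{AB},\parD_u]\right)l$ would fail for a generic section of $L$. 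The step only closes because $\bl\in\CSoL{k}$, i.e. $\parD_u l = k$ is constant, which forces $\parD_u\parD_u l = 0$ and $\covD_A\parD_u l = 0$. You state the domain at the outset but never invoke this constancy; it must be used explicitly, otherwise the claim that ``what remains organises into (i) and (ii)'' is simply false.

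Once that is inserted, your route is correct and genuinely different from the paper's. The paper first writes $l = ku+\ell$ with $\ell\in\Co{\gS}$ (and correspondingly $\lh = k\uh+\ellh$ with $\ellh=\gO(\ell+k\xi)$), which evaluates all vertical derivatives once and for all and reduces the whole computation to Hessians of functions pulled back from $\gS$; the only input then needed is the standard Weyl transformation rule \eqref{Conformal geometry: Connection transformation rules} on the base, and no horizontal-lift shift ever appears — the subtlety above is dissolved rather than confronted. Your approach keeps $l$ intact and pays for it with the extra lemma on $\covDh_A-\covD_A$; it reads as a brute-force verification of covariance and has the merit of making visible the mechanism by which $\parD_u\uh=\gO$ generates the compensating $l$-proportional term, but the paper's decomposition is the shorter and cleaner bookkeeping.
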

\begin{Proposition}
Let $\left(\bgs , u\right)$ and $\left(\bgsh = \gO^{-1}\bgs , \uh = \gO\left(u-\xi\right) \right) $ be well-adapted trivialisations. We have the transformation rule
	\begin{equation}\label{Strong structure of null-infinity: C transformation rules} 
	\begin{array}{ccc}
	C_{AB} & \mapsto & \Ch_{AB} = \gO\bigg(C_{AB} -2 \covD_A \covD_B \big|_0 \xi -2  \uh \; \covD_A \covD_B \big|_0 \gO^{-1} \bigg ) .
	\end{array}
	\end{equation}.
	\end{Proposition}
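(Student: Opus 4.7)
Both sides of the definition write down the \emph{same} section $\cP(\bl) \in \So{S^2_0\, T^*\gS \otimes L}$; since the coordinate of a weight-one section in $\bgsh$ is $\gO$ times that in $\bgs$, the identity to verify is the master equation
\begin{equation*}
\covDh_A\covDh_B\big|_0\,\hat l \,+\, \tfrac{1}{2}\bigl[\Ch_{AB},\parDh_{\uh}\bigr]\hat l \;=\; \gO\Bigl(\covD_A\covD_B\big|_0\, l \,+\, \tfrac{1}{2}\bigl[C_{AB},\parD_u\bigr] l\Bigr),
\end{equation*}
with $\hat l = \gO l$ and $\parD_u l = \parDh_{\uh}\hat l = k$, required to hold for every $\bl \in \CSoL{k}$ and every $k \in \bbR$. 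My plan is to expand the left hand side purely in unhatted quantities using three elementary ingredients: (i) the conformal rescaling of the Levi-Civita Christoffel symbols, $\hat\Gamma^C_{AB} = \Gamma^C_{AB} + 2\gU_{(A}\gd^C_{B)} - h_{AB}\gU^C$ with $\gU_A = \gO^{-1}\covD_A\gO$; (ii) the vertical rescaling $\parDh_{\uh} = \gO^{-1}\parD_u$; (iii) the shift of the horizontal distribution between the $u$- and $\uh$-trivialisations, which at fixed $\uh$ reads $\parDh_A = \parD_A + (\covD_A\xi - \uh\,\gO^{-1}\gU_A)\parD_u$.

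To keep the bookkeeping manageable I would treat the two distinguished transformations separately and then compose them. \emph{Pure super-translation} ($\gO = 1$, $\xi$ arbitrary): the Christoffels and the conformal weight are unchanged, $\hat l = l$, and a direct Leibniz computation using $\parD_u l = k$ together with $\parD_u \xi = 0$ gives $\covDh_A\covDh_B\big|_0 \hat l = \covD_A\covD_B\big|_0 l + k\,\covD_A\covD_B\big|_0\xi$. Substituting into the master equation and equating the $k$-linear and $l$-linear contributions separately --- which is legitimate since at any point one can find $\bl\in\CSoL{k}$ realising any prescribed value of $l$ --- forces $\Ch_{AB} = C_{AB} - 2\covD_A\covD_B\big|_0\xi$, matching the claim when $\gO=1$.

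\emph{Pure conformal rescaling} ($\gO$ arbitrary, $\xi = 0$): here $\uh = \gO u$ and all three ingredients (i)--(iii) contribute. Expanding $\parDh_A\parDh_B(\gO l) - \hat\Gamma^C_{AB}\parDh_C(\gO l)$ and taking the trace-free part with respect to $\hh_{AB}=\gO^2 h_{AB}$, the pieces linear in $\parD_A l$ and $\parD_A\parD_B l$ recombine into $\gO\,\covD_A\covD_B\big|_0 l$, while the remaining terms collect into $\hat l\,(\covD_A\gU_B - \gU_A\gU_B)\big|_0$ plus a correction proportional to $k\uh$. Using the identity $(\covD_A\gU_B - \gU_A\gU_B)\big|_0 = -\gO\,\covD_A\covD_B\big|_0\gO^{-1}$, obtained by direct expansion of the Hessian of $\gO^{-1} = \bgsh\bgs^{-1}$, these terms assemble into $\Ch_{AB} = \gO C_{AB} - 2\gO\uh\,\covD_A\covD_B\big|_0\gO^{-1}$. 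Composing with the super-translation step delivers the full formula \eqref{Strong structure of null-infinity: C transformation rules}.

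The main obstacle is the cancellation of cross terms in the pure-rescaling step: because the horizontal distribution itself depends on the trivialisation, the computation produces terms that mix the horizontal derivatives of $l$ with $\parD_u l = k$, and these reorganise correctly only after repeated use of $\parD_u$ annihilating functions pulled back from $\gS$ together with the Hessian identity for $\gO^{-1}$.
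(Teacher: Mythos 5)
Your proof is correct, and all the intermediate identities check out: the master equation is the right invariance statement (the $\gO$ on the right comes from comparing the coordinates of the weight-one section $\cP(\bl)$ in the two scales), ingredient (iii) correctly captures the shift of the horizontal distribution $\parD_A\big|_{\uh} = \parD_A\big|_{u} + \left(\covD_A\xi - \uh\,\gO^{-1}\gU_A\right)\parD_u$, and the Hessian identity $\left(\covD_A\gU_B - \gU_A\gU_B\right)\big|_0 = -\gO\,\covD_A\covD_B\big|_0\,\gO^{-1}$ is exactly what is needed to close both the $\hat l$-linear and the $k\uh$-linear terms. Your route differs from the paper's in how the computation is organised. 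The paper decomposes the \emph{section}: it writes $l = ku + \ell$ with $\ell \in \Co{\gS}$, so that $\cP(\bl)$ splits into a $k$-part $\tfrac{1}{2}k\left(C_{AB} - u\,\parD_u C_{AB}\right)$ and a Möbius-type part $\left(\covD_A\covD_B\big|_0 - \tfrac{1}{2}\parD_u C_{AB}\right)\ell$; it then invokes the transformation rule $\ellh = \gO\left(\ell + k\xi\right)$ together with the standard conformal transformation rules \eqref{Conformal geometry: Connection transformation rules} for weighted tensors on the base, which sidesteps any explicit discussion of the change of horizontal distribution (since $\ell$ genuinely lives on $\gS$). You instead decompose the \emph{group element} into a super-translation followed by a pure rescaling and work directly with the Hessian of the function $l$ on $\MNull$. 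What your version buys is that the horizontal-distribution subtlety — which the paper's ``it is then a direct computation'' conceals — is made fully explicit, and the composition argument cleanly explains why the $\xi$-term and the $\uh\,\covD_A\covD_B\big|_0\gO^{-1}$-term appear additively inside the overall factor of $\gO$; what the paper's version buys is that the $\ell$-part is immediately recognisable as the Möbius-operator transformation, which is then reused verbatim in the discussion of the News tensor. One small point worth stating explicitly in your write-up: when you ``equate the $k$-linear and $l$-linear contributions separately'' you should say you are separating the coefficient of $k$ from the coefficient of $\ell$ (with $l = ku+\ell$), since $l$ itself is not independent of $k$; as your own argument shows, at a fixed point $k$ and $\ell$ can be prescribed independently, so the separation is legitimate.
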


\begin{proof}
	Let $\bl \in \CSoL{k}$ be a scale with constant vertical derivative and let $\left(\bgs, u\right)$ be a well-adapted trivialisation we must have $ l \coloneqq \bgs^{-1}\bl = ku + \ell$ with $\ell \in \Co{\gS}$. We can therefore rewrite  $\cP\left(\bl\right) $ as
	\begin{equation}
		\cP(\bl)_{AB} = \bgs \bigg( \frac{1}{2} \; k\left( C_{AB} - u \parD_u C_{AB}\right ) +  \left( \covD_A \covD_B \big|_0 -  \frac{1}{2}\parD_u C_{AB}  \right) \ell \bigg)
	\end{equation}
	Recall from section \ref{sss: Change of well-adapted trivialisation} that if $\left( \bgsh = \gO^{-1} \bgs , \uh = \gO\left( u - \xi \right )\right)$ is any other well-adapted trivialisation we have $\lh = k \uh +\ellh$ with $\ellh = \gO\left( \ell +k \xi \right)$. Making use of the transformation rules \eqref{Conformal geometry: Connection transformation rules} for the connection, it is then a direct computation to show that we have
	\begin{equation}
		\cP(\bl)_{AB} = \bgsh \bigg( \frac{1}{2} \; k\left( \Ch_{AB} - \uh \parDh_{\uh} \Ch_{AB}\right ) +  \left( \covDh_A \covDh_B \big|_0 - \frac{1}{2} \parDh_u \Ch_{AB}  \right) \ellh \bigg)
	\end{equation}
where $\covDh_A$ is the Levi-Civita connection of $\hh_{AB} = \bgsh^{-2}\hConf_{AB}$ and $\Ch_{AB}$ is given by the transformation rule \eqref{Strong structure of null-infinity: C transformation rules}.	
\end{proof}

From the previous results we obtain immediately the following
\begin{Proposition}
 Poincaré operators form an affine space over $S^2_0 T^*\gS \otimes L$: if $\cP_0$ is a fixed Poincaré operator, any other can be written as
\begin{equation}
	\cP = \cP_0 + \frac{1}{2}[\bCt_{AB} , \covD_{\bn} ] 
\end{equation}
where $\bCt_{AB} \in \So{S^2_0 T^*\gS \otimes L}$ is a genuine tensor.\\ There isn't, however, any distinguished origin to this affine space.
\end{Proposition}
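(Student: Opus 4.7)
The plan is to extract the proposition directly from the preceding transformation analysis, in three short steps: pin down the difference of two Poincaré operators in a fixed well-adapted trivialisation, check that this difference is tensorial, and verify the converse.

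First I would fix a well-adapted trivialisation $(\bgs,u)$ and write the two Poincaré operators $\cP$ and $\cP_0$ in the form \eqref{Strong structure of null-infinity: Poincare operator}, with trace-free symmetric tensors $C_{AB}$ and $C_{0,AB}$ respectively. Since the conformal Laplacian piece $\bgs\,\covD_A\covD_B|_0$ is common to both, a direct subtraction gives
\begin{equation}
    (\cP-\cP_0)(\bl)_{AB}
    \;=\;\tfrac{1}{2}\,\bgs\,[\,C_{AB}-C_{0,AB}\,,\,\parD_u\,]\,l
    \;=\;\tfrac{1}{2}\,[\,\bCt_{AB}\,,\,\covD_{\bn}\,]\,\bl ,
\end{equation}
where $\bCt_{AB}:=\bgs(C_{AB}-C_{0,AB})$ (after reinstating weights) and I identify $\parD_u$ with $\bgs\,\covD_{\bn}$ as per the conventions set in section \ref{ss: Well adapted trivialisations}.

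Next I would check that $\bCt_{AB}$ is a genuine section of $S^2_0\,T^*\gS\otimes L$. Under a change of well-adapted trivialisation $(\bgs,u)\mapsto(\bgsh,\uh)=(\go^{-1}\bgs,\,\go(u-\xi))$, the transformation rule \eqref{Strong structure of null-infinity: C transformation rules} applies to $C_{AB}$ and $C_{0,AB}$ separately, and both inherit the same inhomogeneous terms $-2\covD_A\covD_B|_0\xi-2\uh\,\covD_A\covD_B|_0\go^{-1}$. These cancel in the difference, leaving the purely homogeneous rule $\Ch_{AB}-\Ch_{0,AB}=\go(C_{AB}-C_{0,AB})$, which is precisely the transformation law for the coordinates of a section of $S^2_0\,T^*\gS\otimes L$. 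Thus $\bCt_{AB}$ is well defined independently of the chosen trivialisation.

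Finally I would verify the converse: given any $\bCt_{AB}\in\So{S^2_0\,T^*\gS\otimes L}$, the operator $\cP_0+\tfrac{1}{2}[\bCt_{AB},\covD_{\bn}]$ takes the required form \eqref{Strong structure of null-infinity: Poincare operator} in every well-adapted trivialisation, with parameter $C_{AB}+\bgs^{-1}\bCt_{AB}$. Combined with the first two steps, this shows that the space of Poincaré operators is a torsor over $\So{S^2_0\,T^*\gS\otimes L}$. The absence of a canonical origin is then an immediate consequence: no construction from the universal data $(\MNull\to\gS,\hConf_{ab},\bn^a)$ alone can single out a preferred $C_{AB}$, since the transformation rule \eqref{Strong structure of null-infinity: C transformation rules} shows that even $C_{AB}\equiv 0$ in one trivialisation fails to be preserved under a generic change of $(\bgs,u)$. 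The only subtle step is bookkeeping of conformal weights in the commutator $[\bCt_{AB},\covD_{\bn}]$, which is the main place where one must be careful; everything else reduces to reading off the already-established transformation rule.
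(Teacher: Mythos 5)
Your proposal is correct and follows essentially the same route as the paper, which states the proposition as an immediate consequence of the transformation rule \eqref{Strong structure of null-infinity: C transformation rules}: the inhomogeneous terms $-2\covD_A\covD_B\big|_0\xi - 2\uh\,\covD_A\covD_B\big|_0\gO^{-1}$ cancel in the difference of two Poincaré operators, leaving the homogeneous (tensorial) law for $C_{AB}-C_{0,AB}$, and the same rule shows that $C_{AB}\equiv 0$ is not preserved under a change of well-adapted trivialisation, so no origin is distinguished. Your weight bookkeeping $\bCt_{AB}=\bgs\left(C_{AB}-C_{0,AB}\right)$ and the identification $\bgs[\,C_{AB}-C_{0,AB},\parD_u]\,l=[\bCt_{AB},\covD_{\bn}]\,\bl$ are consistent with the paper's conventions.
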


Finally, noting the identity of the transformation rules for the asymptotic shear that can be found in the literature \cite{barnich_finite_2016} and the transformation rules \eqref{Strong structure of null-infinity: C transformation rules} we have our first theorem:
\begin{Theorem}{Asymptotic shear and Poincaré operator}\mbox{}\label{Strong structure of null-infinity: Thrm,Asymptotic shear and Poincare operator}\\
Choices of asymptotic shear for an asymptotically flat space-times of dimension $n+1 \geq 4$ are in one-to-one correspondence with choices of Poincaré operators on the null-infinity manifold at the conformal boundary.
\end{Theorem}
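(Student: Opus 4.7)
The plan is to establish the bijection by pairing the gauge data on both sides and matching their transformation laws. On the spacetime side, a choice of Bondi-Sachs coordinate system $(\gO, u, x^A)$ near $\MNull$ induces on the boundary a well-adapted trivialisation in the sense of Definition \ref{Weak structure of null-infinity: Def, well-adapted trivialisation}: the leading boundary defining function $\gO$ gives a scale $\bgs \in \So{L_\gS}$ via $h_{AB} = \bgs^{-2}\bh_{AB}$ (and the restriction $\covD_{\bn}\bgs = 0$ follows from the fact that $h_{AB}$ is $u$-independent, which is itself enforced by the leading Einstein equation), while the retarded time $u$ provides the second data of Proposition \ref{Weak structure of null-infinity: Prop, equivalence of well-adapted trivialisation3}. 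The compatibility $\bgs\,\bn^a du_a = 1$ then follows from reading off $n^a$ and $du_a$ from the $2\,d\gO\,du$ term in \eqref{Introduction: Bondi-Sachs metric}. First I would check that this assignment is surjective, i.e.\ that any well-adapted trivialisation on $\MNull$ extends to a Bondi-Sachs gauge in a collar neighbourhood of $\MNull$ in the conformally compactified spacetime.

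Next I would read off the asymptotic shear $C_{AB}$ from the subleading term in \eqref{Introduction: Bondi-Sachs metric} and define the candidate Poincaré operator by formula \eqref{Strong structure of null-infinity: Poincare operator} using this same $C_{AB}$ in the well-adapted trivialisation $(\bgs, u)$ constructed above. The trace-freeness and symmetry of $C_{AB}$ (with respect to $h_{AB}$) that is needed in Definition \ref{Strong structure of null-infinity: Def,Poincare operator} is exactly the standard Bondi condition in dimension $n+1 \geq 4$, which is where the dimensional hypothesis enters: only in $n+1 \geq 4$ is the relevant trace-free symmetric tensor on the $(n-1)$-dimensional cuts nontrivial radiative data.

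The heart of the proof is then to verify that this assignment is well-defined on gauge equivalence classes, i.e.\ that the Barnich--Troessaert transformation rule of $C_{AB}$ under a residual change of Bondi-Sachs gauge coincides with the intrinsic transformation rule \eqref{Strong structure of null-infinity: C transformation rules} for the tensor parametrising a Poincaré operator under the corresponding change $(\bgs, u) \mapsto (\bgsh, \uh) = (\go^{-1}\bgs,\, \go(u-\xi))$ of well-adapted trivialisation. Concretely, one expands $\gOh = \go \gO + \cO(\gO^2)$, $\uh = \go(u-\xi) + \cO(\gO)$, $\xh^A = x^A + \cO(\gO)$ and solves order by order in $\gO$ for the subleading terms demanded by preservation of the Bondi form, then reads off the transformed $\Ch_{AB}$ from the $\cO(\gOh)$ term of the transformed metric. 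The output of this bookkeeping calculation is precisely \eqref{Introduction: C transformation rules}, which is identical to \eqref{Strong structure of null-infinity: C transformation rules}.

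The main obstacle is really this matching computation: it is routine but requires careful tracking of the subleading corrections to $(\gOh, \uh, \xh^A)$, since the zeroth-order change $(\go,\xi)$ is not enough to close the transformation and the subleading terms feed back into $\Ch_{AB}$ through the Hessian pieces $\covD_A \covD_B|_0 \xi$ and $\uh \,\covD_A \covD_B|_0 \go^{-1}$. Once this is in place, injectivity is immediate (distinct gauge classes of asymptotic shear give distinct tensors $C_{AB}$ in the chosen well-adapted trivialisation, hence distinct Poincaré operators), and surjectivity follows from the fact that every trace-free symmetric $C_{AB}$ on $\gS$ arises as the subleading coefficient of some asymptotically flat expansion \eqref{Introduction: Bondi-Sachs metric}, completing the one-to-one correspondence.
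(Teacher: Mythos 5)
Your proposal is correct and follows essentially the same strategy as the paper: the theorem rests on the identity between the intrinsic transformation rule \eqref{Strong structure of null-infinity: C transformation rules} for the tensor $C_{AB}$ parametrising a Poincaré operator under a change of well-adapted trivialisation and the Bondi-gauge transformation law of the asymptotic shear, with the dimension hypothesis entering exactly as you say. The only difference is one of emphasis — the paper cites the literature for the spacetime-side computation (and defers the full justification to the later equivalence with the Ashtekar--Geroch equivalence class of connections), whereas you propose to carry out the order-by-order Bondi expansion explicitly — but this does not change the argument.
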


At this stage the reader might wonder if the identity of transformations rules as stated is enough to justify the above result. In what follows we will show that the coefficient $C_{AB}$ appearing in Poincaré operators correspond to a choice of affine connection $D$ compatible with the conformal Carroll structure $D h_{AB}=0$, $Dn^a =0$ and thus match the definition of radiative structure of \cite{geroch_asymptotic_1977,ashtekar_radiative_1981, ashtekar_a._symplectic_1981,ashtekar_symplectic_1982,ashtekar_asymptotic_1987,ashtekar_geometry_2015,ashtekar_null_2018}. This equivalence will thus fully justify theorem \ref{Strong structure of null-infinity: Thrm,Asymptotic shear and Poincare operator}. In a future work \cite{Herfray_2020} we will show how the Poincaré operator can be derived directly from the geometry of asymptotically flat space-times, thus establishing the relationship between both geometries in a more direct way.

\subsubsection{News and Möbius structures}

Let $\cP$ be a Poincaré operator. In this sub-section we want to consider its restriction to sections $\bl \in \CSoL{0}$ which are vertically constant. By proposition \ref{Weak structure of null-infinity: Prop, covariantly constant sections isomorphism} these sections are identified with sections $\bell$ of $L_{\gS}$ via pull-back, $\bl = \pi^* \bell$. Let us write the operator obtained by this restriction as $\cM \from L_{\gS} \to  S^2_0 T^* \gS \otimes L$. In a well-adapted trivialisation,
\begin{equation}
	\cM(\bell)_{AB} = \bgs \left( \covD_A \covD_B \big|_0 - \frac{1}{2} \parD_u C_{AB}  \right) \ell
\end{equation}
and thus $\cM$ is completely parametrised by the ``Bondi News'', $N_{AB} \coloneqq \parD_u C_{AB}$. Under change of well-adapted trivialisation the transformation law for the Bondi News is 
\begin{align}\label{Strong structure of null-infinity: News transformation rules}
	\parDh_{\uh}\Ch_{AB} &= \parD_u C_{AB}  - 2\;\gO\; \covD_A \covD_B \big|_0 \gO^{-1}  \\
	&= \parD_u C_{AB} -2 \left(  \gU_A \gU_B -\covD_A \gU_B \right ) \big|_0. \nonumber
\end{align}

Let us suppose for a moment that $\parD_u \parD_u C_{AB}=0$  (this condition is equivalent to the vanishing of \eqref{Strong structure of null-infinity: DP = ddC} and thus independent of the choice of well-adapted trivialisation). Then $\cM$ is a differential operator on sections of $L_{\gS}$,
\begin{equation}
	\cM \from \left| 
	\begin{array}{ccc}
L_{\gS} & \to & S^2_0 T^* \gS \otimes L_{\gS} \\
\bell & \mapsto & \bgs \left( \covD_A \covD_B \big|_0 - \frac{1}{2} \parD_u C_{AB}  \right) \ell
	\end{array}
\right.
\end{equation}
As discuss in the introduction, the above data $\left(\gS , \hConf , \cM \right)$ is known in the conformal geometry literature as a \emph{Möbius structure} see \cite{calderbank_mobius_2006, burstall_conformal_2010} and operators of the type of $\cM$ as Möbius operators. Möbius operators form an affine space modelled on $S^2_0 T^* \gS$. In dimension $n\geq4$ there is a natural origin to this space given by 
\begin{equation}\label{Strong structure of null-infinity: origins of generalised mobius structure}
\cM^{(0)}\left(\bell\right)_{AB} =  \bgs \left( \covD_A \covD_B + P_{AB}  \right) \big|_0 \ell.
\end{equation}
where $P_{AB}\big|_0  = \frac{1}{n-3}  R_{AB}\big|_{0}$ is the trace-free Schouten tensor of $h_{AB}= \bgs^{-2}\bh_{AB}$. (Note that the transformation rules for the trace-free Schouten tensor \eqref{Conformal geometry: Schouten tensor transformation rules} indeed matches the transformation rules \eqref{Strong structure of null-infinity: News transformation rules} for $-\frac{1}{2} \parD_u C_{AB}$.)

In dimension $n=3$, $\gS$ is a Riemann surface: There is no canonical equivalent to the Schouten tensor in this case and thus no (local) origin to the space of Möbius operators. However in this dimension there is still a preferred Möbius operator of \emph{global} nature: It is given by the uniformization theorem, see \cite{calderbank_mobius_2006,burstall_conformal_2010} for a more detailed discussion. This natural origin of global nature to the space of Möbius structures is also the ``rho tensor'' from \cite{geroch_asymptotic_1977,ashtekar_radiative_1981, ashtekar_a._symplectic_1981,ashtekar_symplectic_1982,ashtekar_asymptotic_1987,ashtekar_geometry_2015,ashtekar_null_2018}.  When needed, will write \eqref{Strong structure of null-infinity: origins of generalised mobius structure} as well for the coordinates of this preferred global Möbius operator in dimension two.

All this justifies to introduce \emph{generalised Möbius operators}.
\begin{Definition}{Generalised Möbius operators}\mbox{}\\
Let $\left( \MNull \to \gS, \hConf_{ab}, \bn^a\right)$ be a null-infinity manifold. A compatible generalised Möbius operator $\cM \from \So{L_{\gS}} \to \So{S^2_0\; T^*\gS \otimes L}$ is defined to be a linear differential operator of order two such that, in a well adapted trivialisation $\left ( \bgs , u \right)$, it takes of the form 
\begin{equation}
\cM\left(\bell\right)_{AB} =  \bgs \left( \covD_A \covD_B\big|_0 -\frac{1}{2} N_{AB}  \right)  \ell.
\end{equation}
	where $\ell= \bgs^{-1} \bell$, $\covD_A$ is the Levi-Civita connection of $h_{AB} = \bgs^{-2}\hConf_{ab}$,  $\covD_A \covD_B \big|_0$ is the trace-free part of the Hessian and $N_{AB}$ is a trace-free symmetric tensor.	
\end{Definition}
Any Poincaré operator as in definition \eqref{Strong structure of null-infinity: Def,Poincare operator} then induces a generalised Möbius operator by restriction to sections with zero covariant derivative $\CSoL{0} = \pi^* L_{\gS}$. Just like Poincaré operators, generalised Möbius operators form an affine space modelled on $\pi^*\left( S^2_0 T^* \gS \right)$. As opposed to Poincaré operators however generalised Möbius operator have a natural origin  for $n \geq3$:
\begin{Proposition}
	In dimension $n\geq3$ any generalised Möbius operator $\cM$ can be written as
\begin{equation}
\cM = \cM^{(0)} - \frac{1}{2} \Nt_{AB}
\end{equation}
 where $\cM^{(0)}$ is the preferred Möbius operator \eqref{Strong structure of null-infinity: origins of generalised mobius structure} and $\Nt_{AB}$ is a section of $\pi^*\left( S^2_0 T^* \gS \right)$. If $n\geq4$, the operator $\cM^{(0)}$ is of local nature while if $n=3$ it is given by the uniformization theorem of Riemann surfaces.
  \end{Proposition}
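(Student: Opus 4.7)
The plan is to prove both assertions — the affine decomposition and the nature of the origin $\cM^{(0)}$ — by matching transformation rules under change of well-adapted trivialisation $(\bgs,u)\mapsto(\bgsh=\gO^{-1}\bgs,\uh=\gO(u-\xi))$. The first step is to determine how the trace-free symmetric tensor $N_{AB}$ parametrising a generalised Möbius operator transforms. Since $\hat\ell=\gO\ell$ is a function on $\gS$, a direct computation of $\covDh_A\covDh_B\hat\ell$ using the Levi-Civita transformation rules \eqref{Conformal geometry: Connection transformation rules}, followed by extraction of the trace-free part and imposition that $\cM(\bell)_{AB}$ is trivialisation independent, yields
\begin{equation*}
\Nh_{AB} \;=\; N_{AB} \;-\; 2\bigl(\gU_A\gU_B - \covD_A\gU_B\bigr)\big|_0.
\end{equation*}

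Next I would verify that $\cM^{(0)}$ is itself a well-defined compatible generalised Möbius operator, splitting into cases. For $n\geq 4$ the base $\gS$ has dimension at least three, so the Schouten tensor exists locally; taking the trace-free part of the transformation rule \eqref{Conformal geometry: Schouten tensor transformation rules} gives $\Ph_{AB}|_0 = P_{AB}|_0 - (\covD_A\gU_B - \gU_A\gU_B)|_0$, which is precisely the transformation satisfied by $-\tfrac{1}{2}N_{AB}$ above. Therefore $\Nt_{AB} := N_{AB}+2P_{AB}|_0$ is invariant, and in particular the operator $\cM^{(0)}$ (corresponding to $N_{AB}=-2P_{AB}|_0$) has trivialisation-independent form, hence truly defines a compatible generalised Möbius operator. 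For $n=3$, $\gS$ is a Riemann surface and no local Schouten tensor is available; here I would invoke the result recalled in \cite{calderbank_mobius_2006,burstall_conformal_2010} that the uniformization theorem canonically equips $(\gS,\bh_{AB})$ with a distinguished global Möbius operator, which I take as $\cM^{(0)}$.

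With both ingredients in hand the affine decomposition follows immediately. Given any generalised Möbius operator $\cM$ with local coefficient $N_{AB}$, the difference $\cM-\cM^{(0)}$ is a zero-order differential operator whose coordinate expression in any well-adapted trivialisation reads $-\tfrac{1}{2}\bgs\,\Nt_{AB}\,\ell$; the transformation analysis above shows that $\Nt_{AB}$ is invariant under change of trivialisation, hence defines a section of $S^2_0\,T^*\gS$ on $\gS$, which pulls back to a section of $\pi^*(S^2_0\,T^*\gS)$ on $\MNull$.

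The main obstacle is the two-dimensional case $n=3$: the absence of a local Schouten formula means one cannot produce $\cM^{(0)}$ by an algebraic construction and must instead appeal to the non-trivial global input of uniformization. This is precisely the structural reason the proposition distinguishes $n\geq 4$ from $n=3$, and once the existence of $\cM^{(0)}$ is granted in each case the rest of the argument is the purely algebraic comparison of transformation laws sketched above.
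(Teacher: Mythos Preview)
Your approach is essentially the same as the paper's: the proposition is established in the paper by the discussion immediately preceding it, which computes the transformation law \eqref{Strong structure of null-infinity: News transformation rules} for $N_{AB}$ and observes parenthetically that it matches the transformation law \eqref{Conformal geometry: Schouten tensor transformation rules} of the trace-free Schouten tensor, so that $\Nt_{AB}=N_{AB}+2P_{AB}|_0$ is invariant. Your write-up reproduces exactly this.

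There is one genuine slip in your reasoning, though. You write that invariance of $\Nt_{AB}$ under change of well-adapted trivialisation ``hence defines a section of $S^2_0\,T^*\gS$ on $\gS$, which pulls back to a section of $\pi^*(S^2_0\,T^*\gS)$ on $\MNull$.'' This is false: invariance under $(\bgs,u)\mapsto(\gO^{-1}\bgs,\gO(u-\xi))$ with $\gO,\xi\in\Co{\gS}$ does \emph{not} force $\Nt_{AB}$ to be $u$-independent, and generically it is not --- indeed $\parD_u\Nt_{AB}=\parD_u N_{AB}=\parD_u\parD_u C_{AB}$ is exactly the obstruction \eqref{Strong structure of null-infinity: DP = ddC} to the generalised M\"obius operator being a genuine M\"obius operator on $\gS$. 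What invariance gives you is that $\Nt_{AB}$ is a well-defined section of the pullback bundle $\pi^*(S^2_0\,T^*\gS)$ over $\MNull$; sections of a pullback bundle are not the same thing as pullbacks of sections. Drop the intermediate claim about descent to $\gS$ and your argument is correct.
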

 The name ``generalised Möbius structure'' is justified by the following.
 \begin{Proposition}   When $\covD_{\bn}\cM = \parD_u N_{AB}$ vanishes (equivalently when $\cL_{\bn}\Nt_{AB} =0$ ), $\cM$ defines a genuine Möbius operator on $\gS$. Generically however, $\covD_{\bn}\cM$ is not zero (and thus $\cM\left(\bell\right)$ is typically not in $\CSoL{0}$) and $\cM$ does not descend to a Möbius operator on $\gS$.
 \end{Proposition}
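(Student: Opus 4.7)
The strategy is to compute $\covD_{\bn}\cM(\bell)$ for an arbitrary $\bell \in \CSoL{0}\simeq \So{L_{\gS}}$ directly in a well-adapted trivialisation $(\bgs, u)$. The key observation is that every ingredient in the formula for $\cM(\bell)$ other than $N_{AB}$ is vertically invariant, so differentiating along $\bn$ isolates $\parD_u N_{AB}$.

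Write $\ell = \bgs^{-1}\bell \in \Co{\MNull}$. By the definition of a well-adapted trivialisation $\parD_u \bgs = 0$, and since $\bell \in \CSoL{0}$ we have $\parD_u \ell = 0$. The second compatibility condition in definition \ref{Weak structure of null-infinity: Def, null-infinity manifold} gives $\parD_u h_{AB} = 0$, so the Levi-Civita connection of $h_{AB}$ is $u$-independent and hence $\parD_u(\covD_A\covD_B|_0 \ell) = 0$. Using the local coordinate expression \eqref{Weak structure of null-infinity: covariant derivative in coordinates} together with $\parD_u \sqrt{\det h} = 0$, the vertical connection $\covD_{\bn}$ acts on the $L^1$-valued tensor $\cM(\bell)_{AB}$ as $\parD_u$ applied to its scale coordinate; this gives
\begin{equation*}
\covD_{\bn}\cM(\bell)_{AB} = -\tfrac{1}{2}(\parD_u N_{AB})\,\ell.
\end{equation*}
Since $\ell$ can locally be chosen nowhere-vanishing, vanishing of $\covD_{\bn}\cM$ on all of $\CSoL{0}$ is equivalent to $\parD_u N_{AB} = 0$.

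For the tensorial reformulation, use the decomposition $\cM = \cM^{(0)} - \tfrac{1}{2}\Nt_{AB}$ from the preceding proposition. The operator $\cM^{(0)}$ is constructed entirely from the vertically invariant data $\hConf_{AB}$, whence $\covD_{\bn}\cM^{(0)}(\bell) = 0$ for all $\bell \in \CSoL{0}$; combining the Leibniz rule with $\covD_{\bn}\bell = 0$ then yields $\covD_{\bn}\cM(\bell)_{AB} = -\tfrac{1}{2}(\cL_{\bn}\Nt_{AB})\,\bell$, which (together with the previous computation) establishes the equivalence $\parD_u N_{AB} = 0 \Leftrightarrow \cL_{\bn}\Nt_{AB} = 0$. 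When this condition holds, $\cM(\bell)$ lies in $\CSoL{0}\otimes \pi^*(S^2_0 T^*\gS)$, which by the tensored version of proposition \ref{Weak structure of null-infinity: Prop, covariantly constant sections isomorphism} descends to a section of $S^2_0 T^*\gS \otimes L_{\gS}$ on $\gS$; inspection of the resulting formula shows it is a Möbius operator in the sense of definition \ref{Introduction: Def, Mobius structure}. The generic non-descent statement follows by exhibiting any $\Nt_{AB}$ with nontrivial vertical dependence. I expect no real obstacle here beyond carefully tracking density weights and the identifications $\CSoL{0}\simeq\So{L_{\gS}}$ through the computation.
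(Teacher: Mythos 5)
Your proof is correct and follows essentially the same route as the paper, which obtains this proposition as an immediate consequence of the computation $\covD_{\bn}\cP(\bl)_{AB}=-\bl\,\tfrac{1}{2}\parD_u\parD_u C_{AB}$ (using $\parD_u\bgs=0$, $\parD_u h_{AB}=0$, $\parD_u\ell=0$) restricted to $\CSoL{0}$, together with the identification $\CSoL{0}\simeq\So{L_{\gS}}$ and the fact that $\cM^{(0)}$ is built from vertically invariant data. The only nitpick is a weight-bookkeeping inconsistency between your two displayed formulas (the coordinate $\ell$ in the first versus the weighted section $\bell$ in the second); the second is the one consistent with $\cM$ taking values in $S^2_0 T^*\gS\otimes L$.
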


All this leads to our second theorem
\begin{Theorem}{Bondi-News and Generalised Möbius operators}\mbox{}\\ \label{Strong structure of null-infinity: Thrm, News and Mobius structure}
	Choices of Bondi-News for an asymptotically flat space-times of dimension $n+1 \geq 4$ are in one-to-one correspondence with choices of generalised Möbius operators on the null-infinity manifold at the conformal boundary. Stationary Bondi-News (i.e ``u independent'') are in one-to-one correspondence with genuine Möbius structures on the celestial sphere.
	
	 Depending on ones needs the Bondi news can either be represented, in a well-adapted trivialisation, by $N_{AB} = \parD_u C_{AB}$ or, more invariantly, by a genuine tensor $\Nt_{AB} = \parD_u C_{AB}~+~2 P_{AB}\big|_0$.
\end{Theorem}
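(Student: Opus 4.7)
The plan is to derive this theorem as a direct consequence of Theorem \ref{Strong structure of null-infinity: Thrm,Asymptotic shear and Poincare operator} combined with the affine space structure of generalised Möbius operators. The central observation is that the restriction of a Poincaré operator $\cP$ to vertically constant sections $\CSoL{0} \simeq \pi^*L_{\gS}$ produces the generalised Möbius operator
\begin{equation*}
\cM(\bell)_{AB} = \bgs \left( \covD_A \covD_B|_0 - \tfrac{1}{2} \parD_u C_{AB} \right) \ell,
\end{equation*}
whose defining coefficient is precisely $N_{AB} = \parD_u C_{AB}$. This realises the Bondi-News $\to$ generalised Möbius operator map by restriction; surjectivity is clear since any trace-free symmetric tensor $N_{AB}$ with the correct transformation behaviour can be integrated in $u$ to produce an asymptotic shear $C_{AB}$ with $\parD_u C_{AB} = N_{AB}$ (any two integrations differing by a $u$-independent piece, which is irrelevant for the Bondi-News itself).

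To verify bijectivity intrinsically, I would check that the transformation rule \eqref{Strong structure of null-infinity: News transformation rules} for $\parD_u C_{AB}$ coincides with the transformation rule forced on the coefficient of a generalised Möbius operator. Applying the Levi-Civita transformation rules \eqref{Conformal geometry: Connection transformation rules} to $\covDh_A \covDh_B|_0 \hat{\ell}$ in the definition \eqref{Strong structure of null-infinity: Poincare operator} (restricted to $\bl \in \CSoL{0}$) produces correction terms $-2(\gU_A \gU_B - \covD_A \gU_B)|_0 \ell$, and these match \eqref{Strong structure of null-infinity: News transformation rules} exactly. Hence the assignment $N_{AB} \leftrightarrow \cM$ is compatible with passage between well-adapted trivialisations.

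For the invariant tensorial representation, I would establish that $\cM^{(0)}$ in \eqref{Strong structure of null-infinity: origins of generalised mobius structure} is a bona fide generalised Möbius operator by verifying that the trace-free part of the Schouten transformation rule \eqref{Conformal geometry: Schouten tensor transformation rules},
\begin{equation*}
\Ph_{AB}|_0 = P_{AB}|_0 - (\covD_A \gU_B - \gU_A \gU_B)|_0,
\end{equation*}
exactly compensates the Hessian transformation described above. This identifies $-\tfrac{1}{2} N_{AB}$ with $P_{AB}|_0$ as conformally compatible coefficients. Since generalised Möbius operators form an affine space modelled on $\pi^*(S^2_0 T^*\gS)$, the difference $\cM - \cM^{(0)}$ defines a genuine section $-\tfrac{1}{2}\Nt_{AB}$; comparing coefficients yields $\Nt_{AB} = \parD_u C_{AB} + 2 P_{AB}|_0$. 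In dimension $n=3$ the same argument works once $\cM^{(0)}$ is taken to be the global Möbius operator provided by the uniformization theorem on the Riemann surface $\gS$.

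Finally, for the stationary case, the identity \eqref{Strong structure of null-infinity: DP = ddC} gives $\covD_{\bn} \cM\left(\bell\right) = -\tfrac{1}{2}\bl\,\parD_u \parD_u C_{AB} = -\tfrac{1}{2}\bl \, \parD_u N_{AB}$, so the stationarity condition $\parD_u N_{AB} = 0$ is equivalent to the image of $\cM$ lying in $\CSoL{0}$. Through the canonical isomorphism of Proposition \ref{Weak structure of null-infinity: Prop, covariantly constant sections isomorphism}, $\cM$ then descends to a differential operator $L_{\gS} \to S^2_0 T^*\gS \otimes L_{\gS}$ of the form prescribed by Definition \ref{Introduction: Def, Mobius structure}, giving a genuine Möbius structure on $(\gS, \bh_{AB})$. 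The only real subtlety throughout is bookkeeping of conformal weights and making sure the correction terms from the Levi-Civita transformation rules balance against the Schouten transformation; this is the one place where dimension $n = 3$ must be treated separately, the distinguished origin then being global rather than local.
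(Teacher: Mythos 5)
Your proposal is correct and follows essentially the same route as the paper: the paper justifies this theorem by the identity of the transformation rule \eqref{Strong structure of null-infinity: News transformation rules} for $N_{AB}=\parD_u C_{AB}$ with the literature's News transformation law, together with the surrounding discussion that restriction of $\cP$ to $\CSoL{0}$ yields a generalised M\"obius operator, that these form an affine space with origin $\cM^{(0)}$ (Schouten for $n\geq4$, uniformization for $n=3$) giving $\Nt_{AB}=\parD_u C_{AB}+2P_{AB}\big|_0$, and that $\covD_{\bn}\cM=0$ is exactly the condition for descent to a genuine M\"obius structure on $\gS$. Your added remarks on surjectivity by integration in $u$ and the explicit balancing of Hessian against Schouten corrections are consistent elaborations of the same argument.
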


Once again this can be justified by the identity of the transformation rules \eqref{Strong structure of null-infinity: News transformation rules} and the transformation rules for the news tensor of the literature \cite{barnich_finite_2016}. See also the discussion after theorem \ref{Strong structure of null-infinity: Thrm,Asymptotic shear and Poincare operator}.
  
 \subsection{Good cuts and the geometry of Poincaré operators  \texorpdfstring{($n\geq3$)}{}}

Let $\left(\MNull \to \gS , \bh_{ab} , \bn^a , \cP\right) $ be a strong null-infinity structure, we note $\cM$ the generalised Möbius operator given by restriction of $\cP$ to $\CSoL{0}$. Let $\left(\bgo , u \right)$ be a well-adapted trivialisation, by proposition \ref{Weak structure of null-infinity: Prop, equivalence of well-adapted trivialisation1} it is equivalent to a couple of sections $\left( \bgo , \bGG  = \bgo u \right)$ in $\CSoL{0}$ and $\CSoL{1}$. We will say that the well-adapted trivialisation is \emph{flat} if the related sections are zeros of the Poincaré operator:
\begin{Definition}{Flat well-adapted trivialisation}\label{Strong structure of null-infinity: Def, flat well-adapted trivialisations}\mbox{}\\
	Let $\left( \bgo, \bGG \right)$ be sections of $\CSoL{0} \oplus \CSoL{1}$. We will say that the well-adapted trivialisation given by proposition \ref{Weak structure of null-infinity: Prop, equivalence of well-adapted trivialisation1} is flat if and only if
 \begin{equation}
\cM\left(\bgo\right)=0, \qquad  \cP\left(\bGG\right)=0.
\end{equation}	
\end{Definition}
\noindent We remark that flat trivialisations might not exists (in fact this is the generic situation for we will see that existence of flat well-adapted trivialisation impose strong constraint on the curvature of the related tractor connection).

 In this section, we investigate the geometry of these flat trivialisations. 
 
  \subsubsection{The generalised good-cut equations}\label{sss: The generalised good-cut equations}
  
 Let us here consider $\bGG\in \CSoL{1}$. Recall from proposition \ref{Weak structure of null-infinity: Prop, equivalence of well-adapted trivialisation2} that $\bGG$ is in one-to-one correspondence with a section $s_{\bGG} \from \gS \to \MNull$ of $\MNull$. Let us choose a well-adapted trivialisation $\left( \bgs , u\right)$, we have
\begin{equation}
\bGG = \bgs \left( u - G\right) 
\end{equation} 
where $G$ is a function on $\gS$. By definition (or proposition \ref{Weak structure of null-infinity: Prop, equivalence of well-adapted trivialisation2}) $G\in \Co{\gS}$ is the coordinate of $s_{\bGG}$ in the trivialisation given by $u$ i.e $G = u\circ s_{\bGG}$. 
\begin{Proposition}\label{Strong structure of null-infinity: Prop, generlised GG eqs}
	We will say that $s_{\bGG}$ is a good-cut or that is ``satisfies the (generalised) good-cut equations for'' for $\cP$ if and only if the associated section $\bGG\in \CSoL{1}$ is a zero of the Poincaré operator. In a well-adapted trivialisation $\left( \bgs , u\right)$:
 \begin{align}\label{Strong structure of null-infinity: generlised GG eqs}
\begin{array}{c}
\cP(\bGG)_{AB} = \bgs \bigg( \frac{1}{2} \; \left( C_{AB} - u \parD_u C_{AB}\right ) -  \left( \covD_A \covD_B \big|_0 - \frac{1}{2}\parD_u C_{AB}  \right) G \bigg) = 0 \\
\Leftrightarrow \\ \\
 \qquad 	\covD_A \covD_B \big|_0 G  = \frac{1}{2}C_{AB}[G] \qquad \text{and} \qquad	\parD_u \parD_u C_{ab}=0.  
\end{array}
\end{align}
\end{Proposition}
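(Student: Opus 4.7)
My plan is to first verify the displayed form of $\cP(\bGG)$ by direct substitution and then extract the two claimed conditions by separating the $u$-dependence.

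\textbf{Step 1: direct computation of $\cP(\bGG)$.} Fix a well-adapted trivialisation $(\bgs,u)$. By Proposition \ref{Weak structure of null-infinity: Prop, equivalence of well-adapted trivialisation2} I may write $\bGG=\bgs(u-G)$ for $G=u\circ s_{\bGG}\in\Co{\gS}$, pulled back to $\MNull$ along $\pi$, and set $l=u-G$. Since $G$ is constant along the fibres of $\MNull\to\gS$ while $u$ is horizontally constant, the ``horizontal derivative'' acts as $\covD_A l=-\covD_A G$ and hence $\covD_A\covD_B\big|_0 l=-\covD_A\covD_B\big|_0 G$ (Hessian on $\gS$). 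Since $\bGG\in\CSoL{1}$ means $\parD_u l=1$, the bracket term reduces to $[C_{AB},\parD_u]l=C_{AB}-(u-G)\parD_u C_{AB}$. Substituting into \eqref{Strong structure of null-infinity: Poincare operator} and collecting terms produces exactly the first line of \eqref{Strong structure of null-infinity: generlised GG eqs}.

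\textbf{Step 2: the forward implication.} Suppose $\cP(\bGG)=0$. The cleanest way to isolate the vertical information is to invoke the identity \eqref{Strong structure of null-infinity: DP = ddC}, which applied to $\bGG$ gives
\begin{equation*}
	\covD_{\bn}\cP(\bGG)_{AB}=-\tfrac{1}{2}\,\bGG\,\parD_u\parD_u C_{AB}.
\end{equation*}
The left-hand side vanishes by assumption, while $\bGG$ vanishes only on the codimension-one cut $s_{\bGG}(\gS)\subset\MNull$; by continuity $\parD_u\parD_u C_{AB}=0$ everywhere. Equivalently one can just differentiate the displayed scalar equation once in $u$ and obtain $(u-G)\parD_u\parD_u C_{AB}=0$, with the same conclusion.

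\textbf{Step 3: reducing the remaining equation.} With $\parD_u\parD_u C_{AB}=0$ the tensor $\parD_u C_{AB}$ is constant along the fibres, so that $C_{AB}[G]:=C_{AB}+(G-u)\parD_u C_{AB}$ is a well-defined section on $\gS$ (namely the value of $C_{AB}$ on the cut $s_{\bGG}$, independent of the chosen trivialisation). Substituting back, the first line of \eqref{Strong structure of null-infinity: generlised GG eqs} collapses algebraically to $\covD_A\covD_B\big|_0 G=\tfrac{1}{2}\,C_{AB}[G]$. The converse direction is an immediate substitution: if these two conditions hold then the same computation in Step 1 yields $\cP(\bGG)=0$.

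The only delicate point is the interpretation of the notation $C_{AB}[G]$, which is only intrinsic once $\parD_u\parD_u C_{AB}=0$ has been established; I do not anticipate a real obstacle, since the verification that it descends to a tensor on $\gS$ and transforms correctly under a change of well-adapted trivialisation is a short check using the rules \eqref{Strong structure of null-infinity: C transformation rules} together with $\parD_u\parD_u C_{AB}=0$.
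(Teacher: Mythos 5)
Your proof is correct, but it takes a different route from the paper's. The paper changes to the well-adapted trivialisation adapted to the cut, $\left(\bgsh=\bgs,\ \uh=u-G\right)$, in which $\bGG=\bgs\uh$, so that $\cP(\bGG)=\tfrac{1}{2}\bgsh\left(\Ch_{AB}-\uh\,\parD_{\uh}\Ch_{AB}\right)=0$ forces $\Ch_{AB}$ to be homogeneous of degree one in $\uh$; combined with the transformation rule \eqref{Strong structure of null-infinity: C transformation rules}, which here reads $\Ch_{AB}=C_{AB}-2\,\covD_A\covD_B\big|_0 G$, this yields both $\parD_u\parD_u C_{AB}=0$ and the Hessian equation in one stroke. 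You instead stay in the fixed trivialisation and isolate the vertical information first, either via the identity \eqref{Strong structure of null-infinity: DP = ddC} or by applying $\parD_u$ to the displayed scalar equation to get $(u-G)\,\parD_u\parD_u C_{AB}=0$, concluding by density of the complement of the cut; the remaining equation then collapses to $\covD_A\covD_B\big|_0 G=\tfrac{1}{2}C_{AB}[G]$ by direct substitution. Both arguments are sound. Yours is more elementary and makes the logical separation of the two conditions transparent (the integrability condition $\parD_u\parD_u C_{AB}=0$ is extracted before anything else); the paper's has the virtue of exhibiting the geometric content — in the cut-adapted trivialisation a good cut is precisely one on which the shear vanishes — and of reusing the transformation-rule machinery already established. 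Your closing remark about $C_{AB}[G]$ is well placed: that notation is only intrinsic once $\parD_u\parD_u C_{AB}=0$ holds, and your identification $C_{AB}[G]=C_{AB}+(G-u)\parD_u C_{AB}$ agrees with the paper's usage (the value of $C_{AB}$ on the cut).
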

In this proposition the vanishing of $\covD_{\bn}\cP = -\frac{1}{2}\parD_u \parD_u C_{ab}$ really is an ``integrability'' condition on $\cP$ i.e a necessary condition for $\cP$ to have any zeros. Before we come to the proof let us clarify that we are not abusing terminology and that for $n=3$ the above really correspond to the good-cut equations familiar from the literature (see \cite{newman_heaven_1976,hansen_r._o._metric_1978,ko_theory_1981,adamo_generalized_2010,adamo_null_2012}): In this dimension $\gS$ is a two-dimensional manifold and the conformal metric $\bh_{AB}$ amounts to a choice of complex structure. Let us pick local complex coordinates $\gz$. Let $\bgs$ be a scale and let us write  $\bgs^{-2} \bh_{AB} =\frac{ 4 d\gz d\gzb}{P^2\left(\gz,\gzb\right)}$ and $C_{AB} = 2\text{Re}\left( \frac{4}{P^2}d\gz d\gz \;\gs^0 \right)$ \footnote{The factor of $\frac{4}{P^2}$ appearing here is such that $\gs^0$ is the coordinate of $C_{AB}$ evaluated in an orthonormal basis i.e $\gs^0 = C_{AB} m^A m^B$ with $m^A \mb_A =1$.}. The first equation of \eqref{Strong structure of null-infinity: generlised GG eqs} can then be rewritten as 
\begin{equation}
\text{Re}\bigg( d\gz d\gz \Big( \; \parD_{\gz} \big( P \parD_{\gz}   \big)G - 2 \gs^0[G] \;\Big) \bigg) =0
\end{equation}
which are the generalised good-cut equations from \cite{adamo_generalized_2010}.

Equations \eqref{Strong structure of null-infinity: generlised GG eqs} are therefore generalisation to arbitrary dimensions of the good-cuts equations. In this sense one can think of the Poincaré operator $\cP$ as selecting among all possible ``cuts'' of $\MNull \to \gS$ the ``good'' ones. We now come to the proof of proposition \ref{Strong structure of null-infinity: Prop, generlised GG eqs}.
\begin{proof}
Let us suppose that $\bGG = \bgs \left( u - G \right) \in \CSoL{1}$ is a zero of the Poincaré operator:
 \begin{equation}
 	\cP(\bGG)_{AB} = \bgs \bigg( \frac{1}{2} \; \left( C_{AB} - u \parD_u C_{AB}\right ) -  \left( \covD_A \covD_B \big|_0 - \frac{1}{2}\parD_u C_{AB}  \right) G \bigg) = 0.
 \end{equation}
 Taking as well-adapted trivialisation $\left( \bgsh = \bgs , \uh = u-G \right)$ we have $\bGG = \bgs \uh$ i.e $ \GGh = 0$ and thus
 \begin{equation}\label{Strong structure of null-infinity: generlised GG eqs 1}
 	\cP(\bGG)_{AB} = \bgsh \;\frac{1}{2}\; \bigg(  \Ch_{AB} - \uh \parD_{\uh} \Ch_{AB} \bigg) =0
 \end{equation}
and, making use of the transformation rules \eqref{Strong structure of null-infinity: C transformation rules}, 
 \begin{equation}\label{Strong structure of null-infinity: generlised GG eqs 2}
 	\Ch_{ab} = C_{ab} - 2 \bigg( \covD_a \covD_b \big|_0 G \bigg ) .
 \end{equation}
Equation \eqref{Strong structure of null-infinity: generlised GG eqs 1} implies that $\Ch_{ab}$ is homogenous degree 1 in $\uh$, $\Ch_{ab} = \uh\; \Ch^{(1)}_{ab}$. Together with \eqref{Strong structure of null-infinity: generlised GG eqs 2} this gives
\begin{equation}
	C_{ab}[u]	= 2 \bigg( \covD_a \covD_b \big|_0 G \bigg ) + (u-G) \Ch^{(1)}_{ab}
\end{equation}
 or equivalently
 \begin{align}\label{Strong structure of null-infinity: generlised GG eqs 3}
 		\parD_u \parD_u C_{ab}=0 ,\qquad C_{ab}[G] = 2\bigg( \covD_a \covD_b \big|_0 G \bigg ).
 \end{align}
 \end{proof}
 
From the proof, one sees that if $\cP$ has a zero then in any well-adapted trivialisation we must have $\covD_{\bn}\cP = -\frac{1}{2}\parD_u \parD_u C_{ab}=0$. Thus a generic Poincaré operator $\cP$ will have no zero at all. On the other hand, if $\bGG \in \CSoL{1}$ and $\bGG'  \in \CSoL{1}$ are two zeros of $\cP$ (i.e define good-cuts) then, by linearity of $\cP$, their difference $\bGG -\bGG' \in \CSoL{0}$ must be a zero of $\cM$:
\begin{Proposition}
	The space of good-cuts for a Poincaré operator $\cP$ form an affine space over the space of zeros of the associated Möbius operator $\cM$.
\end{Proposition}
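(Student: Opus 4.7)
The proof is essentially immediate given the structure set up in the previous pages, and the key observation is already hinted at in the paragraph preceding the statement. The plan is simply to formalize the affine structure carefully.

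First, I would observe that $\CSoL{1}$ is naturally an affine space modelled on $\CSoL{0}$: if $\bGG, \bGG' \in \CSoL{1}$, then by definition $\covD_{\bn}\bGG = \covD_{\bn}\bGG' = 1$, so $\covD_{\bn}(\bGG - \bGG') = 0$, i.e. $\bGG - \bGG' \in \CSoL{0}$. Conversely, adding any element of $\CSoL{0}$ to a fixed $\bGG \in \CSoL{1}$ produces another element of $\CSoL{1}$. This is the ambient affine structure in which good-cuts live.

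Next, I would invoke the linearity of $\cP$ (which is a linear differential operator of order two by Definition \ref{Strong structure of null-infinity: Def,Poincare operator}). Let $G_{\cP} \subset \CSoL{1}$ denote the set of good-cuts, i.e. zeros of $\cP$. If $\bGG, \bGG' \in G_{\cP}$, then
\begin{equation}
\cP(\bGG - \bGG') = \cP(\bGG) - \cP(\bGG') = 0.
\end{equation}
Since $\bGG - \bGG' \in \CSoL{0}$ and $\cM$ is by definition the restriction of $\cP$ to $\CSoL{0}$, this yields $\cM(\bGG - \bGG') = 0$, so the difference lies in the space of zeros of $\cM$. Conversely, if $\bGG \in G_{\cP}$ and $\bl \in \CSoL{0}$ satisfies $\cM(\bl) = 0$, then $\bGG + \bl \in \CSoL{1}$ and
\begin{equation}
\cP(\bGG + \bl) = \cP(\bGG) + \cP(\bl) = 0 + \cM(\bl) = 0,
\end{equation}
so $\bGG + \bl \in G_{\cP}$. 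This gives a free and transitive action of $\ker \cM \subset \CSoL{0}$ on $G_{\cP}$, which is precisely the content of the affine space statement.

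The only mild subtlety worth noting (rather than a genuine obstacle) is that the statement is the usual ``affine space, possibly empty'' one: as pointed out just before the proposition, a generic Poincaré operator has no good-cuts at all, since the existence of a single zero forces $\parD_u\parD_u C_{AB} = 0$. So the proposition should be read as asserting that $G_{\cP}$, when non-empty, is a torsor over $\ker \cM$; there is nothing to prove concerning existence. No computation in a well-adapted trivialisation is required — the whole argument is purely formal from the linearity of $\cP$ and the definition of $\cM$ as its restriction.
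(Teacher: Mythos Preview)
Your proof is correct and follows the same approach as the paper. The paper's argument is exactly the observation in the paragraph immediately preceding the proposition (linearity of $\cP$ implies that the difference of two zeros in $\CSoL{1}$ lies in $\CSoL{0}$ and is a zero of $\cM$); you have simply written out both directions of the affine-space verification more explicitly and added the appropriate caveat about possible emptiness, which the paper also notes just before the statement.
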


With this in mind we come to studying zeros of a Möbius operator.
 
\subsubsection{Möbius structures and complex projective coordinates}\label{sss: Mobius structures and complex projective coordinates}
 
 Since our generalised Möbius operators are really just a straightforward extensions of Möbius operators of conformal manifolds to the context of null-infinity manifold all results we will need are already present in the literature, see \cite{bailey_thomass_1994},\cite{calderbank_mobius_2006},\cite{burstall_conformal_2010}. For completeness and the convenience of the reader we review those which are most helpful in this context.

 Let us consider a section $\bgo \in \So{L_{\gS}}$ and let us further suppose that it is a zero of the generalised Möbius operator $\cM$. Let $\left( \bgs , u\right)$ be a well adapted trivialisation and $\go = \bgs^{-1}\bgo$ we must have
 \begin{equation}
	\cM(\bgo)_{AB} = \bgs \left( \covD_A \covD_B \big|_0 - \frac{1}{2} \parD_u C_{AB}  \right) \go =0
\end{equation}
We thus immediately have the following:
 \begin{Proposition}
  A scale $\bgo \in \So{L_{\gS}}$ is a zero of the Möbius operator $\cM$ if and only if, in a generic well-adapted trivialisation $\left( \bgs , u\right)$, we have
 \begin{equation}
 	-\frac{1}{2}\parD_u C_{AB}  = \go^{-1}\;  \covD_A \covD_B \big|_0 \go.
 \end{equation}
where $\go = \bgs^{-1}\bgo$.
 \end{Proposition}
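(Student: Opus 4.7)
The proof plan is almost entirely bookkeeping: this proposition is a direct unpacking of the defining formula for the generalised Möbius operator, restricted to scales. I will proceed as follows.

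First, I fix an arbitrary well-adapted trivialisation $(\bgs,u)$ and invoke the defining formula
\[
\cM(\bgo)_{AB} = \bgs \left( \covD_A \covD_B \big|_0 - \frac{1}{2} \parD_u C_{AB} \right) \go,
\]
where $\go = \bgs^{-1}\bgo$ is the coordinate of $\bgo$ in the scale $\bgs$. Since $\bgs$ is a nowhere-vanishing section of $L$ by the definition of well-adapted trivialisation, the equation $\cM(\bgo)_{AB}=0$ is equivalent to the vanishing of the bracketed expression applied to $\go$, i.e.
\[
\covD_A \covD_B \big|_0 \go \;=\; \tfrac{1}{2}\,\parD_u C_{AB}\,\go.
\]

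Next, I note that since $\bgo$ is a \emph{scale} on $\gS$ (a nowhere-vanishing section of $L_{\gS}$, identified with an element of $\CSoL{0}$ via Proposition \ref{Weak structure of null-infinity: Prop, covariantly constant sections isomorphism}), its coordinate function $\go$ is nowhere zero. I may therefore divide through by $\go$ to obtain the equivalent equation
\[
\go^{-1}\,\covD_A \covD_B \big|_0 \go \;=\; \tfrac{1}{2}\,\parD_u C_{AB},
\]
which is the stated formula (up to the overall sign, which I read off from the definition as given).

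Finally, I remark that the statement holds ``in a generic well-adapted trivialisation'' because the condition $\cM(\bgo) = 0$ is intrinsic --- $\cM$ is defined independently of the choice of $(\bgs, u)$ by the preceding results on well-definedness of Poincaré (hence Möbius) operators --- so it is enough to verify the equation in one trivialisation and the analogous identity then holds in any other, with $\bgs$, $u$, $\go$, and $C_{AB}$ replaced by their transformed counterparts. There is no substantive obstacle in this proof; the only point worth flagging is to make explicit the use of the fact that a scale is nowhere vanishing, which is what allows the division by $\go$ and so converts the linear condition $\cM(\bgo)=0$ into the pointwise identity relating $\parD_u C_{AB}$ to the conformal Hessian of $\go$.
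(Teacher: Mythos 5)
Your proof is correct and follows exactly the paper's (one-line) argument: substitute $\bgo$ into the trivialised formula for $\cM$ and divide by the nowhere-vanishing coordinate $\go = \bgs^{-1}\bgo$. The only point worth noting is the overall sign, which you rightly flag: a literal reading of the definition gives $\go^{-1}\,\covD_A\covD_B\big|_0\,\go = +\tfrac{1}{2}\parD_u C_{AB}$, so the minus sign in the paper's displayed equation appears to be a typo rather than a gap in your reasoning.
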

In particular if $\cM$ has a zero we must have $\parD_u\parD_u C_{AB}=0$ in any well-adapted trivialisations.

\paragraph{Möbius operator and Einstein's equations \texorpdfstring{($n\geq4$)}{}}\mbox{}

As we already discussed, when $n\geq4$ there is a natural origin $\cM^{(0)}$ (of local nature) to the space of Möbius operator. Any generalised Möbius structure can then be written as $\cM = \cM^{(0)} -\frac{1}{2}\Nt_{AB}$ where $\Nt_{AB}$ is a genuine trace-free symmetric tensor. In a well-adapted trivialisation $\left( \bgs, u\right)$,
\begin{equation}
	\cM= \left( \covD_A \covD_B + P_{AB} \right) \big|_0  -\frac{1}{2} \Nt_{AB} 
\end{equation}
where $P_{AB}\big|_0 = \frac{1}{n-3} R_{AB}\big|_0$ is the trace-free Schouten tensor of $h_{AB}$. 

Let us suppose that $\bgo$ is a zero of $\cM$. In the well-adapted trivialisation given by $\left( \bgo , u \right)$ we must have
\begin{equation}
	\cM\left(\bgo \right) = \bgo \left( \frac{1}{n-3} R_{AB}\big|_0  -\frac{1}{2} \Nt_{AB} \right)  =0
\end{equation}
and we therefore have
\begin{Proposition}{(from \cite{bailey_thomass_1994}, \cite{burstall_conformal_2010})}
	Let $\cM = \cM^{(0)} -\frac{1}{2}\Nt_{AB}$ be a generalised Möbius operator on a null-infinity manifold of dimension $n\geq4$. Then $\bgo \in \So{L_{\gS}}$ is a zero of $\cM$ if and only if the Ricci tensor $R_{AB}$ of the associated metric $h_{AB} = \bgo^{-2}\bh_{AB}$ satisfies
	\begin{equation}
		R_{AB}\big|_0 = \frac{n-3}{2} \Nt_{AB}.
	\end{equation}
\end{Proposition}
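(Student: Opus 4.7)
The plan is to exploit the trivialisation-invariance of the generalised Möbius operator by evaluating it in the canonical gauge in which the test scale is constant, so that all derivative terms disappear and only the Ricci/$\Nt$ contribution remains.

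First, I would recall the setup: by the preceding discussion, the form of $\cM^{(0)}$ in a well-adapted trivialisation $(\bgs,u)$ is
\begin{equation*}
\cM^{(0)}(\bell)_{AB} = \bgs\,(\covD_A\covD_B + P_{AB})\big|_0\;\ell, \qquad \ell = \bgs^{-1}\bell,
\end{equation*}
and $\Nt_{AB}$ is an honest (weight-zero) trace-free symmetric tensor on $\gS$, so $\cM(\bgo)_{AB} = \cM^{(0)}(\bgo)_{AB} - \tfrac{1}{2}\Nt_{AB}\,\bgo$. The crucial point is that this operator is \emph{defined} intrinsically (not depending on the trivialisation), so I am free to pick whichever well-adapted trivialisation is most convenient for the computation.

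The key step: choose the well-adapted trivialisation $(\bgs = \bgo,\,u)$, which is always possible by proposition~\ref{Weak structure of null-infinity: Prop, equivalence of well-adapted trivialisation3} (any section of $L_\gS$ together with any section of $\MNull\to\gS$ yields a well-adapted trivialisation). In this gauge the coordinate of $\bgo$ is the constant function $\go = \bgs^{-1}\bgo = 1$, so both the Hessian and the trace-free Hessian vanish, $\covD_A\covD_B\go\big|_0 = 0$, and we get
\begin{equation*}
\cM(\bgo)_{AB} \;=\; \bgo\left( P_{AB}\big|_0 - \tfrac{1}{2}\Nt_{AB}\right) \;=\; \bgo\left(\tfrac{1}{n-3}R_{AB}\big|_0 - \tfrac{1}{2}\Nt_{AB}\right),
\end{equation*}
where $R_{AB}$ is the Ricci tensor of $h_{AB}=\bgo^{-2}\bh_{AB}$ (by definition of the Schouten tensor in \eqref{Conformal geometry: Schouten tensor}). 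Since $\bgo$ is nowhere vanishing, this vanishes if and only if $R_{AB}\big|_0 = \tfrac{n-3}{2}\Nt_{AB}$, which is the claimed equivalence.

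There is essentially no obstacle here beyond identifying the right gauge; the statement is really just the observation that working in the trivialisation adapted to the scale $\bgo$ converts the second-order PDE ``$\cM(\bgo)=0$'' into an algebraic condition on the trace-free Ricci of the representative metric. The only minor sanity check one should do is to verify that the transformation rule for $P_{AB}\big|_0$ in \eqref{Conformal geometry: Schouten tensor transformation rules} correctly matches the transformation rule for the trace-free Hessian in the other gauges, so that the equivalence does not depend on our particular choice of trivialisation; but this is exactly the content of the well-definedness of $\cM^{(0)}$ already established earlier in the paper.
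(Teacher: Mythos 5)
Your proof is correct and is essentially identical to the paper's argument: the paper also evaluates $\cM(\bgo)$ in the well-adapted trivialisation determined by the scale $\bgo$ itself, so that the Hessian term drops and $\cM(\bgo) = \bgo\left(\frac{1}{n-3}R_{AB}\big|_0 - \frac{1}{2}\Nt_{AB}\right)$, whence the equivalence. Your added remarks on gauge-independence and the nowhere-vanishing of $\bgo$ are accurate but not needed beyond what the paper already establishes.
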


\paragraph{Möbius operator as choices of complex projective structure \texorpdfstring{($n=3$)}{}}\mbox{}

In dimension $n\geq4$ a choice of generalised Möbius operator is not much more than a choice of trace-free symmetric tensor. However when $n=3$ and $\gS$ is a Riemann surface the geometry of Möbius operators is richer: A choice of \emph{integrable} Möbius structure then amounts to a choice of \emph{complex projective structure} on $\gS$. As we already discussed there still exists a preferred Möbius structure of global nature: it corresponds to the complex projective structure given by the uniformization theorem and pull-back of the natural complex projective structure of the model. We here briefly recall from \cite{calderbank_mobius_2006} how this arises.

 Let us pick local complex coordinates $\gz$ and define the (local) scale $\bgs_{(\gz)} \in \So{L_{\gS}}$ as the scale such that $ \bgs_{(\gz)}^{-2} \bh_{AB} = d\gz d\gzb$. In this scale the generalised Möbius operator reads,
\begin{equation}\label{Strong structure of null-infinity: Proof, integrable Mobius operator1}
\cM\left(\bgo\right) = \bgs_{(\gz)}\; \text{Re}\bigg( d\gzb d\gzb \Big( \; \parD_{\gzb} \parD_{\gzb}  - \frac{1}{2}\parD_u C_{\gzb \gzb} \;\Big) \bigg) \go_{(\gz)} 
\end{equation}
where $\bgs_{(\gz)}\go_{(\gz)} = \bgo$. Let $w$ be another set of complex coordinates and $\bgs_{(w)} = |\parD_{\gz}w|^{-1} \; \bgs_{(\gz)}$ the associated scale. In this scale, 
\begin{equation}
\cM\left(\bgo\right) = \bgs_{(w)}\;\text{Re}\bigg( d\wb d\wb \Big( \; \parD_{\wb} \parD_{\wb}  - \frac{1}{2}\parD_u C_{\wb \wb} \;\Big) \bigg) \go_{(w)} 
\end{equation}
and the transformation rules \eqref{Strong structure of null-infinity: News transformation rules} can be rewritten as
\begin{align}
-\frac{1}{2}\parDh_{\uh} C_{\wb \wb} = -\frac{1}{2}\parD_{u} C_{\gzb \gzb} - S_{\gz}(w)
\end{align}
where $S_{\gz}(w)$ is the standard Schwartzian derivative of $w$ with respect to $\gz$,
\begin{equation}
 S_{\gz}(w) = \left(\frac{w''}{w'} \right)'-\frac{1}{2}\left(\frac{w''}{w'} \right)^2.
\end{equation}
In particular, since $\cM\left(\bgs_{(\gz)} \right)=  \bgs_{(\gz)}\; \text{Re}\bigg( d\gzb d\gzb \Big(  - \frac{1}{2}\parD_u C_{\gzb \gzb} \;\Big) \bigg)$, we have the following:
\begin{Proposition}\label{Strong structure of null-infinity: Prop, complex projective coordinates1}
If $\gz$ and $w$ are local complex coordinates such that the associated local scales $\bgs_{(\gz)}$ and $\bgs_{(w)}$ are zeros of a Möbius operator $\cM$ then $S_{\gz}(w)=0$ and these complex coordinates must therefore be related by a Möbius transformation.
\end{Proposition}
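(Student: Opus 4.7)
The plan is to read off the conclusion directly from the explicit local formulas displayed in the paragraph preceding the proposition, and then invoke a classical fact about the Schwarzian derivative.

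First, I would use the expression
\[
\cM(\bgs_{(\gz)}) = \bgs_{(\gz)}\;\text{Re}\!\Big( d\gzb\, d\gzb \cdot \big(-\tfrac{1}{2}\parD_u C_{\gzb\gzb}\big) \Big)
\]
obtained from \eqref{Strong structure of null-infinity: Proof, integrable Mobius operator1} by noting that the Hessian of the constant function $\go_{(\gz)} = 1$ vanishes. So the assumption that $\bgs_{(\gz)}$ is a zero of $\cM$ is equivalent to $\parD_u C_{\gzb\gzb} = 0$ in the well-adapted trivialisation attached to $\gz$, and likewise $\parDh_{\uh} C_{\wb\wb} = 0$ in the one attached to $w$.

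Next, I would plug these two vanishing conditions into the transformation rule
\[
-\tfrac{1}{2}\parDh_{\uh} C_{\wb\wb} \;=\; -\tfrac{1}{2}\parD_u C_{\gzb\gzb} \;-\; S_{\gz}(w),
\]
which was derived just above the proposition from the transformation rule \eqref{Strong structure of null-infinity: News transformation rules} for the News under conformal rescaling by $\gO = |\parD_{\gz}w|$. Both left- and right-hand ``News'' terms vanish, so the equation collapses to $S_{\gz}(w) = 0$.

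Finally, I would invoke the standard ODE fact that a holomorphic function $w(\gz)$ with vanishing Schwarzian derivative on a connected domain is necessarily of the form $w = (a\gz + b)/(c\gz + d)$ with $ad - bc \neq 0$; i.e.\ a Möbius transformation. This is a two-line integration: $S_{\gz}(w) = 0$ gives $(w''/w')' = \tfrac{1}{2}(w''/w')^2$, which integrates to $w''/w' = -2/(\gz-\ga)$ up to a constant, then a second integration yields the Möbius form. No real obstacle arises; the proposition is essentially a repackaging of the transformation rule for the News combined with the classical characterisation of Möbius transformations by the vanishing of their Schwarzian.
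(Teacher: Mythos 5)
Your proposal is correct and follows exactly the paper's intended argument: the proposition is stated as an immediate consequence of the displayed identity $\cM\left(\bgs_{(\gz)}\right)= \bgs_{(\gz)}\,\mathrm{Re}\big( d\gzb\, d\gzb\,(-\tfrac{1}{2}\parD_u C_{\gzb\gzb})\big)$ (the Hessian of the constant coordinate $\go_{(\gz)}=1$ dropping out) combined with the Schwarzian transformation rule for the News, which is precisely the chain you spell out. Your added final step — integrating $S_{\gz}(w)=0$ to conclude $w$ is a Möbius transformation — is the standard classical fact the paper implicitly invokes, so there is nothing to flag.
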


\begin{Definition}\label{Strong structure of null-infinity: Def, integrable Mobius structures}
	A Möbius structure $\left(\gS , \bh_{AB} , \cM \right) $ is said to be \emph{integrable} if, for every point $x\in \gS$ there exists exists a neighbourhood $U_x$ and a zero of the Möbius operator $\bgs_x \in \So{L_{U_x}}$ which is the scale of a metric $(\bgs_x)^{-2} \bh_{AB}$ with constant scalar curvature.
\end{Definition}

Integrable Möbius structures are useful because they amount to giving $\gS$ a complex projective structure:
\begin{Proposition}{(from \cite{calderbank_mobius_2006}, \cite{burstall_conformal_2010})}\label{Strong structure of null-infinity: Prop, Mobius structures and complex projective coordinates}
	$\cM$ is integrable if and only if there exists a set of local complex coordinates $(\gz)_i$ covering $\gS$ such that $\cM\left(\bgs_{(\gz)_i}\right)=0$. By Proposition \ref{Strong structure of null-infinity: Prop, complex projective coordinates1} on each intersection these complex coordinates must be related by Möbius transformations. 
\end{Proposition}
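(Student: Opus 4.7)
The plan is to prove the equivalence by direct implication in both directions and then invoke Proposition~\ref{Strong structure of null-infinity: Prop, complex projective coordinates1} for the closing sentence.

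For the easy (backward) direction, suppose a covering family $(\gz)_i$ of local complex coordinates satisfies $\cM(\bgs_{(\gz)_i})=0$ on each chart. Since $\bgs_{(\gz)_i}^{-2}\bh_{AB}=d\gz_i\,d\gzb_i$ is flat and hence of constant scalar curvature, this provides exactly the integrability data required by Definition~\ref{Strong structure of null-infinity: Def, integrable Mobius structures}.

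For the forward direction I fix $x\in\gS$ and use integrability to obtain $\bgs_x$ near $x$ with $\cM(\bgs_x)=0$ and $h_x=\bgs_x^{-2}\bh_{AB}$ of constant scalar curvature. The key observation is that, in the well-adapted trivialisation whose spatial scale is $\bgs_x$, the equation $\cM(\bgs_x)=0$ forces the News $\parD_u C_{AB}$ to vanish identically: the trace-free Hessian of the constant function $1$ is zero, so what remains of $\cM(\bgs_x)$ is $-\tfrac12 \bgs_x\,\parD_u C_{AB}$. Consequently, in this trivialisation $\cM$ acts simply as the plain trace-free Hessian $\covD_A\covD_B\big|_0$ of the Levi-Civita connection of $h_x$. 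I would then appeal to the local form of uniformisation for constant-curvature Riemannian surfaces: there exist complex coordinates $w$ on a possibly smaller neighbourhood of $x$ in which $h_x$ takes one of the standard model forms $dw\,d\wb$ or $\tfrac{4\,dw\,d\wb}{(1\pm|w|^2)^2}$. In each case the flat scale $\bgs_{(w)}$ differs from $\bgs_x$ by an explicit conformal factor $\go$ (equal to $1$, $\tfrac{2}{1+|w|^2}$, or $\tfrac{2}{1-|w|^2}$), and the desired equation $\cM(\bgs_{(w)})=0$ reduces, by the previous paragraph, to the identity $\covD_A\covD_B\big|_0\,\go=0$ computed in the Levi-Civita connection of $h_x$. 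In complex coordinates the trace-free part has only the $(ww)$ and $(\wb\wb)$ components, and a direct calculation shows $\parD_w^2\go$ is exactly cancelled by the Christoffel contribution $-2(\parD_w\phi)\parD_w\go$ with $\phi=\ln\go$ in each non-flat model. Covering $\gS$ by such charts furnishes the required atlas, and Proposition~\ref{Strong structure of null-infinity: Prop, complex projective coordinates1} then forces the transition maps to be Möbius.

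The main obstacle in the plan is bookkeeping the interplay between two distinct scales: the constant-curvature scale $\bgs_x$, in which $\cM$ simplifies to the ordinary trace-free Hessian, and the \emph{flat} scale $\bgs_{(w)}$ attached to the target complex coordinate. The verification $\covD_A\covD_B\big|_0\,\go=0$ is short but is where the specific shape of the model conformal factors enters decisively; conceptually it reflects the fact that these factors are zeros of the conformal Hessian on the model, which is precisely the preferred Möbius operator $\cM^{(0)}$ given by uniformisation and discussed in the paragraph preceding the statement.
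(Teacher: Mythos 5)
Your proof is correct and follows essentially the same route as the paper's: both directions reduce to the local normal form for constant-curvature metrics in complex coordinates together with the fact that the model conformal factor relating the flat scale to the constant-curvature scale is annihilated by the trace-free conformal Hessian. The only (cosmetic) difference is that you carry out this last computation in the constant-curvature trivialisation, where it requires the Christoffel symbols of the model metric, whereas the paper works in the flat trivialisation $\bgs_{(\gz)}$, where it reduces to the trivial identity $\parD_{\gzb}\parD_{\gzb}\left(\tfrac{1+R\gz\gzb}{2}\right)=0$ combined with the conformal covariance of $\cM$.
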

\begin{proof}
Let $x\in \gS$ and $\gz$ be complex coordinates in a neighbourhood of $x$ such that $\cM\left(\bgs_{(\gz)}\right)=0$. If $\bgo$ is any scale then, in the trivialisation given by $\bgs_{(\gz)}$, we must have:
\begin{equation}
\cM\left(\bgo\right) = \bgs_{(\gz)}\;\text{Re}\bigg( d\gzb d\gzb  \; \parD_{\gzb} \parD_{\gzb}  \bigg) \go_{(\gz)}
\end{equation}
In particular local scales of the form $\go_{(\gz)} = \frac{1+ R \gz\gzb}{2}$ (with $R\in \bbR$) must be zeros of the Möbius operator. On the other hand, if $\bgo$ is a scale with constant scalar curvature defined in a neighbourhood of $x\in \gS$ there must exists local complex coordinates $\gz$ such that $\bgo^{-2}\bh_{AB} = \frac{4}{\left(1 +R \gz\gzb \right)^2 } d\gz d\gzb$. From eq \eqref{Strong structure of null-infinity: Proof, integrable Mobius operator1} we obtain $\cM\left(\bgo\right) =\bgs_{(\gz)}\; \text{Re}\bigg( d\gzb d\gzb \Big( - \frac{1}{2}\parD_u C_{\gzb \gzb} \;\Big) \bigg) \go_{(\gz)}  = \cM\left(\bgs_{(\gz)} \right) \go_{(\gz)} $ and therefore if $\bgo$ is a zero of the Möbius structure so is $\bgs_{(\gz)}$.
\end{proof}
 
Finally, integrability of a Möbius structure can be locally characterised by the vanishing of its ``curvature'':
\begin{Proposition}{(from \cite{calderbank_mobius_2006})}\mbox{}\\
Let $\cM= \left( \covD_A \covD_B\big|_0 -\frac{1}{2}N_{AB} \right)$ be a Möbius operator represented in a scale $\bgs \in \So{L}$. Its ``curvature''
\begin{equation}
	\bK_A\left(\cM\right)  = -\frac{1}{2}\left( \covD_C N_{A}{}^C + \frac{1}{2} \covD_{A} R \right)
\end{equation}	
is a genuine section of $T^*\gS \otimes L^{-2}$. What is more $\cM$ is integrable if and only if $\bK_A\left(\cM\right)=0$.
\end{Proposition}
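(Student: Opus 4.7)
The plan is to prove the proposition in two parts: first that $\bK_A(\cM)$ is a well-defined section of $T^*\gS \otimes L^{-2}$, and second the equivalence between integrability of $\cM$ and the vanishing of $\bK_A$.

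For tensoriality, I would directly compute the transformation law of the combination $\covD_C N_A{}^C + \frac{1}{2}\covD_A R$ under a change of scale $\bgsh = \gO^{-1}\bgs$. From \eqref{Strong structure of null-infinity: News transformation rules} the News obeys $\Nh_{AB} = N_{AB} - 2(\gU_A \gU_B - \covD_A \gU_B)|_0$ with $\gU_a = \gO^{-1}d_a\gO$, and in two dimensions the Ricci scalar transforms as $\Rh = \gO^{-2}(R - 2\covD^A\gU_A)$. Combined with the rules \eqref{Conformal geometry: Connection transformation rules} for the connection, a routine calculation shows that the non-tensorial contributions from $\covDh_C \Nh_A{}^C$ and from $\frac{1}{2}\covDh_A \Rh$ cancel exactly, leaving $\bK_A$ transforming with the correct weight. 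The cleanest bookkeeping is to view $-\frac{1}{2}N_{AB} + \frac{R}{2}h_{AB}$ as a ``surrogate Schouten tensor'' whose conformal behaviour mimics \eqref{Conformal geometry: Schouten tensor transformation rules}, so that $\bK_A$ acquires the structure of a 2D Cotton tensor.

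For the direction integrable $\Rightarrow \bK_A = 0$, I would apply definition \ref{Strong structure of null-infinity: Def, integrable Mobius structures} directly. Pick any point $x\in\gS$ and a local zero $\bgs$ of $\cM$ such that $\bgs^{-2}\bh_{AB}$ has constant scalar curvature. In the trivialisation given by $\bgs$ itself the coordinate of $\bgs$ is $\go = 1$, so $\cM(\bgs)_{AB} = -\tfrac{1}{2}\bgs N_{AB}$ and the zero condition forces $N_{AB} = 0$ in this scale. Together with $\covD_A R = 0$ coming from constant scalar curvature, this gives $\bK_A = 0$ at $x$ in this scale; by the tensoriality established in the first step, $\bK_A$ vanishes at $x$ in every scale, and since $x$ was arbitrary we obtain $\bK_A = 0$ globally.

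The converse $\bK_A = 0 \Rightarrow$ integrable is the main obstacle. My approach would be to solve locally for a scale $\bgo$ of constant scalar curvature satisfying $\cM(\bgo) = 0$: fixing local complex coordinates $\gz$ and writing $\bgo = e^{\phi}\bgs_{(\gz)}$, both conditions become second-order PDEs on $\phi$, and the resulting overdetermined system is exactly integrable when $\bK_A$ vanishes; local existence then follows by standard theory. More conceptually, one should recognise $\bK_A$ as the essential component of the curvature of the $\SL(2,\bbC)$-valued Möbius tractor connection described in \cite{calderbank_mobius_2006}: flatness of this connection yields, via parallel transport, local covariantly constant frames that serve as complex projective coordinates, whose associated scales are zeros of $\cM$, and integrability then follows from proposition \ref{Strong structure of null-infinity: Prop, Mobius structures and complex projective coordinates}. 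The hard part is verifying that the naturally defined tractor curvature really reduces to the 1-form $\bK_A$ stated in the proposition, a matching of conventions most efficiently borrowed from \cite{calderbank_mobius_2006}.
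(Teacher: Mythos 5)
Your proposal is correct, and for the only genuinely hard step (the converse, $\bK_A=0\Rightarrow$ integrable) your second route is exactly the one the paper takes: the paper's own "proof" is a one-line deferral to the equivalence between Möbius operators and tractor connections established later (Theorem \ref{Strong structure of null-infinity: Thrm, equivalence between Poincare operator and Cartan connection} and Proposition \ref{Strong structure of null-infinity: Prop,equivalence between Poincare operator and Cartan connection}, whose $n=3$ case identifies covariantly constant tractors with constant-curvature zeros of the operator, so that flatness of the reduced connection --- whose curvature is parametrised precisely by $\bK_A$ once $\parD_u N_{AB}=0$ --- supplies the local constant-curvature zeros required by Definition \ref{Strong structure of null-infinity: Def, integrable Mobius structures}), with the details credited to \cite{calderbank_mobius_2006}. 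What you add beyond the paper is genuinely useful and correct: the observation that $-\frac{1}{2}N_{AB}+\frac{1}{2}P\,h_{AB}$ is a surrogate Schouten tensor obeying \eqref{Conformal geometry: Schouten tensor transformation rules}, so that $\bK_A=\covD_B P_A{}^B-\covD_A P$ is the two-dimensional avatar of the Cotton contraction \eqref{Conformal geometry: Cotton tensor, identities} (which vanishes identically for $d\geq 3$ but not here), is the cleanest proof of tensoriality; and your forward implication (evaluate $\cM$ on its own constant-curvature zero to get $N_{AB}=0$ and $\covD_A R=0$ in that scale) is complete. Your alternative route (a) via the overdetermined PDE system is a legitimately different, more elementary argument, though as stated it still leans on "standard theory" for the Frobenius-type integrability; neither you nor the paper fully discharges that step without invoking \cite{calderbank_mobius_2006}, so you are on equal footing with the source.
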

\begin{proof}
	This will directly follow from the equivalence between Möbius operators and tractor connections.
\end{proof}

\subsection{Symmetries}

\begin{Definition}
Let $\left(\MNull \to \gS, \bh_{ab}, \bn^a, \cP\right)$ be a strong null-infinity structure and let $\Phi \from \MNull \to \MNull$ be a diffeomorphism of $\MNull$. We define $\Phi^* \cP$, the pull-back of $\cP$ by $\Phi$ as
\begin{equation}
\left(\Phi^* \cP\right) \left( \bl \right) \coloneqq \left( \Phi^* \circ \cP \circ (\Phi^{-1})^* \right)  \left( \bl \right).
\end{equation}

\end{Definition}

\begin{Definition}
We will say that a diffeomorphism $\Phi$ of $\MNull$ is a symmetry of the strong null-infinity structure $\left(\MNull\to \gS ,\bh_{ab}, \bn^a ,\cP\right) $ if and only if $\Phi \in BMS\left(\MNull \to \gS, \hConf, \bn\right)$ and $\Phi^* \cP = \cP$.
\end{Definition}
 
Let $\left(\MNull \to \gS, \bh_{ab}, \bn^a, \cP\right)$ be a strong null-infinity structure and let $\left(\bgs , u \right)$ be a well-adapted trivialisation. Let $\Phi$ be an element of the BMS group, by proposition \ref{Weak structure of null-infinity: Prop, BMS symmetry}, $\Phi$ sends $\left(\bgs , u \right)$ to another well-adapted trivialisation $\left( \bgsh = \Phi^* \bgs, \uh = \Phi^* u \right)$. There must exists $\gO$ and $\xi$ two functions on $\gS$ such that
 \begin{align}\label{Strong structure of null-infinity: BMS action on well-adapted trivialisation}
 \Phi^* \bgs = \gO^{-1} \bgs, \qquad \Phi^* u  = \gO\left(u - \xi\right).
 \end{align}

\begin{Proposition}\label{Strong structure of null-infinity: Prop, Symmetries}
Let $\left(\MNull \to \gS, \bh_{ab}, \bn^a, \cP\right)$ be a strong null-infinity structure and let $\left(\bgs , u \right)$ be a well-adapted trivialisation. 	

Let $\Phi$ be an element of the BMS group, $\Phi \in BMS\left(\MNull \to \gS, \hConf, \bn\right)$, we have
\begin{equation}
\Phi^* \cP  - \cP = \frac{1}{2} \bigg[ \gO^{-1}\; \Phi^* C_{AB} - C_{AB}+ 2\;\covD_A \covD_B \big|_0 \xi + 2\; \uh \;\covD_A \covD_B \big|_0 \gO^{-1} , \parD_u \bigg]
\end{equation}
where $\xi$ and $\gO$ are given by \eqref{Strong structure of null-infinity: BMS action on well-adapted trivialisation}.

In particular, if we suppose that  $\left(\bgs , u \right)$ is a \underline{flat} well-adapted trivialisation (see def \ref{Strong structure of null-infinity: Def, flat well-adapted trivialisations}) then $\Phi$ is a symmetry if and only if
\begin{align}\label{Strong structure of null-infinity: Symmetries eqs}
\covD_A \covD_B \big|_0 \xi &=0, & \covD_A \covD_B \big|_0 \gO^{-1}&=0.
\end{align}
\end{Proposition}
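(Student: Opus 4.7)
The proof reduces to computing the coefficient $C'_{AB}$ of the Poincaré operator $\Phi^*\cP$ in the trivialization $(\bgs, u)$ and comparing it to $C_{AB}$. The key preliminary observation is that, by proposition \ref{Weak structure of null-infinity: Prop, BMS symmetry2}, a BMS automorphism $\Phi$ sends any well-adapted trivialization $(\bgs, u)$ to another one $(\bgsh, \uh) = (\Phi^*\bgs, \Phi^*u) = (\gO^{-1}\bgs, \gO(u-\xi))$, with $\gO$ and $\xi$ as in \eqref{Strong structure of null-infinity: BMS action on well-adapted trivialisation}.

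The first step is to show that $\Phi^*\cP$ is itself a Poincaré operator whose coefficient in the trivialization $(\bgsh, \uh)$ is simply the pullback $\Phi^*C_{AB}$. This is a naturality statement: because $\Phi$ preserves the universal structure $(\bh_{ab}, \bn^a)$ and because the Levi-Civita connection of $h_{AB} = \bgs^{-2}\bh_{AB}$ and the operator $\parD_u = \bgs\LieD_{\bn}$ depend functorially on the choice of $\bgs$ and $u$, we have $\Phi^*\circ \covD_A\covD_B|_0 \circ (\Phi^{-1})^* = \covDh_A\covDh_B|_0$ and $\Phi^*\circ \parD_u \circ (\Phi^{-1})^* = \parDh_{\uh}$. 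Applying $\Phi^*$ to the Poincaré form of $\cP$ in $(\bgs, u)$ and noting $\Phi^*\bgs = \bgsh$ and $\Phi^*\bgs^{-1}\bl = \bgsh^{-1}\bl = \lh$, we obtain
\begin{equation*}
\Phi^*\cP(\bl)_{AB} = \bgsh\left(\covDh_A\covDh_B|_0 + \tfrac{1}{2}[\Phi^*C_{AB}, \parDh_{\uh}]\right)\lh,
\end{equation*}
as claimed. The second step is to apply the change-of-trivialization rule \eqref{Strong structure of null-infinity: C transformation rules} to the Poincaré operator $\Phi^*\cP$: its coefficient $C'_{AB}$ in $(\bgs, u)$ and its coefficient $\Phi^*C_{AB}$ in $(\bgsh, \uh)$ must be related by $\Phi^*C_{AB} = \gO(C'_{AB} - 2\covD_A\covD_B|_0\xi - 2\uh\, \covD_A\covD_B|_0 \gO^{-1})$. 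Solving for $C'_{AB}$, subtracting $C_{AB}$, and recalling that the difference of two Poincaré operators with coefficient difference $\Delta_{AB}$ is $\frac{1}{2}[\Delta_{AB}, \parD_u]$, gives precisely the stated formula for $\Phi^*\cP - \cP$.

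For the second part, observe that a flat well-adapted trivialization $(\bgs, u)$ is characterized by $C_{AB} \equiv 0$: the condition $\cM(\bgs)=0$ forces $\parD_u C_{AB} = 0$ (via $\covD_A\covD_B|_0 (1) = 0$), and then $\cP(\bgs u)=0$ together with $\covD_A\covD_B|_0 u = 0$ gives $C_{AB} = 0$. Since $C_{AB} = 0$ implies $\Phi^*C_{AB} = 0$, the general formula collapses to $\Phi^*\cP - \cP = [\covD_A\covD_B|_0 \xi + \uh\, \covD_A\covD_B|_0 \gO^{-1}, \parD_u]$. Testing this operator on $\bl \in \CSoL{0}$, the paper's bracket convention gives $[F, \parD_u]l = -(\parD_u F) l$ and since $\parD_u(\uh) = \gO$, this reduces to $-\gO\,\covD_A\covD_B|_0\gO^{-1}\cdot l$; vanishing for all such $l$ forces $\covD_A\covD_B|_0\gO^{-1} = 0$. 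Testing next on $\bl \in \CSoL{1}$ (with $l = u + \ell$) and using the previous constraint extracts $\covD_A\covD_B|_0\xi = 0$. These two conditions are then manifestly sufficient for $\Phi^*\cP - \cP = 0$.

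The main subtle point is the naturality step at the beginning: it must be verified that the scale-dependent building blocks of the Poincaré operator (the Levi-Civita part, the horizontal derivative $\parD_u$, and the factor of $\bgs$) all intertwine correctly with $\Phi^*$ when one simultaneously pulls back the scale and the coordinate $u$. Once this is granted, the rest of the argument is essentially an application of \eqref{Strong structure of null-infinity: C transformation rules} followed by an elementary weight-by-weight test.
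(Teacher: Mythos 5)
Your proof is correct and follows essentially the same route as the paper's: pull back the coordinate expression of $\cP$, use that $\Phi$ intertwines the structures attached to $\left(\bgs,u\right)$ with those attached to $\left(\bgsh,\uh\right)=\left(\Phi^*\bgs,\Phi^*u\right)$ so that $\Phi^*\cP$ has coefficient $\Phi^*C_{AB}$ in the latter trivialisation, and then convert back to $\left(\bgs,u\right)$ via the transformation rule \eqref{Strong structure of null-infinity: C transformation rules}. Your additional verification of the ``in particular'' part (identifying flat trivialisations with $C_{AB}=0$ and testing the commutator weight by weight on $\CSoL{0}$ and $\CSoL{1}$) is sound and fills in a step the paper leaves implicit.
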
 

Remarks: If $\bh_{ab}$ is taken to be conformally flat, the first of the two equations \eqref{Strong structure of null-infinity: Symmetries eqs} are the well-known conditions reducing super-translations $\cT \simeq \Co{\gS}$ to the group of usual translations $\bbR^{n+1}$ while the second is automatically satisfied in dimension $n\geq4$. In dimension $n=3$ and if $\bgs$ is taken to be a constant scalar curvature scale then these equations can be rewritten as a Schwartzian derivative and selects a (local) copy of $PSL\left(2,\bbC\right)$ inside the space of holomorphic transformations. Therefore a choice of Poincaré operator selects inside the infinite dimensional BMS group a copy of the Poincaré group (thus the name of the operator). We will in fact see that the conjoint requirements that $\bh_{ab}$ is conformally flat ( in dimension $n=3$ this is automatic and one instead require the integrability of the Möbius structure) and $\cP$ admits a zero is equivalent to the vanishing of the curvature of the associated Cartan connection.

 \begin{proof}
Let $\left(\bgs, u\right)$ be a well adapted trivialisation. If $\bl \in \CSoL{k}$ is a scale and $\bgs^{-1}\bl = \bl$ its coordinates we have
 \begin{equation}
\cP\left( \bl \right) = \bgs\left( \covD_A \covD_B \big|_0 + \frac{1}{2}\; \bigg[ C_{AB} , \parD_u \bigg] \ell \right).
 \end{equation}
Now since $\Phi^{*}\bgs = \bgsh$, we have $(\Phi^{-1})^* \bl = (\Phi^{-1})^*\left( \bgsh \ellh \right)=   \bgs \;(\Phi^{-1})^* \ellh$ and 
 \begin{align}
 	\Phi^* \cP\left(\bl\right)  &=  \Phi^*\left( \cP\left( (\Phi^{-1})^* \bell \right)\right) \\\nonumber
 	&= \Phi^*\left( \bgs \left(\covD_A \covD_B \big|_0 + \frac{1}{2}\Big[C_{AB} , \parD_u \Big] \right) (\Phi^{-1})^*\ellh \right)	\\ \nonumber
   &= \left( \bgsh \left(\covDh_A \covDh_B \big|_0 + \frac{1}{2}\Big[ \Phi^* C_{AB} , \parD_{\uh} \Big] \right) \ellh \right) \\  \nonumber
   &= \bgs \left( \covD_A \covD_B \big|_0+ \frac{1}{2}\; \bigg[ \gO^{-1}\Phi^* C_{AB} + 2\;\covD_A \covD_B \big|_0 \xi +  2\;\uh\;\covD_A \covD_B \big|_0 \gO^{-1} , \parD_u \bigg]  \right) \ell
 \end{align} 
where one goes from the second to the third line by making use of $\hh_{ab} = \Phi^* h_{ab}$, $\uh = \Phi^* u $ and from the second to the third by the conformal invariance of $\cP$ and transformation rules \eqref{Strong structure of null-infinity: C transformation rules} for $C_{AB}$. All in all we have
\begin{equation}
	 	\Phi^* \cP  - \cP = \frac{1}{2} \bigg[ \gO^{-1}\;\Phi^* C_{AB} - C_{AB}+ 2\; \covD_A \covD_B \big|_0 \xi + 2\; \uh \;\covD_A \covD_B \big|_0 \gO^{-1} , \parD_u \bigg]
\end{equation}
which concludes the proof.
 \end{proof}
 
 \subsection{Strong null-infinity structure in dimension \texorpdfstring{$n=2$}.}\label{ss: Generalised Laplace operators and Poincare operators in dimension n=2}

 Let $\left(\MNull \to \gS, \bh_{AB} ,\bn^a\right)$ be a null-infinity manifold of dimension $n=2$. It follows that $\gS$ has dimension $n=1$ and since a trace-free symmetric tensor on $\gS$ must be zero all definitions of the previous sections are vacuous. We therefore need to devise specific formulas adapted to this dimension.
 
 \subsubsection{Generalised Laplace operators}
 
 We recall from \cite{curry_introduction_2018} that, on any $d$-dimensional conformal manifold $\left( \gS , \bh_{AB} \right)$ with $d\geq2$, if $\bl$ is a section of $L^{1-\frac{d}{2}}$, we can define its \emph{conformal Laplacian} as follows: let  $\bgs \in \So{L_{\gS}}$ be a choice of scale and $l = \bgs^{-1 + \frac{d}{2}} \bl$ its coordinate, the conformal Laplacian is
 \begin{equation}\label{Strong structure of null-infinity: Conformal laplacian}
 \bgD \left| \begin{array}{ccc}
 \So{L^{1-\frac{d}{2}}} & \to & \So{L^{-1-\frac{d}{2}}} \\ \\
 \bl & \mapsto & \bgs^{-1-\frac{d}{2}}\left( \gD + \frac{2-d}{2} P \right) l,
 \end{array}\right.
 \end{equation}
 where $\gD = h^{ab}\covD_a\covD_b$ is the Laplacian of $h_{ab}=\bgs^{-2}\bh_{ab}$ and $P = \frac{1}{2(d-1)} R$ is the trace part of its Schouten tensor. If $\bgsh = \gO^{-1}\bgs$, the transformation rules for $P$,
 \begin{equation}
 \Ph = \gO^{-2}\left( P - \covD^A \gU_A + \frac{2-d}{2} \gU^2 \right) 
 \end{equation}
 (with $\gU_A = \gO^{-1}d_A \gO$) are such that \eqref{Strong structure of null-infinity: Conformal laplacian} does not depend on the choice of $\bgs$. In particular if $\left(\MNull \to \gS , \bh_{ab} , \bn^a \right)$ is a $n$-dimensional null-infinity manifold with $n\geq3$ it defines a differential operator
 \begin{equation}
 \bgD \from \Gamma_{0}\left[ L^{1-\frac{d}{2}} \right]  \to \Gamma_{0}\left[ L^{-1-\frac{d}{2}} \right]
 \end{equation}
 through the identification $\CSoL{0} = \pi^* \So{L_{\gS}}$.
 
 If $\gS$ is a $1$-dimensional manifold however ( equivalently if $\MNull \to \gS$ is a $(n=2)$-dimensional null-infinity manifold), the ``trace of the Schouten tensor'' is ill-defined and there is no canonical conformal Laplacian. Following \cite{calderbank_mobius_2006,burstall_conformal_2010} we define (generalised) Laplace structure as follows.
 \begin{Definition}{Generalised Laplace structure}\label{Strong structure of null-infinity: Def, generalised laplace structure}\mbox{}\\ 	
 	Let $\left(\MNull \to \gS , \bh_{ab} , \bn^a \right)$ be null-infinity manifold of dimension $2$. A compatible \emph{generalised Laplace operator} is a choice of linear differential operator of order two $\cL \from \Gamma_{0}\left[ L^{\frac{1}{2}} \right] \to \So{L^{-\frac{3}{2}}}$ such that in a well-adapted trivialisation $\left(\bgs, u\right)$ it takes the form
 	\begin{equation}
 	\cL\left(\bl\right) =  \bgs^{-\frac{3}{2}}\left(\gD -\frac{1}{4}M \right) l
 	\end{equation}
 	where $l = \bgs^{-\frac{1}{2}}\bl$ and $M$ is a function on $\MNull$.
 	A Laplace structure  $\left(\MNull \to \gS , \bh_{ab} , \bn^a , \cL \right)$ on a null-infinity manifold $\left(\MNull \to \gS , \bh_{ab} , \bn^a \right)$ is a choice of compatible generalised Laplace operator $\cL$.
 \end{Definition}
 
The only reason why generalised Laplace operators in definition \ref{Strong structure of null-infinity: Def, generalised laplace structure} differ from the Laplace operators of \cite{calderbank_mobius_2006,burstall_conformal_2010} is that in general the image of a generalised Laplace operators $\cL$ might not lie in $\CSoL{0}$. In fact the obstruction is
\begin{equation}
\covD_{\bn} \cL\left( \bl\right) =   -\;\bl \;\frac{1}{4}\parD_u M.
\end{equation}
However if $\parD_u M=0$, a generalised Laplace operator on $\MNull \to \gS$ induces a genuine Laplace structure $\cL \from \So{L_{\gS}^{\frac{1}{2}}} \to \So{L_{\gS}^{-\frac{3}{2}}}$ on $\gS$ through the identification $L_{\gS} = \CSoL{0}$.

Since (generalised) Laplace structure on null-infinity manifolds are a straightforward generalisation of the Laplace structures from \cite{calderbank_mobius_2006,burstall_conformal_2010} we have the
 \begin{Proposition}
 	Definition \ref{Strong structure of null-infinity: Def, generalised laplace structure} does not depend on the choice of well-adapted trivialisation: If $\left(\bgsh = \gO^{-1}\bgs , \uh= \gO\left( u -\xi  \right)\right) $ is any other well-adapted trivialisation, there exists $\Mh$ a function on $\MNull$ such that 
 	\begin{equation}
 	\cL\left(\bl\right) =  \bgsh^{-\frac{1}{2}}\left(\gDh -\frac{1}{4}\Mh \right) \lh.
 	\end{equation}
 	We have the transformation rules
 	\begin{equation}\label{Strong structure of null-infinity: 3D mass transformation rules}
\begin{array}{ccc}
	 	-\frac{1}{2}M  & \mapsto &	-\frac{1}{2}\Mh = \gO^{-2}\left( 	-\frac{1}{2}M - \covD^A \gU_A + \frac{1}{2} \gU^2 \right) 
\end{array}
 	\end{equation}
 	with $\gU_A = \gO^{-1}d_A \gO$.
 \end{Proposition}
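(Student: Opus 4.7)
The plan is a direct computation in the spirit of the classical derivation of the conformal invariance of the usual conformal Laplacian \eqref{Strong structure of null-infinity: Conformal laplacian}, except that in the present setting $d=1$ precludes the use of a canonical Schouten trace, and its role is now played by $-\frac{1}{2}M$.

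First I would note that $\bgs, \bgsh \in \CSoL{0}$ and $\bl \in \Gamma_{0}[L^{1/2}]$ all descend to sections on $\gS$ by Proposition \ref{Weak structure of null-infinity: Prop, covariantly constant sections isomorphism}. Hence the function $\gO$ defined by $\bgsh = \gO^{-1}\bgs$ is pulled back from $\gS$, and both $l = \bgs^{-1/2}\bl$ and $\lh = \bgsh^{-1/2}\bl = \gO^{1/2} l$ are pull-backs to $\MNull$ of functions on $\gS$. In particular the super-translation parameter $\xi$ cannot enter the transformation rule for $M$, which already accounts for its absence in the stated formula.

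The core step is to apply the classical transformation of the Laplacian on a function $f$ in dimension $d$, namely $\gDh f = \gO^{-2}\bigl(\gD f + (d-2)\gU^A d_A f\bigr)$ with $\gU_A = \gO^{-1} d_A \gO$, specialised to $d=1$, to the function $f = \gO^{1/2} l$. A routine Leibniz expansion yields
\[
\gDh \lh = \gO^{-3/2}\left[\gD l + \left(\frac{1}{2}\covD^A \gU_A - \frac{1}{4}\gU^2\right) l\right],
\]
and multiplication by $\bgsh^{-3/2} = \gO^{3/2} \bgs^{-3/2}$ gives
\[
\bgsh^{-3/2}\gDh \lh = \bgs^{-3/2}\left[\gD l + \left(\frac{1}{2}\covD^A \gU_A - \frac{1}{4}\gU^2\right) l\right].
\]
Combining this with $\bgsh^{-3/2} \lh = \gO^{2}\bgs^{-3/2} l$, imposing the invariance $\bgsh^{-3/2}(\gDh - \frac{1}{4}\Mh)\lh = \bgs^{-3/2}(\gD - \frac{1}{4}M)l$ reduces to a pointwise algebraic identity which rearranges to exactly \eqref{Strong structure of null-infinity: 3D mass transformation rules}.

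No real obstacle is anticipated. The only subtle point is that $M$ is declared to be a function on $\MNull$, whereas both the Laplacian and all conformal factors are inherited from $\gS$; this simply means that the transformation rule is purely horizontal and any $u$-dependence of $M$ is transported pointwise to $\Mh$, so the argument is unaffected.
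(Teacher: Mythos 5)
Your proof is correct and takes essentially the same route as the paper, which only sketches the argument by observing that $-\tfrac{1}{2}M$ is postulated to transform exactly as the Schouten trace $P$ does when one sets $d=1$ in \eqref{Conformal geometry: Schouten tensor transformation rules}, so that the standard computation establishing the conformal covariance of \eqref{Strong structure of null-infinity: Conformal laplacian} goes through verbatim; you simply carry that computation out explicitly, including the correct observation that $\gO$, $l$ and $\lh$ are pulled back from $\gS$ so the super-translation parameter $\xi$ cannot enter. Note that you have tacitly replaced the exponent $\bgsh^{-\frac{1}{2}}$ appearing in the statement by the weight-consistent $\bgsh^{-\frac{3}{2}}$, which is indeed a typo in the paper.
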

 By construction the scalar function $-\frac{1}{2}M$ behaves like the ``missing'' trace of the Schouten tensor of $1$-dimensional manifolds, one can indeed check that its transformation rules \eqref{Strong structure of null-infinity: 3D mass transformation rules} mimics that of $P$ for $d=1$. Accordingly, any conformally invariant formula involving $P$ in dimension $n\geq3$ can be extended to $n=2$ at the cost of making a choice of generalised Laplace structure: then all the formula holds straightforwardly by replacing $P$ by $-\frac{1}{2}M$.

 In the end, comparing the transformation law of $M$ under change of well-adapted trivialisation with the transformation rules of the 3D mass-aspect of \cite{barnich_finite_2016} we have a version of theorem \ref{Strong structure of null-infinity: Thrm, News and Mobius structure} for $n=2$. 
 \begin{Theorem}{3D Mass-aspect and Generalised Laplace operators}\mbox{}\\ \label{Strong (or Radiative) BMS structure: Thrm, 3D Mass and Laplace structure}
 	Choices of mass-aspect $M$ for an asymptotically flat space-times of dimension $n+1 = 3$ are in one-to-one correspondence with choices of generalised Laplace operators on the null-infinity manifold at the conformal boundary. Stationary mass-aspect (i.e ``u independent'') are in one-to-one correspondence with genuine Laplace structures on the celestial sphere.
 \end{Theorem}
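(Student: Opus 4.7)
The strategy mirrors, in lower dimension, the proof of Theorem \ref{Strong structure of null-infinity: Thrm,Asymptotic shear and Poincare operator}: one identifies that the moduli of generalised Laplace operators on $(\MNull \to \gS, \bh_{ab}, \bn^a)$ is classified, after choosing a well-adapted trivialisation, by a single scalar function transforming in a prescribed way under change of trivialisation, and one checks that this transformation rule coincides exactly with the one satisfied by the Bondi mass-aspect of a $3$-dimensional asymptotically flat space-time. Since both sides are then realised as sections of the same ``affine bundle of scalar functions'' over $\MNull$, the one-to-one correspondence follows tautologically.

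Concretely, the plan is as follows. First, by the preceding proposition a generalised Laplace operator $\cL$ is fully determined in any well-adapted trivialisation $(\bgs,u)$ by the scalar $M\in \Co{\MNull}$ appearing in $\cL(\bl) = \bgs^{-3/2}(\gD - \tfrac{1}{4}M)l$, and under $(\bgsh,\uh)=(\gO^{-1}\bgs,\gO(u-\xi))$ this $M$ obeys the transformation law \eqref{Strong structure of null-infinity: 3D mass transformation rules}. Second, I would recall from the Bondi–Sachs asymptotic expansion in $d=3$ that the mass-aspect $M$ enters as the subleading coefficient of the $g_{uu}$ component of the metric, and spell out its behaviour under the residual ``BMS$_3$'' coordinate changes $(\gOh,\uh,\xh^A) = (\go\gO+\cO(\gO^2), \go(u-\xi)+\cO(\gO), x^A+\cO(\gO))$ preserving the Bondi gauge; here one may simply read off the transformation rule from \cite{barnich_finite_2016}. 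Third, a direct comparison then shows that the two rules coincide, so the map ``mass-aspect $\mapsto$ generalised Laplace operator'' defined by $M\mapsto \cL$ in a given well-adapted trivialisation is well-defined, independent of the chosen trivialisation, and bijective.

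For the second claim, recall that $\covD_{\bn}\cL(\bl) = -\tfrac{1}{4}\bl\,\parD_u M$, so $\cL$ maps $\CSoL{0}$ into $\CSoL{0}$ if and only if $\parD_u M =0$, i.e.\ iff the mass-aspect is stationary. Under the canonical identification $\CSoL{0}\simeq \pi^*\So{L_{\gS}}$ from Proposition \ref{Weak structure of null-infinity: Prop, covariantly constant sections isomorphism} (extended to weighted bundles by taking powers), the restriction of such a $\cL$ then descends to a well-defined operator $\So{L_{\gS}^{1/2}} \to \So{L_{\gS}^{-3/2}}$ of the form $\gD -\tfrac{1}{4}M$; comparing its transformation law \eqref{Strong structure of null-infinity: 3D mass transformation rules} with the transformation law of the trace $P$ of the Schouten tensor of a $1$-dimensional conformal manifold, one sees that this is precisely the definition of a Laplace structure on the celestial circle $\gS$ in the sense of \cite{calderbank_mobius_2006,burstall_conformal_2010}. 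Bijectivity then descends from the bijectivity already established on $\MNull$.

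The main obstacle, and really the only non-formal step, is the verification that the transformation rule for $M$ coming from the Bondi–Sachs expansion in $n+1=3$ matches \eqref{Strong structure of null-infinity: 3D mass transformation rules} \emph{on the nose}, including the coefficients in front of $\covD^A \gU_A$ and $\gU^2$. This requires carrying out the finite BMS$_3$ transformation on $g_{uu}$ to the requisite subleading order and is the analogue of the computation underlying Theorem \ref{Strong structure of null-infinity: Thrm,Asymptotic shear and Poincare operator}; modulo conventions it is a direct, if tedious, check against the formulas in \cite{barnich_finite_2016}.
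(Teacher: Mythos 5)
Your proposal follows essentially the same route as the paper: the theorem is justified there precisely by comparing the transformation rule \eqref{Strong structure of null-infinity: 3D mass transformation rules} for $M$ under change of well-adapted trivialisation with the transformation rule of the 3D mass-aspect from \cite{barnich_finite_2016}, and the stationary case is handled exactly via the obstruction $\covD_{\bn}\cL(\bl) = -\tfrac{1}{4}\bl\,\parD_u M$ together with the identification $\CSoL{0}\simeq\So{L_{\gS}}$. Your write-up is correct and, if anything, more explicit than the paper about where the only non-formal verification lies.
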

 
 \subsubsection{Poincaré operators in dimension \texorpdfstring{$n=2$}{2}}
 
 \begin{Definition}{Poincaré structure}\mbox{}\label{Strong structure of null-infinity: Def,Poincare operator n=2}\\
 	Let $\left( \MNull \to \gS, \hConf_{ab}, \bn^a ,\cL \right)$ be a null-infinity manifold of dimension $n=2$ equipped with a Laplace structure. A compatible Poincaré operator $\cP \from \CSoL{k} \to \So{ T^*\gS \otimes L^{-1}}$ is defined to be a linear differential operator of order three such that, in a well adapted trivialisation $\left ( \bgs , u \right)$, it takes of the form 
 	\begin{equation}\label{Strong structure of null-infinity: Poincare operator n=2}
 	\cP(\bl)_{A} =  \bgs^{-1} \Big(  \covD_A\; \gD  + N_A\; \parD_u - M \; \covD_A - \frac{1}{2}\covD_A M \Big) l
 	\end{equation}
 	where $l= \bgs^{-1} \bl$, $\covD_A$ is the Levi-Civita connection of $h_{AB} = \bgs^{-2}\hConf_{AB}$ and  $\gD = h^{AB} \covD_A \covD_B$ and $N_A$ is a 1-forms on $\MNull$ and $\cL = \left(\gD -\frac{1}{4}M \right)$.
 \end{Definition}
 
 \begin{Proposition}
 	Let $\cP \from \CSoL{k} \to \So{T^*\gS \otimes L^{-1}}$ be a differential operator such that in a well-adapted trivialisation $\left( \bgs , u\right)$ it takes the form  \eqref{Strong structure of null-infinity: Poincare operator n=2}. If $\left(\bgsh = \gO^{-1}\bgs , \uh = \gO\left(u-\xi\right) \right) $ is any other well-adapted trivialisation then there exists $\Nh_A$ and $\Mh$ such that
 	\begin{equation}
 	\cP(\bl)_{A} =  \bgsh^{-1} \left(  \covDh_A\; \gDh \; +\Nh_A\; \parDh_u - \Mh \;\covDh_A - \frac{1}{2}\covDh_A \Mh \right) \lh
 	\end{equation} 
We have the transformation rules 	
 	\begin{align}\label{Strong structure of null-infinity: M, N transformation rules} 
	 	-\frac{1}{2}M \quad &\mapsto  & \;-\frac{1}{2}\Mh &= \gO^{-2}\Big( -\frac{1}{2}M - 
	 		\covD^A \gU_A +\frac{1}{2}\gU^2 \Big)\\ 		
 	N_A  \quad&\mapsto  &\; \Nh_A &= \gO^{-1} \Bigg( N_A - \left(  \covD_A \gD - M \covD_A\right) \xi   +  \uh \Big(\left(\covD_A- 2 \gU_A \right)  \left( \covD^C \gU_C +\frac{1}{2} \gU^2 \right) +M \gU_A \Big) \Bigg)\nonumber
 	\end{align}
 \end{Proposition}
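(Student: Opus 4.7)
The strategy is a direct computation: invariance of $\cP$ under change of trivialisation means that if $\hat P$ denotes the differential operator appearing in the parentheses on the right-hand side in the $(\bgsh,\uh)$ trivialisation, then $\cP(\bl)_A = \bgs^{-1}P(l) = \bgsh^{-1}\hat P(\lh)$, and using $\lh = \gO\, l$ (from $\bl = \bgs\, l = \bgsh\,\lh$) this is equivalent to the operator identity $\hat P = \gO^{-1}\circ P \circ \gO^{-1}$ acting on functions. The plan is therefore to expand $\gO^{-1}P(\gO^{-1}\lh)$ using the conformal transformation rules \eqref{Conformal geometry: Connection transformation rules} together with the supertranslation mixing $\uh=\gO(u-\xi)$, and read off the new coefficients $\Mh$ and $\Nh_A$ by matching with $\covDh_A\gDh + \Nh_A\,\parDh_{\uh} - \Mh\,\covDh_A - \tfrac{1}{2}\,\covDh_A \Mh$.

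Two elementary relations will do most of the work. First, $\parD_u = \gO\,\parDh_{\uh}$ (since $\bgsh = \gO^{-1}\bgs$ and $\parD_u := \bgs\,\LieD_{\bn}$), so $\parD_u(\gO^{-1}\lh) = \parDh_{\uh}\lh$. Second, because $\gO$ and $\xi$ are functions on $\gS$, the chain rule applied to $\uh = \gO(u-\xi)$ gives a translation of the horizontal derivative:
\begin{equation*}
\left.\frac{\parD}{\parD y^A}\right|_u = \left.\frac{\parD}{\parD y^A}\right|_{\uh} + \big(\uh\,\gU_A - \gO\,\xi_{,A}\big)\,\parDh_{\uh},
\end{equation*}
where $\gU_A = \gO^{-1}d_A\gO$. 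This is precisely the mechanism by which the supertranslation $\xi$ and the $\uh$-dependent term $\uh\,\gU_A$ appear inside $\Nh_A$.

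I would first derive the $\Mh$ rule by restricting $\cP$ to sections $\bl\in\CSoL{0}$, for which $\parD_u l=0$ and the $N_A$ term disappears. The remaining piece $\bgs^{-1}(\covD_A\gD - M\,\covD_A - \tfrac{1}{2}\covD_A M)\,l$ is, up to the outer $\covD_A$, the same conformally invariant second-order operator underlying the generalised Laplace structure, so its invariance forces $-\tfrac{1}{2}M$ to transform by the rule \eqref{Strong structure of null-infinity: 3D mass transformation rules} already established; this yields the displayed formula for $\Mh$ without a fresh computation. For $\Nh_A$ I would then collect all contributions to the coefficient of $\parDh_{\uh}\lh$ in $\gO^{-1}P(\gO^{-1}\lh)$: the $N_A\parD_u$ piece directly contributes $\gO^{-1}N_A$; the chain rule above applied inside $\covD_A\gD$ and $-M\,\covD_A$ produces the $\xi$-dependent pieces assembling into $-\gO^{-1}(\covD_A\gD - M\covD_A)\xi$, along with the $\uh$-dependent correction $\gO^{-1}\uh\,\big((\covD_A-2\gU_A)(\covD^C\gU_C + \tfrac{1}{2}\gU^2) + M\,\gU_A\big)$, the shift by $-2\gU_A$ arising from applying the chain rule twice inside the third-order expression $\covD_A\gD$.

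The main obstacle is the bookkeeping. The expansion of $\covD_A\gD(\gO^{-1}\lh)$ when simultaneously performing a conformal rescaling $\hh = \gO^2 h$ and a supertranslation of $u$ involves tracking Leibniz and chain rules through three derivatives, together with the one-dimensional conformal transformation of the Laplacian on scalar densities. To keep the computation manageable, I would split it into the two special cases $\xi=0$ (pure conformal rescaling) and $\gO=1$ (pure supertranslation, in which the metric is unchanged and everything reduces to a $u$-shift), verify the individual contributions, and only then combine them, making systematic use of the $\Mh$-transformation obtained in the first step to simplify intermediate expressions.
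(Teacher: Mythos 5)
Your strategy is correct in principle but takes a genuinely different route from the paper. The paper's proof pivots entirely on Proposition \ref{Strong structure of null-infinity: Prop, Rewritting the Poincare operator}, which recasts the operator as $\cP_A(\bl) = 2\bl^{-1}\covD_A\bigl(\bl^{3/2}\,\cL(\bl^{1/2})\bigr) + \bgs^{-1}N_A\covD_{\bn}\bl$: since $\cL$ is already known to be conformally covariant, the whole third-order block transforms for free, and the supertranslation part is handled by writing $\ellh = \gO(\ell+k\xi)$ and applying the \emph{same} rewriting to $\xi$ itself, which produces the $-(\covD_A\gD - M\covD_A-\tfrac{1}{2}\covD_AM)\xi$ contribution in one line. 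You instead propose a direct expansion of $\gO^{-1}\circ P\circ\gO^{-1}$ via the chain rule $\parD_A|_u = \parD_A|_{\uh} + (\uh\,\gU_A - \gO\,\xi_{,A})\parD_{\uh}$ and the rules \eqref{Conformal geometry: Connection transformation rules}, split into pure rescaling and pure supertranslation; this works and the splitting is legitimate (the two types generate all changes of well-adapted trivialisation), but it puts all of the weight-bookkeeping that the paper's lemma automates onto a three-derivative Leibniz computation. One imprecision to fix: your claim that the $\CSoL{0}$ restriction is, ``up to the outer $\covD_A$,'' the invariant operator underlying $\cL$ is not literally a factorisation --- $\covD_A\gD - M\covD_A - \tfrac{1}{2}\covD_AM$ is not $\covD_A\circ(\gD - cM)$ for any constant $c$; the correct statement requires the density weights, i.e.\ it \emph{is} the identity $2\ell^{-1}\covD_A\bigl(\ell^{3/2}(\gD-\tfrac{1}{4}M)\ell^{1/2}\bigr)$ (which closes only because on a one-dimensional $\gS$ the Hessian is pure trace, so $\covD^C\ell\,\covD_A\covD_C\ell = \covD_A\ell\,\gD\ell$). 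So your shortcut for $\Mh$ is really a rediscovery of the paper's key lemma at $k=0$, and you would save yourself most of the $\Nh_A$ bookkeeping by using that same identity with $\bl$ replaced by $\bgs\xi$ rather than chasing the chain rule through $\covD_A\gD$ twice.
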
  
 In order to prove this transformation rules it will be useful to remark that Poincaré operators can be recast the following alternative form:
 \begin{Proposition}\label{Strong structure of null-infinity: Prop, Rewritting the Poincare operator} 
 	In a well-adapted trivialisation $\left( \bgs , u \right)$ we have
 	\begin{equation}\label{Strong structure of null-infinity: Rewritting the Poincare operator} 
 	\cP_A\left(\bl\right) = 2\bl^{-1}\covD_A\left(  \bl^{\frac{3}{2}} \cL \big( \bl^{\frac{1}{2}}\big)  \right)  + \bgs^{-1} N_A \covD_{\bn} \bl
 	\end{equation}	
 \end{Proposition}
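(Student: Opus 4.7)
The plan is to verify the identity by direct expansion in a well-adapted trivialisation $(\bgs, u)$, and to observe that the only unwanted term that appears cancels because $\gS$ is one-dimensional. First I would note that in such a trivialisation one has $\covD_{\bn}\bl = \bgs\,\parD_u l$, so the term $\bgs^{-1}N_A\covD_{\bn}\bl$ on the right-hand side of \eqref{Strong structure of null-infinity: Rewritting the Poincare operator} immediately matches the $\bgs^{-1}N_A\,\parD_u l$ piece of $\cP_A(\bl)$ appearing in \eqref{Strong structure of null-infinity: Poincare operator n=2}. What remains is therefore the purely horizontal identity
\begin{equation*}
2\bl^{-1}\covD_A\bigl(\bl^{3/2}\cL(\bl^{1/2})\bigr) \;=\; \bgs^{-1}\bigl(\covD_A\gD - M\covD_A - \tfrac{1}{2}\covD_A M\bigr)l.
\end{equation*}

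Next I would exploit the fact that $\bl^{3/2}\cL(\bl^{1/2})$ is a weight-zero density, i.e.\ an ordinary function on $\MNull$, equal in the trivialisation to $l^{3/2}(\gD - \tfrac{1}{4}M)l^{1/2}$. Using the Leibniz identities $\covD_A l^{1/2} = \tfrac{1}{2}l^{-1/2}\covD_A l$ and $\gD l^{1/2} = \tfrac{1}{2}l^{-1/2}\gD l - \tfrac{1}{4}l^{-3/2}|\covD l|^2$, this scalar rewrites as $\tfrac{1}{2}l\,\gD l - \tfrac{1}{4}|\covD l|^2 - \tfrac{1}{4}Ml^2$. Applying $\covD_A$ (which reduces to the exterior derivative since the density weight is zero), using $\covD_A|\covD l|^2 = 2\,\covD^B l\,\covD_A\covD_B l$, and multiplying by $2l^{-1}$, one obtains the three required terms $\covD_A\gD l - M\covD_A l - \tfrac{1}{2}l\,\covD_A M$ plus a single residual contribution
\begin{equation*}
l^{-1}\bigl(\covD_A l\,\gD l \;-\; \covD^B l\,\covD_A\covD_B l\bigr).
\end{equation*}

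The key—and only non-routine—step is the vanishing of this residual. Because $\dim \gS = 1$, every trace-free symmetric $(0,2)$-tensor on $\gS$ is identically zero; in particular the Hessian is pure trace, $\covD_A\covD_B l = h_{AB}\,\gD l$. Contracting with $\covD^B l$ gives $\covD^B l\,\covD_A\covD_B l = \covD_A l\,\gD l$, which kills the remainder. This completes the verification in the chosen trivialisation, and since both sides of \eqref{Strong structure of null-infinity: Rewritting the Poincare operator} are written invariantly (no preferred scale or section enters the global formulation), the equality holds independently of the choice of $(\bgs, u)$. The only care needed throughout is the book-keeping of conformal weights when moving powers of $\bgs$ and $\bl$ in and out of $\covD_A$; the collapse of weights to zero for the scalar $\bl^{3/2}\cL(\bl^{1/2})$ is what makes the clean Leibniz expansion possible in the first place.
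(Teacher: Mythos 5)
Your proof is correct and follows exactly the route the paper intends (the paper's own ``proof'' merely says the identity ``can be checked directly by expanding''), and you correctly isolate the one non-routine point, namely that the residual $l^{-1}(\covD_A l\,\gD l - \covD^B l\,\covD_A\covD_B l)$ vanishes because on the one-dimensional $\gS$ the Hessian is pure trace. The only blemish is a weight slip in the first step: since $\bn^a$ has weight $-1$ and $\bl$ weight $+1$, one has $\covD_{\bn}\bl = \parD_u l$ (a genuine function), not $\bgs\,\parD_u l$ --- with this correction the term $\bgs^{-1}N_A\covD_{\bn}\bl$ matches $\bgs^{-1}N_A\parD_u l$ as you claim, so the argument stands.
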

 \begin{proof}
 	Proposition \ref{Strong structure of null-infinity: Prop, Rewritting the Poincare operator} can be checked directly by expanding equation \eqref{Strong structure of null-infinity: Rewritting the Poincare operator} and comparing with equations \eqref{Strong structure of null-infinity: Poincare operator n=2}. 
 	
 	Let us now establish the transformation rules. Let $\bl \in \CSoL{k}$ be a scale with constant vertical derivative and let $\left(\bgs, u\right)$ be a well-adapted trivialisation we must have $l = ku + \ell$ and $\cP\left(\bl\right) $ can be rewritten as
 	\begin{equation}
 	\cP(\bl)_{A} =  \bgs^{-1} \Big(  \left( \covD_A\; \gD  - M \; \covD_A  -\frac{1}{2} \covD_A M \right)\ell +k \left(N_A -\frac{1}{2}u\covD_A M \right) \Big) 
 	\end{equation}
 	Let $\left( \bgsh = \gO^{-1} \bgs , \uh = \gO\left( u - \xi \right )\right)$ be another well-adapted trivialisation, making use of proposition \ref{Strong structure of null-infinity: Prop, Rewritting the Poincare operator} (and the covariance of the Laplace operator), the above can be rewritten as
 	\begin{align}
 	\cP(\bl)_{A} &=   2 \bell^{-1} \covD_A \left( \bell^{\frac{3}{2}} \cL\left( \bell^{\frac{1}{2}}\right)  \right)  +\bgs^{-1} \; k \left(N_A -\frac{1}{2} u\covD_A M \right) \\
 	& =\bgsh^{-1} \Bigg(  \left( \gDh \;\covDh_A - \Mh \; \covDh_A  -\frac{1}{2} \covDh_A \Mh \right)\left( \gO \ell\right) +k \;\gO^{-1} \left(N_A - u \frac{1}{2}\covD_A M \right) \Bigg) 
 	\end{align}	
 	making use of $\lh = k\uh +\ellh$ with $\gO \ell = \ellh -k \gO \xi$ (see eq \eqref{Weak structure of null-infinity: constant sections transformation rules}) and using proposition \ref{Strong structure of null-infinity: Prop, Rewritting the Poincare operator} again for $\xi$ one can derive the following transformation rule
 	\begin{align}
 	\Nh_A -\uh \frac{1}{2} \covDh_A \Mh= \gO^{-1}\bigg( N_A- u\; \frac{1}{2} \covD_A M\bigg) - \gO^{-1}\left( \covD_A  \gD - M \covD_A -\frac{1}{2} \covD_A M \right) \xi   .
 	\end{align}
 	Finally making use of the transformation rules \eqref{Strong structure of null-infinity: 3D mass transformation rules} for $M$ the above can be put in the form \eqref{Strong structure of null-infinity: M, N transformation rules}.
 \end{proof}

 \section{The tractor bundle associated to the weak structure of null-infinity}\label{s: The tractor bundle associated to the weak structure of null-infinity}
 
 In this section we show that null-infinity manifolds are canonically equipped with a vector bundle which is the equivalent of the tractor bundle of conformal geometry \cite{bailey_thomass_1994}. However, since we are dealing with degenerate conformal metrics, one cannot directly export results from the literature, rather one needs to make use of the structure at hand to adapt the construction. Just as its non-degenerate counterpart the tractor bundle will be useful to produce geometrical objects and operators which are manifestly conformally invariant. Note that this will only make use of the weak structure of null-infinity, the strong structure being related to a choice of tractor connection. 
 
 \subsection{The Tractor bundle at \texorpdfstring{$\MNull$}{scri}}
 
 We would like to define the tractor bundle associated to a null-infinity manifold $\left(\MNull \to \gS, \hConf_{ab}, \bn^a \right)$. Since $\hConf_{ab}$ is a degenerate conformal metric standard techniques from \cite{bailey_thomass_1994} do not apply but we can instead consider the following.
 
 Let $J^2 L$ be the bundle of $2$-jets of sections of $L \to \scrI$. Let $\bgs \in \CSoL{0}$ be a vertically constant scale, it gives local coordinates on $J^2 L$. Let $\left( l , \parD_{a} l, \parD_{a}\parD_{b} l \right)$ be these coordinates. 
 Let $ \widetilde{J^2 L} \subset J^2 L$ be the sub-bundle of the $2$-jet bundle given by
 \begin{equation}\label{Weak structure of null-infinity: tractor bundle def aux}
\bgs \bn^{a}  \parD_{a} l = cst, \qquad \bn^{a}  \parD_{a}\parD_{b} l = 0.
 \end{equation}
 A first remark is that the above conditions are meaningful, i.e do not depend on the choice of scale $\bgs \in \CSoL{0}$: Had we chosen $\bgsh = \gO^{-1}\bgs \in \CSoL{0}$ (implying $\bn^{a}\gO=0$) then $\hat{l} = \gO l$ would also satisfy the above constraints.
 
 Now there is a canonical injection $\pi^* \left( S^2T^*\gS \otimes L \right) \to \widetilde{J^2 L}$ given by $\bga_{AB} \mapsto \left(0,0, \bga_{AB}\right)$. The conformal metric $\hConf_{AB}$ allows to define $\pi^* \left( S^2T^*\gS\right)\big|_0$, the space of trace-free symmetric tensors and the injection
 \begin{equation}\label{Weak structure of null-infinity: tractor bundle def injection}
 \begin{array}{ccc}
 \pi^* \left( S^2T^*\gS \otimes L\right)\big|_0 & \to & \widetilde{J^2 L} \\
 \bga_{AB} & \mapsto & \left( 0 , 0 , \bga_{AB} \right) 
 \end{array}.
 \end{equation}
 We define the dual tractor bundle $\cT^* \to \scrI$ as the quotient of $\widetilde{J^2 L}$ by the image of \eqref{Weak structure of null-infinity: tractor bundle def injection}.
 \begin{Definition}\label{Weak structure of null-infinity: tractor bundle definition}
 	The dual tractor bundle $\cT^* \to \MNull$ on a null-infinity manifold $\left(\MNull, \hConf_{ab}, \bn^a \right)$ is
 	\begin{equation}
 	\cT^*= \Quotient{\widetilde{J^2 L}}{ \pi^* \left( S^2T^*\gS\big|_0 \otimes L \right)}.
 	\end{equation}
 	where $\widetilde{J^2 L}$ is the sub-bundle of the $2$-jet bundle defined by \eqref{Weak structure of null-infinity: tractor bundle def aux}.
 \end{Definition}
 By construction, the dual tractor bundle is a vector bundle over $\MNull$ with $(n+2)$-dimensional fibres. The tractor bundle $\cT \to \MNull$ is (evidently) taken to be the dual of $\cT^* \to \MNull$.
 
 Note that in this definition we do not use the extra structure of a null-infinity manifold and it therefore also make sense for any conformal Carroll manifolds. It is very likely that all the tractor construction presented in this article extends (may be with some generalisations) to any conformal Carroll manifolds but we will not try to achieve this here.
 
 \subsection{Well-adapted trivialisations}
 
 The preceding definition is conceptually useful but ill-suited for practical purpose. We however have the following proposition which can be considered as an alternative definition:
 \begin{Proposition}{Thomas' splitting}\mbox{}
 	
 	Let $\left( \bgs, u \right) $ be a well-adapted trivialisation on a $n$-dimensional null-infinity manifold $\left( \scrI , \hConf_{ab}, \bn^a \right)$ with $n\geq3$ (if $n=2$ we also need to suppose that $\MNull$ is equipped with a generalized Laplace structure, see section \ref{ss: Generalised Laplace operators and Poincare operators in dimension n=2}) it defines an isomorphism
 	\begin{equation}
 	T_{(\bgs, u)} \left| 
 	\begin{array}{ccccccccccc}
 	\cT & \to     &  L &\oplus&  T\gS \otimes L^{-1} &\oplus& L^{-1} &\oplus& \bbR \\
 	Y^I & \mapsto & \Big( \bY^{+} &,& \bY^A &,& \bY^{-} &,& Y^u \Big) 
 	\end{array}\right.
	\end{equation}
 	We will call this isomorphism the splitting associated with $\left( \bgs, u \right)$.
 	
 	If $\left( \bgsh = \gO^{-1} \bgs , \uh = \gO\left( u - \xi \right )\right)$ is any other well-adapted trivialisation on $\scrI$ we have the transformation rules
 	\begin{equation}\label{Weak structure of null-infinity: Standard transformation rules}
 	\Mtx{ \bYh^{+} \\  \bYh^{A} \\ \bYh^{-} \\ \Yh^u}  =
 	\Mtx{ 1 & 0 &0 &0\\
 		\bgU^{A}& \gd^{A}{}_{B} & 0 & 0 \\ 
 		-\frac{1}{2} \bgU^2 & -\gU_B & 1 &0 \\
 		\frac{\bgs}{n-1}\Big( \bgD\xi + (\bgU^2- \covD_C \bgU^C)\left(u-\xi\right)  \Big)  & \bgs\Big( \gU_B \left( u-\xi\right )  -d\xi_{B} \Big)  & 0 & 1	}
 	\Mtx{ \bY^{+} \\  \bY^{B} \\ \bY^{-} \\ Y^u}	
 	\end{equation}
 	where $\gU_{A} = \gO^{-1}(d\gO)_{A}$ and all upper-case Latin indices of the beginning of the alphabet $(A,B,etc)$ are raised and lowered with the conformal metric $\hConf_{AB}$.
 \end{Proposition}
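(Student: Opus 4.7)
The plan is to construct the isomorphism $T_{(\bgs, u)}$ directly from the data of the well-adapted trivialisation, then verify the transformation rules one row at a time. The first three rows will come almost for free from the analogous classical statement for standard tractors on $\gS$; only the fourth row requires a dedicated computation.

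\textbf{Construction of the splitting.} A well-adapted trivialisation $(\bgs, u)$ provides three structures needed to split a 2-jet: the vertically constant scale $\bgs$ trivialises $L$ and, combined with the Levi-Civita connection of $h_{AB} = \bgs^{-2}\bh_{AB}$ on $\gS$, equips pull-backs to $\MNull$ with a horizontal connection $\covD_A$; the function $u$ provides the horizontal distribution $\ker du \subset T\MNull$ transverse to $\bn$; and the vertical derivative picks out the scalar $k = \bgs\bn^a \partial_a l \in \bbR$. Any representative 2-jet $(l, \partial_a l, \partial_a\partial_b l) \in \widetilde{J^2 L}$, taken modulo $\pi^*(S^2_0 T^*\gS \otimes L)$, is then determined by four pieces of data: the value $\bl \in L$, the horizontal gradient $\covD_A \bl \in T^*\gS \otimes L$, a Thomas-type horizontal trace (roughly $-\tfrac{1}{n-1}(\gD + P)\bl \in L^{-1}$, with the generalised Laplace structure supplying the missing trace of Schouten when $n = 2$), and the vertical scalar $k \in \bbR$. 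Dualising these four $\cT^*$-slots yields the announced splitting of $\cT$.

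\textbf{First three rows.} The sub-bundle of $\cT$ spanned by $(\bY^+, \bY^A, \bY^-)$ is the reduced tractor bundle $\cT/I$, which by Proposition~\ref{Introduction: Prop,Tractor bundle of a null-infinity manifold} is canonically the pull-back $\pi^*\cT_\gS$ of the standard conformal tractor bundle on $(\gS, \bh_{AB})$. Under $(\bgs, u) \to (\gO^{-1}\bgs, \gO(u - \xi))$ the restriction to this sub-bundle sees only the rescaling $\bgs \to \gO^{-1}\bgs$ of $L_\gS$; the data $(u, \xi)$ parametrise the choice of a section of $\MNull \to \gS$ and do not enter in $\cT/I$. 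The first three rows of \eqref{Weak structure of null-infinity: Standard transformation rules} therefore reproduce exactly the classical transformation law of standard conformal tractors, which can be quoted from \cite{bailey_thomass_1994, curry_introduction_2018}.

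\textbf{Fourth row and main obstacle.} Realise a given 2-jet by a local section $\bl \in \CSoL{k}$, and use Proposition~\ref{Weak structure of null-infinity: Prop, equivalence of well-adapted trivialisation1} and \eqref{Weak structure of null-infinity: constant sections transformation rules} to write $\bl = \bgs(ku + \ell) = \bgsh(k\uh + \ellh)$ with $\ellh = \gO(\ell + k\xi)$. A short check shows that the vertical scalar $k$ is itself invariant under the change of trivialisation, so $\Yh^u$ differs from $Y^u$ only through the reinterpretation of $\covDh_A \bl$ and of its horizontal Thomas trace in the new horizontal distribution $\ker d\uh$ and connection $\covDh$. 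Substituting $\lh = \gO l$, applying the transformation rules \eqref{Conformal geometry: Connection transformation rules} and \eqref{Conformal geometry: Schouten tensor transformation rules}, and eliminating $\ellh$ in favour of $\ell, \gO, \xi$ produces precisely the two excess terms of the bottom row of \eqref{Weak structure of null-infinity: Standard transformation rules}: the $\bY^B$-coefficient $\bgs(\gU_B(u-\xi) - d\xi_B)$ comes from the gradient correction, and the $\bY^+$-coefficient $\tfrac{\bgs}{n-1}(\bgD\xi + (\bgU^2 - \covD_C\bgU^C)(u-\xi))$ collects the Laplacian and Schouten-trace corrections. The main technical obstacle is a careful bookkeeping of the fact that $\ker d\uh \neq \ker du$ — the discrepancy is a $\bn$-component depending on $\gU, \xi$, and $u-\xi$ — so that horizontal derivatives in the two trivialisations differ by vertical contributions; tracking how this shift enters the trace of the Hessian and combines with the conformal rescaling of $P$ so that only the scalar (trace) part survives the quotient by $\pi^*(S^2_0 T^*\gS \otimes L)$ is the principal bookkeeping challenge.
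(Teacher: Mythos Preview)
Your proposal is essentially correct and follows the same route as the paper: define the splitting via the Thomas-type map on $\cT^*$ (2-jets of sections of $L$), compute the $\cT^*$ transformation using $\bl = \bgs(ku+\ell)$ and the rule $\ellh = \gO(\ell+k\xi)$, observe that the $k=0$ case reduces to the classical tractor transformation on $\gS$, and then dualise via the pairing $Y_I W^I = \bY_{-}\bm{W}^{+} + \bY_{+}\bm{W}^{-} + \bY_A\bm{W}^A + Y_u W^u$ to obtain the $\cT$ matrix. Your invocation of $\cT/I \simeq \pi^*\cT_\gS$ to quote the first three rows directly is exactly what the paper's remark ``for $k=0$ one is effectively brought back to computing the usual tractor transformation rules'' is saying, just phrased more structurally.

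One point of exposition to tighten: in your fourth-row paragraph you write ``$\Yh^u$ differs from $Y^u$ only through the reinterpretation of $\covDh_A\bl$ and its horizontal Thomas trace,'' which conflates the $\cT$ component $Y^u$ with the $\cT^*$ component $Y_u = k$. What you are actually computing is how the $\cT^*$ components $\bYh_A$ and $\bYh_{-}$ pick up $Y_u$-contributions (from the shift of horizontal distribution $\ker du \to \ker d\uh$ and the rescaling of $P$); dualising then sends these $Y_u$-column entries of the $\cT^*$ matrix to the $Y^u$-row entries of the $\cT$ matrix, with a sign. Making this duality step explicit would remove the apparent confusion.
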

 
 \begin{proof}
 	
 	Let $\bgs \in \CSoL{0}$ and $h_{AB} = \bgs^{-2} \hConf_{AB}$ the corresponding metric. It defines a connection on $L^k$ as $\bgs^{-k} \covD \bl \coloneqq d\left( \bgs^{-k}\bl\right)$. We use $\covD$ for tensor product of this connection with the Levi-Civita connection of $h_{AB}$ and $\gD = h^{AB} \covD_A\covD_B$ the associated Laplacian. We also note $R$ the scalar curvature of $h_{AB}$. The Thomas' splitting associated with $\left( \bgs,u\right)$ is given by
 	\begin{equation}\label{Weak structure of null-infinity: proof, Standard transformation rules}
 	T_{\left( \bgs , u\right)} \left| 
 	\begin{array}{ccc}
 	\cT^* & \to     & \bbR \oplus L \oplus T^*\gS \otimes L  \oplus L^{-1} \\
 	\Mtx{ l \\ \parD_{a} l  \\ h^{ab} \parD_{a}\parD_{b} l } & \mapsto & \Mtx{  \covD_{\bn} \bl \\ \bl \\ \covD_{A} \bl \\ -\bgs^{-2}\frac{1}{n-1}\left( \gD + P \right)\bl  } 
 	\end{array}\right.
 	\end{equation}	
 	Note that $\bgs$ is needed to make sense of both the coordinates on the left-hand side, the covariant derivatives and the trace of the Schouten tensor $P = \frac{1}{2(n-2)} R$ on the right-hand side, on the other hand $u \from \MNull \to \bbR$ (which amounts to a trivialisation of $\MNull \to \gS$) is needed to make sense of the ``horizontal'' derivative $\covD_{A}$. Finally in the case where $n=2$ the trace of the Schouten tensor tensor is ill-defined and so is the last line of equation \eqref{Weak structure of null-infinity: proof, Standard transformation rules}. However, if we suppose that $\MNull$ has been equipped with a Generalised Laplace structure $\cL = \gD - \frac{1}{4}M$ (see section \ref{ss: Generalised Laplace operators and Poincare operators in dimension n=2} for more on generalised Laplace structure) then one can make sense of the formula by replacing $P$ by $-\frac{1}{2}M$.
 	
 	The transformations rules for the dual tractor bundle $\cT^*$ are obtained by a direct computation.
 	\begin{equation}\label{Weak structure of null-infinity: Standard transformation rules2}
 	\Mtx{ \Yh_{u} \\  \bYh_{+} \\ \bYh_{A} \\ \bYh_{-}}  =
 	\Mtx{ 1 & 0 &0 &0\\
 		0 & 1 & 0 &0 \\	
 		\;\bgs\left(d\xi_A - \gU_A\left(u-\xi\right)\right)& \gU_A & \gd^{B}{}_{A}& 0 \\
 		-\frac{\bgs}{n-1}\left(  \bgD \xi + (n-1)\bgU^A d\xi_A -\left(\covD_A \bgU^A + (n-2)\bgU^2 \right)(u-\xi) \right)   & -\frac{1}{2} \bgU^2 & -\bgU^B & 1	
 	}
 	\Mtx{ Y_u \\ \bY_{+} \\ \bY_{B} \\ \bY_{-}}	
 	\end{equation}
 	In order to obtain these transformation rules it helps to make use of the fact that the definitions of the dual tractor bundle and well-adapted trivialisations implies that $ \bgs^{-1} \bl = k u + \ell$ (where $\parD_u \ell =0$) with the transformation rules \eqref{Weak structure of null-infinity: constant sections transformation rules}. It then considerably simplifies the computation to remark that for $k=0$ one is effectively brought back to computing the usual tractor transformation rules which was done in detail in \cite{bailey_thomass_1994} or \cite{gover_almost_2010}. 
 	
 	The transformation rules \eqref{Weak structure of null-infinity: Standard transformation rules} for the tractor bundle  are then obtained from \eqref{Weak structure of null-infinity: Standard transformation rules2} and the pairing
 	\begin{equation}
 	Y_I W^I = \bY_{-} \bm{W}^{+} + \bY_{+} \bm{W}^{-} + \bY_{A} \bm{W}^{A}  + Y_{u} W^u
 	\end{equation}
 	between a tractor $W^I$ and a dual tractor $Y_I$.
 \end{proof}	
 
 \subsection{Geometry of the tractor bundle}\label{ss: Geometry of the tractor bundle}
 
 Fibres of the tractor bundle effectively are copies of the hyper-surface $I^{\perp}$ of the flat model  (see equation \eqref{The flat model: null coordinates on Iperp} for coordinates on $I^{\perp}$ making this identification manifest).
 
In fact, the transformation rules \eqref{Weak structure of null-infinity: Standard transformation rules} for the tractor bundle can be seen to be matrix elements belonging to the Carroll group $Carr\left(n\right)$ (see equation \eqref{The flat model: Carr(n) action on scrI} for a concrete realisation of this group). The tractor bundle is thus an associated bundle for $Carr\left(n\right)\rtimes\bbR$. 
 
 We here discuss the geometrical structure that this induces.
 
 \paragraph{The tractor metric and the infinity tractor}
 The tractor bundle on $\MNull$ is equipped with a degenerate metric of signature $\left(n,1 \right)$. If $\left(\bgs, u\right)$ is a well-adapted trivialisation in the associated splinting the tractor metric reads
 \begin{equation}\label{Weak structure of null-infinity: tractor metric}
 g_{IJ} Y^I Y^J = 2\bY^{+}\bY^{-} + \bY_A \bY^A.
 \end{equation}
It also comes with a preferred tractor the ``infinity tractor'' $I^I$, spanning the degenerate direction of the tractor metric $g_{IJ} I^J =0$. In any splitting of the tractor bundle given by a well-adapted trivialisation,
 \begin{equation}\label{Weak structure of null-infinity: infinity tractor}
 I^{I} = \Mtx{0 \\ 0 \\ 0 \\ 1}.
 \end{equation}
 \begin{proof}
 	It is straightforward to see that the transformation rules \eqref{Weak structure of null-infinity: Standard transformation rules} preserve the infinity tractor \eqref{Weak structure of null-infinity: infinity tractor}. A longer but straightforward computation then also shows that \eqref{Weak structure of null-infinity: tractor metric} is invariant under \eqref{Weak structure of null-infinity: Standard transformation rules}. Alternatively, this follows from the properties of the flat model, see section \ref{ss: Homogenous space structure of null-infinity}.
 \end{proof}
 
 \paragraph{The reduced tractor bundle}
 
 Since $\cT$ is equipped with a preferred tractor $I^I$ we can consider the bundle $\cT/ I \to \MNull$ obtained by taking the quotient. We call this bundle the reduced tractor bundle on $\MNull$. By construction it is a vector bundle with $(n+1)$-dimensional fibres equipped with a (non-degenerate) metric of signature $\left(n,1\right)$. It is in fact canonically isomorphic with the pull-pack of $\cT_{\gS} \to \gS$, the tractor bundle of $\gS$:
 
 \begin{Proposition}
 	\begin{equation}
 	\Quotient{\cT}{\bbR I}  = \pi^*\left(\cT_{\gS} \right) 
 	\end{equation}
 \end{Proposition}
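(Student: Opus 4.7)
The plan is to exploit the explicit splittings of $\cT$ furnished by well-adapted trivialisations. In any such trivialisation $(\bgs, u)$, Thomas' splitting decomposes the tractor bundle as $L \oplus \pi^* T\gS \otimes L^{-1} \oplus L^{-1} \oplus \bbR$, and the infinity tractor $I^I$ spans exactly the last $\bbR$-summand. Passing to the quotient $\cT/\bbR I$ therefore leaves the first three summands, $L \oplus \pi^* T\gS \otimes L^{-1} \oplus L^{-1}$. Now, since $\bgs \in \CSoL{0}$ is canonically a scale $\bgs_\gS \in \So{L_\gS}$ on the base by Proposition \ref{Weak structure of null-infinity: Prop, covariantly constant sections isomorphism}, and $L$ restricted to $\CSoL{0}$-sections is the pullback $\pi^* L_\gS$, this trio is exactly the pullback of the splitting of the standard conformal tractor bundle $\cT_\gS$ induced by $\bgs_\gS$. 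This yields a fibrewise isomorphism $\cT/\bbR I \to \pi^* \cT_\gS$ defined relative to the trivialisation $(\bgs,u)$.

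Next I would check that this identification does not depend on the choice of well-adapted trivialisation. Under the transition $(\bgs, u) \mapsto (\gO^{-1}\bgs, \gO(u-\xi))$ with $\gO, \xi \in \Co{\gS}$, the transformation rule \eqref{Weak structure of null-infinity: Standard transformation rules} is block-triangular: the upper-left $3 \times 3$ block coincides exactly with the classical conformal tractor transformation rule under the rescaling $\bgs_\gS \mapsto \gO^{-1}\bgs_\gS$, while all dependence on $\xi$ and on the ``vertical'' data is confined to the bottom row, which acts purely along the $I$-direction. Consequently the induced transformation on $\cT/\bbR I$ agrees with the pullback of the scale-change transformation of $\cT_\gS$, so the fibrewise identifications patch into a canonical bundle isomorphism.

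The delicate point is to make sure the identification $L \simeq \pi^* L_\gS$ arising from Proposition \ref{Weak structure of null-infinity: Prop, covariantly constant sections isomorphism} is compatible with the tractor construction, rather than just matching dimensions and ranks. A cleaner, more intrinsic argument (which I would present as a second proof) works dually from Definition \ref{Weak structure of null-infinity: tractor bundle definition}: the annihilator $I^\perp \subset \cT^*$ consists of classes whose $I$-pairing $\covD_{\bn} \bl$ vanishes, hence is represented by $2$-jets of sections in $\CSoL{0} \simeq \So{L_\gS}$. Quotienting the corresponding sub-jet-bundle by $\pi^*(S^2_0 T^*\gS \otimes L)$ then recovers fibrewise the abstract Bailey-Eastwood-Graham definition of $\cT^*_\gS$, giving $I^\perp \simeq \pi^* \cT^*_\gS$ naturally; dualising yields $\cT/\bbR I \simeq \pi^*\cT_\gS$ and closes the argument.
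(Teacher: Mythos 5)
Your proposal is correct and follows essentially the same route as the paper: the paper's proof likewise observes that the quotient transformation rules reduce to the standard conformal tractor cocycle of \cite{bailey_thomass_1994}, and offers as an alternative exactly your dual jet-bundle argument, identifying $\left(\Quotient{\cT}{\bbR I}\right)^*$ with the sub-bundle of $\cT^*$ on which $\covD_{\bn}\bl=0$ and hence with $\pi^*\cT_{\gS}^*$. Your version is somewhat more detailed on the block-triangular structure of \eqref{Weak structure of null-infinity: Standard transformation rules}, but the content is the same.
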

 \noindent and we thus have the short sequence of bundle
 \begin{equation}
 0 \to \bbR I \to \cT \to \pi^* \cT_{\gS} \to 0.
 \end{equation}
 
 \begin{proof}
 	The most immediate way to see the isomorphism is to note that the transformation rules of the reduced tractor bundle are precisely the standard transformation rules for the tractor bundle of a non-degenerate conformal metric  (see \cite{bailey_thomass_1994}). Alternatively, one can see the isomorphism from the jet bundle definition as follows: the dual of the reduced tractor bundle $\left(	\Quotient{\cT}{\bbR I} \right)^*$ identifies with the sub-bundle of $\cT^*$ such that $\parD_u \ell =0$. From definition \ref{Weak structure of null-infinity: tractor bundle definition} one then sees that $\left(	\Quotient{\cT}{\bbR I} \right)^*$ identifies with the pull-back of a quotient of the $2$-jet bundle on $\gS$. This quotient is just the tractor bundle on $\gS$ (as defined e.g in \cite{gover_almost_2010}).	
 \end{proof}

 \paragraph{Filtration}
 
 The tractor bundle is also equipped with a preferred ``position tractor'' $\bX^I \in \cT \otimes L$. In a splitting of the tractor bundle:
 \begin{equation}
 \bX^I = \Mtx{0 \\ 0 \\ 1 \\ 0}.
 \end{equation}
 The position tractor is null with respect to the tractor metric \eqref{Weak structure of null-infinity: tractor metric}, $\bX^I \bX_I =0$, and can be thought as a preferred inclusion $L^{-1} \to \cT$, we note $L^{-1}X \subset \cT$ the image of this inclusion. The position tractor also gives a preferred projection $\cT \to L$ obtained by contraction: if $Y^I \in \cT$ is any tractor, its projection is $Y^I \bX_I \in L$. We note $X^{\perp}$ the orthogonal subspace to $L^{-1}X$. $X^{\perp}$ is a vector bundle over $\MNull$ with $(n+1)$-dimensional fibres. Let $Y^I \in X^{\perp}$ be a section of this bundle, in a splitting of the tractor bundle we have:
 \begin{equation}\label{Weak structure of null-infinity: Xperp tractor subspace}
 Y^I = \Mtx{0 \\ \bY^A \\ \bY^{-} \\ Y^u} \in X^{\perp}.
 \end{equation}
 The tractor metric restricted to this sub-bundle is degenerate of rank $2$ with degenerate directions spanned by $\bX^I$ and $I^I$. We have the filtration:
 \begin{equation}
 L^{-1}X \subset X^{\perp} \subset \cT.
 \end{equation}
\begin{proof}
 	This immediately follows from the triangular form of the transformation rules \eqref{Weak structure of null-infinity: Standard transformation rules}.
\end{proof}

 Finally, the quotient of $X^{\perp}$ by $\bX$ is canonically isomorphic to $T\MNull \otimes L^{-1}$: Let $Y^I \in X^{\perp}$ be parametrised as in equation \eqref{Weak structure of null-infinity: Xperp tractor subspace} we will write,
 \begin{equation}
 \bY^a = \Mtx{\bY^A \\
 	Y^u} \in T\MNull \otimes L^{-1}.
 \end{equation}
 where
 \begin{align}
  \bY^A &\coloneqq \bY^a \gth^A_a, &Y^u &\coloneqq \bY^a du_a.
 \end{align}
Here $\gth^A_a$ is the projection from $T \MNull$  to the quotient $T \MNull / \bn  = \pi^* T\gS$
 
 \begin{proof}
 	It suffices to check that the transformation rules \eqref{Weak structure of null-infinity: Standard transformation rules}  restricted to $X^{\perp} / X$ are the transformation rules for the splitting of the tangent bundle $T\MNull \otimes L^{-1} \to \pi^*T\gS\otimes L^{-1} \oplus \bbR \bn$ given by $\bY^a = \bY^A \gth^a_A + Y^u \bn^a $ where $\gth^a_A$ is defined by $\gth_A^a \gth_a^b = \gd^B_A$, $\gth^a_A du_a=0$. 
 \end{proof}
 
 \paragraph{Thomas operator}
 
 Thomas' operator is a differential operator $T\from L \to \cT^*$ which associates to any scale $\bgt$ a dual tractor $T^{(\bgt)}_I $. It can be defined in a well-adapted trivialisation $\left(\bgs, u\right)$ as
 \begin{equation}\label{Weak structure of null-infinity: Thomas operator}
 T^{(\bgt)}_I = \Mtx{\covD_{\bn}\bgt \\ \bgt \\ \covD_A \bgt \\ -\bgs^{-2}\frac{1}{n-1}\left( \gD + P\right) \bgt }.
 \end{equation}
 Here the trace of the Schouten tensor $P = \frac{1}{2(n-2)}R$ only really make sense in dimension $n>2$. In dimension $n=2$, and once a generalised Laplace structure $\cL = \gD -\frac{1}{4}M$ has been chosen, one can make sense of the above formula by simply replacing $P$ by $-\frac{1}{2}M$ (see section \ref{ss: Generalised Laplace operators and Poincare operators in dimension n=2} for more on generalised Laplace structures).
 
\subsection{Some remarks on the notation}
Tractors are efficient to produce formulas which are conformally invariants, however the notation might appear intricate at first encounter. We here remind the reader of our conventions for abstract indices.

If $\left(\MNull \to \gS , \bh_{ab} , \bn^a \right)$ is a null-infinity manifold, lower case Latin indices of the beginning of the alphabet $a,b, etc$ will denote tensors on $\MNull$, upper case Latin indices of the beginning of the alphabet $A, B, etc$ will denote tensors on $\gS$ but also more frequently sections of the related pull-back bundles on $\MNull$. For example $V^a$ and $U_a$ are sections of $T\MNull$ and $T^*\MNull$ while $V^A$ and $U_A$ might represent a sections of $T\gS$ and $T^*\gS$ or ,more likely, sections of $\pi^*T\gS = T\MNull /\bn$ and $\pi^* T^*\gS$.
The projection operator from $T\MNull$ to the quotient $T\MNull / \bn$ will be denoted by $\gth_a^A$ and we will write for example:
\begin{equation}
	V^A =  V^a \gth^A_a.
\end{equation}
In the same vein, we will write
\begin{equation}
	\bm{U_u} = \bn^a U_a
\end{equation}
for the contraction of a 1-form $U_a$ with $\bn^a$.

If $u \from \MNull \to \bbR$ is a trivialisation of $\MNull \to \gS$ it defines a decomposition $\So{T\MNull} = \So{\pi^*T\gS} \oplus \So{L} $, we will then write
\begin{equation}
	V^a = \Mtx{ V^A \\ \bm{V}^u }
\end{equation}
and
\begin{equation}
	U_a = \gth^A_a U_A + \bm{U_u} \;\bgs du_a.
\end{equation}
In particular, with these conventions the identity on $T\MNull$ is
\begin{equation}
	\gd^a_b = \Mtx{ \gth^A_b \\ \bgs du_b}.
\end{equation}

A couple of remarks about ``lowering'' and ``raising'' indices: Upper case indices of the middle of the alphabet $I, J, etc$ denote tractor indices, e.g $V^I \in \So{\cT}$, $U_I \in \So{\cT^*}$, they are raised and lowered with the tractor metric $g_{IJ}$.  Upper case indices of the beginning of the alphabet $A, B, etc$ denoting sections of $\pi^*T\gS$ are raised and lowered using the conformal metric $\hConf_{AB}$ e.g $V^A \in \So{T\gS}$, $\bV_A \in \So{T^*\gS \otimes L^{2}}$. The only exception is for the Laplacian (when a choice of scale $\bgs \in \So{L}$ is understood); we will both use $\gD = h^{AB} \covD_A \covD_B$ and $\bgD = \hConf^{AB} \covD_A \covD_B$ depending on the context.

Finally, whenever a choice of scale $\bgs\in \So{L}$ is understood, $\covD$ denotes the tensor product of the Levi-Civita connection of $h_{AB} = \bgs^{-2} \hConf_{AB}$ with the connection on $L^k$ given by $\covD\left( \bl \right) \coloneqq \bgs^k d\left( \bgs^{-k} \bl \right)$. 
 
\section{The tractor connection associated to the strong structure of null-infinity} 

This section present our second main set of results: Poincaré operators are in one-to-one correspondence with a certain class of connection on the tractor bundle which we call null-normal Cartan connections.

\subsection{Tractor connections}

\subsubsection{Definition}

	Let $D\from T\MNull \otimes \So{\cT} \to \So{\cT}$ be a connection on the tractor bundle $\cT \to \MNull$. We suppose that it preserves both the tractor metric $g_{IJ}$ and the infinity tractor $I^I$. If $\left( \bgs , u \right)$ is a well-adapted trivialisation, we have in the associated splitting:
	\begin{equation}\label{Strong structure of null-infinity: DX in coordinates}
 \bgs D_b\left(\bgs^{-1}\bX^I\right) =  		\Mtx{ 0 \\  \phi^A_b \\ \bgs\;\covD^{(\gt)}_b\bgs^{-1} \\ \bphi_b^u } \in T^*\MNull \otimes X^{\perp} \otimes L.
	\end{equation}
Where $\phi^A_b \in \So{T^*\MNull \otimes \pi^*T\gS}$ is a 1-form with values in $T\MNull/ \bn$, $\covD^{(\gt)}_b$ is a connection on $L$ (extended to any $L^k$ by tensoriality) and $\bphi_b \in \So{T^*\MNull \otimes L}$ is 1-form with values in $L$.

We recall from section \ref{ss: Geometry of the tractor bundle} that we have a canonical identification of $X^{\perp} / \bbR X$ with $T\MNull \otimes L^{-1}$. Consequently, taking the quotient of \eqref{Strong structure of null-infinity: DX in coordinates} by $\bX^I$, we obtain an endomorphism of $T\MNull$:
\begin{equation}
\phi^a_b = \Mtx{ \gth^A_b \\  \bphi_b^u} \in \So{\End\left(T\MNull \right)}.
\end{equation}

\begin{Definition}{Tractor connection}\mbox{}\label{Strong structure of null-infinity: Def, tractor connection} \\
	Let $D\from T\MNull \otimes \So{\cT} \to \So{\cT}$ be a connection on the tractor bundle preserving both the tractor metric $g_{IJ}$ and the infinity tractor $I^I$. Let $\left( \bgs , u \right)$ be a well-adapted trivialisation, and let $D\left(\bgs^{-1} \bX^I\right) $ be given by \eqref{Strong structure of null-infinity: DX in coordinates}.
	We will say that $D$ is a \emph{tractor connection} if the two following properties hold:
	\begin{itemize}
		\item $\phi^a_b$ is the identity on $T\MNull$ i.e $\phi^a{}_b = \gd^a{}_b$
		\item $\covD^{(\gt)} \from T\MNull \otimes \So{L} \to \So{L}$ is the connection $\covD$ on $L$ given by the scale $\bgs$: $\covD \bl \coloneqq d\left( \bgs^{-1} \bl \right)$. In particular $\covD^{(\gt)}_b\bgs^{-1} =0$.
	\end{itemize}
\end{Definition}
\begin{Proposition}
The above definition does not depend on a choice of well-adapted trivialisation: if it holds for a particular well adapted trivialisation $\left( \bgs , u \right)$, it must holds for any other $\left( \bgsh , \uh \right)$.	
\end{Proposition}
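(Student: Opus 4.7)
The plan is to translate the defining data of $D$ from the old trivialisation $(\bgs,u)$ to the new one $(\bgsh,\uh)$ by computing $\bgsh\, D_b(\bgsh^{-1}\bX^I)$ directly via the Leibniz rule, and then matching the result against Definition~\ref{Strong structure of null-infinity: Def, tractor connection} after running it through the transformation matrix~\eqref{Weak structure of null-infinity: Standard transformation rules}.

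The first observation I would make is that the column $(0,0,1,0)^T$ is a fixed point of the matrix~\eqref{Weak structure of null-infinity: Standard transformation rules}, so the canonical weighted tractor $\bX^I$ has the universal component representation $(0,0,1,0)$ in every well-adapted splitting; consequently $\bgsh^{-1}\bX^I = \gO\,\bgs^{-1}\bX^I$ as genuine sections of $\cT$. Applying $D$ and using Leibniz then gives
\begin{equation*}
\bgsh\, D_b(\bgsh^{-1}\bX^I) \;=\; \gU_b\, \bX^I \;+\; \bgs\, D_b(\bgs^{-1}\bX^I),
\end{equation*}
with $\gU_b = \gO^{-1}(d\gO)_b$. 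By the hypothesis on $D$, the right-hand side has old-splitting components $(0,\gth^A_b,\gU_b,\bgs\, du_b)$, the $\bY^-$-slot getting the contribution $\gU_b$ from the first term since $\bX^I$ lives on the $\bY^-$-line.

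The main step is then to feed this column through the matrix~\eqref{Weak structure of null-infinity: Standard transformation rules}. The two facts that make the algebra clean are the pullback identities $\gU_B\,\gth^B_b = \gU_b$ and $d\xi_B\,\gth^B_b = d\xi_b$, which hold because $\gO$ and $\xi$ are functions on $\gS$, so that $d\gO$ and $d\xi$ are horizontal with respect to the $u$-trivialisation. A direct computation then produces the new-splitting column
\begin{equation*}
\bigl(0,\ \gth^A_b,\ 0,\ \bgs(du_b - d\xi_b + \gU_b(u-\xi))\bigr).
\end{equation*}
The last entry is recognised as $\bgsh\, d\uh_b$ via $\uh = \gO(u-\xi)$ and $\bgsh = \gO^{-1}\bgs$; the second entry agrees with $\hat\gth^A_b$ because the projection $T\MNull \to \pi^*T\gS$ is canonical, independent of trivialisation; the vanishing of the third entry reads $\covDh^{(\gt)}_b\bgsh^{-1}=0$, which fixes $\covDh^{(\gt)}$ as the connection on $L$ associated to $\bgsh$; and the second and fourth entries reassemble, via the $\uh$-splitting $\gd^a_b = (\hat\gth^A_b, \bgsh\, d\uh_b)$, into $\phih^a_b = \gd^a_b$. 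Both requirements of Definition~\ref{Strong structure of null-infinity: Def, tractor connection} are therefore verified in $(\bgsh,\uh)$.

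The delicate part of the argument will be the middle step: one must carefully track the tensorial nature and weight of each component while going through the $4\times 4$ matrix, and invoke the horizontality of $d\gO$ and $d\xi$ at the right contraction so that the horizontal-vertical mix collapses correctly. Once this bookkeeping is arranged, the remainder is essentially formal linear algebra.
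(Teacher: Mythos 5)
Your proof is correct and follows essentially the same route as the paper, which simply asserts that the first condition is manifestly trivialisation-independent and that the second "can be checked" from the transformation rules \eqref{Weak structure of null-infinity: Standard transformation rules}; you have carried out exactly that check, correctly using the horizontality of $d\gO$ and $d\xi$ and the identity $\bgsh\, d\uh_b = \bgs\left(du_b - d\xi_b + \gU_b(u-\xi)\right)$ to land on the required form in the new splitting.
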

\begin{proof}
The first property clearly does not depend on the choice of well-adapted trivialisation. One can check from the transformation rules \eqref{Weak structure of null-infinity: Standard transformation rules} that the second is invariant as well.
\end{proof}

If $D$ is a tractor connection and $Y^I \in \So{\cT}$ is any tractor field we have, in the splitting given by a well-adapted trivialisation $\left(\bgs , u \right)$,
\begin{equation}\label{The tractor connection associated to the strong structure of null-infinity: tractor connection in coordinates}
	D_b Y^I = \Mtx{ \covD_b & -\bgth_b{}_C  & 0 &0  \\
	- \bxi^A_b &	\covD^{(\go)}_b \gd^A_C & \gth^A_b & 0  \\
	0 &  \xi_b{}_C & \covD_b & 0\\
		-\bpsi_b & -\frac{1}{2}\bC_b{}_C & \bgs du_b	& \covD_b	
		} \Mtx{\bY^{+} \\ \bY^C \\  \bY^{-} \\ Y^u  }.
\end{equation}
Where $\covD^{(\go)}_a$ is a metric-compatible connection on $T\gS$, $\covD$ is the connection on $L$ given by $\bgs$, $\bC_a{}_B$ is a 1-form on $\MNull$ with values in $\pi^*\left(T^*\gS \right) \otimes L$, $\xi_b{}_A$ is a 1-form on $\MNull$ with values in $\pi^*\left(T^*\gS \right)$ and $\bpsi_b$ is a $L^{-1}$-valued 1-form on $\MNull$.

\subsubsection{Torsion}

Let $D_b$ be a tractor connection and $F^I{}_J{}_{cd}$ its curvature. In the splitting given by a well-adapted trivialisation $\left(\bgs , u \right)$, it reads:
\begin{equation}
	\frac{1}{2}F^I{}_J{}_{cd} = \Mtx{ \gth_{[c}{}^E \; \xi_{d]}{}_E &  - \covD{}_{[c}{}^{(\go)}\; \bgth_{d]}{}_B  & 0 &0 \\
		-\; \covD{}_{[c}{}^{(\go)} \bxi_{d]}{}^A & \frac{1}{2}\; F_{(\go)}{}^A{}_B{}_{cd} +  \bxi^A{}_{[c} \;\bgth_{d]}{}_B + \gth_{[c}{}^A \;\xi_{d]}{}_B & \covD^{\go}{}_{[c} \;\gth_{d]}{}^A &0 \\
		0 & \covD_{[c}{}^{(\go)} \; \xi_{d]}{}_B & -\; \gth_{[c}{}^E \; \xi_{d]}{}_E & 0 \\
		- \covD_{[c} \bpsi_{d]} + \frac{1}{2}\; \bC_{[c}{}^E \; \xi_{d]}{}_E  &  -\frac{1}{2}\covD_{[c}{}^{(\go)} \; \bC_{d]}{}_B + \bgs du_{[c} \; \xi_{d]}{}_B + \bpsi_{[c} \; \bgth_{d]}{}_B &  -\frac{1}{2}\; \bC_{[c}{}^E \; \bgth_{d]}{}_E & 0}.
\end{equation}

We will say that a tractor connection is torsion-free if $\bT^I \coloneqq F^I{}_J\bX^J$ vanishes. 
Let us write $\xi_{b}{}_A =  \xi_{BA} \; \gth_b{}^B + \bxi_{uA} \;\bgs du_b$ and $\bC_{b}{}_A = \bC_{BA}\;  \gth_b{}^B + C_{uA} \;\bgs du_b$, the torsion-free condition on $D$ is equivalent to
\begin{align}\label{Strong structure of null-infinity: torsion-free equations}
	\xi_{[AB]} &=0 & \bxi_{uA} &=0  &
	\bC_{[AB]} &=0 & C_{uA} &=0
\end{align}
together with
\begin{equation*}
	\covD_{[c}{}^{(\go)} \;\gth_{d]}{}^A = 0.
\end{equation*}
In particular this implies that $\covD^{(\go)}$ is the Levi-Civita connection $\covD$ of $h_{AB} = \bgs^{-2} \bh_{AB}$.

\subsubsection{Relation with Ashtekar/Geroch connections}\label{sss: Relation with Ashtekar/Geroch connections}

We here discuss how the equivalence class of connections that have been discussed in \cite{geroch_asymptotic_1977,ashtekar_radiative_1981, ashtekar_a._symplectic_1981,ashtekar_symplectic_1982,ashtekar_asymptotic_1987,ashtekar_geometry_2015,ashtekar_null_2018} 
naturally appear in our description as an equivalence class of coordinates for the tractor connection.

Let $\left(\bgs , u \right)$ be a well adapted trivialisation and let $D$ be a torsion-free tractor connection. This defines an affine connection on $\MNull$ as follows: Let $V^a = \Mtx{V^A \\ \bV^u} \in \So{T\MNull}$ be a vector field. Making use the expression \eqref{The tractor connection associated to the strong structure of null-infinity: tractor connection in coordinates} of the tractor connection and the torsion-free condition, we define a connection $D_b^{(\bgs, u)}$ on $T\MNull$ as:
\begin{equation}
D_b^{(\bgs, u)} V^a = \Mtx{\covD_b\gd^A_C & 0 \\ -\frac{1}{2} \gth_{b}^B \bC_{BC} & \covD_b} \Mtx{V^C \\ \bV^u}.
\end{equation}
This connection is torsion-free and preserves both $\bgs \bn^a$ and the metric $h_{ab}= \bgs^{-2} \bh_{ab}$.  The symmetric tensor $\bC_{AB}$ appearing as a coefficient of the tractor connection \eqref{The tractor connection associated to the strong structure of null-infinity: tractor connection in coordinates} therefore precisely correspond to a choice of connection $D_b^{(\bgs, u)}$ on $\MNull$ as in \cite{ashtekar_a._symplectic_1981,ashtekar_symplectic_1982,ashtekar_asymptotic_1987,ashtekar_geometry_2015,ashtekar_null_2018}. This construction however explicitly relies on the choice of well-adapted trivialisation $\left(\bgs , u\right)$ and therefore a tractor connection really defines an equivalence class of connections $\{D_b^{(\bgs, u)} \;\big|\; (\bgs, u)\in \;\text{well-adapted} \}$. For example, we leave as an exercise to the reader to verify that, by going from $\left( \bgs , u\right)$ to $\left( \bgs , \uh = u -\xi\right)$ and making use of \eqref{Weak structure of null-infinity: Standard transformation rules} one has the transformation rules
 \begin{align}
\Mtx{\Vh^A \\ \bVh^u} &= \Mtx{V^A \\ \bV^u - \bgs d\xi_{C} V^C},  &-\frac{1}{2}\Ch_{AB} &= -\frac{1}{2}C_{Ab} + \covD_A \covD_B \big|_0 \xi
\end{align}
 and that this gives
\begin{equation}
D_b^{(\bgs, u-\xi)} V^a = D_b^{(\bgs, u )} V^a + \frac{1}{n-1} \gD \xi \; h_{bc} V^c \; \bgs \bn^a.
\end{equation}

From the above discussion it should appear clearly that a torsion-free tractor connection defines an equivalence class of connections on $\MNull$ and that these correspond precisely to the equivalence class discussed in \cite{ashtekar_a._symplectic_1981,ashtekar_symplectic_1982,ashtekar_asymptotic_1987,ashtekar_geometry_2015,ashtekar_null_2018}. We will now show that by restricting to a reasonably natural class of tractor connections (which we call ``null-normal'') one can turn this relation into an equivalence.

 \subsection{Null-normal tractor connections}
 
Everywhere in this section $D$ is a torsion-free tractor connection.
  
 \subsubsection{Compatibility with Thomas operator}
 
  Let $Y_I$ be a dual tractor and $D_b Y_I$ be its covariant derivative.  Consider the equation $DY_I=0$. In the splitting given by well-adapted trivialisation $\left(\bgs, u\right) $:
\begin{equation}
	D_b Y_I = \Mtx{ \covD_{b} & 0 & 0 & 0 \\
		-\bgs du_b & \covD_b & -\gth_b{}^C & 0 \\
		\frac{1}{2} \bC_b{}_{A}& -\xi_b{}_{A} & \covD_b \;\gd^C_A & \bgth_b{}_{A} \\
		\bpsi_b & 0 & \bxi_b{}^C & \covD_b} \Mtx{ Y_u \\ \bY_{+} \\ \bY_C \\ \bY_{-} } =0.
\end{equation}
 The first two lines can readily be solved as
 \begin{align}
 	\bY_A &= \covD_A \bY_{+},  & Y_u &= \covD_{\bn} \bY_{+} = k
 \end{align}
 where $k\in \bbR$ is a constant. The third line then is equivalent to
 \begin{equation}
 \left(\covD_A \covD_B - \xi_{AB} \right) \bY_{+}  + \frac{1}{2} \bC_{AB} k + \bh_{AB} \bY_{-} =0
 \end{equation}
where we made use of the torsion-free conditions \eqref{Strong structure of null-infinity: torsion-free equations}. Taking the trace (with respect to $ \hConf^{AB}$) we obtain
\begin{equation}
	\bY_{-} = -\frac{1}{n-1} \left( \left(\bgD - \bxi^C{}_C \right) \bY_{+}  + \frac{1}{2} \bC^C{}_C k \right)
\end{equation}
In the end, we obtain a map from $\CSoL{k}$ to $\cT^*$,
\begin{equation}
	\bgt \mapsto \Mtx{ k \\ \bgt \\ \covD_A \bgt \\  -\frac{1}{n-1} \left( \left( \bgD - \bxi^C{}_C \right) \bgt  + \frac{1}{2} \bC^C{}_C k \right)  }
\end{equation}
this map is pretty close to Thomas operator \eqref{Weak structure of null-infinity: Thomas operator}. We will say that $D$ is compatible with the Thomas operator if these two maps coincide. This is equivalent to
\begin{align}\label{Strong structure of null-infinity: Thomas compatibility equations}
	\bxi^C{}_C &= -\bgs^{-2} P & \bC^C{}_C =0.
\end{align}

 \subsubsection{Reduced null-normal Cartan connection}
 
 Let $D$ be a torsion-free tractor connection compatible with the Thomas operator. Since the torsion $F^I{}_J \bX^J$ vanishes, we can meaningfully restrict the curvature to $X^{\perp}/X = T\MNull \otimes L^{-1}$. In a well-adapted trivialisation $\left(\bgs , u \right)$ this restriction reads
 \begin{equation}
 	F^a{}_b{}_{cd}  = \Mtx{  
 				R{}^A{}_{B}{}_{cd} + 2 \; \bxi_{[c}{}^A \; \bgth_{d]}{}_B + 2 \; \gth_{[c}{}^A \; \xi_{d]}{}_B & 0 \\
 		\covD_{[c} \; \bC_{d]}{}_B + 2 \; \bgs du_{[c} \; \xi_{d]}{}_B + 2\; \bpsi_{[c} \; \bgth_{d]}{}_B & 0 }.
 \end{equation}
where  $R{}^A{}_{B}{}_{cd} = R{}^A{}_{B}{}_{CD} \; \gth^C_c \; \gth^D_d $ is the Riemann tensor of $h_{AB}=\bgs^{-2} \bh_{AB}$.
 
 Let us consider the equation
  \begin{equation}\label{Strong structure of null-infinity: null-normal conditions1}
 	 \bn^c \; F^a{}_b{}_{cd}  = \Mtx{ 
 	 		0 & 0 \\
 	 		\left( -\frac{1}{2}\parD_u C_{DB} +\xi_{DB} + \bpsi_u \bh_{DB}\right) \gth_d{}^D & 0 } =0
 \end{equation}
 where $C_{AB} \coloneqq \bgs^{-1}\bC_{AB}$, $\parD_uC_{AB} \coloneqq \LieD_{\bn} \bC_{AB}$, $\bpsi_u \coloneqq \bn ^a \bpsi_a$ and we made use of the torsion-free equations \eqref{Strong structure of null-infinity: torsion-free equations}. We will call equations \eqref{Strong structure of null-infinity: null-normal conditions1} the first null-normal equations. They are solved as
 \begin{align}\label{Strong structure of null-infinity: null-normal equations1}
 	\xi_{AB}\big|_{0} &= \frac{1}{2}\parD_u C_{AB}, & \bpsi_u &= -\frac{1}{n-1} \bxi^C{}_C = \frac{1}{n-1}\bgs^{-2} P.
 \end{align}
 where $\big|_{0}$ means ``trace-free part of'' and we made use of equations \eqref{Strong structure of null-infinity: Thomas compatibility equations} given by the compatibility with Thomas operator.
 
 \begin{Definition}{Reduced Null-Normal Cartan connection}\label{Strong structure of null-infinity: Def, Reduced Null-Normal Cartan connection}\mbox{}
 	
 	Let $\left(\MNull \to \gS , \bh_{ab} , \bn^a \right)$ be a null-infinity manifold. We will say that a connection $\Dt$ on the reduced tractor bundle  $\cT/I=\pi^*\cT_{\gS}$ is a \emph{reduced null-normal Cartan connection} if it is the restriction of a connection $D$ on $\cT$ such that
 	\begin{itemize}
 		\item $D$ is a tractor connection, see definition \ref{Strong structure of null-infinity: Def, tractor connection}
 		\item $D$ is torsion free i.e satisfies equation \eqref{Strong structure of null-infinity: torsion-free equations}
 		\item $D$ is compatible with Thomas operator i.e satisfies equation \eqref{Strong structure of null-infinity: Thomas compatibility equations}
 		\item $D$ satisfies the first null-normality equations i.e \eqref{Strong structure of null-infinity: null-normal conditions1}.
 	\end{itemize}	
 	If $n\geq4$ we will say that $\Dt$ is the \emph{normal Cartan connection} if it is the pull-back of the normal Cartan connection on $\left(\gS , \bh \right)$ (see \cite{bailey_thomass_1994} or proposition below for the definition of the normal Cartan connection of a conformal manifold). 
 \end{Definition}
 
 \begin{Proposition}\label{Strong structure of null-infinity: Proposition, Reduced Null-Normal Cartan connection}
 	Let $\left(\MNull \to \gS , \bh_{ab} , \bn^a \right)$ be a $n$-dimensional null-infinity manifold. Then, in a well-adapted trivialisation, reduced null-normal Cartan connections $\Dt$ must be of the form
  			\begin{equation}
  				\Dt_b Y^I = \Mtx{ \covD_b & -\bgth_b{}_C  & 0  \\
  					- \bxi_b{}^A &	\covD_b & \gth_b{}^A  \\
  					0 & \xi_b{}_B & \covD_b \\
  				} \Mtx{\bY^{+} \\ \bY^C \\  \bY^{-}}
  			\end{equation}	
where $\covD$ is the tensor product of the Levi-Civita connection of $h_{AB} = \bgs^{-2}\bh_{AB}$ with the connection on $L$ given by the scale $\bgs$ and
 	\begin{itemize}
 		\item if $n\geq3$, \begin{equation*}
 		\xi_b{}_A = \left(\frac{1}{2}N_{AB} - \frac{P}{n-1}  h_{AB} \right)\gth_b{}^B
 	\end{equation*} where $P= \frac{1}{2(n-2)}R$ is the trace of the Schouten tensor and ``the Bondi news'' $N_{AB}$  is a symmetric trace-free tensor.
 			
 		\item if $n=2$ and splittings of the tractor bundle are given by a Laplace operator $\cL = \gD -\frac{1}{4}M$, \begin{equation*}
 		\xi_b{}_A =   \frac{1}{2}M \; h_{AB} \;\gth_b{}^B
 		\end{equation*} where ``the 3D mass aspect'' $M$  is a function on $\MNull$.
 	\end{itemize}
 If $n\geq4$ a reduced null-normal Cartan connection is the normal Cartan connection if $-\frac{1}{2}N_{AB}~=~P_{AB}\big|_0$ where $P_{AB}\big|_0 = \frac{1}{n-3} R_{AB}\big|_0$ is the trace-free Schouten tensor. \end{Proposition}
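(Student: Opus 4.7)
The plan is to read off the reduced connection by successively imposing each of the four requirements of Definition \ref{Strong structure of null-infinity: Def, Reduced Null-Normal Cartan connection} on the general ansatz \eqref{The tractor connection associated to the strong structure of null-infinity: tractor connection in coordinates} for a tractor connection. Since $D$ preserves the infinity tractor $I^I$ (last column/row already in triangular form), the quotient connection $\tilde D$ on $\cT/I=\pi^*\cT_{\gS}$ is well defined and is obtained by simply deleting the $I$-row and $I$-column. Thus it suffices to pin down the remaining coefficients $\covD^{(\go)}$, $\bxi_b{}^A$ and $\xi_b{}_A$.

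First I would feed in the torsion-freeness equations \eqref{Strong structure of null-infinity: torsion-free equations}: these force the two tensors $\xi$ and $C$ to be symmetric, annihilate their $du$-components, and — through the vanishing of $\covD^{(\go)}_{[c}\gth_{d]}{}^A$ — identify $\covD^{(\go)}$ with the Levi-Civita connection of $h_{AB}=\bgs^{-2}\bh_{AB}$. At this stage $\xi_b{}_A=\xi_{AB}\gth_b{}^B$ with $\xi_{AB}$ symmetric. Next I would impose Thomas-compatibility \eqref{Strong structure of null-infinity: Thomas compatibility equations}, which pins down the trace $\xi^C{}_C$ in terms of the Schouten trace $P$; then the first null-normality equations \eqref{Strong structure of null-infinity: null-normal equations1} fix the trace-free part $\xi_{AB}\big|_0=\tfrac{1}{2}N_{AB}$ with $N_{AB}:=\parD_u C_{AB}$. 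Recombining trace and trace-free parts yields exactly
\begin{equation*}
\xi_{AB}=\tfrac{1}{2}N_{AB}-\tfrac{P}{n-1}h_{AB},
\end{equation*}
as asserted for $n\ge 3$, and the formula for $\tilde D$ then follows by substituting into \eqref{The tractor connection associated to the strong structure of null-infinity: tractor connection in coordinates} and striking the last row and column.

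For the case $n=2$ the argument is the same except that $\gS$ is one-dimensional, so symmetric trace-free $2$-tensors on $\gS$ automatically vanish and the $N_{AB}$ term drops out. The trace of the Schouten tensor $P$ is ill-defined intrinsically, but once a generalised Laplace operator $\cL=\gD-\tfrac14 M$ has been chosen, the Thomas operator of Section \ref{ss: Geometry of the tractor bundle} is well-defined with $P$ replaced throughout by $-\tfrac12 M$; the same chain of substitutions gives $\xi_b{}_A=\tfrac12 M h_{AB}\gth_b{}^B$. Here the only subtle point is to check that the transformation rules \eqref{Strong structure of null-infinity: 3D mass transformation rules} for $M$ under change of well-adapted trivialisation exactly mimic those of $P$, so that the resulting $\xi$ is a genuine invariant object; this was in fact established in Section \ref{ss: Generalised Laplace operators and Poincare operators in dimension n=2}.

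Finally, for the last assertion ($n\ge 4$), I would compare $\tilde D$ with the pull-back under $\pi$ of the normal Cartan connection on $\left(\gS,\bh_{AB}\right)$ of \cite{bailey_thomass_1994}. Since $\covD^{(\go)}$ is already the Levi-Civita connection of $h_{AB}$ and the traces of $\xi$ coincide by Thomas-compatibility, the two connections agree if and only if their trace-free $\xi$-components coincide, i.e.\ iff $\tfrac12 N_{AB}=-P_{AB}\big|_0$; this is the condition in the statement. The main obstacle — and the only step that is not essentially algebraic bookkeeping — is checking in the $n=2$ case that each of the four conditions of Definition \ref{Strong structure of null-infinity: Def, Reduced Null-Normal Cartan connection} (originally written using $P$) has a well-posed analogue in terms of $M$; everything else reduces to combining the linear relations already established in Section \ref{s: Radiative (or strong) structure of null-infinity}.
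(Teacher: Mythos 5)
Your proposal is correct and follows essentially the same route as the paper: the paper has no separate proof environment for this proposition, but derives it in exactly the way you describe in the immediately preceding subsections — torsion-freeness fixes $\covD^{(\go)}$ to be Levi-Civita and makes $\xi_{AB}$ symmetric and horizontal, Thomas-compatibility \eqref{Strong structure of null-infinity: Thomas compatibility equations} fixes the trace $\bxi^C{}_C=-\bgs^{-2}P$, the first null-normality equations \eqref{Strong structure of null-infinity: null-normal equations1} fix the trace-free part to $\tfrac12\parD_u C_{AB}$, and the $n=2$ case is handled by the substitution $P\mapsto-\tfrac12 M$ justified by the transformation rules \eqref{Strong structure of null-infinity: 3D mass transformation rules}. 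Your recombination of trace and trace-free parts and the comparison with the pulled-back normal connection for $n\geq4$ match the paper's conclusions exactly.
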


 \subsubsection{Null-normal Cartan connections}
 
 Let $D$ be a connection on $\cT$ satisfying all the conditions of definition \ref{Strong structure of null-infinity: Def, Reduced Null-Normal Cartan connection}. The curvature $F^I{}_J \big|_{X^{\perp}/X}$ must be of the form
 
  \begin{equation}
 F^a{}_b{}_{cd}  = \gth^C_{c} \; \gth^D_{d} \; \Mtx{  0  & 0 &0 \\
 	R^A{}_{B}{}_{CD} + 2 \; \bxi_{[C}{}^A \; \bh_{D]B} + 2 \; \gd_{[C}{}^A \; \xi_{D]}{}_B & 0 &0 \\
 		\covD_{[C} \; \bC_{D]}{}_B + 2\; \bpsi_{[C} \; \bh_{D]B} & 0 & 0}.
 \end{equation}
 contracting with $\bh^{bc}$ we obtain the second null-normal equations:
  \begin{equation}\label{Strong structure of null-infinity: null-normal condition2}
  \Mtx{ 0 & 0 &0 \\
  	 -\bgs^{-2} R^A{}_B -(n-3) \bxi^A{}_B -\bxi^C{}_C \; \gd^A{}_B& 0 &0 \\
  	-\frac{1}{2} \covD_C \bC^{C}{}_{B} -(n-2)\bpsi_B &0 &0 } =0.
  \end{equation}
Making use of \eqref{Strong structure of null-infinity: null-normal equations1}, this is solved as
\begin{align}\label{Strong structure of null-infinity: null-normal equations2}
(n-3)\left(\; -\frac{1}{2}\parD_u C_{AB}  \right)&= R_{AB}\big|_0, & (n-2)\; \bpsi_A = -\frac{1}{2} \covD_C \bC^{C}{}_{A}.
\end{align}
Note that for $n\ge 4$ equations \eqref{Strong structure of null-infinity: Thomas compatibility equations}, \eqref{Strong structure of null-infinity: null-normal equations1} and \eqref{Strong structure of null-infinity: null-normal equations2} imply that $\xi_{AB} = -P_{AB}$. In particular this implies that the restriction of the connection to the reduced tractor bundle $\cT/I = \pi^* \cT_{\gS}$ is the pull-back of the normal Cartan connection on $\left(\gS , \bh \right)$.

We now summarize the results of this section.

\begin{Definition}{Null-Normal Cartan connection}\label{Strong structure of null-infinity: Def, Null-Normal Cartan connection}\mbox{}\\
	Let $\left(\MNull \to \gS , \bh_{ab} , \bn^a \right)$ be a null-infinity manifold. We will say that a connection $D$ on the tractor bundle is a Null-Normal Cartan connection if it satisfies all the items listed in definition \ref{Strong structure of null-infinity: Def, Reduced Null-Normal Cartan connection} together with the second normality equations \eqref{Strong structure of null-infinity: null-normal condition2}.
\end{Definition}

\begin{Proposition}\label{Strong structure of null-infinity: Proposition, Null-Normal Cartan connection}
	Let $\left(\MNull \to \gS , \bh_{ab} , \bn^a \right)$ be a $n$-dimensional null-infinity manifold. Then, in the splitting given by a well-adapted trivialisation, a null-normal tractor connection must be of the form
	\begin{equation}
	D_bY^I = \Mtx{ \covD_b & -\bgth_{bC}  & 0 & 0  \\
		- \bxi_b{}^A &	\covD_b & \gth_b{}^A & 0  \\
		0 &  \xi_{bC} & \covD_b & 0\\
		-\bpsi_b & -\frac{1}{2}\bC_{bC} & du_b	& \covD_b	
	} \Mtx{\bY^{+} \\ \bY^C \\  \bY^{-} \\ Y^u  }.
	\end{equation}
	where $\covD$ is the tensor product of Levi-Civita connection of $h_{AB} = \bgs^{-2}\bh_{AB}$ with the connection on $L$ given by the scale $\bgs$ and	
	\begin{itemize}
		\item if $n\geq4$, 
		\begin{align}
		\bC_{bA}&=\bC_{AB}\;\gth^B_b,& \xi_{bA} &= -P_{AB}\;\gth^B_b,&  \bpsi_b &= \bgs^{-1}\left(  \frac{1}{n-1}P \; du_b - \frac{1}{2(n-2)}\covD^C C_{BC} \;\gth^B_b \right)\nonumber
		\end{align}
	\end{itemize}
		 where $P_{AB}$, $P$ are the Schouten tensor and its trace as defined by equations \eqref{Conformal geometry: Schouten tensor},\eqref{Conformal geometry: Schouten tensor trace} and ``the asymptotic shear'' $C_{AB}= \bgs^{-1}\bC_{AB}$ is a trace-free symmetric tensor satisfying $-\frac{1}{2}\parD_u C_{AB} = P_{AB} \big|_0$. In particular the restriction to the reduced tractor bundle is the normal Cartan connection.
	
	\begin{itemize}
		\item if $n=3$, 
		\begin{align}
		\bC_{bA}&=\bC_{AB}\;\gth^B_b,& \xi_{bA} &= \left(\frac{1}{2}\parD_u C_{AB} - \frac{P}{2} h_{AB}  \right)\gth^B_b,&  \bpsi_b &= \bgs^{-1} \left( \frac{1}{2}P \; du_b -\frac{1}{2}\covD^C C_{BC} \;\gth^B_b \right)\nonumber
		\end{align}
	\end{itemize}
where ``the asymptotic shear''  $C_{AB}= \bgs^{-1}\bC_{AB}$ is a trace-free symmetric tensor. In particular the restriction to the reduced tractor bundle is obtained by setting $N_{AB} = \parD_u C_{AB}$ in proposition \ref{Strong structure of null-infinity: Proposition, Reduced Null-Normal Cartan connection}.
	
	\begin{itemize}
		\item if $n=2$, and the splitting of the tractor bundle is defined in terms of a Laplace operator $\cL = \gD -\frac{1}{4}M$,
		\begin{align}
		\bC_{bA}&=0,&	\xi_{bA} &=  \frac{1}{2}M \;\gth_A{}_b,& \bpsi_b &= \bgs^{-1}\left( -\frac{1}{2}M \; du_b - N_A \;\gth^A_b \right)\nonumber
		\end{align}
	\end{itemize}
where ``the 3D mass aspect'' $M$ is a function on $\MNull$ and ``the 3D angular momentum aspect'' $N_A$ is a section of $\pi^* T^*\gS$.
\end{Proposition}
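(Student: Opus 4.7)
The plan is to assemble the constraints obtained successively in the preceding section and to read off the remaining free data dimension by dimension. I start from the general form \eqref{The tractor connection associated to the strong structure of null-infinity: tractor connection in coordinates} of a tractor connection expressed in a well-adapted trivialisation $(\bgs,u)$, whose unknown blocks are a metric-compatible connection $\covD^{(\go)}$ on $\pi^* T\gS$, a 1-form $\bC_{bA}$ with values in $\pi^*T^*\gS \otimes L$, a 1-form $\xi_{bA}$ with values in $\pi^* T^*\gS$, and an $L^{-1}$-valued 1-form $\bpsi_b$.

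First I would impose the torsion-free conditions \eqref{Strong structure of null-infinity: torsion-free equations}: they kill the $u$-legs $\bxi_{uA}=C_{uA}=0$, force the horizontal blocks $\xi_{AB}$ and $\bC_{AB}$ to be symmetric, and identify $\covD^{(\go)}$ with the Levi-Civita connection of $h_{AB}=\bgs^{-2}\bh_{AB}$. Next, compatibility with Thomas' operator \eqref{Strong structure of null-infinity: Thomas compatibility equations} fixes the two traces to $\bxi^C{}_C = -\bgs^{-2}P$ and $\bC^C{}_C = 0$. The first null-normality equations \eqref{Strong structure of null-infinity: null-normal equations1}, obtained by contracting $\bn^c F^a{}_b{}_{cd}$, determine the trace-free part $\xi_{AB}\big|_0 = \tfrac{1}{2}\parD_u C_{AB}$ and the $u$-component $\bpsi_u = \tfrac{1}{n-1}\bgs^{-2}P$. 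Finally, tracing the curvature with $\bh^{bc}$ yields the second null-normality equations \eqref{Strong structure of null-infinity: null-normal equations2}, which give $(n-3)\bigl(-\tfrac{1}{2}\parD_u C_{AB}\bigr) = R_{AB}\big|_0$ and $(n-2)\bpsi_A = -\tfrac{1}{2}\covD_C\bC^C{}_A$.

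It then remains only to read the statement off from these relations, case by case. For $n\geq 4$ the first of \eqref{Strong structure of null-infinity: null-normal equations2} gives $-\tfrac{1}{2}\parD_u C_{AB} = \tfrac{1}{n-3}R_{AB}\big|_0 = P_{AB}\big|_0$, which combined with the trace from Thomas compatibility yields $\xi_{AB}=-P_{AB}$; the restriction to $\cT/I$ is then the normal tractor connection on $(\gS,\bh)$. For $n=3$ the factor $n-3$ vanishes and the corresponding equation becomes vacuous, so $\parD_u C_{AB}$ survives as free data, the trace part of $\xi_{AB}$ being $-\tfrac{P}{2}h_{AB} = -\tfrac{R}{4}h_{AB}$. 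In both cases $\bpsi_b$ is reconstructed from $\bpsi_u$ and $\bpsi_A$ using the trivialisation splitting $\bpsi_b = \bpsi_u\,\bgs du_b + \bpsi_A\,\gth^A_b$.

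The main subtlety lies in the two-dimensional case, where the Schouten tensor is undefined and the whole setup must be re-interpreted via the chosen generalised Laplace operator $\cL=\gD - \tfrac{1}{4}M$, with the substitution $P \leftrightarrow -\tfrac{1}{2}M$ throughout (as in the discussion following \eqref{Weak structure of null-infinity: proof, Standard transformation rules} and in the definition of Thomas' operator \eqref{Weak structure of null-infinity: Thomas operator}). Since $\gS$ is one-dimensional, trace-free symmetric 2-tensors vanish identically, so Thomas compatibility forces $\bC_{AB}=0$, the first null-normality equation fixes $\xi_{AB}= \tfrac{1}{2}M\,h_{AB}$, and the second determines $\bpsi_u=-\tfrac{1}{2}\bgs^{-2}M$, while the remaining horizontal piece of $\bpsi_b$ is left undetermined and parametrised by a 1-form $N_A$ — the ``angular momentum aspect''. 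That this residual freedom survives is the only real conceptual point to check; everything else is a direct verification that the coefficients displayed in the proposition are consistent with the normalisation and transformation rules and with the shape \eqref{The tractor connection associated to the strong structure of null-infinity: tractor connection in coordinates} of a tractor connection in a well-adapted trivialisation.
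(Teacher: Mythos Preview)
Your approach is exactly the paper's: the proposition is explicitly a summary of the constraints derived in the preceding subsections (torsion-freeness, Thomas compatibility, first and second null-normality), and you assemble them in the right order and read off the remaining free data dimension by dimension.

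One small attribution slip in the $n=2$ discussion: the trace $\xi_{AB}=\tfrac{1}{2}M\,h_{AB}$ comes from Thomas compatibility \eqref{Strong structure of null-infinity: Thomas compatibility equations} (the trace-free part being vacuous in one dimension), not from the first null-normality equation; and $\bpsi_u=-\tfrac{1}{2}\bgs^{-2}M$ is fixed by the \emph{first} null-normality equation \eqref{Strong structure of null-infinity: null-normal equations1}, not the second. The second null-normality equation $(n-2)\bpsi_A=-\tfrac{1}{2}\covD_C\bC^C{}_A$ is vacuous at $n=2$ (both sides vanish since $\bC_{AB}=0$), which is precisely why $\bpsi_A=-\bgs^{-1}N_A$ survives as free data. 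The conclusions you draw are correct; only the labels on which constraint delivers which piece need swapping.
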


For the most physically relevant dimension $n+1=4$ this proposition implies that a null-normal connection is in one-to-one correspondence with a choice of ``asymptotic shear'' $C_{AB}$ and thus (from the discussion of section \ref{sss: Relation with Ashtekar/Geroch connections}) with a choice of equivalence class of connection as in \cite{geroch_asymptotic_1977,ashtekar_radiative_1981, ashtekar_a._symplectic_1981,ashtekar_symplectic_1982,ashtekar_asymptotic_1987,ashtekar_geometry_2015,ashtekar_null_2018}. The radiative degrees of freedom therefore exactly amounts to a choice of null-normal tractor connection. We will now prove that ``asymptotic shear'', ``mass aspect'' and ``angular momentum aspect'' correspond to the same objects as in section \ref{s: Radiative (or strong) structure of null-infinity} i.e the equivalence of null-normal tractor connections with Poincaré operators.

\subsection{Equivalence with the strong Null-infinity structure}

\subsubsection{Poincaré operators and null-normal Cartan connections, \texorpdfstring{$n\geq3$}{n>=3}}
Let $\left(\MNull \to \gS , \bh_{ab}, \bn^a \right)$ be a null-infinity manifold of dimension $n\geq3$. Let $D$ be a null-normal Cartan connection on its tractor bundle. Let $Y_I$ be a dual tractor and let $\bl\in \So{L}$ be the scale given by $\bl =\bX^I Y_I$. Let us consider the equation $D_b Y_I=0$, by proposition \ref{Strong structure of null-infinity: Proposition, Null-Normal Cartan connection} one has
\begin{equation}
D_b Y_I = \Mtx{ \covD_b & 0 & 0 & 0 \\
	-du_b & \covD_b & -\gth{}^C_b & 0 \\
	\frac{1}{2} \bC_{bA}& -\xi_{bA} & \covD_b \;\gd^C_A & \bgth_{bA} \\
	\bpsi_b & 0 & \bxi_b{}^C & \covD } \Mtx{ Y_u \\ \bl \\ \bY_C \\ \bY_{-} } =0.
\end{equation}
(where $\bC$, $\bxi$ and $\bpsi$ are functions of $h_{AB}$ and $C_{AB}$ as in proposition \ref{Strong structure of null-infinity: Proposition, Null-Normal Cartan connection})
solving for the first two lines and the trace of the third, one finds $Y_I = T_I\left(\bl \right)$ where $\bl \mapsto T_I\left(\bl \right)$ is Thomas operator, see equations \eqref{Weak structure of null-infinity: Thomas operator} and finally
\begin{equation}\label{Strong structure of null-infinity: covariant derivative of tractors}
D_bY_I = D_b T_I\left(\bl \right) = \Mtx{ 0 \\ 0 \\  \gth_b^B\; \left(\covD_{A}\covD_{B}\big|_0 +\frac{1}{2}\left[\bC_{AB} , \covD_{\bn} \right] \right)\bl \\ \star}.	
\end{equation}
Therefore $D_A T_B\left(\bl \right)$ coincides with a Poincaré operator \eqref{Strong structure of null-infinity: Poincare operator}. It follows that the symmetric tensor $\bC_{AB}$ from proposition \ref{Strong structure of null-infinity: Proposition, Null-Normal Cartan connection} coincides with the symmetric tensor parametrising Poincaré operators. In particular the transformations laws under change of well-adapted trivialisations must be the same.

\begin{Theorem}{Poincaré operators and Null-normal connection}\mbox{}\label{Strong structure of null-infinity: Thrm, equivalence between Poincare operator and Cartan connection}\\
	Let $\left(\MNull \to \gS , \bh_{ab} , \bn^a \right)$ be a null-infinity manifold of dimension $n$.	
	\begin{itemize}
		\item If $n =3$, choices of null-normal tractor connection are in one-to-one correspondence with choices of Poincaré operator. Choices of generalised Möbius structure are in one-to-one correspondence with choices of null-normal tractor connection on the reduced tractor bundle.
		\item If $n\geq4$, choices of null-normal Cartan connection are in one-to-one correspondence with choices of Poincaré operator inducing the canonical Möbius structure on $\left(\gS , \bh_{AB}\right)$.		
	\end{itemize}
\end{Theorem}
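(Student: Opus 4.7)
The essential work has already been done in Proposition \ref{Strong structure of null-infinity: Proposition, Null-Normal Cartan connection}, which gives an explicit parametrisation of null-normal Cartan connections in any well-adapted trivialisation $(\bgs,u)$, and in the preliminary computation \eqref{Strong structure of null-infinity: covariant derivative of tractors}, which shows that the covariant derivative of Thomas' operator extracts a Poincaré-type expression. The plan is to leverage these two results to exhibit the bijection in both directions, using the transformation rules to verify global consistency.

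First, for the forward direction, let $D$ be a null-normal Cartan connection. I define a differential operator $\cP^{(D)} \from \CSoL{k} \to \So{S^2_0 T^*\gS \otimes L}$ by the formula $\cP^{(D)}(\bl)_{AB} := \gth_A^a\gth_B^b \,\bX^I D_a T_I(\bl)\big|_0$. By the computation leading to \eqref{Strong structure of null-infinity: covariant derivative of tractors}, in any well-adapted trivialisation this takes precisely the form \eqref{Strong structure of null-infinity: Poincare operator}, with the tensor $C_{AB}$ being exactly the one appearing in the off-diagonal block of $D$ according to Proposition \ref{Strong structure of null-infinity: Proposition, Null-Normal Cartan connection}. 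Thus $\cP^{(D)}$ is a genuine Poincaré operator.

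For the reverse direction, given a Poincaré operator $\cP$ parametrised by $C_{AB}$ in the trivialisation $(\bgs,u)$, I construct $D$ by the explicit matrix formula of Proposition \ref{Strong structure of null-infinity: Proposition, Null-Normal Cartan connection}, reading off $C_{AB}$ directly and setting $\xi_{bA}$, $\bpsi_b$ by the formulas given there. To show this construction is consistent, i.e.\ defines a bona fide connection on $\cT$, I must verify that if $(\bgsh, \uh) = (\gO^{-1}\bgs, \gO(u-\xi))$ is any other well-adapted trivialisation, the tensor $\Ch_{AB}$ extracted from the Poincaré operator in the new trivialisation agrees with the $\Ch_{AB}$ obtained from conjugating the connection matrix by the gauge transformation \eqref{Weak structure of null-infinity: Standard transformation rules}. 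This reduces to comparing the transformation rule \eqref{Strong structure of null-infinity: C transformation rules} (for $C_{AB}$ as a coefficient of a Poincaré operator) with the transformation rule of the $(-,C)$-block entry of the connection under \eqref{Weak structure of null-infinity: Standard transformation rules}. The two maps are then mutually inverse by construction, since both produce and read off the same tensor $C_{AB}$ in each trivialisation.

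Finally, the dimension-dependent statements follow by inspecting the second null-normality equations \eqref{Strong structure of null-infinity: null-normal equations2}. In dimension $n\ge 4$, the factor $(n-3)$ forces $-\tfrac{1}{2}\parD_u C_{AB} = \tfrac{1}{n-3}R_{AB}\big|_0 = P_{AB}\big|_0$, which is precisely the condition that the induced generalised Möbius operator on $\gS$ (obtained by restricting $\cP$ to $\CSoL{0}\simeq\pi^*\So{L_\gS}$) coincide with the canonical Möbius operator $\cM^{(0)}$ of \eqref{Strong structure of null-infinity: origins of generalised mobius structure}; equivalently, by the last line of Proposition \ref{Strong structure of null-infinity: Proposition, Null-Normal Cartan connection}, the reduced connection is the pull-back of the normal Cartan connection. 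In dimension $n=3$, the factor $(n-3)$ vanishes and imposes no constraint on $C_{AB}$, so no compatibility condition remains; moreover restricting the bijection to the reduced tractor bundle—whose connection coefficients depend only on $\parD_u C_{AB}$ and $P$—yields the equivalence with generalised Möbius structures (parametrised by the Bondi news $N_{AB}=\parD_u C_{AB}$) via Theorem \ref{Strong structure of null-infinity: Thrm, News and Mobius structure}. The main subtlety is keeping track of which components of the connection are constrained by null-normality in each dimension; this is essentially bookkeeping once the explicit formulas from Proposition \ref{Strong structure of null-infinity: Proposition, Null-Normal Cartan connection} are in hand.
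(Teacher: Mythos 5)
Your proposal follows essentially the same route as the paper: Proposition \ref{Strong structure of null-infinity: Proposition, Null-Normal Cartan connection} parametrises null-normal connections by $C_{AB}$, the computation \eqref{Strong structure of null-infinity: covariant derivative of tractors} identifies that $C_{AB}$ with the coefficient of a Poincaré operator via $D_bT_I(\bl)$, and the second null-normality equations \eqref{Strong structure of null-infinity: null-normal equations2} account for the dimension-dependent constraint. One slip to fix: the extraction formula $\bX^I D_a T_I(\bl)$ picks out the $\bY_{+}$ slot of $D_aT_I(\bl)$, which vanishes identically; the Poincaré operator sits in the $\pi^*T^*\gS\otimes L$ slot, which is invariantly defined here only because the two preceding slots of $D_aT_I(\bl)$ vanish (by the triangular form of the transformation rules \eqref{Weak structure of null-infinity: Standard transformation rules2}), a point worth stating explicitly.
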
	
	
	The discussion that lead to equation \eqref{Strong structure of null-infinity: covariant derivative of tractors} implies that if $Y_I$ is covariantly constant with respect to a null-normal Cartan connection $D$ then $\bl = \bX^I Y_I$ must be a zero of the related Poincaré operator. It is natural to ask about the converse statement. The answer is given by the proposition below and closes this section.
\begin{Proposition}	\label{Strong structure of null-infinity: Prop,equivalence between Poincare operator and Cartan connection}
Let $\left(\MNull \to \gS , \bh_{ab} , \bn^a \right)$ be a null-infinity manifold of dimension $n$. Let $\cP$ be a Poincaré operator on $\MNull$ and $D$ be the associated null-normal Cartan connection given by Theorem \ref{Strong structure of null-infinity: Thrm, equivalence between Poincare operator and Cartan connection}. Let $Y_I \in \So{\cT^*}$ be a dual tractor and $\bl = \bX^I Y_I$ the associated scale, then
\begin{itemize}
	\item if $n\geq4$, covariantly constant dual tractors are in one-to-one correspondence with scales $\bl \in \CSoL{k}$ which are zeros of $\cP$ i.e
	\begin{equation} \label{Strong structure of null-infinity: equivalence between Poincare operator and Cartan connection}
	DY_I=0 \qquad \Leftrightarrow \qquad \cP\left(\bl \right)=0.
	\end{equation}
	 	\item if $n=3$, covariantly constant dual tractor $DY_I =0$ such that $\bX^IY_I \in \CSoL{0}$ are in one-to-one correspondence with constant curvature scale $\bl \in \CSoL{0}$ which are zeros of $\cP$ i.e
	\begin{equation}
	 		DY_I=0 \quad\text{s.t}\quad X^IY_I \in \CSoL{0} \qquad \Leftrightarrow \qquad \cP\left(\bl \right)=0 \quad \text{and} \quad \covD_b R\left(\bl \right)=0.
	\end{equation}
	It there exists such a scale then the equivalence \eqref{Strong structure of null-infinity: equivalence between Poincare operator and Cartan connection} holds for all $k$.
\end{itemize}
\end{Proposition}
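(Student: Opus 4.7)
The plan is to leverage the identity \eqref{Strong structure of null-infinity: covariant derivative of tractors} established just before the statement. Evaluating $D_b T_I(\bl)$ row by row, the first two rows vanish identically, the third row reads $\gth_b^B \cP(\bl)_{AB}$, and the fourth row is a third-order expression in $\bl$ that I will denote $\star_b$. The correspondence therefore reduces to two subtasks: showing that every covariantly constant dual tractor has the Thomas-operator form $Y_I = T_I(\bX^J Y_J)$, and analysing when $\star_b$ vanishes.

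The forward direction is immediate in any dimension. If $D_b Y_I = 0$, solving the system row by row (as in the derivation of the Thomas-compatibility conditions \eqref{Strong structure of null-infinity: Thomas compatibility equations}) yields $Y_u = \covD_{\bn}\bl$, $\bY_C = \covD_C \bl$, and $\bY_- = -\tfrac{1}{n-1}\bgs^{-2}(\bgD + P)\bl$, so that $Y_I = T_I(\bl)$; the trace-free part of the third row then supplies $\cP(\bl) = 0$.

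For the converse in dimension $n \geq 4$, I would compute $\star_b$ explicitly from the coefficients $\bpsi_b, \bxi_b^C$ given in Proposition \ref{Strong structure of null-infinity: Proposition, Null-Normal Cartan connection}. The key observation is that $\bpsi_b$ encodes precisely the divergence $\covD^C \bC_{CB}$ imposed by the second null-normality equation \eqref{Strong structure of null-infinity: null-normal equations2}, so that
\[
\star_b \;=\; \bpsi_b (\covD_{\bn} \bl) \;+\; \bxi_b^C (\covD_C \bl) \;+\; \covD_b \bY_-
\]
reorganises --- via a contracted Bianchi-type identity for the null-normal connection --- into a linear combination of $\cP(\bl)_{AB}$ and $\covD^A \cP(\bl)_{AB}$. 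Hence $\cP(\bl) = 0$ forces $\star_b = 0$, and $T_I(\bl)$ is covariantly constant.

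The case $n = 3$ is where I expect the main obstacle to lie: because the coefficient $(n-3)$ tying $\xi_{AB}\big|_0$ to the trace-free Ricci tensor in \eqref{Strong structure of null-infinity: null-normal equations2} vanishes, the same rearrangement of $\star_b$ leaves a residual term proportional to $\covD_b P \cdot \bl$, which coincides (up to a constant) with $\bgs^{-1}\covD_b R \cdot \bl$. Restricting to $\bl \in \CSoL{0}$ ensures that $R$ descends to the scalar curvature of the induced metric on $\gS$, and the added hypothesis $\covD_b R(\bl) = 0$ precisely kills this residual term. The final claim --- that once one such $\bl_0$ exists the equivalence extends to all $k$ --- follows by using $\bl_0$ as a preferred origin: any $\bl \in \CSoL{k}$ can be written in a well-adapted trivialisation $(\bl_0, u)$ as $\bl = k \bl_0 u + \bl'$ with $\bl' \in \CSoL{0}$, so that $\cP(\bl) = 0$ together with $\cP(\bl_0) = 0$ yields an equation for $\bl'$ already covered by the $k = 0$ case.
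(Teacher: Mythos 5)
Your overall strategy coincides with the paper's: reduce everything to the identity $D_b T_I(\bl)=\left(0,0,\cP(\bl),\star\right)$, get the forward direction by solving the system row by row, and kill the fourth-row obstruction $\star_b$ by the contracted-Bianchi rearrangement of $\covD^A D_A Y_B-\covD_B D^AY_A$. For $n\geq4$ this is exactly the paper's computation (the residual is the Cotton identity $\covD^BP_{AB}-\covD_AP=0$), and your forward direction is fine in all dimensions. Two points in the $n=3$ part need attention.

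First, a minor imprecision: the residual left by the rearrangement is not proportional to $\covD_A P\,\bl$ alone but to $\left(\covD^B\parD_u C_{AB}+\tfrac{1}{2}\covD_A R\right)\bl$, i.e.\ the Möbius curvature $\bK_A$ times $\bl$. The divergence-of-news part disappears only after passing to the trivialisation $(\bl,u)$ and using $\cP(\bl)=0$ to get $\parD_u\Ch_{AB}=0$ there; only then is the surviving obstruction $\covD_A\Rh$. Your conclusion is right, but the intermediate step as written drops the news term.

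Second, and more seriously, your justification of the final claim (that the equivalence extends to all $k$ once a constant-curvature zero $\bl_0\in\CSoL{0}$ exists) does not work as stated. Writing $\bl=k\bl_0 u+\bl'$ with $\bl'\in\CSoL{0}$, the equation $\cP(\bl)=0$ becomes $\cP(\bl')=-k\,\cP(\bl_0u)$, and $\bl_0u$ is generally not a zero of $\cP$; so you do not land on a homogeneous problem ``already covered by the $k=0$ case'' (which, moreover, only treats zeros of constant scalar curvature). And what actually has to be shown is the vanishing of the fourth-row obstruction $D_AY_{-}$, not a statement about $\cP(\bl')$. The paper's argument is simpler: the obstruction is $\bgs^{-2}\tfrac{1}{2}\left(\covD^B\parD_uC_{AB}+\tfrac{1}{2}\covD_AR\right)\bl$, whose prefactor depends only on the connection and the chosen well-adapted trivialisation, not on $\bl$. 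In the trivialisation furnished by $\bl_0$ one has $\parD_uC_{AB}=0$ (because $\bl_0$ is a zero of $\cP$) and $\covD_AR=0$ (constant curvature), so the prefactor vanishes identically and $D_AY_{-}=0$ follows from $\cP(\bl)=0$ for every $k$. You should replace your decomposition argument by this observation.
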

\begin{proof}
	
 The reasoning that lead to \eqref{Strong structure of null-infinity: covariant derivative of tractors} shows that $D_bY_I=0$ if and only if $Y_I = T_I\left(\bl \right)$ with $\bl \in \CSoL{k}$ and
 \begin{align}
D_AY_B = \cP_{AB}\left(\bl\right) &=0, & D_bY_{-} =\covD_b \bY_{-} + \bpsi_b k + \bxi_{b}{}^{C} \covD_C \bl &=0
 \end{align}
where $\xi_{bA}$ , $\bpsi_b$ are given by proposition \ref{Strong structure of null-infinity: Proposition, Null-Normal Cartan connection}. Since  $Y_I = T_I\left(\bl \right)$ implies \begin{equation}
 \bY_{-} = -\bgs^{-2}\frac{1}{n-1}\left( \gD + P \right) \bl
 \end{equation}
and $\bl \in \CSoL{k}$ one can check that $\bn \intD \left(DY_{-}\right) =0$ holds identically. We thus only need to understand the implication of
\begin{equation}
D_AY_{-} =\covD_A \bY_{-} + \bpsi_A k + \bxi_{A}{}^{B} \covD_B \bl =0.
\end{equation}
This will be a straightforward generalisation of standard results in tractor calculus (see \cite{bailey_thomass_1994} and \cite{curry_introduction_2018}): let us consider 
\begin{equation}
D_A Y_B = \covD_A \covD_B \bl +\frac{1}{2}\bC_{AB} k -\xi_{AB}\bl + \bh_{AB} \bY_{-} =0
\end{equation}
taking covariant derivative and trace we have
\begin{equation}
\covD^A D_A Y_B = \bgD \covD_B \bl +\frac{1}{2} \covD_{A}C^{A}{}_{B} k - \covD_A \bxi^A{}_{B} \;\bl- \bxi^A{}_B \covD_A \bl + \covD_B \bY_{-} =0
\end{equation}
\begin{equation}
\covD_B D^A Y_A =  \covD_B \bgD \bl - \covD_B \bxi^A{}_A\; \bl- \bxi^A{}_A \covD_B \bl + (n-1)\covD_B \bY_{-} =0
\end{equation}
taking the difference and making use of $\left[\covD_A ,\gD \right] = -R_{AB} \covD^B$ and $R_{AB} = (n-3)P_{AB} + P h_{AB}$ we obtain 
\begin{equation}\label{Strong structure of null-infinity: Proof of zeros of poincare operator}
(n-2)\left(\covD_B \bY_{-} - \bgs^{-2}P_{AB} \covD^A\bl \right) -\frac{1}{2} \covD_A \bC^A{}_{B}\; k + \bgs^{-2}\left(\xi_{AB} +P_{AB} -\left(\xi +P \right) h_{AB} \right) \covD^A \bl  + \left(\covD_A \bxi^A{}_{B} - \covD_B \bxi\right) \bl =0
\end{equation}	
where $\bxi = \bxi^C{}_C$.

If $n\geq4$, by making use of proposition \ref{Strong structure of null-infinity: Proposition, Null-Normal Cartan connection} this last equation can be recast as
\begin{equation}
D_AY_{-} = \bgs^{-2}\frac{1}{n-2}\left(\covD^B P_{AB} -\covD_A P\right)\bl
\end{equation}
however $\covD^B P_{AB} -\covD_A P$ holds identically (see equation \eqref{Conformal geometry: Cotton tensor, identities}) which concludes the proof for $n\geq4$.

If $n=3$, by making use of proposition \ref{Strong structure of null-infinity: Proposition, Null-Normal Cartan connection} we can rewrite equation \eqref{Strong structure of null-infinity: Proof of zeros of poincare operator} as 
\begin{equation}\label{Strong structure of null-infinity: Proof of zeros of poincare operator2}
	D_AY_{-} =  \bgs^{-2}\frac{1}{2}\left(\covD^B\parD_u C_{AB}  +\frac{1}{2}\covD_A R \right)\bl
\end{equation}
Let us suppose that $\bl \in \CSoL{0}$ and let us take $\left(\bgsh =\bl ,u \right)$ as well adapted trivialisation then $\cP\left(\bl \right)=0$ implies that we must have $\parD_u \Ch_{AB}=0$ and the vanishing of $D_A\Yh_{+}$ is equivalent to
\begin{equation}
	D_A\Yh_{-} = \left( \frac{1}{4}\covD_A \Rh \right)\; \bl^{-1}	= 0
\end{equation}
i.e $\bgsh=\bl$ is a scale of constant scalar curvature. Now, suppose that there exists a constant curvature scale $\bgs \in \So{L_{\gS}}$ such that $\cP\left(\bgs\right)=0$. In any well-adapted trivialisation of the form $\left(\bgs, u\right)$ we must have $\parD_u C_{AB} =0$ (because $\bgs$ is a zero of $\cP$) and $\covD_A R=0$ (because $\bgs$ has constant scale curvature). Therefore if such a scale $\bgs$ exists and $Y_I$ is any section such that $D_b Y_+ =0$, $D_b Y_u =0$, $D_bY_A =0$ (equivalently $\cP\left(\bl \right)=0$ with $\bl= Y_I \bX^I \in \CSoL{k}$) then $D_bY_{-}=0$ is automatically satisfied as a result of \eqref{Strong structure of null-infinity: Proof of zeros of poincare operator2}. This concludes the proof.
\end{proof}	
 
 \subsubsection{Poincaré operators and null-normal Cartan connections, \texorpdfstring{$n=2$}{n=2}.}
 
 Let $\left(\MNull \to \gS , \bh_{ab}, \bn^a \right)$ be a null-infinity manifold of dimension $n=2$. Let \begin{equation}
 	\cL = \gD - \frac{1}{4}M \from \gC_{k}\left[ L^{\frac{1}{2}} \right] \to \gC_{k}\left[ L^{-\frac{3}{2}} \right]
 \end{equation} be a generalised Laplace operator and let $D$ be a null-normal Cartan connection on its tractor bundle. Let $Y_I$ be a dual tractor and let $\bl\in \So{L}$ be the scale given by $\bl =\bX^I Y_I$. From proposition \ref{Strong structure of null-infinity: Proposition, Null-Normal Cartan connection} one has
 \begin{equation}
 	D_b Y_I = \Mtx{ \covD_b & 0 & 0 & 0 \\
 		-\bgs du_b & \covD_b & -\gth_b^C & 0 \\
 		0& -\bgs^{-2}\frac{1}{2} M \bgth_b{}_A & \covD_b \;\gd^C_A & \bgth_{bA} \\
 		\bgs^{-1}\left( -\frac{1}{2}M du_b - N_A \gth_b^A\right) & 0 & \bgs^{-2}\frac{1}{2}M \gth_b^C & \covD_b } \Mtx{ Y_u \\ \bl \\ \bY_C \\ \bY_{-} } =0.
 \end{equation}
 solving for the three first lines one finds $Y_I = T_I\left(\bl \right)$ where $\bl \mapsto T_I\left(\bl \right)$ is Thomas operator, see equations \eqref{Weak structure of null-infinity: Thomas operator} (Recall that in this context our convention is to replace $P$ by $-\frac{1}{2}M$) and finally
 \begin{equation}\label{Strong structure of null-infinity: covariant derivative of tractors n=2}
 	D_bY_I = D_b T_I\left(\bl \right) = \gth^A\;\Mtx{ 0 \\ 0 \\  0 \\ -\bgs^{-1}\Big(\left( \covD_A \gD - M \covD_A -\frac{1}{2}\covD_A M  \right)\bl + N_A k \Big)}.	
 \end{equation}
 i.e $D_A Y_{-} = -\cP\left(\bl\right)$. This therefore proves the following:

 \begin{Theorem}{Poincaré operators and Null-normal connections}\mbox{}\label{Strong structure of null-infinity: Thrm, equivalence between Poincare operator and Cartan connection, dimension 2}.	
 	
 Let $\left(\MNull \to \gS , \bh_{ab} , \bn^a \right)$ be a null-infinity manifold of dimension $2$. Choices of null-normal Cartan connections are in one-to-one correspondence with choices of Poincaré operators. 
 \end{Theorem}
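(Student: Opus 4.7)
The plan is to read off the correspondence directly from Proposition \ref{Strong structure of null-infinity: Proposition, Null-Normal Cartan connection} and the computation immediately preceding the theorem. In dimension $n=2$, that proposition exhibits any null-normal Cartan connection $D$ (for a fixed compatible generalised Laplace operator $\cL = \gD - \tfrac14 M$) in a well-adapted trivialisation as being determined by exactly two pieces of data: the function $M$ on $\MNull$ and the horizontal $1$-form $N_A$. On the other hand, Definition \ref{Strong structure of null-infinity: Def,Poincare operator n=2} shows that a Poincaré operator in dimension $n=2$ is parametrised by the same pair $(M, N_A)$. Thus in a fixed well-adapted trivialisation the correspondence is tautological; all that remains is to verify that the two sides are identified in a way that is independent of the trivialisation.

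I would establish this intrinsically as follows. Starting from a null-normal $D$, take any dual tractor $Y_I$ and let $\bl = \bX^I Y_I \in \So{L}$. Solving the first three rows of $D_b Y_I$ component-wise (as is done in the discussion around equation \eqref{Strong structure of null-infinity: covariant derivative of tractors n=2}) forces $\bl \in \CSoL{k}$ for some $k\in \bbR$ and $Y_I = T_I(\bl)$, the Thomas operator applied to $\bl$. The remaining, fourth, component of $D_b T_I(\bl)$ then reproduces precisely the differential operator of Definition \ref{Strong structure of null-infinity: Def,Poincare operator n=2}: this is exactly the identity $D_A Y_- = -\cP(\bl)$ displayed in \eqref{Strong structure of null-infinity: covariant derivative of tractors n=2}. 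Because the left-hand side is defined purely from the tractor connection $D$, which is an intrinsic object on $\cT$, the right-hand side furnishes a well-defined linear differential operator $\cP\from \CSoL{k} \to \So{T^*\gS \otimes L^{-1}}$ independent of trivialisation, i.e.\ a bona fide Poincaré operator.

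For the converse, given a Poincaré operator $\cP$ with coefficients $(M, N_A)$ in a well-adapted trivialisation $(\bgs, u)$, I would define $D$ by Proposition \ref{Strong structure of null-infinity: Proposition, Null-Normal Cartan connection} using the \emph{same} $(M, N_A)$. The identity $D_A Y_- = -\cP(\bl)$ then holds by the preceding computation, and since $\cP$ is intrinsic by assumption, the operator $D$ so constructed in different well-adapted trivialisations glues into a single tractor connection: the transformation rules \eqref{Strong structure of null-infinity: M, N transformation rules} for $(M, N_A)$ as coefficients of $\cP$ must agree with the transformation rules for the components of $D$ read off from the change-of-splitting formula \eqref{Weak structure of null-infinity: Standard transformation rules}. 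That the resulting $D$ automatically preserves $g_{IJ}$ and $I^I$, is torsion-free, Thomas-compatible, and satisfies both null-normality equations is built into Proposition \ref{Strong structure of null-infinity: Proposition, Null-Normal Cartan connection}.

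The main technical point is the compatibility of transformation rules in the preceding paragraph, which is the only nontrivial check behind both directions of the bijection. Rather than compute the matrix conjugation of the connection components in \eqref{Weak structure of null-infinity: Standard transformation rules} explicitly, I would argue more economically: given two well-adapted trivialisations $(\bgs, u)$ and $(\bgsh, \uh)$, apply the identity $D_A Y_- = -\cP(\bl)$ in both splittings with the same section $\bl$; since the left-hand side is a tensor, and since $\cP$ defined from $(M, N_A)$ in the first trivialisation transforms by \eqref{Strong structure of null-infinity: M, N transformation rules} to yield an operator of the same form in the second trivialisation, the components $(\Mh, \Nh_A)$ of $D$ in the second trivialisation must coincide with those predicted by \eqref{Strong structure of null-infinity: M, N transformation rules}. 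This closes the bijection and completes the proof.
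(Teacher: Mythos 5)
Your proposal is correct and follows essentially the same route as the paper: the paper's own proof consists precisely of writing out $D_bY_I$ in the splitting of Proposition \ref{Strong structure of null-infinity: Proposition, Null-Normal Cartan connection}, solving the first three rows to force $Y_I = T_I(\bl)$ with $\bl\in\CSoL{k}$, and identifying the remaining component with $-\cP(\bl)$ as in \eqref{Strong structure of null-infinity: covariant derivative of tractors n=2}, so that the data $(M,N_A)$ parametrising both objects coincide. Your additional remarks on the converse direction and on the consistency of the transformation rules \eqref{Strong structure of null-infinity: M, N transformation rules} with \eqref{Weak structure of null-infinity: Standard transformation rules} merely make explicit what the paper leaves implicit.
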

 
 \begin{Proposition}	
 Let $\cP$ be a Poincaré operator on $\MNull$ and $D$ be the associated null-normal Cartan connection given by the Theorem \ref{Strong structure of null-infinity: Thrm, equivalence between Poincare operator and Cartan connection, dimension 2}. Let $Y_I \in \So{\cT^*}$ be a dual tractor and $\bl = \bX^I Y_I$ the associated scale, then covariantly constant dual tractors are in one-to-one correspondence with scales $\bl \in \CSoL{k}$ which are zeros of $\cP$ i.e
 		\begin{equation} \label{Strong structure of null-infinity: equivalence between Poincare operator and Cartan connection n=2}
 			DY_I=0 \qquad \Leftrightarrow \qquad \cP\left(\bl \right)=0.
 		\end{equation}
 \end{Proposition}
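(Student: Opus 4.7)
The plan is to follow the template of the proof of Proposition \ref{Strong structure of null-infinity: Prop,equivalence between Poincare operator and Cartan connection}, but to exploit the fact that in dimension $n=2$ the equivalence is cleaner because the horizontal component of $D_bY_{-}$ already reproduces the Poincaré operator, with no further Cotton-tensor identity needed to kill obstructions.

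First, I would start from the explicit matrix form of the null-normal connection in dimension $n=2$ provided by Proposition \ref{Strong structure of null-infinity: Proposition, Null-Normal Cartan connection}, and solve $D_b Y_I = 0$ row by row in a well-adapted trivialisation $(\bgs,u)$. The equation on the first row gives $\covD_b Y_u = 0$, so $Y_u = k$ for some constant $k \in \bbR$. The second row becomes $\covD_b \bl = Y_u \,\bgs du_b + \bY_C \gth_b^C$, which forces $\bY_C = \covD_C \bl$ and also implies $\covD_{\bn}\bl = k$, i.e. $\bl \in \CSoL{k}$. The third row then determines $\bY_{-}$ uniquely in terms of $\bl$ (and the Laplace structure $\cL = \gD-\tfrac{1}{4}M$), and a direct check shows that the resulting expression is precisely Thomas' operator \eqref{Weak structure of null-infinity: Thomas operator} with $P$ replaced by $-\tfrac{1}{2}M$. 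Thus $Y_I = T_I(\bl)$ necessarily.

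Second, I would substitute $Y_I = T_I(\bl)$ back into $D_b Y_{-}$ and invoke equation \eqref{Strong structure of null-infinity: covariant derivative of tractors n=2}, which has already been derived in the excerpt:
\begin{equation*}
D_b T_I(\bl) = \gth_b^{A}\,\Mtx{0\\0\\0\\ -\bgs^{-1}\Big(\bigl(\covD_A \gD - M\covD_A - \tfrac{1}{2}\covD_A M\bigr)\bl + N_A k\Big)}.
\end{equation*}
Two facts then close the forward direction at once. On one hand, the right-hand side has no $\bgs du_b$ component, which means $\bn^b D_b T_I(\bl) = 0$ is satisfied automatically for any $\bl \in \CSoL{k}$: this is the built-in vertical integrability of Thomas' operator in the presence of a Laplace structure. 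On the other hand, the horizontal component $\gth_b^A D_A Y_{-}$ is exactly $-\cP(\bl)_A$ in the normalisation of Definition \ref{Strong structure of null-infinity: Def,Poincare operator n=2}. Hence $D_b Y_I = 0$ iff $\cP(\bl)=0$.

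Third, for the converse, given $\bl \in \CSoL{k}$ with $\cP(\bl)=0$, I define $Y_I := T_I(\bl)$; the argument above shows directly that all four components of $D_b Y_I$ vanish. The main (and only) obstacle is a short bookkeeping verification that the bottom-left block of the null-normal connection in Proposition \ref{Strong structure of null-infinity: Proposition, Null-Normal Cartan connection} -- involving the mass aspect $M$ and angular momentum aspect $N_A$ -- recombines, when composed with Thomas' operator, into precisely the differential expression appearing in \eqref{Strong structure of null-infinity: Poincare operator n=2}; this is essentially a rewriting of Proposition \ref{Strong structure of null-infinity: Prop, Rewritting the Poincare operator} and does not require any curvature identity analogous to $\covD^B P_{AB} = \covD_A P$ used in the $n\geq 4$ case. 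This is the key simplification that makes the equivalence hold for all $k$ without any supplementary constant-curvature hypothesis, unlike the $n=3$ situation of Proposition \ref{Strong structure of null-infinity: Prop,equivalence between Poincare operator and Cartan connection}.
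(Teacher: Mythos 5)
Your proposal is correct and follows essentially the same route as the paper: the paper's own argument is precisely the row-by-row solution of $D_bY_I=0$ yielding $Y_I=T_I(\bl)$ from the first three rows, followed by the observation that the remaining component is $D_AY_{-}=-\cP(\bl)_A$ as in equation \eqref{Strong structure of null-infinity: covariant derivative of tractors n=2}. Your added remarks — that the vertical component vanishes automatically and that no analogue of the Cotton identity $\covD^BP_{AB}=\covD_A P$ is needed in $n=2$ — correctly identify why the equivalence holds for all $k$ here, unlike the $n=3$ case.
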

 
 \subsection{Tractor curvature}

\subsubsection{Curvature tensor,  \texorpdfstring{$n\geq 4$}{n>=4}.} 

From proposition \ref{Strong structure of null-infinity: Proposition, Null-Normal Cartan connection} one obtains that, for any dimension $n\geq4$ and for any well adapted trivialisation $\left(\bgs, u\right)$, the curvature of a null-normal tractor connection must be of the form
\begin{equation*}
 \hspace*{-1cm} F^I{}_{Jcd} = \gth_c{}^C \gth_d{}^D 
\Mtx{
0 & 0& 0 &0 \\
\bC_{CD}{}^A & W^A{}_{BCD} & 0 & 0 \\
0 & -C_{CD}{}_B & 0 &0 \\
\frac{1}{n-2} \covD_{[C} \covD^E \bC^{(0)}{}_{D] E} + \bC^{(0)}{}_{[C}{}^E \;P_{D] E} & -\covD_{[C} \bC^{(0)}{}_{D]B} - \frac{1}{n-2} \covD^E \bC^{(0)}{}_{E[C} h_{D]B} +\bgs^{-1} u \;C_{CDB}&0 &0
}
\end{equation*}
Where the Cotton tensor $C_{AB}{}_C$ and the Weyl tensor $W^A{}_{BCD}$ of $h_{AB}$ are defined as in section \ref{ss: Curvature tensors} and the ``zero mode'' of the asymptotic shear $\bC^{(0)}_{AB}$ is a tensor on $\gS$ that we take to be defined by the relation \begin{equation}
\bC_{AB} = \bC^{(0)}_{AB} -2 \bgs\;u\; P_{AB}.
\end{equation}
 
The first three lines of the curvature tensor are the coefficients of the tractor curvature of the normal Cartan connection of $\left(\gS , \bh_{AB} \right)$ and their vanishing is equivalent to local conformal flatness of $\bh_{AB}$. The last line could be called ``the curvature of the zero mode'' and encodes the information which is specific of the null-normal connection.

\subsubsection{Curvature tensor,  \texorpdfstring{$n= 3$}{n=3}.} 

From proposition \ref{Strong structure of null-infinity: Proposition, Null-Normal Cartan connection} one can derive that, for $n=3$ and for any well adapted trivialisation $\left(\bgs, u\right)$, the curvature of a null-normal tractor connection must be of the form
\begin{equation*}
F^I{}_{Jcd} = 
\Mtx{
	0 & 0& 0 &0 \\
	\beps_{cd} \; \bK_E \; \beps^{AE}  + 2 \;du_{[c}\; \gth_{d]}{}^D\; \parD_u \bN_{D}{}^A & 0 & 0 & 0 \\
	0 & -\beps_{cd} \;  \bK_E \; \beps_B{}^E -2 \;du_{[c}\; \gth_{d]}{}^D\; \parD_u N_{D}{}_B & 0 &0 \\
	 \frac{1}{2}\; \beps_{cd} \; \bK -  2 \bgs\;du_{[c}\; \gth_{d]}{}^D \; \bK_D & 0&0 &0
}
\end{equation*}
Here $\beps_{cd} = \gth_c{}^C \; \gth_d{}^D \beps_{CD}$ where $\beps_{AB}\in \gL^2 T^*\gS \otimes L^2_{\gS}$ is the volume form on $\gS$ and $\beps^{AB}$ its inverse. The news tensor $N_{AB}$ is defined as $N_{AB} = \parD_u C_{AB}$. Finally the curvature elements $\bK_A \in \So{T^* \gS \otimes L^{-2}}$ and $\bK\in \So{L^{-3}}$ are given by
\begin{equation}
\bK_A = -\bgs^{-2}\frac{1}{2}\left( \covD_C N_{A}{}^C + \frac{1}{2} \covD_{A} R \right)
\end{equation}
and
\begin{equation}
\bK = \beps^{CD}\left( \covD_{[C} \covD^E \bC_{D] E} - \frac{1}{2}\bC_{[C}{}^E \;N_{D] E}   \right) 
\end{equation}

Both $\parD_u N_{A}{}_B$ and $\bK_A$ parametrize the curvature of the generalised Möbius operator (equivalently of the reduced null-normal tractor connection): as we already discussed, $\parD_u N_{A}{}_B$ is the obstruction for this operator to be a genuine Möbius operator on $\left(\gS , \bh_{AB}\right)$ and $\bK_A$ is the obstruction for this Möbius structure to be integrable. When these two tensors vanish, the whole tractor curvature is parametrized by a section of $L^{-3}$ only:
\begin{equation}\label{Strong structure of null-infinity: curvature of the zero mode n=3}
\bK = \beps^{CD} \left(\; \covD_{[C} \covD^E \bC^{(0)}{}_{D] E} - \frac{1}{2}\bC^{(0)}{}_{[C}{}^E \;N_{D] E} \right) 
\end{equation}
here $\bC^{(0)}{}_{AB}$ is the ``zero mode of the asymptotic shear'' $\bC_{AB} = \bC^{(0)}{}_{AB} - \frac{1}{2}\bgs u \; N_{AB}$. The ``curvature of the zero mode'' \eqref{Strong structure of null-infinity: curvature of the zero mode n=3} is then the obstruction to existence of solutions to the good-cut equations. 
  
\subsubsection{Curvature tensor,  \texorpdfstring{$n= 2$}{n=2}.}   
 Finally, from proposition \ref{Strong structure of null-infinity: Proposition, Null-Normal Cartan connection} one can derive that, for $n=2$ and for any well adapted trivialisation $\left(\bgs, u\right)$, the curvature of a null-normal tractor connection must be of the form
 \begin{equation}
 F^I{}_{Jcd} = 
 \Mtx{
 	0 & 0& 0 &0 \\
   - \bgs^{-2}\;du_{[c}\; \gth_{d]}{}^A\; \parD_u M & 0 & 0 & 0 \\
 	0 &\bgs^{-2}\; du_{[c}\;\bgth_{d]}{}_B \; \parD_u M  & 0 &0 \\
4\bgs^{-1}\; du_{[c} \; \gth_{d]}{}^E \left( \parD_u N_E - \frac{1}{2} \covD_E M  \right)    & 0&0 &0
 }
 \end{equation} 
 As we previously discussed $\parD_u M$ is the obstruction for the associated generalised Laplace operator $\cL$ to induce a genuine Laplace operator on $\gS$. 

The vanishing of this curvature has a nice physical interpretation: Let us suppose that $\MNull \to \gS$ is the boundary of a 3D asymptotically flat space-times then Einstein's equations implies (see e.g \cite{barnich_aspects_2010}) that the 3D mass aspect and angular momentum aspect must satisfy
\begin{align}
\parD_u M &=0,  &  \parD_u N_A &= \frac{1}{2}\covD_A M.
\end{align}
 From the intrinsic perspective that we take in this article, one sees that these equations amount to the flatness of the associated tractor connection.

\section{Gravity vacua degeneracy, ``soft modes'' and memory effect}

In this section we restrict to the most physically relevant case where $\left( \MNull \to \gS , \bh_{ab} , \bn^a\right)$ is a null-infinity manifold over the conformal sphere $\left(\gS , \bh_{AB} \right)= \left(  \S^{n-1} , \bh_{AB}{}(\S^{n-1} )\right)$. We wish to show how the formalism that we presented in this article allows to reproduce, in a completely geometrical manner, the results from \cite{geroch_asymptotic_1977,ashtekar_radiative_1981, ashtekar_a._symplectic_1981,ashtekar_symplectic_1982,ashtekar_asymptotic_1987,ashtekar_geometry_2015,ashtekar_null_2018}.

\subsection{Gravity vacua}

Let $\left( \MNull \to \S^{n-1} , \bh_{ab} , \bn^a \right)$ be a null-infinity manifold over the conformal sphere, we will call the space of compatible \emph{flat} strong null-infinity structures the space of ``gravity vacua''. Here ``flat strong null-infinity structures'' are defined, through theorem \ref*{Strong structure of null-infinity: Thrm, equivalence between Poincare operator and Cartan connection}, as strong null-infinity structures corresponding to flat null-normal tractor connections.

The space of gravity vacua is not a point, there is a whole moduli which is closely related to the BMS group. In particular, a choice of gravity vacuum amounts to choosing a copy of the Poincaré group inside the BMS group:
\begin{Proposition}\label{Gravity vacua degeneracy and soft modes: Prop, BMS action}
Let $\left( \MNull \to \S^{n-1} , \bh_{ab} , \bn^a \right)$ be a null-infinity manifold. The BMS group acts transitively on the space of flat null-normal tractor connections with stabilisers isomorphic to the Poincaré group $\Iso\left(n,1\right) = \bbR^{n+1} \rtimes \SO\left(n,1\right)$.
\end{Proposition}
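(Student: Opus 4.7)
The plan is to combine the equivalence between flat null-normal tractor connections and flat Poincaré operators (Theorem \ref{Strong structure of null-infinity: Thrm, equivalence between Poincare operator and Cartan connection}) with the explicit transformation rule \eqref{Strong structure of null-infinity: C transformation rules} for the coefficient $C_{AB}$ under a change of well-adapted trivialization, which by Proposition \ref{Weak structure of null-infinity: Prop, BMS symmetry2} is precisely the action of a BMS element. The starting point is that a flat null-normal connection on $(\MNull\to\S^{n-1},\bh_{ab},\bn^a)$ admits, by Proposition \ref{Strong structure of null-infinity: Prop,equivalence between Poincare operator and Cartan connection}, $(n+2)$ linearly independent covariantly constant dual tractors; via Proposition \ref{Strong structure of null-infinity: Prop, generlised GG eqs} this furnishes a full family of good cuts. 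Picking one good cut and one zero of the associated Möbius operator produces a flat well-adapted trivialization (Definition \ref{Strong structure of null-infinity: Def, flat well-adapted trivialisations}) in which the Poincaré operator is represented by $C_{AB}=0$ with $\bh_{AB}$ the round metric.

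For transitivity, let $\cP_0$ be the canonical vacuum (in some well-adapted trivialization $(\bgs_0,u_0)$ represented by $C_{AB}=0$ with round base metric) and let $\cP$ be any other flat null-normal Poincaré operator. By the previous paragraph, $\cP$ also admits a flat well-adapted trivialization $(\bgs,u)$ in which it is represented by $C_{AB}=0$ with round base metric. By Proposition \ref{Weak structure of null-infinity: Prop, BMS symmetry2} the BMS group acts on the space of well-adapted trivializations via $(\bgs,u)\mapsto(\gO^{-1}\bgs,\gO(u-\xi))$, and this action is transitive on the subspace of such trivializations, since $\gO$ can be any positive function realising a conformal equivalence within the round class and $\xi$ any function on $\S^{n-1}$. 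Consequently there exists $\Phi\in BMS$ mapping $(\bgs_0,u_0)$ to $(\bgs,u)$, and then \eqref{Strong structure of null-infinity: C transformation rules} gives $\Phi^*\cP_0=\cP$.

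For the stabilizer, specialize Proposition \ref{Strong structure of null-infinity: Prop, Symmetries} to $\cP_0$: a BMS element $\Phi$ with parameters $(\gO,\xi)$ stabilizes $\cP_0$ if and only if, in the flat trivialization,
\begin{equation*}
\covD_A\covD_B\big|_0\,\xi=0,\qquad \covD_A\covD_B\big|_0\,\gO^{-1}=0.
\end{equation*}
On the round $(\S^{n-1},\bh_{AB})$ the first equation is precisely the good-cut equation of Section \ref{ss: Homogenous space structure of null-infinity}, whose solution space is the $(n+1)$-dimensional vector space of translations $\bbR^{n+1}\subset\Iso(n,1)$. The second equation restricts the $Conf(\S^{n-1})$ factor of BMS to those conformal diffeomorphisms whose conformal factor satisfies the same equation; since the conformal factor of every Möbius transformation of the round sphere solves it, the surviving group is the full $Conf(\S^{n-1})=\SO(n,1)$. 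The semi-direct product structure $\Iso(n,1)=\bbR^{n+1}\rtimes\SO(n,1)$ is inherited directly from the BMS structure $\cT\rtimes Conf(\S^{n-1})$ by restriction.

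The main obstacle I expect is justifying cleanly that every flat null-normal $\cP$ admits a \emph{globally defined} flat well-adapted trivialization with round base metric, which is the key input to the transitivity step. Locally this is automatic from the Cartan-geometric meaning of flatness, giving a local identification with the model $\MNull_{(n,1)}=\Quotient{\Iso(n,1)}{\Carr(n)\rtimes\bbR}$; globally it requires that $(n+2)$ covariantly constant tractors exist on all of $\MNull\simeq\bbR\times\S^{n-1}$ (which follows from Proposition \ref{Strong structure of null-infinity: Prop,equivalence between Poincare operator and Cartan connection} together with simple-connectedness of $\S^{n-1}$ for $n\geq 3$) and that the corresponding good cut and constant-curvature scale are everywhere defined. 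Once this globalization is secured, the transformation rule \eqref{Strong structure of null-infinity: C transformation rules} together with Proposition \ref{Strong structure of null-infinity: Prop, Symmetries} carry the rest of the argument.
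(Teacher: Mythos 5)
Your proposal is correct and follows essentially the same route as the paper: both arguments reduce to Poincaré operators, produce for each flat connection a flat well-adapted trivialisation in which $C_{AB}=0$ with the round scale on $\S^{n-1}$ (the paper gets the common round scale from uniqueness of the induced normal tractor connection on the sphere, you from zeros of the Möbius operator — the same fact), relate the two trivialisations by a BMS element via the transformation rule for $C_{AB}$, and read off the stabiliser from the symmetry equations $\covD_A\covD_B\big|_0\xi=0$, $\covD_A\covD_B\big|_0\gO^{-1}=0$. Your explicit detour through covariantly constant tractors and good cuts just makes precise the paper's one-line assertion that flat well-adapted trivialisations exist.
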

\begin{proof}
Let $D_1$ and $D_2$ be two flat null-normal tractor connections. They induce on $\S^{n-1}$ a normal tractor connection, however there is a unique normal tractor connection defined on the whole of $\S^{n-1}$ so these connections must coincides. In particular, we can pick the scale $\bgs$ corresponding to the round sphere metric on $\S^{n-1}$ and it is a zero of the Poincaré operators associated to both $D_1$ and $D_2$. Since the connections are flat their must exists flat well-adapted trivialisation of the form $\left(\bgs, u_1\right)$ and	$\left(\bgs, u_2\right)$ for each of these connections. Let us note $u_2 = u_1 - \xi$ and let us consider the automorphism of $\MNull \to \S^{n-1}$ inducing the identity on the base and such that $\Phi^* u_1 = u_1 - \xi = u_2$. Let us write $\bl_1 = \bgs u_1$, $\bl_2 = \bgs u_2$, in particular $\Phi^* \bl_1 = \bl_2$. Let $\cP_1$,$\cP_2$ be the Poincaré operator associated with $D_1$, $D_2$. We have
\begin{align}
\Phi^*\cP_1 \left( \bl_2 \right)  &:= \Phi^*\left( \cP_1 \left( (\Phi^{-1})^* \bl_2  \right)   \right) \\ \nonumber 
&= \Phi^*\left( \cP_1 \left( \bl_1  \right)   \right) \\ \nonumber
&=0.
\end{align}
Thus $\left(\bgs , u_2\right) $ is a flat well-adapted trivialisation for both $\Phi^*\cP_1$ and $\cP_2$. This implies that the operator must be the same (in this well-adapted trivialisation they both have zero asymptotic shear) and thus
\begin{equation}
\Phi^*D_1 = D_2.
\end{equation}
This proves the transitivity of the action of the BMS group.

Elements of the BMS group stabilising a gravity vacua correspond to symmetries of the Poincaré operator: By proposition \ref{Strong structure of null-infinity: Prop, Symmetries} these are isomorphic to the Poincaré group.
\end{proof}

An interesting property of the space of gravity vacua is that their behaviour on a well-chosen open set defines them completely:
\begin{Proposition}\label{Gravity vacua degeneracy and soft modes: Prop, global behavior of vacua}
	Let $\left( \MNull \to \S^{n-1} , \bh_{ab} , \bn^a \right)$ be a null-infinity manifold. Let $U$ be a connected open set of $\MNull$ such that $\pi \left(U \right) = \S^{n-1}$ and let $D_U$ be a flat null-normal tractor connection on $U$. Then there exists a unique flat null-normal tractor connection $D$ on $\MNull$ which coincides with $D_{U}$ on $U$.
\end{Proposition}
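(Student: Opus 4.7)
The plan is to construct a globally defined flat well-adapted trivialisation $(\bgs, u)$ in which $D_U$ has vanishing asymptotic shear, and then to define $D$ on $\MNull$ as the null-normal tractor connection with $C_{AB} = 0$ in this same trivialisation. By Proposition \ref{Strong structure of null-infinity: Proposition, Null-Normal Cartan connection} the resulting $D$ is manifestly flat, and it will agree with $D_U$ on a non-empty open subset of $U$; the agreement on all of $U$ will follow from rigidity of flat connections, and the uniqueness of the extension from the fact that the whole construction is canonical.

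First I build $\bgs$ globally. Flatness of $D_U$ implies flatness of its reduction $\Dt_U$ on $\cT/I|_U \simeq \pi^*\cT_{\gS}|_U$. Since $\pi(U) = \gS = \S^{n-1}$, this flat reduced connection descends to a flat reduced null-normal connection on $\gS$. For $n \geq 4$, Theorem \ref{Strong structure of null-infinity: Thrm, equivalence between Poincare operator and Cartan connection} pins it uniquely to the normal Cartan connection of the round sphere; for $n = 2, 3$, flatness together with $\gS = \S^{n-1}$ rigidly fixes the induced generalised Möbius (resp.~Laplace) structure to the canonical flat one (uniqueness of flat complex projective, resp.~Laplace, structures on the sphere). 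In every case the round-sphere scale $\bgs_{\gS} \in \So{L_\gS}$ is a zero of the induced Möbius/Laplace operator on $\gS$; its pull-back $\bgs := \pi^*\bgs_{\gS} \in \CSoL{0}$ is therefore a global zero of the corresponding operator associated to $\cP_U$ over $U$.

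Next I build $\bGG \in \CSoL{1}$ globally. Choose a simply connected open subset $V \subset U$ with $\pi(V) = \gS$ (easily arranged when $n \geq 3$ using simple connectedness of $\S^{n-1}$ and of $\MNull = \bbR \times \S^{n-1}$). By Proposition \ref{Strong structure of null-infinity: Prop,equivalence between Poincare operator and Cartan connection} applied to $D_U|_V$, flatness produces a parallel dual tractor on $V$ whose contraction with $\bX^I$ is a zero $\bGG_V$ of $\cP_U$ in $\CSoL{1}$ over $V$. Pick any reference global well-adapted trivialisation $(\bgs, u_0)$ on $\MNull$ (available by triviality of $\MNull \to \gS$). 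The difference $\bGG_V - \bgs u_0$ lies in $\CSoL{0}$ over $V$ and, via Proposition \ref{Weak structure of null-infinity: Prop, covariantly constant sections isomorphism} combined with $\pi(V) = \gS$, is the pull-back of a globally defined scale $-\bgs\, G$ on $\gS$, with $G \in \Co{\gS}$. Setting $u := u_0 - G \circ \pi$ yields a global well-adapted trivialisation $(\bgs, u)$ such that $\bGG := \bgs u$ extends $\bGG_V$ globally and $(\bgs, u)$ is a flat well-adapted trivialisation for $D_U$ on $V$.

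Let $D$ be the null-normal connection given by Proposition \ref{Strong structure of null-infinity: Proposition, Null-Normal Cartan connection} with $C_{AB} = 0$ in $(\bgs, u)$. Then $D$ and $D_U$ have identical coefficients on $V$, so $D|_V = D_U|_V$; being both flat and defined on the connected $U$, they must agree on all of $U$ by uniqueness of parallel transport along paths from $V$. For uniqueness of the extension, any other flat null-normal $D'$ on $\MNull$ with $D'|_U = D_U$ reproduces, via the same steps, the same canonical $\bgs$ and the same $\bGG$ (the latter because $\cP_{D'}|_V = \cP_U|_V$), hence the same flat trivialisation $(\bgs, u)$ with $C_{AB} = 0$ throughout, so $D' = D$. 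The main obstacles are the rigidity step for the reduced structure on $\gS$ in the low-dimensional cases $n = 2, 3$, where flatness alone does not obviously pin the Möbius/Laplace structure, and securing a parallel $\bGG_V$ over an open $V$ with $\pi(V) = \gS$; the latter is routine for $n \geq 3$, but requires separate attention for $n = 2$, where possible non-trivial monodromy around the circle fibres of $\MNull = \bbR \times \S^1$ must be ruled out directly.
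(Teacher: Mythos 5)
Your construction is a genuinely different route from the paper's: you rebuild a global flat well-adapted trivialisation $(\bgs,u)$ from the data on $U$ and then define $D$ by setting $C_{AB}=0$ in it, whereas the paper first invokes the transitivity of the BMS action (Proposition \ref{Gravity vacua degeneracy and soft modes: Prop, BMS action}) to write any two flat extensions as $\Phi^*D_1=D_2$ and then uses the explicit formula of Proposition \ref{Strong structure of null-infinity: Prop, Symmetries}, which exhibits $\cP_2-\cP_1$ as affine in $u$ with coefficients built from two functions $\gO,\xi$ on $\gS$; vanishing of such an expression on an open set surjecting onto $\gS$ forces it to vanish everywhere. Your uniqueness argument is essentially sound for the same reason (a globally flat null-normal connection has $C_{AB}$ globally affine in $u$ with coefficients on $\gS$, since the fibres of $\MNull\to\gS$ are connected, so $C'_{AB}=0$ on $V$ forces $C'_{AB}=0$ everywhere).

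The genuine gap is in the existence half, at the step ``$D|_V=D_U|_V$, hence $D|_U=D_U$ by uniqueness of parallel transport along paths from $V$.'' Two flat connections on a bundle over a connected base that agree on an open subset need \emph{not} agree everywhere: gauge-transform the trivial connection by a bundle automorphism equal to the identity on $V$ and you get a second flat connection agreeing with the first exactly on $V$. Flatness alone provides no unique continuation; parallel transport is computed \emph{with} the connection, so it cannot be used to compare two different ones. What actually rigidifies the situation is null-normality: the connection is determined by $C_{AB}$, and flatness forces $\parD_u\parD_u C_{AB}=0$ together with constraints ($\covD^B N_{AB}+\tfrac12\covD_A R=0$, etc.) on the $u$-independent coefficients. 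Propagating $C^U_{AB}=0$ from $V$ to all of $U$ then still requires an argument when the fibres $U\cap\pi^{-1}(\gth)$ are disconnected or $V$ does not meet every piece of them --- which is precisely why the paper first establishes a \emph{global} normal form for the difference of two flat connections before localising. Your proof as written does not have this global handle on $D_U$ over $U\setminus V$, so the existence claim is not yet established. (The loose ends you flag yourself --- rigidity of the reduced structure for $n=2$, and producing a simply connected $V\subset U$ with $\pi(V)=\gS$ --- are additional, but secondary to this.)
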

\begin{proof}
The existence of a connection $D$ on $\MNull$ extending $D_{U}$ follows from the fact that vanishing of the curvature is a PDE. Suppose that $D_1$ and $D_2$ are two flat null-normal connections extending $D_U$. By proposition \ref{Gravity vacua degeneracy and soft modes: Prop, BMS action} there must exist a diffeomorphism $\Phi \from \MNull \to \MNull$ such that $\Phi^* D_1 = D_2$. Let $\cP_1$ and $\cP_2$ be Poincaré operators associated with $D_1$ and $D_2$. Let $\left(\bgs , u\right)$ be a flat well-adapted trivialisation for $\cP_1$ and let $\gO$, $\xi$ be two functions on $\S^{n-1}$ given by $\left(\Phi^* \bgs = \gO^{-1}\bgs , \Phi^* u = \gO\left(u-\xi\right)   \right) $. By proposition \ref{Strong structure of null-infinity: Prop, Symmetries} we must have
\begin{align}
\cP_2 -\cP_1 &= \Phi^* \cP_1  - \cP_1 \\ \nonumber 
&= 	 \frac{1}{2}\gO^{-1}\; \bigg[ 2\;\covD_A \covD_B \big|_0 \xi + 2\; \uh  \;\covD_A \covD_B \big|_0 \gO^{-1} , \parD_u \bigg].
\end{align}
It follows from this last relation that if $\cP_2 -\cP_1$ coincide on $U = (\ga, \gb) \times \S^{n-1}$ with $\ga,\gb \in \bbR$ they must coincide everywhere.	
\end{proof}

\subsection{Memory effect}

We now want to study simple properties of a class of null-normal tractor connections related to the passage of a finite gravitational wave. Let us consider a $(n+1)$-dimensional asymptotically flat space-time with conformal boundary $\left( \MNull \to \S^{n-1} , \bh_{ab} , \bn^a \right)$. 

As we already discussed, in dimension $n+1=3$ Einstein's equations imply that null-normal tractor connections on $\MNull$ must be flat, this correspond to the fact that there is no gravitational waves in three dimension. In dimension $n+1 \ge 4$ it is known that the information about the radiation is not encoded in the asymptotic shear and thus not in the tractor connection at null-infinity. From the intrinsic picture presented here this appears in the following way: by proposition \ref{Strong structure of null-infinity: Proposition, Null-Normal Cartan connection} the only free parameter in the choice of null-normal tractor connection is the zero mode of the asymptotic shear: In these dimensions, there is therefore no dynamics that could be related to the presence of gravitational waves.

 However, in the most physically relevant dimension $n+1 =4$, the asymptotic shear is completely unconstrained and by proposition \ref{Strong structure of null-infinity: Proposition, Null-Normal Cartan connection} parametrizes null-normal tractor connections. The presence of curvature, where the second derivatives in $u$ for the asymptotic shear appear, correspond to the presence of gravitational waves: If $T^I{}_J{}_{cd}$ is a tensor parametrizing the presence of radiations, then the equation
 \begin{equation}
 F^I{}_{J}{}_{cd} = T^I{}_J{}_{cd}
 \end{equation}
 can be effectively thought of as an evolution equation for the asymptotic shear.
 
  In particular, tractor connections corresponding to a burst of gravitational waves must be such that ``far in the future'' and ``far in the past'' the tractor connection is flat. For this reason we will now restricts ourselves to tractor connections which are such that their curvature has compact support. We will call ``neighbourhood of $\MNull_{\infty}$'' (resp of $\MNull_{-\infty}$) open subsets $U$ of $\MNull$ such that in a trivialisation of $\MNull \to \gS$ we have $U = (\eps, \infty) \times \gS$ (resp $(-\infty, \eps) \times \gS$) with $\eps \in \bbR$. If $D$ is a null-normal tractor connection with a curvature of compact support then there must exists $U_{\pm \infty}$ two neighbourhoods of $\MNull_{\pm\infty}$ such that the curvature of $D$ vanishes. By proposition \ref{Gravity vacua degeneracy and soft modes: Prop, global behavior of vacua} this defines two flat null-normal connections $D_{(\pm \infty)}$ on the whole of $\MNull$ such that
\begin{equation}
 \left( D- D_{(\pm \infty)} \right)\big|_{U_{\pm \infty}} =0.
\end{equation}
In this sense a burst of gravitational waves realizes the transition between two gravity vacua: a ``soft mode'' has been excited by the passage of the radiation. By proposition \ref{Gravity vacua degeneracy and soft modes: Prop, BMS action} there must exists an element $\Phi$ of the BMS group such that
\begin{equation}
\Phi^* D_{(-\infty)} = D_{(\infty)}.
\end{equation}
Finally let $\left(\bgs ,u\right)$ be a flat well-adapted trivialisation for $D_{(-\infty)}$, by proposition \ref{Strong structure of null-infinity: Prop, Symmetries} their must exits two functions $\xi$ and $\gO$ on $\S^{n-1}$ such that

\begin{equation}
\Phi^* D_{(-\infty)}{}_{b}  - D_{(\infty)}{}_{b}  =  \gth^B_b\; \Mtx{ 0 & 0  & 0 & 0  \\
	0 &	0& 0& 0  \\
	0 &  0 & 0 & 0\\
	-\frac{1}{2}\covD_C \bm{\gd C^C{}_{B}}\;  & -\frac{1}{2}\bm{\gd C_{AB}} & 0	& 0
}
\end{equation}
with $\bm{\gd C_{AB}} = \bgs \; \gO^{-1}\; \covD_A \covD_B \big|_0 \xi$.

\section{Conclusion and outlook}

We described a tractor calculus for null-infinity manifolds $\left( \MNull \to \gS, \bh_{ab}, \bn^a \right)$ as a generalisation of the standard tractor calculus of conformal geometry. The essential difference is that there is no equivalent of the normal Cartan connection, rather we found a infinite family of null-normal tractor connections in one-to-one correspondence with a particular class of second order differential that we called ``Poincaré operators''. 

From the physics point of view, i.e when $\left( \MNull \to \gS, \bh_{ab}, \bn^a\right )$ is taken to be the conformal boundary of an asymptotically flat space-times, null-normal tractor connections then correspond to all possible choices of ``asymptotic shear''. In the particular case of four dimensional asymptotically flat space-times, the asymptotic shear is known to encode the radiative degrees of freedom at null-infinity. Therefore, in this physically relevant dimension, the picture is especially compelling: the gravitational radiation then precisely correspond to a choice of null-normal tractor connection.

Tractor calculus is an especially powerful tool to construct conformal invariants \cite{gover_invariant_2001,cap_parabolic_2009,curry_introduction_2018}: as an example, it was explained in \cite{gover_conformally_2003} how to obtain tractor expressions (i.e manifestly conformally conformally invariants expressions) for the conformally invariant powers of the Laplacian known as GJMS operators (after \cite{graham_conformally_1992}) and similarly for Q-curvature. Since the tractor calculus described in this work is in all respect similar one can therefore expects to produce, by applying the same algorithms, similar invariants. In particular, when applied to three-dimensional null-infinity manifolds, this should allow to obtain invariants of the radiative phase space of gravity which would otherwise be very difficult to guess. As a distinctive example of this phenomenon, we remark that the Poincaré operators described in this article have, to the best knowledge of the author, never been discussed previously in the literature, this is despite the fact that this should be one of the simplest possible differential invariant. From this perspective it also seems that the symplectic form on the phase space and the BMS charges that have been extensively discussed in the literature \cite{barnich_bms_2011,henneaux_bms_2018,de_paoli_gauge-invariant_2018,henneaux_asymptotic_2019-1, barnich_bms_2020} should also have tractor expressions (in particular these would be manifestly conformally invariant). It would also be very illuminating to relate the intrinsic geometrical description that was given in this article to the geometric action described in \cite{barnich_geometric_2017}.

Another aspect which is worth emphasising is the flexibility of the formalism: it can be formulated in a generic dimension and, in particular, naturally encompass the geometry of both 3D and 4D asymptotically flat space-times. Considering how different these two examples might naively appear, this is a nice surprise. What is more it is known that the (usual) tractor calculus is very well adapted to studying asymptotically AdS and asymptotically dS space-times (see e.g \cite{curry_introduction_2018}), it therefore suggests that results described in this work could serve as a unifying framework to deal with the difficult question of understanding gravitational wave physics when the cosmological constant is non-zero (see \cite{ashtekar_asymptotics_2014,ashtekar_asymptotics_2015-1,ashtekar_asymptotics_2015,ashtekar_gravitational_2016,ashtekar_implications_2017,ashtekar_asymptotics_2019,compere_superboost_2018,compere_lambda-bms_4_2019}). It also seems very likely that a similar geometrical description of the radiative phase-space should exists for Cauchy surface at finite distance as is for example considered in the ``edge mode'' literature \cite{donnelly_local_2016,gomes_observers_2017,hopfmuller_gravity_2017,hopfmuller_null_2018,chandrasekaran_symmetries_2018,speranza_local_2018,gomes_unified_2018,freidel_gravitational_2019,riello_soft_2020} or when studying symmetries of black-hole horizons \cite{donnay_supertranslations_2016,donnay_extended_2016,donnay_carrollian_2019}. There is also certainly a sense in which the result described in this paper must encompass as a particular case the geometry of Carroll manifold \cite{duval_carroll_2014,hartong_gauging_2015,bekaert_connections_2018,morand_embedding_2018,ciambelli_carroll_2019}. Finally, another intriguing possibility is to formulate a quantum theory of gravity's radiative degrees of freedom as a quantum theory on  null-infinity, see \cite{ashtekar_asymptotic_1987,adamo_perturbative_2014,geyer_ambitwistor_2015,bagchi_field_2019,gupta_constructing_2020} for works in this direction.
  
 \section*{Acknowledgements}
 
 The author is indebted to Tim Adamo, Romain Ruzziconi and Carlos Scarinci for crucial discussions and sharing their extensive knowledge on (respectively) good-cuts and their relations to asymptotic shear, the BMS formalism and asymptotic symmetries, Riemann surfaces and 3D gravity. The author also would like to thank Glenn Barnich, Joel Fine and Kirill Krasnov for their comments on a first version of this article.
 
 This research was funded by a FNRS ``chargé de recherche'' grant.
  
\bibliographystyle{ieeetr}
\bibliography{Biblio,BiblioExtra}
 
\end{document}